%
%
%
%
%
%
%
\documentclass[%
 aip,jmp,
 amsmath,amssymb,
 reprint,%
]{revtex4-1}

\usepackage{graphicx}
\usepackage{tikz}
\usetikzlibrary{decorations,shapes,arrows,matrix,positioning}

\usepackage{dcolumn}
\usepackage{bm}
\usepackage{amsthm}
\usepackage[utf8]{inputenc}
\usepackage[T1]{fontenc}
\usepackage{mathptmx}
\usepackage{tocloft}
\usepackage{bbold}
\usetikzlibrary{backgrounds}
\usetikzlibrary{patterns}
\usetikzlibrary{arrows.meta}
\usetikzlibrary{trees,snakes}
\tikzstyle{bag} = [align=center]
\usetikzlibrary{decorations.pathmorphing}
\setlength\cftparskip{-1pt}
\usepackage{xcolor}
\usepackage{fancyhdr}
\usepackage{amssymb}
\usepackage{enumitem}
\usepackage{xcolor}
\usepackage{slashed}
\usepackage{amsmath}
\usepackage[export]{adjustbox}
\usepackage{hyperref}
\usepackage{subcaption}

\newtheorem{theorem}{Theorem}
\newtheorem{proposition}{Proposition}
\newtheorem{corollary}{ Corollary}
\newtheorem{lemma}{Lemma}
\theoremstyle{definition}
\newtheorem{definition}{Definition}
\newtheorem{remark}{Remarks}

\begin{document}

\preprint{AIP/123-QED}

\title[Renormalization of $\phi_4^4$ theory on the half-space 
$\mathbb{R}^+ \times\mathbb{R}^3$ with flow equations II]
{The Surface counter-terms of the $\phi_4^4$ theory 
on the half space $\mathbb{R}^+ \times\mathbb{R}^3$}

\author{Majdouline Borji}
\email{majdouline.borji@polytechnique.edu}
\author{Christoph Kopper}%
\email{christoph.kopper@polytechnique.edu}
\affiliation{Centre de Physique Théorique CPHT, CNRS, UMR 7644, Institut Polytechnique de Paris, 91128 Palaiseau, France}%

\date{\today}

\begin{abstract}
In a previous work, we established perturbative renormalizability to all orders of the massive $\phi^4_4$-theory on a half-space also called the semi-infinite massive $\phi^4_4$-theory. Five counter-terms which are functions depending on the position in the space, were needed to make the theory finite. The aim of the present paper is to prove that these counter-terms are position independent (i.e. constants) for a particular choice of renormalization conditions. We investigate this problem by decomposing the correlation functions into a bulk part, which is defined as the $\phi^4_4$ theory on the full space $\mathbb{R}^4$ with an interaction supported on the half-space, plus a remainder which we call "the surface part". We analyse this surface part and establish perturbatively that the $\phi^4_4$ theory in $\mathbb{R}^+\times\mathbb{R}^3$ is made finite by adding the bulk counter-terms and two additional  counter-terms to the bare interaction in the case of Robin and Neumann boundary conditions. These surface counter-terms are position independent  and are proportional to $\int_S \phi^2$ and $\int_S \phi\partial_n\phi$. For Dirichlet boundary conditions, we prove that no surface counter-terms are needed and the bulk counter-terms are sufficient to renormalize the connected amputated (Dirichlet) Schwinger functions. A key technical novelty as compared to our previous work is a proof that the power counting of the surface part of the correlation functions is better by one power than their bulk counterparts.
\end{abstract}

\maketitle

\section{Introduction}
\indent Renormalization group methods in the presence of boundaries have been developed in theoretical physics in the context of the semi-infinite scalar field model in \cite{DiehlD1,DiehlD2}. Besides the physical significance of this model as a prototype for systems with spatial inhomogeneities, it serves to illustrate the characteristic new features of the RG when translation invariance is broken by the presence of surfaces. Recently, a rigorous proof of perturbative renormalizability of this model has been established \cite{BorjiKopper2} using the Polchinski flow equations. The loss of translation invariance implies that the relevant parameters are not constants but are rather functions, which can possibly depend on the position $x\in V$, the domain on which the system is defined. A related problem is the renormalizability of theories which break translation invariance not necessarily by the presence of boundaries, but in a different manner such as the geometry of the considered space or the support of the coupling. As an example, we mention the work on renormalization of finite temperature field theory \cite{temperature}. This theory breaks down the space-time $O(4)$-symmetry to the $\mathbb{Z}_2\times O(3)$-symmetry. It was proved to be temperature independently renormalizable to all orders in perturbation theory, in the sense that renormalization conditions can be fixed order by order such that the counter-terms are temperature independent. There exists also a work by Kopper and Müller \cite{KopperMuller} which rigorously establishes the renormalizability of the scalar field theory on a Riemanian manifold $\mathcal{M}$ without a boundary. The breaking of translation invariance implies that the counter-terms are functions which depend on the position in $\mathcal{M}$. However, it was proven that for a particular choice of renormalization conditions, these counter-terms can be chosen as constants. The common feature of these two examples is that in a first step they are renormalized by position dependent counter-terms (or temperature dependent counter-terms in the first example), but still this dependence on the position can be reabsorbed in the renormalization conditions such that the counter-terms are constants. In a previous work \cite{BorjiKopper2}, we proved the renormalizability of the semi-infinite model. We considered BPHZ renormalization conditions and found that the semi-infinite model is renormalized by adding five position dependent counter-terms to the bare interaction given by 
\begin{multline}
    L^{\Lambda_0,\Lambda_0}(\phi)=\frac{\lambda}{4!}\int_{\mathbb{R}^+}dz~\int_{\mathbb{R}^3}d^3x~\phi^4(z,x)
+\frac{1}{2}\int_{\mathbb{R}^+}dz~\int_{\mathbb{R}^3}d^3x~ \biggl(a^{\Lambda_0}(z)\phi^2(z,x)
-b^{\Lambda_0}(z)\phi(z,x)\Delta_x \phi(z,x)\\
-d^{\Lambda_0}(z)\phi(z,x)\partial_z^2\phi(z,x) 
+s^{\Lambda_0}(z)\phi(z,x)(\partial_z\phi)(z,x)
+\frac{2}{4!}c^{\Lambda_0}(z)\phi^4(z,x)\biggr)
\ .\nonumber
\end{multline}
In the present work, we prove that there exists a particular choice of renormalization conditions for which the counter-terms appearing in the bare interaction are position independent. From a theoretical physics point of view, we aim to establish renormalizability in the restrictive sense that all counter-terms are of the same form as the interactions included in the original Hamiltonian.\\
\indent Our strategy is based on the following ideas: all possible propagators in the mixed position momentum space can be decomposed into a sum of two terms. The first term is the propagator of the translationally invariant theory $C_b$ and the second one is the part that breaks translation invariance $C_s$. Inserting this decomposition in the Feynman graph expansion, we obtain graphs involving exclusively $C_b$ (i.e. bulk graphs), and others involving $C_{s}$ or $C_{s}$ and $C_b$ (i.e. surface graphs). The bulk graphs are given by Feynman integrals which are identical to those of the corresponding translationally invariant theory up to the restriction $z\geq 0$ on $z$-integrations. This implies that these graphs can be renormalized using the same counter-terms as for $\phi^4_4$ in $\mathbb{R}^4$ with an interaction supported on the half-space. The remaining surface graphs of our semi-infinite model which do also involve $C_{s}$ can be renormalized by adding position independent surface counter-terms. This means that the semi-infinite correlation "functions" can be decomposed into a bulk part, plus a remainder which we call the "surface part". One of the important results of this paper is that the surface part admits a power counting which is
dimensionally better by one scaling dimension as compared to the bulk counterpart. This modified scaling dimension appears in
Theorem \ref{theoremReno}.
\\\indent A  motivation to our approach is related to the critical behaviour of the semi-infinite scalar field model which was studied extensively in \cite{DiehlD,DiehlD1,DiehlD2}. Methods based on the renormalization group proved that for the semi-infinite model, the behaviour on the surface differs considerably from the bulk, in the sense that the critical exponents of this model can not be fully expressed in terms of the bulk critical exponents. For Robin and Neumann boundary conditions, the critical exponents are given in terms of the bulk critical exponents and two independent surface critical exponents. In the case of Dirichlet boundary conditions, there is only one additional surface critical exponent in addition to the usual bulk critical exponents. This implies that a given bulk universality class splits into different surface universality classes  which results in a rich bulk-surface phase diagram. Each of the possible boundary conditions can be associated to a phase of the semi-infinite Ising model.
The theory of bulk critical phenomena suggests that the number of independent critical exponents
should follow directly from the number of independent renormalization functions (i.e. $Z$ functions). This correspondence between the critical exponents and counter-terms suggests that some of the counter-terms are the same as those which renormalize the translationally invariant theory, while the remaining ones are new counter-terms which result from the presence of the boundary and can be associated to the independent surface critical exponents.\\
\indent In this work, this correspondence is made explicit but only partially and this for the following two reasons: the first reason concerns the surface critical exponent of the ordinary transition (i.e. Dirichlet b.c.). For Robin and Neumann boundary conditions, we establish that two surface counter-terms are needed to make the semi-infinite model finite which correspond to the two surface critical exponents. For Dirichlet boundary conditions, the theory of critical phenomena implies that a single surface exponent which follows from the anomalous dimension of the derivative $\partial_n\phi(x,0)$, characterizes the ordinary transition.
Following \cite{DiehlD1} the counter-term corresponding to the Dirichlet surface critical exponent can be retrieved by renormalizing the theory with the insertion of the operator $\partial_n\phi(x,0)$. In this work, we do not consider this insertion and we focus our study on the renormalization of the (non-inserted) connected amputated Schwinger (CAS) distributions of the semi-infinite model. We find that no surface counter-term is needed to renormalize the semi-infinite model with Dirichlet boundary conditions. This follows from the fact that the renormalized non-amputated connected Schwinger Dirichlet $n$-point functions with a point on the surface vanish. The second point which is missing regarding the correspondence mentioned above, concerns the bulk critical exponents. The latter are associated to the counter-terms which renormalize the translationally invariant scalar field theory. However, in the context of this work, the bulk counter-terms are defined as those needed to renormalize the $\phi^4_4$ theory in $\mathbb{R}^4$ with an interaction restricted to the half-space. In a future publication, we hope to show that this theory is renormalized by adding the translationally invariant $\phi^4_4$ theory counter-terms and two surface counter-terms which are independent of the half-space boundary conditions. This will make the correspondence fully explicit. \\
\indent Our technique of proof is based on constructing a solution to the flow equations of the semi-infinite model derived in \cite{BorjiKopper2} such that the bare interaction has the following form (in the case of Robin boundary conditions)
\begin{multline}\label{surface}
    L^{\Lambda_0,\Lambda_0}_{R}(\phi)=\int_{\mathbb{R}^+}dz\int_{\mathbb{R}^3} d^3x~\left(\frac{\lambda}{4!}~\phi^4(z,x)
+\frac{1}{2} a^{\Lambda_0}_B(z)\phi^2(z,x)-\frac{1}{2} s^{\Lambda_0}_B(z)\phi(z,x)\partial_z\phi(z,x)\right.\\\left.
-\frac{1}{2}b^{\Lambda_0}_B(z)\phi(z,x)\Delta_x \phi(z,x)
-\frac{1}{2}d^{\Lambda_0}_B(z)\phi(z,x)\partial_z^2 \phi(z,x)
+\frac{1}{4!}c^{\Lambda_0}_B(z)\phi^4(z,x)
\right)\\+\int_{\mathbb{R}^3} d^3x \left(\frac{1}{2}s^{\Lambda_0}_R+c~e^{\Lambda_0}_R\right)\phi^2(0,x),
\end{multline}
where $c$ denotes the Robin parameter associated to Robin boundary conditions.\\
\indent The exposition is organized as follows. In Section 2, we review the basic setting and recall some basic properties of the Robin, Neumann, Dirichlet and bulk heat kernels. Section 3 is devoted to define the scalar field theory in $\mathbb{R}^4$ with a quartic self-interaction restricted to the half-space $\mathbb{R}^+\times\mathbb{R}^3$. The CAS of this theory obey the standard flow equations of the $\phi^4_4$-theory in $\mathbb{R}^4$, with the exception that the $z$ , $z'$ integrations appearing on the RHS of the flow equations are restricted to $\mathbb{R}^+$ instead of the full space $\mathbb{R}$. The bare interaction corresponding to this theory reads
\begin{multline*}
     L^{\Lambda_0,\Lambda_0}_B(\phi)=\int_{\mathbb{R}^+}dz\int_{\mathbb{R}^3} d^3x~\left(\frac{\lambda}{4!}~\phi^4(z,x)
+\frac{1}{2} a^{\Lambda_0}_B(z)\phi^2(z,x)-\frac{1}{2} s^{\Lambda_0}_B(z)\phi(z,x)\partial_z\phi(z,x)\right.\\\left.
-\frac{1}{2}b^{\Lambda_0}_B(z)\phi(z,x)\Delta_x \phi(z,x)
-\frac{1}{2}d^{\Lambda_0}_B(z)\phi(z,x)\partial_z^2 \phi(z,x)
+\frac{1}{4!}c^{\Lambda_0}_B(z)\phi^4(z,x)
\right),
\end{multline*}
where $a^{\Lambda_0}(z)$, $s^{\Lambda_0}(z)$, $b^{\Lambda_0}(z)$, $d^{\Lambda_0}(z)$ and $c^{\Lambda_0}(z)$ are the bulk counter-terms which can depend (smoothly) on $z$ since the interaction breaks translation invariance. In Section 4, we introduce the trees and forest structures together with their corresponding weight factors which we need later in stating Theorem \ref{theoremReno}.
In Section 5, we construct the surface correlation distributions $\mathcal{S}_{l,n;\star}^{\Lambda,\Lambda_0}$ associated to the boundary condition $\star$. Section 6 is the central part of this paper. We present Theorem \ref{theoremReno} which contains the power counting for the connected amputated Schwinger distributions (CAS) $\mathcal{S}_{l,n;\star}^{\Lambda,\Lambda_0}$ as well as their boundedness w.r.t. to $\Lambda_0$. Then, Proposition \ref{Prop44} proves that the Dirichlet surface correlation distributions can be viewed as the limit of Robin surface correlation distributions when the Robin parameter $c$ is taken to infinity. Theorem \ref{theoremReno} together with Proposition \ref{Prop1} imply Corollary \ref{Cor1} which states that the Dirichlet surface correlation distributions when folded with Dirichlet heat kernels are irrelevant. In Section 7, we explain how the minimal form (\ref{surface}) of the bare interaction is deduced from Theorem \ref{theoremReno}. First order calculations in perturbation theory \cite{Albu1,Albu2} suggest that the amputated theory is renormalized differently w.r.t. the non-amputated one in the sense that the tadpole needs more counter-terms depending on whether one of its external points is on the surface or not. We explain this in more detail to all orders of perturbation theory in Section 8.  In the Appendices, we collected technical lemmas which we use in the proof of Theorem \ref{theoremReno}. 

\section{The heat kernels and the propagators}
\subsection{Some notations and the heat kernels}
In the sequel, we will be using the following notations
\begin{align*}
    ~\int_z:=\int_0^{\infty}dz,~~\vec{p}_n&:=\left(p_1,\cdots,p_n\right),~~(\vec{z}_n,\vec{p}_n):=\left((z_1,p_1),\cdots,(z_n,p_n)\right),\\~~z_{i,j}&=\left(z_i,\cdots,z_j\right),
    ~~\vec{p}_{i,j}=\left(p_i,\cdots,p_j\right),~~~1\leq i\leq j\leq n,\\
    \|\vec{p}_n\|&:=\max_{1\leq i\leq n}|p_i|.
\end{align*}
We will also use the mixed position-momentum space representation which consists in taking the partial Fourier transform w.r.t. the variable $x\in\mathbb{R}^3$. We recall that in this representation, the Dirichlet, Neumann and Robin propagators simply read 
\begin{equation}
C_D(p;z,z')=\frac{1}{2\sqrt{p^2+m^2}}\left[e^{-\sqrt{p^2+m^2}\left|z-z'\right|}-e^{-\sqrt{p^2+m^2}\left|z+z'\right|}\right]
\ ,
\end{equation}
\begin{equation}
C_N(p;z,z')=\frac{1}{2\sqrt{p^2+m^2}}\left[e^{-\sqrt{p^2+m^2}\left|z-z'\right|}+e^{-\sqrt{p^2+m^2}\left|z+z'\right|}\right]\ ,
\end{equation}
\begin{equation}\label{Robpro}
C_R(p;z,z')=\frac{1}{2\sqrt{p^2+m^2}}\left[e^{-\sqrt{p^2+m^2}\left|z-z'\right|}+\frac{\sqrt{p^2+m^2}-c}{\sqrt{p^2+m^2}+c}
\ e^{-\sqrt{p^2+m^2}\left|z+z'\right|}\right],~~~c>0~.
\end{equation}
Note that the Dirichlet boundary condition corresponds to $c\rightarrow +\infty$ and the Neumann boundary condition to $c=0$.
For $\star \in\left\{D,R,N\right\}$, the propagator $C_{\star}$ can also be written as
\begin{equation}
    C_{\star}\left(p;z,z'\right)=\int_0^{\infty}d\lambda~e^{-\lambda(p^2+m^2)}~p_{\star}\left(\lambda;z,z'\right),
\end{equation}
where the Robin, Neumann and Dirichlet heat kernels read 
\begin{align}p_D\left(\frac{1}{\Lambda^2};z,z'\right)&=p_B\left(\frac{1}{\Lambda^2};z,z'\right)-p_B\left(\frac{1}{\Lambda^2};z,-z'\right)\label{pD},\\
p_N\left(\frac{1}{\Lambda^2};z,z'\right)&=p_B\left(\frac{1}{\Lambda^2};z,z'\right)+p_B\left(\frac{1}{\Lambda^2};z,-z'\right),\label{pN}\\
    p_R\left(\frac{1}{\Lambda^2};z,z'\right)&=p_B\left(\frac{1}{\Lambda^2};z,z'\right)+p_B\left(\frac{1}{\Lambda^2};z,-z'\right)-2\int_0^{\infty}dw~e^{-w}p_B\left(\frac{1}{\Lambda^2};z,-\frac{w}{c}-z'\right),\label{pR}
\end{align}
and the bulk heat kernel $p_B$ is given by
 \begin{equation}
     p_B\left(\tau;z,z'\right)=\frac{1}{\sqrt{2\pi\tau}}e^{-\frac{(z-z')^2}{2\tau}},~~~~~\tau>0~.
 \end{equation}
 It verifies the following basic properties:
\begin{itemize}
\item (The bulk semi-group property) For $z_1$ and $z_2$ in $\mathbb{R}$ 
    \begin{equation}\label{rr+}
        \int_{\mathbb{R}}du~ p_B(\tau_1;z_1,u)
~p_B(\tau_2;u,z_2)=p_B(\tau_1+\tau_2;z_1,z_2)~.
    \end{equation}
    \item (The $\star$ semi-group property) For $z_1$ and $z_2$ in $\mathbb{R}^+$ and $\star\in\left\{D,N,R\right\}$, we have
    \begin{equation}\label{10'}
        \int_{\mathbb{R}^+}du~ p_{\star}(\tau_1;z_1,u)
~p_{\star}(\tau_2;u,z_2)=p_{\star}(\tau_1+\tau_2;z_1,z_2)~.
    \end{equation}
\item (Completeness) For $z_1$ in $\mathbb{R}$, we have 
    \begin{equation}\label{rr*}
        \int_{\mathbb{R}}du~ p_B(\tau_1;z_1,u)=1~.
    \end{equation}
     \item For $z_1$ and $z_2$ in $\mathbb{R}^+$, we have 
    \begin{equation}\label{rr++}
        \int_{\mathbb{R}}du~ p_B(\tau_1;z_1,u)
~p_B(\tau_2;u,z_2)\leq 2 \int_{\mathbb{R}^+}du~ p_B(\tau_1;z_1,u)
~p_B(\tau_2;u,z_2)~.
    \end{equation}
\item For $\delta\geq 0$, $\tau_{\delta}=(1+\delta)\tau$ and $z_1,z_2\in \mathbb{R}^+$, we have 
\begin{eqnarray}\label{pbdelta}
     p_B\left(\tau;z_1,z_2\right)\leq \sqrt{1+\delta}~p_B\left(\tau_{\delta};z_1,z_2\right)
\end{eqnarray}
and for $\delta'>\delta$
\begin{equation}\label{in1}
     \indent |z_1-z_2|^r\ 
p_B\left(\tau_{\delta};z_1,z_2\right)
\leq C_{\delta,\delta'}\  {\tau^{\frac{r}{2}}}p_B\left(\tau_{\delta'};z_1,z_2
\right),
\end{equation}
where 
\begin{equation}\label{Cdelt}
C_{\delta,\delta'}=\sqrt{\frac{1+\delta'}{1+\delta}} \, \| \,x^r e^{-\frac{x^2}{2}\frac{\delta'-\delta}{(1+\delta)(1+\delta')}}\|_{\infty}\leq O(1)~|\delta-\delta'|^{-\frac{r}{2}}~.
\end{equation}
\item For $z,~z'\in\mathbb{R}^+$, $\tau>0$ and $c\geq 0$, we have 
\begin{equation}\label{bulkBou}
p_B\left(\tau;z,-z'\right)\leq p_B\left(\tau;z,z'\right),~~~~ \int_{\mathbb{R}^+}dw~e^{-w}p_B\left(\tau;z,-z'-\frac{w}{c}\right)\leq p_B\left(\tau;z,z'\right). 
\end{equation}
\subsection{The regularized propagators}
We denote by $\star$ the type of boundary conditions considered. For $0\leq \Lambda\leq \Lambda_0$, we define the regularized flowing propagator associated to the boundary condition $\star\in\left\{D,N,R\right\}$ as follows:
\begin{equation}
{C}_{\star}^{\Lambda,\Lambda_0}(p;z,z'):=\int_{\frac{1}{\Lambda_0^2}}^{\frac{1}{\Lambda^2}}d\lambda~p_{\star}\left(\lambda;z,z'\right)e^{-\lambda\left(p^2+m^2\right)}.
\end{equation}
This can also be rewritten as
\begin{equation}\label{decom}
    C_{\star}^{\Lambda,\Lambda_0}\left(p;z,z'\right)=C_{B}^{\Lambda,\Lambda_0}\left(p;z,z'\right)+C_{S,\star}^{\Lambda,\Lambda_0}\left(p;z,z'\right),
\end{equation}
where 
\begin{equation}\label{BulkPro}
{C}_{B}^{\Lambda,\Lambda_0}(p;z,z'):=\int_{\frac{1}{\Lambda_0^2}}^{\frac{1}{\Lambda^2}}d\lambda~p_{B}\left(\lambda;z,z'\right)e^{-\lambda\left(p^2+m^2\right)}
\end{equation}
and 
\begin{equation}
{C}_{S,\star}^{\Lambda,\Lambda_0}(p;z,z'):=\int_{\frac{1}{\Lambda_0^2}}^{\frac{1}{\Lambda^2}}d\lambda~p_{S,\star}\left(\lambda;z,z'\right)e^{-\lambda\left(p^2+m^2\right)},
\end{equation}
with the surface heat kernel $p_{S,\star}$ defined as $p_{\star}-p_B$. In the case of Robin boundary conditions, the surface Robin heat kernel is given by
\begin{equation}\label{43}
p_{S,R}\left(\frac{1}{\Lambda^2};z,z'\right):=p_B\left(\frac{1}{\Lambda^2};z,-z'\right)
-2\int_0^{\infty}dw~ e^{-w}p_B\left(\frac{1}{\Lambda^2};z,-\frac{w}{c}-z'\right).
\end{equation}
Note that the Robin heat kernel and $p_{S,R}$ are uniformly bounded w.r.t. the Robin parameter $c$. Namely, we have using (\ref{bulkBou})
\begin{equation}\label{SurfHeatB}
    p_R\left(\tau;z,z'\right)\leq 4~p_B\left(\tau;z,z'\right),~~~~ p_{S,R}\left(\tau;z,z'\right)\leq 3~p_B\left(\tau;z,-z'\right),
\end{equation}
for all $z,z'\geq 0$, $\tau>0$ and $c\geq 0$. Similarly, we also have
\begin{equation}
    p_D\left(\tau;z,z'\right)\leq p_N\left(\tau;z,z'\right)\leq 2~p_B\left(\tau;z,z'\right) . 
\end{equation} 
In the sequel, we denote the derivative of the flowing propagators w.r.t. $\Lambda$ by
    \begin{equation}\label{p3}
     \dot{C}_{\bullet}^{\Lambda}(p;z,z')=\frac{\partial}{\partial\Lambda}{C}_{\bullet}^{\Lambda,\Lambda_0}(p;z,z')=\dot{C}^{\Lambda}(p)\ p_{\bullet}\left(\frac{1}{\Lambda^2};z,z'\right)~,
\end{equation}
where $\dot{C}^{\Lambda}(p)=-\frac{2}{\Lambda^3}e^{-\frac{p^2+m^2}{\Lambda^2}}$ and $\bullet \in\left\{\star,\left\{S,\star\right\},B\right\}$ with $\star \in\left\{D,N,R\right\}$. \\Given a polynomial $\mathcal{P}$ and $w\in\mathbb{N}^3$, we have the following estimate on the $3$-dimensional covariance 
\begin{equation}\label{cov}
    \left|\partial^w \Dot{C}^{\Lambda}\left(p\right)~\mathcal{P}\left(\frac{p}{\Lambda}\right)\right|\leq \left(\Lambda+m\right)^{-3-|w|} \tilde{\mathcal{P}}\left(\frac{|p|}{\Lambda+m}\right),
\end{equation}
where $\tilde{\mathcal{P}}$ is a polynomial with positive coefficients. We refer to (\ref{notation}) for the multi-index $w$ notation.
\end{itemize}
\section{The bulk theory on the half-space $\mathbb{R}^+\times\mathbb{R}^3$}\label{bulkspace}
\subsection{The Action and the Flow Equations}
\noindent We consider the theory of a real scalar field $\phi$ with mass $m$ on the four dimensional Euclidean space-time $\mathbb{R}^4$ within the framework of functional integration. The point of departure to define this theory is to write the associated regularized path integral which is uniquely defined by the corresponding gaussian measure. The regularized flowing propagator is given by (\ref{BulkPro}). Note that for $\Lambda\rightarrow 0$ and $\Lambda_0\rightarrow \infty$ we recover the unregularized propagator.
For finite $\Lambda_0$ and in finite volume the positivity and the regularity properties of $C^{\Lambda,\Lambda_0}_B$ permit to define the theory rigorously from the functional integral 
\begin{align}\label{fl}
 e^{-\frac{1}{\hbar}\left(L^{\Lambda,\Lambda_0}_B(\phi)+I^{\Lambda,\Lambda_0}\right)}:&=
\int d\mu^{\Lambda,\Lambda_0}_B(\Phi)\  
e^{-\frac{1}{\hbar}L^{\Lambda_0,\Lambda_0}_B(\Phi+\phi)}\ ,\\
    L^{\Lambda,\Lambda_0}_B(0)&=0\ ,\nonumber
\end{align}
where the factors of $\hbar$ have been introduced with regard to a systematic loop expansion considered
later. Here, the Gaussian measure $d\mu_B^{\Lambda,\Lambda_0}$ is of mean zero and covariance $\hbar C_B^{\Lambda,\Lambda_0}$. The test function $\phi$ is in the support of $d\mu_B^{\Lambda,\Lambda_0}$ which in particular implies that it is in $\mathcal{C}^{\infty}\left(\mathbb{R}^4\right)$. This regularity stems from the UV-regularization determined by the cutoff $\Lambda_0$ which is imperative to have a well-defined functional integral. The normalization factor $e^{-\frac{1}{\hbar}I^{\Lambda,\Lambda_0}}$ is due to vacuum contributions. It diverges in infinite volume so that we can take the infinite volume limit only when it has been eliminated \cite{Keller}. We do not make the finite volume explicit here since it plays no role in the sequel.\\
 \indent The functional $L^{\Lambda_0,\Lambda_0}_B(\phi)$ is the bare interaction of a renormalizable theory including counter-terms, viewed as a formal power series in $\hbar$. It contains the tree order interaction and the related counter-terms. The interaction is supported only on the half-space $\mathbb{R}^+\times\mathbb{R}^3$ which implies that translation invariance is broken in the $z$-direction (the semi-line). This implies that the counter-terms may be $z$-dependent. In general, the constraints on the bare action result from the symmetry properties of the theory which are imposed, on its field content and on the form of the propagator. It is therefore natural to consider the general bare interaction
\begin{eqnarray}\label{bare}
     L^{\Lambda_0,\Lambda_0}_B(\phi)=\frac{\lambda}{4!}\int_z\int_{\mathbb{R}^3}d^3x\phi^4(z,x)
+\frac{1}{2}\int_z\int_{\mathbb{R}^3}d^3x \biggl(a_B^{\Lambda_0}(z)\phi^2(z,x)
-b_B^{\Lambda_0}(z)\phi(z,x)\Delta_x \phi(z,x)\\
-d_B^{\Lambda_0}(z)\phi(z,x)\partial_z^2\phi(z,x) 
+s_B^{\Lambda_0}(z)\phi(z,x)(\partial_z\phi)(z,x)
+\frac{2}{4!}c_B^{\Lambda_0}(z)\phi^4(z,x)\biggr)
\ .\nonumber
\end{eqnarray}
Here we supposed the theory to be symmetric under $\phi\rightarrow-\phi\,$, 
and we included in (\ref{bare}) only relevant terms in the sense of the renormalization group. The functions $a^{\Lambda_0}_B(z)$, $b^{\Lambda_0}_B(z)$, $c^{\Lambda_0}_B(z)$, $d^{\Lambda_0}_B(z)$ and $s^{\Lambda_0}_B(z)$ are supposed to be smooth. \\
The flow equation (FE) is obtained from (\ref{fl}) on differentiating w.r.t. $\Lambda$. For the steps of the computation, we refer the reader to \cite{Keller,Muller,BorjiKopper}. It is a differential equation for the functional $L^{\Lambda,\Lambda_0}_B$:\\
\begin{equation}\label{floEq}
  \partial_{\Lambda}(L^{\Lambda,\Lambda_0}_B+I^{\Lambda,\Lambda_0})=\frac{\hbar}{2}\langle \frac{\delta}{\delta \phi},\dot{C}_B^{\Lambda}\,\frac{\delta}{\delta \phi}\rangle L^{\Lambda,\Lambda_0}_B-\frac{1}{2}\langle\frac{\delta}{\delta \phi}L^{\Lambda,\Lambda_0}_B,\dot{C}^{\Lambda}_B\,\frac{\delta}{\delta \phi}L^{\Lambda,\Lambda_0}_B\rangle\ .
\end{equation}
By $\langle,\rangle$ we denote the standard inner product in 
$L^2(\mathbb{R}^+\times\mathbb{R}^3)$.\\
We expand the functional 
$L^{\Lambda,\Lambda_0}_B(\phi)$ in a formal power series w.r.t. $\hbar$,
\begin{equation*}
    L^{\Lambda,\Lambda_0}_B(\phi)=\sum_{l=0}^{\infty}\hbar^l L^{\Lambda,\Lambda_0}_{l,B}(\phi)\ .
\end{equation*}
Corresponding expansions for $a^{\Lambda_0}_B(z),~b^{\Lambda_0}_B(z)$..., are $a^{\Lambda_0}_B(z)=\sum_{l=1}^{\infty}\hbar^l a^{\Lambda_0}_{l,B}(z) $, etc. From $L^{\Lambda,\Lambda_0}_{l,B}(\phi)$ we obtain the CAS distributions of loop order $l$ as 
\begin{equation*}
    \mathcal{D}^{\Lambda,\Lambda_0}_{l,n}\left((z_1,x_1),\cdots,(z_n,x_n)\right)
:=\delta_{\phi(z_1,x_1)}\cdots \delta_{\phi(z_n,x_n)}L^{\Lambda,\Lambda_0}_{l,B}|_{\phi=0}~,
\end{equation*}
where we used the notation $\delta_{\phi(z,x)}=\delta/\delta \phi(z,x)\,$. \\
In the $pz$-representation, we set for $r$, $r_1$ and $r_2\in \mathbb{N}^*$
\begin{equation}\label{13*}
\mathcal{D}_{l,n;r}^{\Lambda,\Lambda_0;(i)}\left(z_1;\vec{p}_n;\Phi_{n}\right):=\int_0^{\infty}dz_2\cdots dz_n~(z_1-z_i)^r\mathcal{D}_{l,n}^{\Lambda,\Lambda_0}
\left((z_1,p_1),\cdots,(z_n,p_n)\right)\phi_2(z_2)\cdots\phi_n(z_n),\ 
\end{equation}
\begin{multline}\label{13**}
\mathcal{D}_{l,n;r_1,r_2}^{\Lambda,\Lambda_0;(i,j)}\left(z_1;\vec{p}_n;\Phi_{n}\right):=\int_0^{\infty}dz_2\cdots dz_n~(z_1-z_i)^{r_1}(z_1-z_j)^{r_2}\mathcal{D}_{l,n}^{\Lambda,\Lambda_0}
\left((z_1,p_1),\cdots,(z_n,p_n)\right)\\\times\phi_2(z_2)\cdots\phi_n(z_n),~~~~r_1+r_2=r, 
\end{multline}
and for $r=0$
\begin{eqnarray}\label{13***}
\mathcal{D}_{l,n}^{\Lambda,\Lambda_0}\left(z_1;\vec{p}_n;\Phi_{n}\right):=\int_0^{\infty}dz_2\cdots dz_n~\mathcal{D}_{l,n}^{\Lambda,\Lambda_0}
\left((z_1,p_1),\cdots,(z_n,p_n)\right)\phi_2(z_2)\cdots\phi_n(z_n)\ .
\end{eqnarray}
Here we denote 
 $$\Phi_{n}(z_2,\cdots,z_n):=\prod_{i=2}^n\phi_i(z_i)$$
and 
$$\delta^{(3)}(p_1+\cdots+p_n)
\mathcal{D}_{l,n}^{\Lambda,\Lambda_0}\left((z_1,p_1),\cdots,(z_n,p_n)\right)
=(2\pi)^{3(n-1)}\frac{\delta^n}{\delta \phi(z_1,p_1)
\cdots\delta \phi(z_n,p_n)}L^{\Lambda,\Lambda_0}_{l,B}(\phi)|_{\phi\equiv0}~.$$
The distribution $\delta^{(3)}(p_1+\cdots+p_n)$ appears because of translation
invariance in the ${x}$ directions.
The FE for the CAS distributions derived from (\ref{floEq}) are
\begin{multline}\label{FED}
    \partial_{\Lambda}\partial^w\mathcal{D}_{l,n}^{\Lambda,\Lambda_0}\left((z_1,p_1),\cdots,(z_n,p_n)\right)\\
=\frac{1}{2}\int_{\mathbb{R}^+}dz~\int_{\mathbb{R}^+}dz'~\int_k 
\partial^w\mathcal{D}_{l-1,n+2}^{\Lambda,\Lambda_0}
\left((z_1,p_1),\cdots,(z_n,p_n),(z,k),(z',-k)\right)
\dot{C}^{\Lambda}_B(k;z,z')\\
 \indent -\frac{1}{2}\int_{\mathbb{R}^+}dz~\int_{\mathbb{R}^+}dz'~ \sum_{l_1,l_2}'\sum_{n_1,n_2}'\sum_{w_i}c_{w_i} \left[\partial^{w_1}\mathcal{D}_{l_1,n_1+1}^{\Lambda,\Lambda_0}((z_1,p_1),\cdots,(z_{n_1}p_{n_1}),(z,p))\partial^{w_3}\dot{C}^{\Lambda}_B(p;z,z')\right.\\\left.\times\ 
\partial^{w_2}\mathcal{D}_{l_2,n_2+1}^{\Lambda,\Lambda_0}((z',-p),\cdots,(z_{n},p_{n}))\right]_{rsym},\\
    p=-p_1-\cdots-p_{n_1}=p_{n_1+1}+\cdots+p_n\ .
\end{multline}
The number $|w|$ of momentum derivatives, is characterized by a multi-index. We use the shorthand notation
\begin{equation}\label{notation}
    \partial^{w}:=\prod^{n}_{i=1}\prod_{\mu=0}^3\left(\frac{\partial}{\partial p_{i,\mu}}\right)^{w_{i,\mu}}~ \mathrm{with}~ w=(w_{1,0},\cdots,w_{n,3}),~
|w|=\sum_{i,\mu}{w_{i,\mu}},~w_{i,\mu}\in \mathbb{N}^*\ .
\end{equation}{}
The symbol "rsym" means summation over those permutations of the momenta $(z_1,p_1)$, $\cdots$ ,$(z_n,p_n)$, which do not leave invariant the (unordered) subsets $\left((z_1,p_1),\cdots,(z_{n_1},p_{n_1})\right)$ and \newline $\left((z_{n_1+1},p_{n_1+1})\right.$,$\left.\cdots,(z_n,p_n)\right)$, and therefore, produce mutually different pairs of (unordered) image subsets, and the primes restrict the summations to $n_1+n_2=n$, $l_1+l_2=l$, $w_1+w_2+w_3=w$, respectively. The combinatorial factor $c_{\left \{w_i\right \}}=w!(w_1!w_2!w_3!)^{-1}$ stems from Leibniz's rule. 
\subsection{Test functions and boundary conditions}
The $n$-point correlation "functions" $\mathcal{D}_{l,n}^{\Lambda,\Lambda_0}$ when considered in the $pz$-representation are tempered distributions which belong for fixed $\Vec{p}_n$ to the space $\mathcal{S}'\left(\mathbb{R}^{+n}\right)$ w.r.t. the semi-norms $$\prod_{i=1}^n \mathcal{N}_{2}\left(\phi_i\right)\ ,$$ 
where 
$\mathcal{N}_{2}(\phi):=\sup_{0\leq \alpha,\beta\leq2}\left\|(1+z^{\beta})
\partial_z^{\alpha}\phi(z)\right\|_{\infty}$ 
and $\partial_z\phi|_{z=0}=\lim_{z\rightarrow 0^+}\partial_z\phi\,$. 
For additional informations on the topological construction of $\mathcal{S}'\left(\mathbb{R}^+\right)$, we refer the reader to \cite{Pott}. Our method of proof relies on inductive bounds deduced from the flow equations (\ref{FED}). The induction restricts our choice of the test functions. To proceed inductively we cannot admit any arbitrary test function in $\mathcal{S}\left(\mathbb{R}^{+n}\right)$. Let us give the set of test functions we will be using in the sequel. \\For $2\leq s \leq n$, we define 
 \begin{center}
 $\tau:=\inf \tau_{2,s}$,~~ where $\tau_{2,s}=\left(\tau_2,\cdots,\tau_s\right)$ with $\tau_i>0$~,
 \end{center}
 and similarly $z_{2,s}=\left(z_2,\cdots,z_s\right)$. Given $\left(y_2,\cdots,y_s\right)\in \mathbb{R}^{s-1}$, we define
\begin{equation}
    \phi_{\tau_{2,s},y_{2,s}}(z_{2,s}):=\prod_{i=2}^sp_B(\tau_i;z_i,y_i)
\prod_{i=s+1}^n \chi^{+}(z_i)\ ,
\end{equation}
where $\chi^{+}$ is the characteristic function of the semi-line $\mathbb{R}^+$. This definition can be generalized by choosing any other subset of $s-1$ coordinates among $z_2,\cdots,z_n\, $. The characteristic functions $\chi^+$ are introduced in order to be able to extract the relevant terms in the sense of the renormalization group from the full $n$-point distributions and to get inductive control of the local counter terms. To go further one could either prove (in a more functional analysis type of approach) that our test functions are dense  in the set of
smooth rapidly decaying functions on $\mathbb{R}^+$ w.r.t. a suitable norm and that $\mathcal{D}_{l,n}^{\Lambda,\Lambda_0}\left(z_1;\vec{p}_n;\Phi_{n}\right)$ are continuous w.r.t. this semi-norm. To extract the relevant terms contained in 
\begin{equation}
\mathcal{D}_{l,2}^{\Lambda,\Lambda_0}\left(z_1;\vec{0};\phi_2\right):=\int_{z_2}\mathcal{D}_{l,2}^{\Lambda,\Lambda_0}\left((z_1,0),(z_2,0)\right)\phi_2(z_2)
\end{equation}
and 
\begin{equation}
\mathcal{D}_{l,4}^{\Lambda,\Lambda_0}\left(z_1;\vec{0};\Phi_4\right):=\int_{z_{2,4}}\mathcal{D}_{l,4}^{\Lambda,\Lambda_0}\left((z_1,0),\cdots,(z_4,0)\right)\prod_{i=2}^4 \phi_i(z_i)\ ,
 \end{equation}
we use a Taylor expansion of the test functions $\phi_2$ and $\Phi_4$, which gives
\begin{align}
     \mathcal{D}_{l,2}^{\Lambda,\Lambda_0}\left(z_1;0,0;\phi_2\right)
    &=a_{l,B}^{\Lambda,\Lambda_0}(z_1)\phi_2(z_1)-s_{l,B}^{\Lambda,\Lambda_0}(z_1)(\partial_{z_1}\phi_2)(z_1)\nonumber\\&~~~~~~-d_{l,B}^{\Lambda,\Lambda_0}(z_1) (\partial_{z_1}^2\phi_2)(z_1)+l_{l,2;B}^{\Lambda,\Lambda_0}(z_1; \phi_2)\ ,\label{f1}\\
   \left(\partial_{p^2}\mathcal{D}_{l,2}^{\Lambda,\Lambda_0}\right)\left(z_1;0,0;\phi_2\right)&=b_{l,B}^{\Lambda,\Lambda_0}(z_1)\phi_2(z_1)+\left(\partial_{p^2}l_{l,2}^{\Lambda,\Lambda_0}\right)(z_1; \phi_2)\ ,\label{f2}\\
    \mathcal{D}_{l,4}^{\Lambda,\Lambda_0}\left(z_1;0,\cdots,0;\Phi_4\right)&=c_{l,B}^{\Lambda,\Lambda_0}(z_1)\phi_2(z_1)\phi_3(z_1)\phi_4(z_1)+l_{l,4,B}^{\Lambda,\Lambda_0}\left(z_1;\Phi_4\right)\ ,\label{f3}
\end{align}
where $\Phi_4(z_2,z_3,z_4)=\prod_{i=2}^4\phi_i(z_i)\,$.\\
Then the relevant terms appear as 
\begin{align}\label{rel}
     a_{l,B}^{\Lambda,\Lambda_0}(z_1)&=\int_0^{\infty} dz_2 \ \mathcal{D}_{l,2}^{\Lambda,\Lambda_0}\left((z_1,0),(z_2,0)\right),\\
    s_{l,B}^{\Lambda,\Lambda_0}(z_1)&=\int_0^{\infty} dz_2 \, (z_1-z_2)\mathcal{D}_{l,2}^{\Lambda,\Lambda_0}\left((z_1,0),(z_2,0)\right),\\
    d_{l,B}^{\Lambda,\Lambda_0}(z_1)&=-\frac{1}{2}\int_0^{\infty} dz_2 \, (z_1-z_2)^2\mathcal{D}_{l,2}^{\Lambda,\Lambda_0}\left((z_1,0),(z_2,0)\right),\\
    b_{l,B}^{\Lambda,\Lambda_0}(z_1)&=\int_0^{\infty} dz_2\ \partial_{p^2}\left(\mathcal{D}_{l,2}^{\Lambda,\Lambda_0}\left((z_1,p),(z_2,-p)\right)\right)_{|_{p=0}},\\
     c_{l,B}^{\Lambda,\Lambda_0}(z_1)&=\int_0^{\infty} dz_2dz_3dz_4 \ \mathcal{D}_{l,4}^{\Lambda,\Lambda_0}\left((z_1,0),\cdots,(z_4,0)\right),\label{lolo}
\end{align}
and the remainders $l_{l,2,B}^{\Lambda,\Lambda_0}\left(z_1; \phi_2\right)$, $\left(\partial_{p^2}l_{l,2,B}^{\Lambda,\Lambda_0}\right)\left(z_1;\phi_2\right)$ and $l_{l,4,B}^{\Lambda,\Lambda_0}\left(z_1;\Phi_4\right)$ can be written as
\begin{equation}
 l_{l,2,B}^{\Lambda,\Lambda_0}(z_1;\phi_2)=\int_{0}^{\infty}dz_2\int_0^1 dt \frac{(1-t)^2}{2!}\partial_t^3{\phi_2}\left(t z_2+(1-t)z_1\right)\mathcal{D}_{l,2}^{\Lambda,\Lambda_0}((z_1;0),(z_2;0))\ ,
\end{equation}
\begin{equation*}
 \left(\partial_{p^2}l_{l,2,B}^{\Lambda,\Lambda_0}\right)(z_1; \phi_2)
=\int_{0}^{\infty}\!\! dz_2\int_0^1 dt \partial_t{\phi_2}\left(t z_2+(1-t)z_1\right)\partial_{p^2}\left(\mathcal{D}_{l,2}^{\Lambda,\Lambda_0}\left((z_1,p),(z_2,-p)\right)\right)_{|_{p=0}}\, 
\end{equation*}
and
\begin{multline}\label{loup}
l_{l,4,B}^{\Lambda,\Lambda_0}(z_1;\Phi_4)\\
=\int_{0}^{\infty}dz_2dz_3dz_4~\mathcal{D}_{l,4}^{\Lambda,\Lambda_0}((z_1,0),\cdots,(z_4,0))\left[\int_0^1 dt~ {\partial_t\phi_2}\left(t z_2+(1-t)z_1\right)\phi_3(z_3)\phi_4(z_4)\right.\\\left.
+\phi_2(z_1)\int_0^1 dt ~{\partial_t\phi_3}\left(tz_3+(1-t)z_1\right)\phi_4(z_4)+\phi_2(z_1)\phi_3(z_1)\int_0^1 dt ~{\partial_t\phi_4}\left(tz_4+(1-t)z_1\right)\right]~.
\end{multline}
\textbf{Boundary conditions at $\Lambda=\Lambda_0$:}\\
The bare interaction implies that at $\Lambda=\Lambda_0$
\begin{align}\label{Dl1}
     \mathcal{D}_{l,2}^{\Lambda_0,\Lambda_0}
\left((z_1,p),(z_2,-p)\right)&=\left(a^{\Lambda_0}_{l;B}(z_1)+b^{\Lambda_0}_{l;B}(z_1)p^2-s^{\Lambda_0}_{l;B}(z_1)\partial_{z_1}-d^{\Lambda_0}_{l;B}(z_1)\partial_{z_1}^2\right)\delta(z_1-z_2)~,\nonumber\\
\mathcal{D}_{l,4}^{\Lambda_0,\Lambda_0}\left((z_1,p_1),\cdots,(z_4,p_4)\right)&=\left(\lambda \delta_{l,0}+c_{l;B}^{\Lambda_0}(z_1)(1-\delta_{l,0})\right)\prod_{i=2}^4 \delta(z_1-z_i)~.\nonumber\\
\mathcal{D}_{l,n}^{\Lambda_0,\Lambda_0}
\left((z_1,p_1),\cdots,(z_n,p_n)\right)&=0~,~~~~\forall n \geq 5.
\end{align}
\textbf{Renormalization conditions at $\Lambda=0$ (BPHZ renormalization conditions):}\\
The renormalization conditions are fixed at $\Lambda=0$ by imposing for all $z_1\geq 0$
\begin{eqnarray}\label{Dl3}
     \indent a_{l,B}^{0,\Lambda_0}(z_1)\equiv0,~~s_{l,B}^{0,\Lambda_0}(z_1)\equiv0,~~
    d_{l,B}^{0,\Lambda_0}(z_1)\equiv0,~~b_{l,B}^{0,\Lambda_0}(z_1)\equiv0,~~
     c_{l,B}^{0,\Lambda_0}(z_1)\equiv0\ .
\end{eqnarray}
These will be adopted in the following.
Note that the boundary conditions are invariant under $O(3)$-symmetry.\\
We will need the following result, which we do not prove. The proof can be performed following the same steps as in the proof of Theorem 1 in \cite{BorjiKopper2}:
\begin{proposition}\label{Prop1}
For $0\leq\Lambda\leq \Lambda_0<\infty$, $1\leq s\leq n$, $2\leq i \leq n$ and $0\leq r\leq 3$, we consider test functions of the form $\phi_{\tau_{2,s},y_{2,s}}(z_{2,s})$, which are also denoted in shorthand as $\phi_{\tau_{2,s},y_{2,s}}$.\\
Adopting (\ref{Dl1})-(\ref{Dl3}) we claim
\begin{multline}\label{c1b}
    \left| \partial^w \mathcal{D}_{l,n;r}^{\Lambda,\Lambda_0;(i)}(z_1;\vec{p}_n;\phi_{\tau_{2,s},y_{2,s}})\right|\\ \indent~~~~~~\leq \left(\Lambda+m\right)^{4-n-|w|-r}\mathcal{P}_1\left(\log \frac{\Lambda+m}{m}\right)\mathcal{P}_2\left(\frac{\left\|\vec{p}_n\right\|}{\Lambda+m}\right) \mathcal{Q}_1\left(\frac{\tau^{-\frac{1}{2}}}{\Lambda+m}\right)\mathcal{F}^{\Lambda}_{s,l;\delta}(\tau_{2,s})\ ,
\end{multline}
\begin{multline}\label{c1bb}
    \left| \partial^w \mathcal{D}_{l,n;r_1,r_2}^{\Lambda,\Lambda_0;(i,j)}(z_1;\vec{p}_n;\phi_{\tau_{2,s},y_{2,s}})\right|\\ \indent~~~~~~\leq \left(\Lambda+m\right)^{4-n-|w|-r_1-r_2}\mathcal{P}'_1\left(\log \frac{\Lambda+m}{m}\right)\mathcal{P}'_2\left(\frac{\left\|\vec{p}_n\right\|}{\Lambda+m}\right) \mathcal{Q}'_1\left(\frac{\tau^{-\frac{1}{2}}}{\Lambda+m}\right)\mathcal{F}^{\Lambda}_{s,l;\delta}(\tau_{2,s})\ ,
\end{multline}
and $\partial^w \mathcal{D}_{l,n;r}^{\Lambda,\Lambda_0;(i)}(z_1;\vec{p}_n;\phi_{\tau_{2,s},y_{2,s}})\,$ is continuous w.r.t. $z_1\,$. Here, $\mathcal{P}_i$ and $\mathcal{Q}_i$ denote polynomials with non-negative coefficients which depend on $l,n,|w|,r,$ but not on $\left \{ p_i \right \}$, $\Lambda$, $\Lambda_0$ and $z_1$. The polynomials $\mathcal{Q}_i$ are reduced to a constant if $s=1$, and for $l=0$ all polynomials $\mathcal{P}_i$ and $\mathcal{Q}_i$ reduce to constants.
\end{proposition}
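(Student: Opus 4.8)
The plan is to prove both bounds \eqref{c1b}--\eqref{c1bb} simultaneously by the standard flow-equation induction, ascending in the loop order $l$ as primary key and, for fixed $l$, ascending in the number of external legs $n$, carrying all multi-indices $w$ along at each stage. Reading the flow equation \eqref{FED} for $\partial^w\mathcal{D}_{l,n}^{\Lambda,\Lambda_0}$, its right-hand side involves only $\mathcal{D}_{l-1,n+2}^{\Lambda,\Lambda_0}$ (lower loop order) and products $\mathcal{D}_{l_1,n_1+1}^{\Lambda,\Lambda_0}\,\dot C_B^\Lambda\,\mathcal{D}_{l_2,n_2+1}^{\Lambda,\Lambda_0}$ with $l_1+l_2=l$, $n_1+n_2=n$, $w_1+w_2+w_3=w$. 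Since odd-point functions vanish by the $\phi\to-\phi$ symmetry and the tree-level two-point function $\mathcal{D}_{0,2}$ vanishes (the quadratic counter-terms in \eqref{bare} start at $l\geq 1$), a product factor at the current loop order $l$ can only be paired with a nonzero tree factor $\mathcal{D}_{0,n_2+1}$ having $n_2\geq 3$, so that it carries at most $n-2<n$ legs. Every term on the right-hand side is therefore strictly earlier in the ordering and the induction closes; the weighted distributions \eqref{13*}--\eqref{13***} are obtained by inserting the factors $(z_1-z_i)^r$ into this same recursion.

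For each $(l,n,|w|,r)$ I would split the object into a relevant and an irrelevant part according to the scaling dimension $4-n-|w|-r$. The relevant part consists of the finitely many terms of non-negative dimension isolated by the Taylor expansions \eqref{f1}--\eqref{f3}: the couplings $a_{l,B},b_{l,B},c_{l,B}$ at $r=0$, together with $s_{l,B}$ ($r=1$) and $d_{l,B}$ ($r=2$) as displayed in \eqref{rel}. These are integrated \emph{upward} in $\Lambda$ from the BPHZ conditions \eqref{Dl3} at $\Lambda=0$, which fix the integration constants to zero and make the resulting bounds uniform in $\Lambda_0$. The irrelevant part — all $n\geq 5$, the high-$|w|$ or high-$r$ pieces of the two- and four-point functions, and the Taylor remainders — is integrated \emph{downward} from $\Lambda=\Lambda_0$, where the boundary data \eqref{Dl1} either vanish (for $n\geq5$) or are local, so that the weighted boundary terms with $r\geq1$ also vanish.

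The analytic core of each step is the estimation of the $z,z'$ and loop-momentum integrals on the right-hand side of \eqref{FED}. I would bound $\partial^{w_3}\dot C_B^\Lambda$ using \eqref{p3} and the momentum estimate \eqref{cov}, which supply the factor $(\Lambda+m)^{-3-|w_3|}$ and a Gaussian in $k$, while the accompanying kernel $p_B(1/\Lambda^2;z,z')$ is merged with the heat kernels of the test functions $\phi_{\tau_{2,s},y_{2,s}}$ via the bulk semigroup \eqref{rr+} and completeness \eqref{rr*}; the restriction of the $z$-integrations to $\mathbb{R}^+$ is lifted at the cost of an overall factor by \eqref{rr++}. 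The loop integration over $k\in\mathbb{R}^3$ contributes $(\Lambda+m)^{3}$ and the merged kernel a factor $(\Lambda+m)$, so that against the $(\Lambda+m)^{-3}$ from $\dot C$ and the inductive bounds on the lower vertices the integrand scales as $(\lambda+m)^{3-n-|w|-r}$. Crucially, each weight $(z_1-z_i)^r$ is distributed along the chain of convolved heat kernels and converted into a factor $\tau^{r/2}\sim(\Lambda+m)^{-r}$ by repeated use of \eqref{in1}, which is exactly what produces the extra $-r$ in the exponent; the $\lambda$-integral then reconstructs $(\Lambda+m)^{4-n-|w|-r}$ in the irrelevant case, and one power of logarithm is generated per loop, yielding $\mathcal{P}_1$.

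The main obstacle, and the bulk of the bookkeeping, is to keep the test-function dependence in the prescribed form $\mathcal{Q}_1(\tau^{-1/2}/(\Lambda+m))\,\mathcal{F}^{\Lambda}_{s,l;\delta}(\tau_{2,s})$ stable under the recursion. Every application of \eqref{in1} to extract a weight moves a heat kernel from spreading parameter $\delta$ to $\delta'>\delta$ with constant $C_{\delta,\delta'}$ controlled by \eqref{Cdelt}; since only finitely many such steps occur for fixed $(l,n,|w|,r)$, one fixes in advance a sufficiently fine increasing sequence of spreading parameters so that all constants remain finite and are absorbed into $\mathcal{Q}_1$ and the $\mathcal{P}_i$. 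The delicate point is to show that the weight factor $\mathcal{F}^{\Lambda}_{s,l;\delta}$ of Section~4 reproduces itself under the two operations appearing in \eqref{FED} — joining two subtrees through a $\dot C_B^\Lambda$ line and contracting two legs of a single subtree — which is where the tree and forest combinatorics enter, with the singular behaviour as $\tau=\inf\tau_{2,s}\to0$ isolated entirely in $\mathcal{Q}_1$; this factor collapses to a constant when $s=1$ because no heat-kernel test functions remain. Continuity in $z_1$ then follows from dominated convergence applied to the now absolutely convergent $\lambda$-representation, and uniformity in $\Lambda_0$ from the vanishing of the irrelevant boundary data together with the $\Lambda_0$-independence of the renormalization conditions.
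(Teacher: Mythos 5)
The paper itself does not prove Proposition \ref{Prop1}: it explicitly defers to the proof of Theorem 1 in \cite{BorjiKopper2}, and your outline is in substance that proof. Your inductive ordering (upwards in $l$, for fixed $l$ upwards in $n$) is the one the paper uses for Theorem \ref{theoremReno}, and your justification that it closes — odd functions vanish by the $\phi\to-\phi$ symmetry and $\mathcal{D}_{0,2}\equiv 0$ because the quadratic counter-terms in \eqref{bare} start at order $\hbar$, so a same-loop-order factor in the quadratic term is always paired with a tree-level factor carrying at least four legs — is correct. The relevant/irrelevant splitting, the integration of the relevant couplings from the BPHZ conditions \eqref{Dl3} at $\Lambda=0$ and of the irrelevant pieces from the local or vanishing data \eqref{Dl1} at $\Lambda_0$, and the use of \eqref{cov}, \eqref{rr+}, \eqref{rr++} and \eqref{in1} to reproduce the weight factors are all the right ingredients; note only that for the bulk bounds \eqref{c1b}--\eqref{c1bb} just the rooted trees $T^s_l$ enter, not forests, which are needed only for the surface quantities.

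The genuine gap is your treatment of the momentum derivatives. The phrase ``carrying all multi-indices $w$ along at each stage'' cannot be taken literally, and nothing in your sketch bridges the renormalization point $\vec p=0$ to general momenta: the conditions \eqref{Dl3} control $a_{l,B},s_{l,B},d_{l,B},b_{l,B},c_{l,B}$ only at vanishing momentum, whereas \eqref{c1b} is claimed for all $\vec p_n$. The standard (and needed) device is a Taylor formula in momentum whose integral remainder involves $\partial^{w'}$ with $|w'|=|w|+1$; those higher-derivative terms are irrelevant and are bounded by downward integration from $\Lambda_0$, which is exactly why the inductive scheme must descend in $|w|$ from some $|w_{\max}|\geq 3$ for fixed $(n,l)$ — see the closing step ``Extension to general momenta'' in the paper's proof of Theorem \ref{theoremReno}. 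Without this third level of the induction, your argument for the relevant pieces is circular. A smaller imprecision: converting the weights $(z_1-z_i)^r$ via \eqref{in1} along propagator lines yields factors $\Lambda_I^{-r}\leq \Lambda^{-r}$, not $(\Lambda+m)^{-r}$; to obtain the mass-improved power claimed in \eqref{c1b}, uniformly down to $\Lambda=0$, one must trade $\Lambda^{-r}$ for $(\Lambda+m)^{-r}$ using the Gaussian mass factor of $\dot C^{\Lambda}$ as in \eqref{emlam}, a step your ``$\tau^{r/2}\sim(\Lambda+m)^{-r}$'' silently assumes.
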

The weight factors $\mathcal{F}_{s,l;\delta}^{\Lambda}\left(\tau_{2,s}\right)$ are defined in Section \ref{TFW}.
\section{The Surface correlation distributions}
\subsection{The semi-infinite theory}
In this subsection, we recall the flow equations of the semi-infinite massive scalar field model presented in \cite{BorjiKopper2}:
\begin{multline}\label{FEL}
    \partial_{\Lambda}\partial^w\mathcal{L}_{l,n;\star}^{\Lambda,\Lambda_0}\left((z_1,p_1),\cdots,(z_n,p_n)\right)\\
=\frac{1}{2}\int_z \int_{z'}\int_k 
\partial^w\mathcal{L}_{l-1,n+2;\star}^{\Lambda,\Lambda_0}
\left((z_1,p_1),\cdots,(z_n,p_n),(z,k),(z',-k)\right)
\dot{C}^{\Lambda}_{\star}(k;z,z')\\
    \indent -\frac{1}{2}\int_z \int_{z'} \sum_{l_1,l_2}'\sum_{n_1,n_2}'\sum_{w_i}c_{w_i} \left[\partial^{w_1}\mathcal{L}_{l_1,n_1+1;\star}^{\Lambda,\Lambda_0}((z_1,p_1),\cdots,(z_{n_1}p_{n_1}),(z,p))\partial^{w_3}\dot{C}^{\Lambda}_{\star}(p;z,z')\right.\\\left.\times\ 
\partial^{w_2}\mathcal{L}_{l_2,n_2+1;\star}^{\Lambda,\Lambda_0}((z',-p),\cdots,(z_{n},p_{n}))\right]_{rsym},\\
    p=-p_1-\cdots-p_{n_1}=p_{n_1+1}+\cdots+p_n\ .
\end{multline}
where $C_{\star}^{\Lambda,\Lambda_0}\left(p;z,z'\right)$ is defined in (\ref{decom}) and $\mathcal{L}_{l,n;\star}^{\Lambda,\Lambda_0}\left((z_1,p_1),\cdots,(z_n,p_n)\right)$ denote the semi-infinite correlation distributions at loop order $l$ and with $n$ external points. The $\star$ index refers to the type of considered boundary conditions, namely Dirichlet, Neumann and Robin. In \cite{BorjiKopper2}, we imposed the following mixed boundary conditions:
\begin{itemize}
    \item \underline{\textbf{At $\Lambda=\Lambda_0$:}}
\begin{align}\label{sib}
     \mathcal{L}_{l,2;\star}^{\Lambda_0,\Lambda_0}
\left((z_1,p),(z_2,-p)\right)&=\left(a^{\Lambda_0}_{l;\star}(z_1)+b^{\Lambda_0}_{l;\star}(z_1)p^2-s^{\Lambda_0}_{l;\star}(z_1)\partial_{z_1}-d^{\Lambda_0}_{l;\star}(z_1)\partial_{z_1}^2\right)\delta(z_1-z_2)~,\nonumber\\
\mathcal{L}_{l,4;\star}^{\Lambda_0,\Lambda_0}\left((z_1,p_1),\cdots,(z_4,p_4)\right)&=\lambda \delta_{l,0}+c_{l;\star}^{\Lambda_0}(z_1)(1-\delta_{l,0})\prod_{i=2}^4 \delta(z_1-z_i)~.\nonumber\\
\mathcal{L}_{l,n;\star}^{\Lambda_0,\Lambda_0}
\left((z_1,p_1),\cdots,(z_n,p_n)\right)&=0~,~~~~\forall n \geq 5.
\end{align}
\item \underline{\textbf{At $\Lambda=0$:}} We impose BPHZ type renormalization conditions. Namely, for all $z_1\geq 0$ we set 
\begin{equation}\label{renoc}
    \indent a_{l;\star}^{0,\Lambda_0}(z_1)\equiv0,~~s_{l;\star}^{0,\Lambda_0}(z_1)\equiv0,~~
    d_{l;\star}^{0,\Lambda_0}(z_1)\equiv0,~~b_{l;\star}^{0,\Lambda_0}(z_1)\equiv0,~~
     c_{l;\star}^{0,\Lambda_0}(z_1)\equiv0\ ,
\end{equation}
where 
\begin{align}\label{rel}
     a_{l;\star}^{\Lambda,\Lambda_0}(z_1)=\int_0^{\infty} dz_2 \ \mathcal{L}_{l,2;\star}^{\Lambda,\Lambda_0}\left((z_1,0),(z_2,0)\right),\\
    s_{l;\star}^{\Lambda,\Lambda_0}(z_1)=\int_0^{\infty} dz_2 \, (z_1-z_2)\mathcal{L}_{l,2;\star}^{\Lambda,\Lambda_0}\left((z_1,0),(z_2,0)\right),\\
    d_{l;\star}^{\Lambda,\Lambda_0}(z_1)=-\frac{1}{2}\int_0^{\infty} dz_2 \, (z_1-z_2)^2\mathcal{L}_{l,2;\star}^{\Lambda,\Lambda_0}\left((z_1,0),(z_2,0)\right),\\
    b_{l;\star}^{\Lambda,\Lambda_0}(z_1)=\int_0^{\infty} dz_2\ \partial_{p^2}\left(\mathcal{L}_{l,2;\star}^{\Lambda,\Lambda_0}\left((z_1,p),(z_2,-p)\right)\right)_{|_{p=0}},\\
     c_{l;\star}^{\Lambda,\Lambda_0}(z_1)=\int_0^{\infty} dz_2dz_3dz_4 \ \mathcal{L}_{l,4;\star}^{\Lambda,\Lambda_0}\left((z_1,0),\cdots,(z_4,0)\right)\label{lolo}.
\end{align}
\end{itemize}
This yielded five position dependent counter-terms which appear in the bare interaction of our semi-infinite model
\begin{multline}\label{bare1}
    L^{\Lambda_0,\Lambda_0}_{\star}(\phi)=\frac{\lambda}{4!}\int_{\mathbb{R}^+}dz\int_{\mathbb{R}^3}d^3x~\phi^4(z,x)
+\frac{1}{2}\int_{\mathbb{R}^+}dz\int_{\mathbb{R}^3}d^3x~ \biggl(a^{\Lambda_0}_{\star}(z)\phi^2(z,x)
-b^{\Lambda_0}_{\star}(z)\phi(z,x)\Delta_x \phi(z,x)\\
-d^{\Lambda_0}_{\star}(z)\phi(z,x)\partial_z^2\phi(z,x) 
+s^{\Lambda_0}_{\star}(z)\phi(z,x)(\partial_z\phi)(z,x)
+\frac{2}{4!}c^{\Lambda_0}_{\star}(z)\phi^4(z,x)\biggr).
\end{multline}
In our previous work \cite{BorjiKopper2}, we saw that imposing constant renormalization conditions w.r.t. the position $z$ at the scale $\Lambda=0$ is at the expense of obtaining position dependent counter-terms. In this work, we would like to reverse the process in the sense that we aim to obtain position independent counter-terms by transferring the position dependent part to the renormalization conditions. Our strategy is based on extracting the surface counter-terms from the semi-infinite counter-terms by separating the bulk and the surface effects. Concretely, we proceed by subtracting the bulk correlation distributions defined in Section \ref{bulkspace} from the semi-infinite correlation distributions and study the behaviour of the difference to which we refer as the surface correlation distributions. Namely, we write
\begin{equation}\label{S1ln}
    \mathcal{S}_{l,n;\star}^{\Lambda,\Lambda_0}\left((\vec{z}_n,\vec{p}_n)\right):= \mathcal{L}_{l,n;\star}^{\Lambda,\Lambda_0}\left((\vec{z}_n,\vec{p}_n)\right)-\mathcal{D}_{l,n}^{\Lambda,\Lambda_0}\left((\vec{z}_n,\vec{p}_n)\right).
\end{equation}
The definition (\ref{S1ln}) allows to write the FEs verified by $\mathcal{S}_{l,n;\star}^{\Lambda,\Lambda_0}$, which we give explicitly in the next subsection. 
\subsection{The surface correlation distributions}
Before getting to the mathematical definition of the surface correlation distributions, let us give a brief motivation of our approach based on a diagrammatic approach to the renormalization problem of the semi-infinite scalar field model. The propagator associated to the b.c. $\star$ can be decomposed into a sum of the two contributions given in (\ref{decom}), where $C_B^{\Lambda,\Lambda_0}$ is the regularized bulk propagator which is responsible for the singularities arising from coalescing of points and $C_{S,\star}^{\Lambda,\Lambda_0}$ is the part which is responsible of singularities arising when a point approaches the surface. Therefore, an arbitrary Feynman diagram of the semi-infinite model can be written as the sum of a diagram which contains only bulk internal lines consisting of propagators $C_B^{\Lambda,\Lambda_0}$ only, and other diagrams which contain at least one surface internal line given by the propagator $C_{S,\star}^{\Lambda,\Lambda_0}$. Renormalizing the massive semi-infinite model then amounts to renormalizing the diagrams with only bulk internal lines and those with at least a surface internal line. This approach has the advantage to disentangle the surface divergences from the bulk divergences. From the renormalization group point of view, we proceed similarly by writing (\ref{S1ln}) with $\mathcal{D}_{l,n}^{\Lambda,\Lambda_0}$ (resp. $\mathcal{S}_{l,n;\star}^{\Lambda,\Lambda_0}$) consisting of all connected amputated diagrams with $n$ external legs and $l$ loops involving exclusively  $C_B^{\Lambda,\Lambda_0}$ (resp. $C_B^{\Lambda,\Lambda_0}$ and at least one $C_{S,\star}^{\Lambda,\Lambda_0}$). 
Using the flow equations (\ref{FED}) and (\ref{FEL}), we obtain the flow equations verified by the surface correlation distributions $\mathcal{S}_{l,n;\star}^{\Lambda,\Lambda_0}\left((z_1,p_1),\cdots,(z_n,p_n)\right)$ 
\begin{align}\label{FEB}
 \partial&_{\Lambda}\partial^w\mathcal{S}_{l,n;\star}^{\Lambda,\Lambda_0}\left((z_1,p_1),\cdots,(z_n,p_n)\right)\nonumber\\
&=\frac{1}{2}\int_{z,z'}\int_k 
\partial^w\mathcal{S}_{l-1,n+2;\star}^{\Lambda,\Lambda_0}
\left((z_1,p_1),\cdots,(z_n,p_n),(z,k),(z',-k)\right)
\dot{C}^{\Lambda}_{\star}(k;z,z')\nonumber\\
&+\frac{1}{2}\int_{z,z'}\int_k 
\partial^w\mathcal{D}_{l-1,n+2}^{\Lambda,\Lambda_0}
\left((z_1,p_1),\cdots,(z_n,p_n),(z,k),(z',-k)\right)
\dot{C}^{\Lambda}_{S,\star}(k;z,z')\nonumber\\
   ~ &-\frac{1}{2}\int_{z,z'} \sum_{l_1,l_2}'\sum_{n_1,n_2}'\sum_{w_i}c_{w_i}\nonumber\\ &\left[\partial^{w_1}\mathcal{S}_{l_1,n_1+1;\star}^{\Lambda,\Lambda_0}((z_1,p_1),\cdots,(z_{n_1},p_{n_1}),(z,p))\partial^{w_3}\dot{C}^{\Lambda}_{\star}(p;z,z')\ 
\partial^{w_2}\mathcal{S}_{l_2,n_2+1;\star}^{\Lambda,\Lambda_0}((z',-p),\cdots,(z_{n},p_{n}))\right.\nonumber\\
&+\partial^{w_1}\mathcal{D}_{l_1,n_1+1}^{\Lambda,\Lambda_0}((z_1,p_1),\cdots,(z_{n_1},p_{n_1}),(z,p))\partial^{w_3}\dot{C}^{\Lambda}_{\star}(p;z,z')\ 
\partial^{w_2}\mathcal{S}_{l_2,n_2+1;\star}^{\Lambda,\Lambda_0}((z',-p),\cdots,(z_{n},p_{n}))\nonumber\\
&+\partial^{w_1}\mathcal{S}_{l_1,n_1+1;\star}^{\Lambda,\Lambda_0}((z_1,p_1),\cdots,(z_{n_1},p_{n_1}),(z,p))\partial^{w_3}\dot{C}^{\Lambda}_{\star}(p;z,z')
\partial^{w_2}\mathcal{D}_{l_2,n_2+1}^{\Lambda,\Lambda_0}((z',-p),\cdots,(z_{n},p_{n}))\nonumber\\
&\left.+\partial^{w_1}\mathcal{D}_{l_1,n_1+1}^{\Lambda,\Lambda_0}((z_1,p_1),\cdots,(z_{n_1},p_{n_1}),(z,p))\partial^{w_3}\dot{C}^{\Lambda}_{S,\star}(p;z,z')
\partial^{w_2}\mathcal{D}_{l_2,n_2+1}^{\Lambda,\Lambda_0}((z',-p),\cdots,(z_{n},p_{n}))\right]_{rsym},\nonumber\\
   &~~~~~~ p=-p_1-\cdots-p_{n_1}=p_{n_1+1}+\cdots+p_n\ .
\end{align}
For the tree order $l=0$ we have
\begin{equation}\label{tree}
    \mathcal{S}_{0,4;\star}^{\Lambda,\Lambda_0}\left((z_1,p_1),\cdots,(z_4,p_4)\right)=0~.
\end{equation}
The existence of $\mathcal{S}_{l,n}^{\Lambda,\Lambda_0}\left((z_1,p_1),\cdots,(z_n,p_n)\right)$ is ensured by (\ref{tree}) and by the flow equations (\ref{FEB}) through induction in $n+2l$ and in $l$ for fixed $n+2l$.
\subsection{Boundary and renormalization conditions}
\noindent For $\phi_1$ and $\phi_2$ in $\mathcal{S}(\mathbb{R}^+)$, the relevant terms are contained in 
\begin{equation}\label{jiji}
    \mathcal{S}_{l,2;\star}^{\Lambda,\Lambda_0}\left(0,0\right):=\int_{z_1,z_2}\mathcal{S}_{l,2;\star}^{\Lambda,\Lambda_0}\left((z_1,0),(z_2,0)\right)\phi_1(z_1)\phi_2(z_2).
\end{equation}
They are extracted from (\ref{jiji}) by performing a Taylor expansion of the test functions $\phi_1$ and $\phi_2$ around $0$ which gives \footnote{$(\partial_n\phi)(0)=\lim_{z\rightarrow 0}(\partial_z\phi)(z)$}
\begin{multline}\label{f0}
   \mathcal{S}_{l,2;\star}^{\Lambda,\Lambda_0}\left(0,0\right)
    =s_{l;\star}^{\Lambda,\Lambda_0}\phi_1(0)\phi_2(0)+e_{l;;\star}^{\Lambda,\Lambda_0}\phi_1(0)(\partial_{n}\phi_2)(0)+h_{l;\star}^{\Lambda_0}\phi_2(0)(\partial_{n}\phi_1)(0)\\+l_{l,2;\star}^{\Lambda,\Lambda_0}(\phi_1,\phi_2)\ .
\end{multline}
Then the relevant terms $s_{l;\star}^{\Lambda,\Lambda_0},~e_{l;\star}^{\Lambda,\Lambda_0}$ and $h_{l;\star}^{\Lambda,\Lambda_0}$ are obtained as 
\begin{multline}\label{eands}
    s_{l;\star}^{\Lambda,\Lambda_0}:=\int_{z_1,z_2}\mathcal{S}_{l,2}^{\Lambda,\Lambda_0}\left((z_1,0),(z_2,0)\right),~~e_{l;\star}^{\Lambda,\Lambda_0}:=\int_{z_1,z_2}z_2~\mathcal{S}_{l,2}^{\Lambda,\Lambda_0}\left((z_1,0),(z_2,0)\right),\\
    h_{l;\star}^{\Lambda,\Lambda_0}:=\int_{z_1,z_2}z_1~\mathcal{S}_{l,2}^{\Lambda,\Lambda_0}\left((z_1,0),(z_2,0)\right).
\end{multline}
Bose symmetry implies that 
\begin{equation}\label{eandh}
  \int_{z_1,z_2}z_2~\mathcal{S}_{l,2;\star}^{\Lambda,\Lambda_0}\left((z_1,0),(z_2,0)\right)=\int_{z_1,z_2}z_1~\mathcal{S}_{l,2;\star}^{\Lambda,\Lambda_0}\left((z_1,0),(z_2,0)\right),
\end{equation}
so that the counter-terms $e^{\Lambda,\Lambda_0}_{l;\star}$ and $h^{\Lambda,\Lambda_0}_{l;\star}$ are equal to all orders of perturbation theory. The remainder $l_{l,2;\star}^{\Lambda,\Lambda_0}(\phi_1,\phi_2)$ has the form 
\begin{align}\label{restI}
    l_{l,2;\star}^{\Lambda,\Lambda_0}&(\phi_1,\phi_2)=\left(\int_{z_1,z_2}z_1z_2~\mathcal{S}_{l,2;\star}^{\Lambda,\Lambda_0}\left((z_1,0),(z_2,0)\right)\right)\left(\partial_n\phi_1\right)(0)\left(\partial_n\phi_2\right)(0)\nonumber\\
    &+\phi_1(0)\int_{z_1,z_2}\mathcal{S}_{l,2;\star}^{\Lambda,\Lambda_0}\left((z_1,0),(z_2,0)\right)\int_0^1dt ~(1-t)\left(\partial_t^2\phi_2\right)(tz_2)~\nonumber\\
    &+\phi_2(0)\int_{z_1,z_2}\mathcal{S}_{l,2;\star}^{\Lambda,\Lambda_0}\left((z_1,0),(z_2,0)\right)\int_0^1dt ~(1-t)\left(\partial_t^2\phi_1\right)(tz_1)\nonumber\\
    &+(\partial_n\phi_1)(0)\int_{z_1,z_2}z_1~\mathcal{S}_{l,2;\star}^{\Lambda,\Lambda_0}\left((z_1,0),(z_2,0)\right)\int_0^1dt ~(1-t)\left(\partial_t^2\phi_2\right)(tz_2)\nonumber\\
    &+(\partial_n\phi_2)(0)\int_{z_1,z_2}z_2~\mathcal{S}_{l,2;\star}^{\Lambda,\Lambda_0}\left((z_1,0),(z_2,0)\right)\int_0^1dt ~(1-t)\left(\partial_t^2\phi_1\right)(tz_1)\nonumber\\
&+\int_{z_1,z_2}\mathcal{S}_{l,2;\star}^{\Lambda,\Lambda_0}\left((z_1,0),(z_2,0)\right)\left(\int_0^1dt~(1-t)\left(\partial_t^2\phi_1\right)(tz_1)\right)\times\left(\int_0^1dt'~(1-t')\left(\partial_{t'}^2\phi_2\right)(t'z_2)\right).
\end{align}
In the sequel, we use the following notations. For $\star\in\left\{R,N,D\right\}$, we write
\begin{equation*}
    \partial^w\mathcal{S}_{l,n;\star;r_1,r_2}^{\Lambda_0,\Lambda_0}
\left(\vec{p}_n;\phi_{\tau_{1,s},y_{1,s}}\right):=\int_{z_1,\cdots,z_n}z_1^{r_1}z_2^{r_2}\partial^w\mathcal{S}_{l,n;\star}^{\Lambda,\Lambda_0}\left((z_1,p_1),\cdots,(z_n,p_n)\right)\prod_{i=1}^s p_B\left(\tau_i;z_i,y_i\right),
\end{equation*}
\begin{equation*}
    \partial^w\mathcal{S}_{l,n;\star;r_1,r_2}^{\Lambda_0,\Lambda_0}
\left(\vec{p}_n;\phi_{\tau_{1,s},y_{1,s}}^{\star}\right)\\:=\int_{z_1,\cdots,z_n}z_1^{r_1}z_2^{r_2}\partial^w\mathcal{S}_{l,n;\star}^{\Lambda,\Lambda_0}\left((z_1,p_1),\cdots,(z_n,p_n)\right)\prod_{i=1}^s p_{\star}\left(\tau_i;z_i,y_i\right).
\end{equation*}
The boundary conditions imposed on $\mathcal{S}_{l,n;\star}^{\Lambda,\Lambda_0}$ are the following:\\
\begin{itemize}
    \item At $\Lambda=\Lambda_0$, we impose for $\star\in\left\{R,N\right\}$
\begin{align}
\mathcal{S}_{l,2;\star}^{\Lambda_0,\Lambda_0}
\left((z_1,p),(z_2,-p)\right)&=s_{l;\star}^{\Lambda_0,\Lambda_0}\delta_{z_1}\delta_{z_2}+e_{l;\star}^{\Lambda_0,\Lambda_0}\left(\delta_{z_1}\delta'_{z_2}+\delta'_{z_1}\delta_{z_2}\right)~,~~~\forall l\geq 1~,\label{BCDTS}\\
\mathcal{S}_{0,2;\star}^{\Lambda_0,\Lambda_0}
\left((z_1,p),(z_2,-p)\right)&=0~,\nonumber\\
\mathcal{S}_{l,n;\star}^{\Lambda_0,\Lambda_0}
\left((\vec{z}_n,\vec{p}_n)\right)&=0~,~~~\forall n\geq4,~\forall l\geq 0~.\label{BCDTS2}
\end{align}
\item  At $\Lambda=0$, we fix the renormalization conditions for $\star\in\left\{R,N\right\}$ as
\begin{equation}\label{renoS}
    s_{l;\star}^{0,\Lambda_0}=0,~~e_{l;\star}^{0,\Lambda_0}=0~.
\end{equation}
\item For Dirichlet boundary conditions we impose 
\begin{align}
\mathcal{S}_{l,n;D}^{\Lambda_0,\Lambda_0}
\left((\vec{z}_n,\vec{p}_n)\right)&=0,~~\forall n\geq2,~\forall l\geq 0~.\label{BCDDS}
\end{align}

\end{itemize}
\begin{remark}
\begin{itemize}
\item[-] The boundary conditions (\ref{BCDTS})-(\ref{renoS}) together with the flow equations (\ref{FEB}) and the tree order (\ref{tree}) define uniquely the surface correlation distributions   $$\mathcal{S}_{l,n;\star}^{\Lambda,\Lambda_0}\left((z_1,p_1),\cdots,(z_n,p_n)\right),~~~~\star\in\left\{D,R,N\right\}.$$
This can be verified inductively by taking the difference of two solutions of the flow equations which obey the same boundary conditions (\ref{BCDTS})-(\ref{renoS}) and by proving to all orders of perturbation theory that this difference vanishes. 
\item[-] We would like to emphasize w.r.t. (\ref{S1ln}) that we do not require any a priori knowledge on the semi-infinite correlation distributions $\mathcal{L}_{l,n;\star}^{\Lambda,\Lambda_0}$ to give a meaning to $\mathcal{S}_{l,n;\star}^{\Lambda,\Lambda_0}$. The flow equations (\ref{FEB}) together with the bulk correlation distributions defined in Section \ref{bulkspace}, the tree order (\ref{tree}) and the boundary conditions (\ref{BCDTS})-(\ref{renoS}) are sufficient to define uniquely the surface correlation distributions. The relation (\ref{S1ln}) implies the flow equations to be verified by $\mathcal{S}_{l,n;\star}^{\Lambda,\Lambda_0}$ such that the sum 
$$\mathcal{D}_{l,n}^{\Lambda,\Lambda_0}+\mathcal{S}_{l,n;\star}^{\Lambda,\Lambda_0}$$
is a solution to the FEs (\ref{FEL}).
\end{itemize}
\end{remark}
\section{Trees, Forests and Weight factors}\label{TFW}
The bounds on the surface and bulk correlation distributions are specified in terms of weighted trees and forests, which we define in the following, and for which we also derive some properties that will be important later. Our trees basically represent tree level Feynman graphs. However, we stress that
this analogy must not be taken too literally; the trees and the incidence number of vertices are independent of the
detailed form of the $n$-point interactions in the theory, the loop order controls the number of vertices of incidence number $2$ of the trees and forests via a bound, but there is no one-to-one correspondence between the loop order and the number of these vertices. A class of these tree/forest structures was already introduced in \cite{BorjiKopper2}, mainly the rooted trees. The novelty of this work lies in understanding the surface effects, which require the introduction of surface type trees which have an external point on the surface. The appearance of forests, which are a collection of surface trees is motivated by the structure of the FEs (\ref{FEB}), in particular the linear and quadratic terms  consisting only of bulk correlation distributions and the surface part of the propagator. These terms do not require any knowledge on the inductive bound and are bounded using (\ref{SurfHeatB}) and (\ref{c1b}), by a product of weight factors associated to a collection of surface trees.\\
First, we start with some notations that we will use in the sequel:
\begin{itemize}
    \item For $s\geq 1$, we denote by $\sigma_s$ the set $\left\{1,\cdots,s\right\}$ and for $i\leq j$ we denote by $\sigma_{i:j}$ the set $\left\{i,\cdots,j\right\}$.
    \item Let $\mathcal{P}_s$ be the set of all the partitions of $\sigma_s$. For a partition $\Pi\in\mathcal{P}_s$, we write $\Pi=\left(\pi_i\right)_{1\leq i\leq l_{\Pi}}$ with $\pi_i$ denoting an element of the partition $\Pi$ and $l_{\Pi}$ the cardinality of $\Pi$. 
    \item For $\Pi\in\mathcal{P}_s$ such that $r\in\pi_i$, we define
    \begin{equation}\label{tigre}
         \pi_i^r:=\pi_i\setminus\left\{r\right\}~,~~~~~\Pi^r:=\left(\cup_{j=1,~i\neq j}^r\pi_j\right) \cup \pi_i^r.
    \end{equation}
    \item Given $\Pi\in\mathcal{P}_{s+2}$ such that $\left\{s+1,s+2\right\}\in\pi_i$, we define the reduced sub-partition 
    \begin{equation}\label{tigre2}
        \pi_i^{s+1,s+2}:=\pi_i\setminus\left\{s+1,s+2\right\}.
    \end{equation}
    \item We denote by $\mathcal{P}^1_s$ the set of partitions which contain at least one sub-partition of length $1$ (i.e. $\exists \pi_i\subset \Pi,~|\pi_i|=1$) and $\mathcal{P}^{1;c}_s$ its complementary set.
    \item We denote by $\Tilde{\mathcal{P}}_{2;s}$ the set of partitions of length $2$ of the set $\sigma_s$. Note that $\Tilde{\mathcal{P}}_{2;s}$ is a subset of $\mathcal{P}_{s}$.
    
\end{itemize}
\subsection{Bulk trees, surface trees and forests}
\begin{itemize}
\item For $s\geq 2$, we denote by $\mathcal{T}^s$ the set of all trees that have a root vertex and $s-1$ external vertices. 
For a tree $T^s \in \mathcal{T}^s$ we will call $z_1\in \mathbb{R}^+$ its root vertex, and $Y=\left\{y_2,\cdots,y_s\right\}$ the set of points in $\mathbb{R}^{s-1}$ to be identified with its external vertices. Likewise we call $Z=\left\{z_2,\cdots,z_{r+1}\right\}$ the set of internal vertices of $T^s$ where $z_i\in\mathbb{R}^+$. For $r=0$, the set $Z$ is empty.  
\item We call $c_1=c(z_1)$ the incidence number of the root vertex, that is the number of lines of the tree that have the root vertex as an edge.
The external vertices have incidence number $1\,$ and the internal vertices have incidence number $>1\,$.
 We call a line $p$ an external line of the tree if one of its edges is in 
$Y\,$. The set of external lines is denoted $\mathcal{J}\,$. 
The remaining lines are called internal lines of the tree and 
are denoted by $\mathcal{I}\,$.
\item By $T^s_l$ we denote a tree $T^s \in \mathcal{T}^s$ satisfying $v_2+\delta_{c_1,1}\leq 3l-2+s/2$ for $l\geq1$ and satisfying $v_2=0$ for $l=0$, where $v_n$ is the number of vertices having incidence number $n$. Then $\mathcal{T}^{s}_l$ denotes the set of all trees $T^{s}_l$. We indicate the external vertices and internal vertices of the tree by writing $T^{s}_l(z_1,y_{2,s},\vec{z})$ with $y_{2,s}=(y_2,\cdots,y_s)$ and $\vec{z}=(z_2,\cdots,z_{r+1})$.
\item For $s\geq 1$, we define the set of bulk trees $\hat{\mathcal{T}}^s_l$ as the set of all trees with $s$ external vertices, no root vertex \footnote{All the vertices of incidence number one are external.} and which satisfy $v_2\leq 3l-2+\frac{s}{2}$ for all $l\geq1$, or $v_2=0$ for $l=0$.
 \item Let $s\geq 1$. For $Y_{\sigma_s}:=\left(y_1,\cdots,y_s\right)\in \mathbb{R}^s$, we define the set of surface trees $\mathcal{T}^{s,0}$ to be the set consisting of all trees of $s+1$ external vertices $\left\{y_1,\cdots,y_s,0\right\}$. In the sequel, we refer to the external vertex $0$ as the surface external vertex to distinguish it from the other external vertices.
 \item By $T^{s,0}_l$ we denote a surface tree $T^{s,0} \in \mathcal{T}^{s,0}$ satisfying $v_2\leq 3l-2+s/2$ for $l\geq1$ and satisfying $v_2=0$ for $l=0$. Then $\mathcal{T}^{s,0}_l$ denotes the set of all surface trees $T^{s,0}_l$. For a tree $T^{s,0}_l \in \mathcal{T}^{s,0}_l$, the set $\left\{y_1,\cdots,y_s,0\right\}$ of points in $\mathbb{R}$ is identified with its external vertices, and $\Vec{z}=\left(z_1,\cdots,z_{r}\right)$ such that $r\geq 1$ with the set of its internal vertices. We indicate the external vertices and the internal vertices of the tree by writing $T^{s,0}_l(Y_{\sigma_s},0,\vec{z})$. Note that this definition implies for all $l'\leq l$, $\mathcal{T}^{s,0}_{l'}\subset\mathcal{T}^{s,0}_l$.
\item Given a partition $\Pi\in\mathcal{P}_s$ and trees $T^{s_{\pi_i},0}_{l}\left(Y_{\pi_i},0,\vec{z}_{\pi_i}\right)\in\mathcal{T}^{s_{\pi_i},0}_l$, we define the forest $W^s_l(\Pi)$ as follows,
\begin{equation*}
W^s_l(\Pi)=\cup_{i=1}^{l_{\pi}}\left\{T^{s_{\pi_i},0}_{l}\left(Y_{\pi_i},0,\vec{z}_{\pi_i}\right)\right\}~~\mathrm{where}~~s_{\pi_i}:=|\pi_i|,~~\sum_{i=1}^{l_{\Pi}}s_{\pi_i}=s~~\mathrm{and}~~Y_{\pi_j}=\cup_{i\in\pi_j}\left\{y_i\right\}.
\end{equation*}
Here $|\pi_i|$ is the cardinality of the set $\pi_i$. We write shortly $T^{s_{\pi_i},0}_{l}\equiv T^{s_{\pi_i},0}_{l}\left(Y_{\pi_i},0,\vec{z}_{\pi_i}\right)$.  
Then the set of all forests $W^s_l(\Pi)$ denoted by $\mathcal{W}^s_l(\Pi)$, is defined as:
\begin{equation}\label{defs}
\mathcal{W}^s_l(\Pi):=\cup_{i=1}^{l_{\Pi}}\mathcal{T}^{s_{\pi_i},0}_l~.
\end{equation}
Note that for the trivial partition $\Pi_0=\sigma_s$, the length of the partition is equal to one. Therefore, the set $W^s_l(\sigma_s)$ reduces to surface trees $\mathcal{T}^{s,0}_l$. We write 
\begin{equation}\label{trivfor}
\mathcal{W}^s_l(\sigma_s)=\mathcal{T}^{s,0}_l~.
\end{equation}
This implies that each tree $T^{s_{\pi_i},0}_{l}\left(Y_{\pi_i},0,\vec{z}_{\pi_i}\right)$ can be identified with a forest in $\mathcal{W}^{s_{\pi_i}}_l(\sigma_{s_{\pi_i}})$, where $\sigma_{s_{\pi_i}}:=\cup_{k\in\pi_i}\left\{k\right\}$.
\item We define the global set of forests $\mathcal{W}^s_l$ by 
\begin{equation*}
\mathcal{W}^s_l:=\cup_{\Pi\in\mathcal{P}_s}\mathcal{W}^s_l(\Pi).
\end{equation*}
To illustrate these concepts, we give some examples of trees and forests for $s=3$ and $l=2$. The set of partitions is in this case 
\begin{equation}
    \mathcal{P}_3=\left\{\cup_{i=1}^3\left\{i\right\},\left\{1\right\}\cup\left\{2,3\right\},\left\{2\right\}\cup\left\{1,3\right\},\left\{3\right\}\cup\left\{1,2\right\},\sigma_3\right\}.
\end{equation}
\begin{itemize}
\item For the trivial partition $\Pi_0=\sigma_3$, the partition length $l_{\Pi}$ is equal to one and therefore the elements of the set $\mathcal{W}^3_2(\Pi_0)$ are the trees $T^{3,0}_2\in\mathcal{T}^{3,0}$ such that $v_2\leq 5$. For $v_2=3$, Figure 1 is an example of a surface tree in $\mathcal{W}^3_2(\Pi_0)$.
\begin{center}
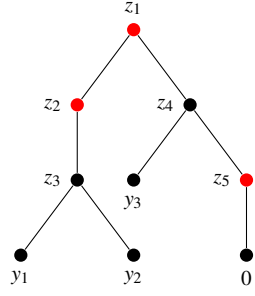
\begin{figure}
\begin{tikzpicture}[font=\footnotesize]
  \tikzset{
    level 1/.style={level distance=10mm,sibling angle=120},
    level 2/.style={level distance=10mm,sibling angle=60},
    level 3/.style={level distance=10mm,sibling angle=30},
    level 4/.style={level distance=10mm,sibling angle=30},
  }

  \node[circle,fill=red,inner sep=0pt,minimum size=5pt,label=above:{$z_1$}]{}
    child{node[circle,fill=red,inner sep=0pt,minimum size=5pt,label=left:{$z_2$}]{}
    child{node(l1)[circle,fill=black,inner sep=0pt,minimum size=5pt,label=left:{$z_3$}]{}
    child{node[circle,fill=black,inner sep=0pt,minimum size=5pt,label=below:{$y_1$}]{}edge from parent node[left]{}}
    child{node[circle,fill=black,inner sep=0pt,minimum size=5pt,label=below:{$y_2$}]{}edge from parent node[right]{}}
        edge from parent node[left]{}
     }} 
    child{node[circle,fill=black,inner sep=0pt,minimum size=5pt,label=left:{$z_4$}]{}
    child{node[circle,fill=black,inner sep=0pt,minimum size=5pt,label=below:{$y_3$}]{}edge from parent node[right]{}}
    child{node[circle,fill=red,inner sep=0pt,minimum size=5pt,label=left:{$z_5$}]{}
    child{node[circle,fill=black,inner sep=0pt,minimum size=5pt,label=below:{$0$}]{}edge from parent node[]{}}
        edge from parent node[left]{}
     }} 
  ;
\end{tikzpicture}
\caption{Example of a forest $W^3_2(\Pi_0)$ with $v_2=3$ and $\vec{z}=(z_1,\cdots,z_5)$. The red color is used for the internal vertices of incidence number $2$.}
\end{figure}
\end{center}
\item For the partition $\Pi_1=\cup_{i=1}^3 \left\{i\right\}$, an element of $\mathcal{W}^3_2(\Pi_1)$ (i.e. the set of forests of the partition $\Pi_1$) is given by the forest in Figure 2.
\begin{figure}[h!]
\centering
\begin{tikzpicture}
\node[circle,fill=black,inner sep=0pt,minimum size=5pt,label=above:{$z_3$}] (z3) at (0,0) {};
\node[circle,fill=black,inner sep=0pt,minimum size=5pt,label=above:{$z_1$}] (z1) at (-2,0) {};
\node[circle,inner sep=0pt,minimum size=5pt,fill=black,label=above:{$z_5$}] (z6) at (2,0) {};
\node[circle,fill,inner sep=0pt,minimum size=5pt,label=left:{$z_2$}] (z2) at (-2.5,-1) {};
\node[circle,fill=black,inner sep=0pt,minimum size=5pt,label=below:{$0$}] (01) at (-1.5,-1) {};
\node[circle,fill=black,inner sep=0pt,minimum size=5pt,label=below:{$y_2$}] (y2) at (-0.5,-1) {};
\node[circle,fill=black,inner sep=0pt,minimum size=5pt,label=below:{$y_1$}] (y1) at (-2.5,-2) {};
\node[circle,fill=black,inner sep=0pt,minimum size=5pt,label=left:{$z_4$}] (z5) at (0.5,-1) {};
\node[circle,fill=black,inner sep=0pt,minimum size=5pt,label=below:{$0$}] (02) at (0.5,-2) {};
\node[circle,fill=black,inner sep=0pt,minimum size=5pt,label=below:{$y_3$}] (y3) at (1.5,-1) {};
\node[circle,fill=black,inner sep=0pt,minimum size=5pt,label=below:{$0$}] (03) at (2.5,-1) {};
\draw (z1) -- (01) ;
\draw (z1) -- (z2) ;
\draw (z2) -- (y1) ;

\draw (z3) -- (z5) ;
\draw (z3) -- (y2) ;
\draw (z5) -- (02) ;

\draw (z6) -- (y3) ;
\draw (z6) -- (03) ;
\end{tikzpicture}
\caption{Example of a forest $W^3_2(\Pi_1)$ with $v_2=5$ and $\vec{z}=(z_1,\cdots,z_5)$~.}
\end{figure}
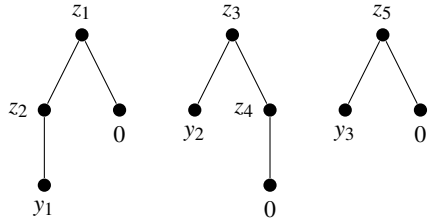\\
This forest is composed of three trees. Each tree has two external vertices. The external vertex $y_i$ has an index which belongs to the sub-partition $\left\{i\right\}$. Note that the total number of vertices of incidence number $2$ does not exceed $5$ (in this case it is equal to $5$). Note also that all the internal vertices of a surface tree with only two external vertices are of incidence number $2$.
\item For the partition $\Pi_2=\left\{1,2\right\}\cup\left\{3\right\}$, Figure 3 is an example of a forest in $\mathcal{W}^3_2(\Pi_2)$ with a total number of vertices of incidence number $2$ equal to $4$. 
\begin{figure}[h!]
\centering
\begin{tikzpicture}
\node[circle,fill=black,inner sep=0pt,minimum size=5pt,label=above:{$z_4$}] (z4) at (0,0) {};
\node[circle,fill=black,inner sep=0pt,minimum size=5pt,label=above:{$z_1$}] (z1) at (-3,0) {};
\node[circle,inner sep=0pt,minimum size=5pt,fill=black,label=above:{$z_6$}] (z6) at (0.5,-1) {};
\node[circle,fill,inner sep=0pt,minimum size=5pt,label=left:{$z_2$}] (z2) at (-3,-1) {};
\node[circle,fill=black,inner sep=0pt,minimum size=5pt,label=below:{$0$}] (01) at (-2,-1) {};
\node[circle,fill=black,inner sep=0pt,minimum size=5pt,label=below:{$y_2$}] (y2) at (-3,-2) {};
\node[circle,fill=black,inner sep=0pt,minimum size=5pt,label=below:{$y_1$}] (y1) at (-4,-1) {};
\node[circle,fill=black,inner sep=0pt,minimum size=5pt,label=left:{$z_5$}] (z5) at (-0.5,-1) {};
\node[circle,fill=black,inner sep=0pt,minimum size=5pt,label=below:{$0$}] (02) at (0.5,-2) {};
\node[circle,fill=black,inner sep=0pt,minimum size=5pt,label=below:{$y_3$}] (y3) at (-0.5,-2) {};

\draw (z1) -- (01) ;
\draw (z1) -- (z2) ;
\draw (z2) -- (y2) ;
\draw (z1) -- (y1) ;

\draw (z4) -- (z6) ;
\draw (z6) -- (02) ;
\draw (z4) -- (z5) ;
\draw (z5) -- (y3) ;
\end{tikzpicture}
\caption{Example of a forest $W^3_2(\Pi_2)\in \mathcal{W}^3_2(\Pi_2)$ with $v_2=4$~.}
\end{figure}
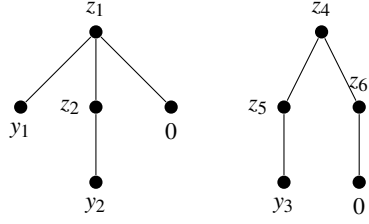
\end{itemize}
Similar examples for the forest $W^3_2(\Pi_3)$ (respectively $W^3_2(\Pi_4)$) for $\Pi_3=\left\{1,3\right\}\cup\left\{2\right\}$ (resp. $\Pi_4=\left\{2,3\right\}\cup\left\{1\right\}$) can be constructed by replacing in Figure $2$ the vertices $\left\{y_1,y_2\right\}$ by  $\left\{y_1,y_3\right\}$ and the vertex $y_3$ by $y_2$ (resp. $\left\{y_1,y_2\right\}$ by  $\left\{y_2,y_3\right\}$ and $y_3$ by $y_1$).\\
An example of a forest in the global set of forests $\mathcal{W}^3_2$ is $\cup_{i=0}^4 W^3_2(\Pi_i)$.
\end{itemize}
\subsection{Some operations on Forests and Trees}
\subsubsection{Reduction}
Let $W^{s+2}_{l-1}(\Pi)$ be a forest in $\mathcal{W}^{s+2}_{l-1}(\Pi)$. In this part, we define and explain the process of reducing the forest $W^{s+2}_{l-1}(\Pi)$ to a forest in $\mathcal{W}^s_l$. 
\begin{definition}(Reduced partition)
Let $s\geq1$ and $\Pi$ be in $\mathcal{P}_{s+2}$. 
We denote by $\pi_i$ and $\pi_j$ the sub-partitions of $\Pi$ such that ${s+1}\in \pi_i$ and ${s+2}\in\pi_j$. The reduced partition $\Pi^{s+1,s+2}$ is defined as follows,
$$\Pi^{s+1,s+2}
= \left\{
    \begin{array}{ll}
        \left\{\bigcup_{k=1,k\notin\left\{i,j\right\}}^{l_{\Pi}}\pi_k\right\}\cup \pi_i^{s+1}\cup\pi_j^{s+2} & \mbox{if } i\neq j \\
        \left\{\bigcup_{k=1,k\neq i}\pi_k\right\}\cup \pi_i^{s+1,s+2} &  \mbox{otherwise~,}
    \end{array}
\right.
$$
where we used the notation (\ref{tigre2}). 
\end{definition}
\begin{proposition}(Reduction process)
Let $s\geq 1$.    For $\Pi\in \mathcal{P}_{s+2}$, we define $C_{y_{s+1},y_{s+2}}$ to be the operator which acts on a forest $W^{s+2}_{l-1}(\Pi)\in\mathcal{W}^{s+2}_{l-1}(\Pi)$ by removing the two external legs attached to $y_{s+1}$ and $y_{s+2}$. If this operation produces an internal vertex of incidence number one, it is removed until an internal vertex of incidence number $c(z)\geq 2$ is reached. We have  
    \begin{equation}\label{cutop}
        C_{y_{s+1},y_{s+2}}W^{s+2}_{l-1}(\Pi)\in \mathcal{W}^{s}_{l}(\Pi^{s+1,s+2})~.
    \end{equation}
\end{proposition}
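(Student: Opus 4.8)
The plan is to reduce the assertion to a statement about the individual trees of the forest and then to check the defining inequality of a surface tree after the cut. I would write the forest as a union $W^{s+2}_{l-1}(\Pi)=\cup_k\{T^{s_{\pi_k},0}_{l-1}\}$ indexed by the parts of $\Pi$, and let $\pi_i$ be the part with $s+1\in\pi_i$ and $\pi_j$ the part with $s+2\in\pi_j$. The external vertices $y_{s+1}$ and $y_{s+2}$ then belong respectively to the trees carried by $\pi_i$ and $\pi_j$, so $C_{y_{s+1},y_{s+2}}$ leaves every tree with $k\notin\{i,j\}$ untouched and acts only on these. The two lines in the definition of $\Pi^{s+1,s+2}$ correspond exactly to the alternatives $i\neq j$ (one leg cut in each of two trees) and $i=j$ (two legs cut in the same tree), and my goal in each case is to show that the cut tree is again a bona fide surface tree whose non-surface external vertices are indexed by $\pi_i^{s+1}$, $\pi_j^{s+2}$, or $\pi_i^{s+1,s+2}$, and whose $v_2$ still satisfies the surface bound at the incremented loop order $l$.

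The technical heart is a one-leg lemma: cutting a single external line, attached to a vertex $z$, and then iterating the removal of incidence-one vertices, produces a surface tree with one fewer non-surface external vertex and raises $v_2$ by at most one. I would prove this by following the pruning chain inward from $z$. If $c(z)\geq 3$ the chain halts at once and $v_2$ grows only in the borderline passage $c(z)=3\mapsto 2$, that is by at most one; if $c(z)=2$ then $z$ is deleted, lowering $v_2$ by one, and the deletion travels along a string of incidence-two vertices, each step lowering $v_2$, until a vertex of incidence $\geq 3$ is reached, which can raise $v_2$ by at most one. In every branch the net change is $\le +1$. Crucially only internal vertices are ever pruned, so the surface vertex $0$ and each retained $y_k$ survive and the tree stays connected; the sole exception is when the tree has a single non-surface external vertex, in which case the whole path from it to $0$ collapses---but that is exactly the situation in which the reduced sub-partition is empty and the tree is to be dropped.

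With the lemma in hand the rest is bookkeeping. For $i\neq j$ I apply it once to each of the two trees; for $i=j$ I apply it twice in succession to the single tree (the intermediate object is again a surface tree to which the lemma applies, and $y_{s+2}$ survives the first cut because pruning only touches internal vertices), giving $\Delta v_2\le 1$ and $\Delta v_2\le 2$ respectively. It remains to compare $v_2$ with the target. The cut tree must lie in $\mathcal{T}^{s_{\pi_i}-1,0}_l$ (resp.\ $\mathcal{T}^{s_{\pi_i}-2,0}_l$), whose bound exceeds the original bound $3(l-1)-2+s_{\pi_i}/2$ by $3-\tfrac12=\tfrac52$ (resp.\ by $3-1=2$), since the increment $l-1\to l$ relaxes the right-hand side by $3$ while deleting one (resp.\ two) external legs tightens it by $\tfrac12$ (resp.\ $1$). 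This slack absorbs the $+1$ (resp.\ $+2$) growth of $v_2$, so the cut tree lands in the claimed set; the untouched trees only improve under $l-1\to l$, and assembling all surviving trees over the parts of $\Pi$ reproduces precisely $\mathcal{W}^s_l(\Pi^{s+1,s+2})$.

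The step I expect to be the main obstacle is controlling the pruning cascade: one must verify that the iterated deletion of incidence-one vertices can never strand or remove the surface vertex $0$ or a retained external vertex $y_k$, so that the reduced object is a genuine element of $\mathcal{T}^{\cdot,0}_l$ rather than a disconnected graph, and that in the case $i=j$ the two successive cuts yield a combined increment of at most $+2$ instead of compounding. Once the one-leg lemma is secured with its sharp constant, the inequality check is the short arithmetic above and the identification of the resulting index sets with $\Pi^{s+1,s+2}$ is immediate from the definition of the reduced partition.
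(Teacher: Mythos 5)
Your proposal is correct and follows essentially the same route as the paper's proof: the same split into the cases where $y_{s+1},y_{s+2}$ lie in the same tree ($i=j$, the paper's $\tilde{\mathcal{P}}_{s+2}$) or in different trees ($i\neq j$, the paper's $\tilde{\mathcal{P}}_{s+2}^c$), the same increments $v_2'\leq v_2+2$ resp.\ $v_2'\leq v_2+1$ absorbed by the slack coming from $l-1\to l$, and the same treatment of the degenerate collapse when a tree has a single non-surface external vertex. Your one-leg lemma with the explicit pruning-chain analysis merely makes rigorous the increment bound that the paper asserts in one line, and your structural argument that pruning only removes internal leaves (so $0$ and the retained $y_k$ survive) replaces the paper's counting of incidence-three vertices; both are presentational refinements rather than a different method.
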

\begin{proof}
 The set $\mathcal{P}_{s+2}$ can be separated into two subsets $\tilde{\mathcal{P}}_{s+2}$ and $\tilde{\mathcal{P}}_{s+2}^c$ defined as follows:
\begin{itemize}
    \item $\tilde{\mathcal{P}}_{s+2}$ is defined as a subset of $\mathcal{P}_{s+2}$ which contains all the partitions $\Pi$ for which there exists $\pi_i\in\Pi$ such that $\left\{s+1,~s+2\right\}\in\pi_i$.
    \item  $\tilde{\mathcal{P}}_{s+2}^c$ is the complementary set of $\tilde{\mathcal{P}}_{s+2}$~.
    \end{itemize}
    Diagrammatically, the global set of forests $\mathcal{W}^{s+2}_{l-1}$ is partitioned into two subsets: the subset of forests for which $y_{s+1}$ and $y_{s+2}$ both belong to the same surface tree and the subset of forests in which $y_{s+1}$ and $y_{s+2}$ belong to different surface trees.
    \begin{figure}[h!]
\centering
\begin{tikzpicture}
\node[circle,fill=black,inner sep=0pt,minimum size=5pt,label=above:{$z_2$}] (z2) at (1,0) {};
\node[circle,fill=black,inner sep=0pt,minimum size=5pt,label=above:{$z_1$}] (z1) at (-3,0) {};
\node[circle,fill=black,inner sep=0pt,minimum size=5pt,label=below:{$0$}] (02) at (1.5,-1) {};
\node[circle,fill=black,inner sep=0pt,minimum size=5pt,label=below:{$y_1$}] (y1) at (0.5,-1) {};
\node[circle,fill=black,inner sep=0pt,minimum size=5pt,label=below:{$y_2$}] (y2) at (-5,-1) {};
\node[circle,fill=black,inner sep=0pt,minimum size=5pt,label=below:{$y_{s+1}$}] (ys1) at (-3.5,-1) {};
\node[circle,fill=black,inner sep=0pt,minimum size=5pt,label=below:{$0$}] (01) at (-2,-1) {};
\node[circle,fill=black,inner sep=0pt,minimum size=5pt,label=below:{$y_{s+2}$}] (ys2) at (-2.5,-1) {};

\draw (z1) -- (01) ;
\draw (z1) -- (ys2) ;
\draw (z1) -- (ys1) ;
\draw (z1) -- (y2) ;
\path (y2) -- (ys1) node [midway] {$\cdots$};
\draw (z2) -- (y1) ;
\draw (z2) -- (02) ;
\end{tikzpicture}
\caption{Example of a forest $W^{s+2}_{l-1}(\Pi)$ where $\Pi\in \tilde{\mathcal{P}}_{s+2}$ and $l_{\Pi}=2$.}
\end{figure}
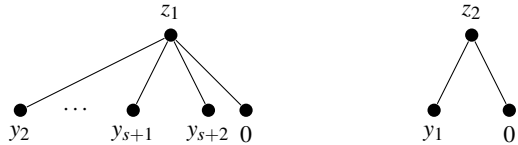
\\
 \begin{figure}[h!]
\centering
\begin{tikzpicture}
\node[circle,fill=black,inner sep=0pt,minimum size=5pt,label=above:{$z_2$}] (z2) at (1,0) {};
\node[circle,fill=black,inner sep=0pt,minimum size=5pt,label=above:{$z_1$}] (z1) at (-1,0) {};
\node[circle,fill=black,inner sep=0pt,minimum size=5pt,label=above:{$z_3$}] (z3) at (3,0) {};
\node[circle,fill=black,inner sep=0pt,minimum size=5pt,label=below:{$0$}] (02) at (1.5,-1) {};
\node[circle,fill=black,inner sep=0pt,minimum size=5pt,label=below:{$y_1$}] (y1) at (3,-1) {};
\node[circle,fill=black,inner sep=0pt,minimum size=5pt,label=below:{$y_{s+1}$}] (ys1) at (-1.5,-1) {};
\node[circle,fill=black,inner sep=0pt,minimum size=5pt,label=below:{$0$}] (01) at (-0.5,-1) {};
\node[circle,fill=black,inner sep=0pt,minimum size=5pt,label=below:{$0$}] (03) at (2.5,-1) {};
\node[circle,fill=black,inner sep=0pt,minimum size=5pt,label=below:{$y_{s}$}] (ys) at (4,-1) {};
\node[circle,fill=black,inner sep=0pt,minimum size=5pt,label=below:{$y_{s+2}$}] (ys2) at (0.5,-1) {};

\draw (z1) -- (01) ;
\draw (z1) -- (ys1) ;
\draw (z2) -- (ys2) ;
\draw (z2) -- (02) ;
\path (y1) -- (ys) node [midway] {$\cdots$};
\draw (z3) -- (y1) ;
\draw (z3) -- (03) ;
\draw (z3) -- (ys) ;
\end{tikzpicture}
\caption{Example of a forest $W^{s+2}_{l-1}(\Pi^c)$ where $\Pi^c\in \tilde{\mathcal{P}}_{s+2}^c$ and $l_{\Pi}=3$.}
\end{figure}\\
The proof of the statement (\ref{cutop}) follows directly from establishing that
\begin{equation}\label{cutop1}
    \forall \Pi\in \tilde{\mathcal{P}}_{s+2}:~C_{y_{s+1},y_{s+2}}W^{s+2}_{l-1}(\Pi)\in \mathcal{W}^s_l(\Pi^{s+1,s+2})
\end{equation}
and 
\begin{equation}\label{B9}
   \forall \Pi\in \tilde{\mathcal{P}}_{s+2}^c:~C_{y_{s+1},y_{s+2}}W^{s+2}_{l-1}(\Pi)\in \mathcal{W}^s_l(\Pi^{s+1,s+2})~.
\end{equation}
\begin{itemize}
    \item First, we prove (\ref{cutop1}). 
Given a partition $\Pi$ in $\tilde{\mathcal{P}}_{s+2}$, there exists a sub-partition $\pi_i\in\Pi$ such that $\left\{s+1,s+2\right\}\in\pi_i$. Therefore, we can write in slightly abusive notation
    \begin{equation}
        C_{y_{s+1},y_{s+2}}W^{s+2}_{l-1}(\Pi)=\bigcup_{k=1,k\neq i}^{l_{\pi}}\left\{T^{s_k,0}_{l-1}\left(Y_{\pi_k},0,\vec{z}_{\pi_k}\right)\right\}\bigcup \left\{C_{y_{s+1},y_{s+2}}T^{s_{\pi_i},0}_{l-1}\left(Y_{\pi_i},0,\vec{z}_{\pi_i}\right)\right\},
    \end{equation}
    where the tree $T^{s_{\pi_i},0}_{l-1}\left(Y_{\pi_i},0,\vec{z}_{\pi_i}\right)$ can be identified with a forest in $\mathcal{W}^{s_{\pi_i}}_{l-1}\left(\sigma_{s_{\pi_i}}\right)$.
    Deducing (\ref{cutop1}) amounts to prove for $s_{\pi_i}>2$
    \begin{equation}
    C_{y_{s+1},y_{s+2}}T^{s_{\pi_i},0}_{l-1}\left(Y_{\pi_i},0,\vec{z}_{\pi_i}\right)\in \mathcal{T}^{s_{\pi_i}-2,0}_l~.
    \end{equation}
    For $s_{\pi_i}=2$
    \begin{equation}\label{trivial}
    C_{y_{s+1},y_{s+2}}T^{2,0}_{l-1}\left(y_{s+1},y_{s+2},0,\vec{z}_{\pi_i}\right)=\varnothing~,
    \end{equation}
   and we have 
    \begin{equation}
        C_{y_{s+1},y_{s+2}}W^{s+2}_{l-1}(\Pi)=\bigcup_{k=1,k\neq i}^{l_{\pi}}\left\{T^{s_k,0}_{l-1}\left(Y_{\pi_k},0,\vec{z}_{\pi_k}\right)\right\},
    \end{equation}
    which is clearly in $W^s_l\left(\Pi^s\right)$. To treat the case $s_{\pi_i}>2$, the discussion is simplified by considering the case of the trivial partition $\Pi=\sigma_{s+2}$ s.t. $s\geq 1$. In this case, the set of forest $\mathcal{W}^{s+2}_{l-1}(\sigma_{s+2})$ is given by all the surface trees $T^{s+2,0}_{l-1}$ (see (\ref{trivfor})). Let $J_i$ (resp. $J_j$) be the external line which attaches the internal vertex $z_i$ (resp. $z_j$) to the external vertex $y_{s+1}$ (resp. $y_{s+2}$). The operator $C_{y_{s+1},y_{s+2}}$ removes the external legs $J_i$ and $J_j$ from the forest $W^{s+2}_{l-1}(\sigma_{s+2})$, and if one of the internal vertices $z_i$ and $z_j$ becomes of incidence number one, it is removed and the process continues until an internal vertex $z$ of incidence number $c(z)\geq 2$ is reached. This implies that $v'_2$ (i.e. the number of vertices of incidence number $2$ of the new forest $C_{y_{s+1},y_{s+2}}W^{s+2}_{l-1}\left(\sigma_{s+2}\right)$) is at most $v_2+2$, with $v_2$ the number of vertices of incidence number $2$ of $W^{s+2}_{l-1}(\sigma_{s+2})$. Therefore,
$$v'_2\leq v_2+2\leq 3(l-1)-2+\frac{s+2}{2}+2\leq 3l-2+\frac{s}{2}~.$$
The last point to verify is that the reduction process converges for $s\geq 1$ in the sense that we have
\begin{equation}\label{RRR}
C_{y_{s+1},y_{s+2}}W^{s+2}_{l-1}\left(\sigma_{s+2}\right)\neq \varnothing~.
\end{equation} 
In order to obtain (\ref{RRR}), we need to prove that there exists at least one internal vertex $\tilde{z}$ such that $c(\tilde{z})\geq 2$. If $W^{s+2}_{l-1}(\sigma_{s+2})$ has at least one internal vertex such that $c(z)\geq4$, then eventually (\ref{RRR}) holds. If all the internal vertices are of incidence number less than or equal to $3$, then since $s\geq 1$, the tree $W^{s+2}_{l-1}(\sigma_{s+2})$ has a number of external vertices greater than or equal to $4$ (taking into account the surface external vertex $0$ as well). This implies that it has at least two internal vertices $z$ and $z'$ such that $c(z)=c(z')=3$ which leads directly to (\ref{RRR}). 
This proves (\ref{cutop1}).
\item Now, we prove (\ref{B9}). Take $\Pi\in\tilde{\mathcal{P}}_{s+2}^c$, there exist $\pi_i,~\pi_j\in\Pi$ such that $i\neq j$, $\left\{s+1\right\}\in\pi_i$ and $\left\{s+2\right\}\in\pi_j$. Therefore, we can write 
    \begin{multline}
        C_{y_{s+1},y_{s+2}}W^{s+2}_{l-1}(\Pi)=\bigcup_{k=1,k\neq i,k\neq j}^{l_{\Pi}}\left\{T^{s_{\pi_k},0}_{l-1}\left(Y_{\pi_k},0,\vec{z}_{\pi_k}\right)\right\}\cup \left\{C_{y_{s+1}}T^{s_{\pi_i},0}_{l-1}\left(Y_{\pi_i},0,\vec{z}_{\pi_i}\right)\right\}\\
        \cup \left\{C_{y_{s+2}}T^{s_{\pi_j},0}_{l-1}\left(Y_{\pi_j},0,\vec{z}_{\pi_j}\right)\right\},
    \end{multline}
where the operator $C_{y_{s+1}}$ acts on the tree $T^{s_{\pi_i},0}_{l-1}$ by removing the external leg to which $y_{s+1}$ is attached and by removing all the internal vertices which through this process become of incidence number one. Following the same steps of the discussion above, we deduce that $v'_{2,i}$ the number of vertices of incidence number $2$ of the tree $C_{y_{s+1}}T^{s_{\pi_i},0}_{l}$ is at most \footnote{$v_{2,i}$ is the number of vertices of incidence number $2$ of $T^{s_{\pi_i}+1,0}_{l}$.} $v_{2,i}+1$ and it obeys
$$v'_{2,i}\leq v_{2,i}+1\leq 3(l-1)-2+\frac{s_{\pi_i}}{2}\leq  3l-2+\frac{s_{\pi_i}-1}{2}~.$$
Here again, we need to verify that the reduction process of the forest $W^{s+2}_{l-1}$ converges in the sense of (\ref{RRR}). If $|\pi_i|=|\pi_j|=1$, then we have 
\begin{equation}
        C_{y_{s+1},y_{s+2}}W^{s+2}_{l-1}(\Pi)=\bigcup_{k=1,k\neq i,k\neq j}^{l_{\pi}}\left\{T^{s_{\pi_k},0}_{l-1}\left(Y_{\pi_k},0,\vec{z}_{\pi_k}\right)\right\}~.
    \end{equation}
If $|\pi_i|\geq 2$, we have  
$$C_{y_{s+1}}T^{s_{\pi_i},0}_{l-1}\left(Y_{\pi_i},0,\vec{z}_{\pi_i}\right)\neq \varnothing~.$$
This holds since the tree $T^{s_{\pi_i},0}_{l-1}\left(Y_{\pi_i},0,\vec{z}_{\pi_i}\right)$ has at least three external vertices which implies that there exists at least one internal vertex such that $c(z)\geq 3$, and removing at most one external leg at each step of the reduction process implies that the incidence number of $z$ is strictly greater than $1$ at the end of the process.
\end{itemize}
\end{proof}
\subsubsection{Fusion}
\noindent In this part, we define and explain the merging process of a bulk tree with a forest.
\begin{proposition}\label{BulkMerge}
For $s\geq 2$ and $l\geq 0$, we consider the partition $(\Tilde{\pi}_1,\Tilde{\pi}_2)$ in $\Tilde{\mathcal{P}}_{2;s}$ such that $|\Tilde{\pi}_i|=s_i$ and $s_1+s_2=s$. Given a partition $\Pi$ of the set $\Tilde{\pi}_2\cup \left\{s_2+1\right\}$,  we define the $a$-merging operator $M^{a}_{y_{s_1+1},y_{s_2+1}}$ acting on the forest $W^{s_2+1}_{l_2}(\Pi)$ and the bulk tree $\hat{T}^{s_1+1}_{l_1}\left(Y_{\tilde{\pi}_1},y_{s_1+1};\vec{z}\right)$ at the external vertices $y_{s_1+1}$ and $y_{s_2+1}$ following the steps below:
\begin{itemize}
\item[(a)] Let $J_{s_1+1}=(z,y_{s_1+1})$ and $J_{s_2+1}=(z',y_{s_2+1})$ be the external legs which attach respectively $y_{s_1+1}$ to the internal vertex $z\in\hat{T}^{s_1+1}_{l_1} $ and $y_{s_2+1}$ to the internal vertex $z'\in W^{s_2+1}_{l_2}(\Pi)$. In the first step of the merging process $J_{s_1+1}$ and $J_{s_2+1}$ are removed.
\item [(b)] A new internal line $(z,z')$ is added.
\end{itemize}
Similarly, we define the $b$-merging operator $M^{b}_{y_{s_1+1},y_{s_2+1}}$ acting on $W^{s_2+1}_{l_2}(\Pi)$ and $\hat{T}^{s_1+1}_{l_1}\left(Y_{\tilde{\pi}_1},y_{s_1+1},\vec{z}\right)$ following the same steps above except for adding an internal vertex of incidence number $2$, which replaces the internal line $(z,z')$ in step (b) by the two internal lines  $(z,u)$ and $(u,z')$. 
Then we claim 
\begin{equation}\label{Ma}
    M^{i}_{y_{s_1+1},y_{s_2+1}}\left(\hat{T}^{s_1+1}_{l_1}\left(Y_{\Tilde{\pi}_1},y_{s_1+1};\vec{z}\right),W^{s_2+1}_{l_2}(\Pi)\right)\in \mathcal{W}^s_l(\Pi')~,~~~i\in\left\{a,b\right\}
\end{equation}
where\footnote{we used the notation (\ref{tigre}).} $\Pi':=\Tilde{\pi}_1\bigcup\Pi^{s_2+1}$  and $l:=l_1+l_2$.\\
\end{proposition}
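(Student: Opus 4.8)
The plan is to check two independent things about the object $M^i_{y_{s_1+1},y_{s_2+1}}(\hat T^{s_1+1}_{l_1},W^{s_2+1}_{l_2}(\Pi))$: first, that it is genuinely a collection of surface trees whose external vertices are organized according to $\Pi'$, and second, that each of its constituent trees respects the incidence constraint $v_2\le 3l-2+s_\pi/2$ at the merged loop order $l=l_1+l_2$. The first point is structural, while the second is a short incidence count that carries the real content of the statement.

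For the structural part I would localize the effect of the merging. The external vertex $y_{s_2+1}$ sits in exactly one constituent tree of $W^{s_2+1}_{l_2}(\Pi)$, namely the tree $T^{s_{\pi_i},0}_{l_2}(Y_{\pi_i},0,\vec z_{\pi_i})$ attached to the block $\pi_i\ni s_2+1$; the other trees are untouched and so remain in their $\mathcal T^{s_{\pi_k},0}_{l_2}\subseteq\mathcal T^{s_{\pi_k},0}_{l}$. The merging glues the whole bulk tree $\hat T^{s_1+1}_{l_1}$ to this one tree by the single line $(z,z')$ (for $i=a$), or by $(z,u)$ and $(u,z')$ through a fresh incidence-two vertex $u$ (for $i=b$). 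Joining two disjoint trees along one new edge (or path) again yields a tree; it carries exactly one surface vertex $0$ inherited from the forest tree (the bulk tree has none), and after deleting the two contact legs its non-surface external vertices are indexed by $\tilde\pi_1\cup\pi_i^{s_2+1}$. Collecting this glued component with the untouched trees gives precisely the partition $\Pi'=\tilde\pi_1\cup\Pi^{s_2+1}$, and a short count gives $(s_1+s_{\pi_i}-1)+(s_2+1-s_{\pi_i})=s_1+s_2=s$ non-surface external vertices in total, as required for membership in $\mathcal W^s_l$.

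The heart of the argument is the incidence count for the glued component. In the $a$-merging, deleting the external leg $(z,y_{s_1+1})$ lowers $c(z)$ by one while adding the line at $z$ raises it by one, and likewise at $z'$; since the two deleted external vertices had incidence one, no incidence-two vertex is created or destroyed, so $v_2^{\mathrm{merged}}=v_2^{(b)}+v_2^{(f)}$, where $v_2^{(b)},v_2^{(f)}$ count the incidence-two vertices of the bulk tree and of the forest tree carrying $y_{s_2+1}$. In the $b$-merging the vertex $u$ contributes one more, giving $v_2^{\mathrm{merged}}=v_2^{(b)}+v_2^{(f)}+1$. Using $v_2^{(b)}\le 3l_1-2+\tfrac{s_1+1}{2}$ and $v_2^{(f)}\le 3l_2-2+\tfrac{s_{\pi_i}}{2}$ with $l=l_1+l_2$, and recalling that the glued tree has $s_1+s_{\pi_i}-1$ non-surface external vertices, one finds for $i=a$
\begin{equation*}
v_2^{\mathrm{merged}}\le 3l-4+\tfrac{s_1+s_{\pi_i}+1}{2}=\Big(3l-2+\tfrac{s_1+s_{\pi_i}-1}{2}\Big)-1,
\end{equation*}
so the required bound holds with one unit to spare, while for $i=b$ the same estimate picks up the extra $+1$ and saturates the bound with equality. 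The decisive inputs are the constant $-2$ in each tree bound and the extra $+\tfrac12$ furnished by the additional external vertex $y_{s_1+1}$ of the bulk tree; together these absorb the two contact vertices and, in the $b$-case, the new vertex $u$.

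The one place genuinely needing care, and hence the main obstacle, is the tree-level boundary. For $l\ge 1$ the estimate closes as above, but $l=0$ forces $l_1=l_2=0$, where the defining condition is the sharper $v_2=0$ rather than $v_2\le 3l-2+s/2$. The $a$-merging still yields $v_2^{\mathrm{merged}}=0$ and is harmless, whereas the $b$-merging would introduce one incidence-two vertex; one must therefore record the convention that a new incidence-two vertex is inserted only when $l\ge 1$, so that the claim for $i=b$ is understood in that range. A secondary, purely notational task is to reconcile the relabeling implicit in treating $\Pi$ as a partition of $\tilde\pi_2\cup\{s_2+1\}$ with the target index set $\sigma_s$, i.e.\ verifying $\tilde\pi_1\cup\pi_i^{s_2+1}\cup\bigcup_{k\neq i}\pi_k=\sigma_s$ when assembling $\Pi'$.
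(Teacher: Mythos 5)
Your proof takes essentially the same route as the paper's: localize the merging to the single surface tree of $W^{s_2+1}_{l_2}(\Pi)$ containing $y_{s_2+1}$ (the paper reduces WLOG to the trivial partition), note that the $a$-merging preserves all incidence numbers while the $b$-merging adds exactly one incidence-two vertex, and close the count using $v_2^{(b)}\le 3l_1-2+\frac{s_1+1}{2}$ and $v_2^{(f)}\le 3l_2-2+\frac{s_{\pi_i}}{2}$, with the $b$-case saturating the target bound exactly as in (\ref{v2vr}) and the line that follows it. Your tree-level caveat is apt and goes slightly beyond the paper, which silently invokes the $l_i\ge 1$ form of the defining bounds (harmless in its only application, Lemma \ref{TFfusion}, where $1\le l_1,l_2\le l-1$), so there is no substantive difference between the two arguments.
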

\begin{proof}
Let $\pi_{i}$ be the sub-partition of $\Pi$ such that $s_2+1\in \pi_i$. The merging operators (a) and (b) act only on the tree $T^{s_{\pi_i},0}_{l_2}$ since all trees corresponding to the remaining sub-partitions do not have external vertices on which the merging operators act. Therefore, without loss of generality, we simplify the discussion by considering the case of a partition $\Pi$ of length one. \\
The first and second step of the two merging processes create a tree with $s+1$ external vertices given by the set $$\left\{Y_{\Tilde{\pi}_1}\right\}\cup\left\{Y_{\Tilde{\pi}_2},0\right\}.$$
The only difference between the two cases is related to the set of internal vertices, which in case of (a) is given by the union of the internal vertices of the bulk tree $\hat{T}^{s_1+1}_{l_1}\left(Y_{\tilde{\pi}_1},y_{s_1+1},\vec{z}\right)$ and the forest $W^{s_2+1}_{l_2}(\Pi)$. For (b), a new vertex of incidence number $2$ is added, which implies
\begin{equation}\label{v2vr}
    v_{2,a}=v_{2,1}+v_{2,2},~~~~v_{2,b}=v_{2,1}+v_{2,2}+1,
\end{equation}
where $v_{2,i}$ denotes the number of vertices of incidence number $2$ of the surface tree obtained through the merging process (i). Therefore, we obtain 
\begin{equation}
    v_{2,i}\leq  3(l_1+l_2)-4+\frac{s_1+s_2+2}{2}+1=3l-2+s/2~,~~i\in\left\{a,b\right\}.
\end{equation}
This concludes that the obtained surface trees obtained through the merging processes (a) and (b) are indeed in $\mathcal{W}^s_l(\sigma_s)$.
\end{proof}  

\subsection{Weight factors}
\subsubsection{The bulk weight factors}
Let $0<\delta<1$. Given a set $\tau_{2,s}:=\left\{\tau_2,\cdots,\tau_s\right\}$ with $\tau:=\inf_{2\leq i\leq s} \tau_{i}$, a set of 
external vertices  
$y_{2,s}=\left\{y_2,\cdots,y_s\right\} \in \mathbb{R}^{s-1}\,$ 
and a set of internal vertices $\vec{z}=(z_2,\cdots,z_{r+1})
 \in (\mathbb{R}^+)^{r} ,$ and attributing positive parameters $\Lambda_{\mathcal{I}}=\left\{\Lambda_I|I\in \mathcal{I}\right\}$ to the internal lines, the weight factor $\mathcal{F}_{\delta}\left(\Lambda_{\mathcal{I}},\tau_{2,s};T^s_l(z_1,y_{2,s},\vec{z})\right)$ of a tree $T^s_l(z_1,y_{2,s},\vec{z})$ at scales $\Lambda_I$ is defined as a product of heat kernels associated with the internal and external lines of the tree. We set
\begin{equation}\label{treeStr}
\mathcal{F}_{\delta}\left(\Lambda_{\mathcal{I}},\tau_{2,s};T^s_l(z_1,y_{2,s},\vec{z})\right):=\prod_{I\in \mathcal{I}}p_B\left(\frac{1+\delta}{\Lambda_I^2};I\right)\prod_{J\in \mathcal{J}}p_B(\tau_{J,\delta};J)~,
\end{equation}
where $\tau_{J,\delta}$ denotes the entry $\tau_{i,\delta}$ in $\tau$ carrying the index of the external coordinate $y_i$ in which the external line $J$ ends, and $\tau_{i,\delta}:=(1+\delta) \tau_i$. For $I=\left\{a,b\right\}$ the notation $p_B(\frac{1+\delta}{\Lambda_I^2};I)$ stands for $p_B(\frac{1+\delta}{\Lambda_I^2};a,b)$. We also define the integrated weight factor 
\begin{equation}\label{tree0}
\mathcal{F}_{\delta}\left(\Lambda,\tau_{2,s};T^s_l;z_1,y_{2,s}\right):=\sup_{\Lambda\leq\Lambda_I\leq\Lambda_0}\int_{\vec {z}}\mathcal{F}_{\delta}\left(\Lambda_{\mathcal{I}},\tau_{2,s};T^s_l(z_1,y_{2,s},\vec{z})\right).
\end{equation}
It depends on $\Lambda_0$, but note that its limit for $\Lambda_0\rightarrow \infty$ exists, and that typically the $\sup$ is expected to be taken for the minimal values of $\Lambda$ admitted. Therefore we suppress the dependence on $\Lambda_0$ in the notation. The definitions (\ref{treeStr})-(\ref{tree0}) can be generalized to a bulk tree $\hat{T}^s_l$.
Finally we introduce the global weight factor $\mathcal{F}\left(\Lambda,\tau_{2,s},z_1,y_{2,s}\right)$, which is defined through
\begin{equation}\label{15}
     \mathcal{F}_{s,l;\delta}\left(\Lambda,\tau_{2,s},z_1,y_{2,s}\right)
:=\sum_{T^s_l \in \mathcal{T}_l^s}\mathcal{F}_{\delta}
\left(\Lambda,\tau_{2,s};T^s_l;z_1,y_{2,s}\right)\ .
\end{equation}
Similarly, we define the global bulk weight factor 
\begin{equation}
     \hat{\mathcal{F}}_{s,l;\delta}\left(\Lambda,\tau_{2,s},y_{2,s}\right)
:=\sum_{\hat{T}^s_l \in \hat{\mathcal{T}}_l^s}\mathcal{F}_{\delta}
\left(\Lambda,\tau_{2,s};\hat{T}^s_l;y_{2,s}\right)\ .
\end{equation}
If this does not lead to ambiguity we write shortly 
\begin{equation}\label{1996}
    \mathcal{F}_{s,l;\delta}^{\Lambda}\left(\tau_{2,s}\right)\equiv 
\mathcal{F}_{s,l;\delta}\left(\Lambda,\tau_{2,s},z_1,y_{2,s}\right)\ .
\end{equation}
For $s=1$ we set $\mathcal{F}_{1,l;\delta}^{\Lambda}\equiv 1$. 
\subsubsection{The surface weight factors}
\begin{itemize}
\item In the sequel, we will use the following notations:
\begin{multline}
    \tau_{\pi_i}:=\left\{\tau_k|~k\in\pi_i\right\},~~~Y_{\pi_i}:=\left(y_k\right)_{k\in\pi_i},~~~\tau_{\pi_i,\delta}:=\left\{(1+\delta)\tau_k|~k\in\pi_i\right\},~~~\tau:=\inf_{1\leq i\leq s} \tau_i.
\end{multline}
\item Let $0<\delta<1$ and ${\tau}_{1,s}:=\left\{\tau_1,\cdots,\tau_s\right\}$ such that $\tau>0$ and let
$Y_{\sigma_s}\in \mathbb{R}^{s}\,$ be the set of the external vertices. Given a partition $\Pi\in \mathcal{P}_s$, let $\vec{z}_{\Pi}=\left(\vec{z}_{\pi_1},\cdots,\vec{z}_{\pi_{l_{\Pi}}}\right)\in (\mathbb{R}^+)^{p}$, where each vector $\vec{z}_{\pi_i}$ consists of the internal vertices of the tree $T^{s_{\pi_i},0}_l$ in the forest $W^s_l(\Pi)$. We denote by $\mathcal{I}=\cup_{i=1}^{l_{\Pi}} \mathcal{I}_k$ the set of the internal lines of the trees of $W^s_l(\Pi)$ and by $\mathcal{J}=\cup_{k=1}^{l_{\Pi}}\mathcal{J}_k$ the set of the external lines which link an internal vertex to an external vertex belonging to the set $Y_{\sigma_s}$. Each set $\mathcal{I}_k$ (resp. $\mathcal{J}_k$) denotes the internal lines (resp. the external lines) of the tree $T^{s_{\pi_k},0}_l$. We also use the notation $\mathcal{J}_k^0=\left\{J^0_k|1\leq k \leq l_{\Pi}\right\}$ to denote the set of surface external lines which link an internal vertex to $0$.
 \item Attributing positive parameters $\Lambda_{\mathcal{I}}=\left\{\Lambda_I|I\in \mathcal{I}\right\}$ to the internal lines and $\tilde{\Lambda}=\left\{\tilde{\Lambda}_k|k\in \mathcal{J}_k^0\right\}$ to the surface external lines, the weight factor $\mathcal{F}^0_{\delta}\left(\Lambda_{\mathcal{I}},\tilde{\Lambda};{\tau}_{1,s};W^s_l(\Pi);\vec{z}_{\Pi};Y_{\sigma_s}\right)$ of the forest  $W^s_l(\Pi)$ at scales $\Lambda_I$ and $\tilde{\Lambda}_k$ is defined as the product of heat kernels associated to the internal and external lines of each tree of the forest. For a sub-partition $\pi_k\in\Pi$, we define the weight factor of the tree $T^{s_{\pi_k},0}_l$ as follows:
\begin{multline}\label{treeStr'}
\mathcal{F}^0_{\delta}\left(\Lambda_{\mathcal{I}_k},\tilde{\Lambda}_{k};\tau_{\pi_k};T^{s_{\pi_k},0}_l;\vec{z}_{{\pi_k}};Y_{{\pi_k}}\right)\\:=\prod_{I\in \mathcal{I}_k}p_B\left(\frac{1+\delta}{\Lambda_I^2};I\right)\prod_{J\in \mathcal{J}_k}p_B((1+\delta)\tau_{J};J)~p_B\left(\frac{1+\delta}{\tilde{\Lambda}_k^2};J_0^k\right)~,
\end{multline}
where we used the same notations as in (\ref{treeStr}) and $J_0^k$ denotes the line which links an internal vertex to the external vertex $0$ with an attributed positive parameter $\tilde{\Lambda}_k$. The weight factor of the forest $W^s_l(\Pi)$ is defined for $\Pi\in\mathcal{P}^{1,c}_s$ as follows:
\begin{equation}\label{tree1}
\mathcal{F}^0_{\delta}\left(\Lambda_{\mathcal{I}},\tilde{\Lambda};{\tau}_{1,s};W^s_l(\Pi);\vec{z}_{\Pi};Y_{\sigma_s}\right):=\prod_{\pi_k\in\Pi}\mathcal{F}^0_{\delta}\left(\Lambda_{\mathcal{I}_k},\tilde{\Lambda}_{k};\tau_{\pi_k};T^{s_{\pi_k},0}_l;\vec{z}_{{\pi_k}};Y_{{\pi_k}}\right).
\end{equation}
For $\Pi\in\mathcal{P}^1_s$, it is given by 
\begin{multline}\label{tree2}
\mathcal{F}^0_{\delta}\left(\Lambda_{\mathcal{I}},\tilde{\Lambda};{\tau}_{1,s};W^s_l(\Pi);\vec{z}_{\Pi};Y_{\sigma_s}\right):=\prod_{\pi_k}\mathcal{F}^0_{\delta}\left(\Lambda_{\mathcal{I}_k},\tilde{\Lambda}_{k};\tau_{{\pi}_k};T^{s_{\pi_k},0}_l;\vec{z}_{{\pi_k}};Y_{{\pi_k}}\right)~\\
\times \prod_{\tilde{\pi}_k}\mathcal{F}^0_{\delta}\left(\Lambda_{\mathcal{I}_k},\tilde{\Lambda}_{k};2{\tau}_{\tilde{\pi}_k};T^{s_{\tilde{\pi}_k},0}_l;\vec{z}_{\tilde{\pi}_k};Y_{\tilde{\pi}_k}\right),
\end{multline}
where the product $\prod_{\tilde{\pi}_k}$ runs over all sub-partitions in $\Pi$ of length equal to $1$.
\item We also define the integrated surface weight factor 
\begin{equation}\label{treeStr2}
\mathcal{F}^0_{\delta}\left(\Lambda,{\tau}_{1,s};W^{s}_l(\Pi);Y_{\sigma_s}\right):=\sup_{\Lambda\leq\Lambda_I,\tilde{\Lambda}_k\leq\Lambda_0}\int_{\vec {z}_{\Pi}}\mathcal{F}^0_{\delta}\left(\Lambda_{\mathcal{I}},\tilde{\Lambda};{\tau}_{1,s};W^s_l(\Pi);\vec{z}_{\Pi};Y_{\sigma_s}\right)~,
\end{equation}
where $\int_{\vec{z}_{\Pi}}:=\prod_{i=1}^{p}\int_0^{\infty}dz_i$. The weight factor associated to a global forest $W^s_l$ is defined as
\begin{equation}\label{Sur101}
\mathcal{F}^{0}_{\delta}\left(\Lambda,{\tau}_{1,s};W^s_l;Y_{\sigma_s}\right):=\sum_{\Pi\in\mathcal{P}_s}\mathcal{F}^0_{\delta}\left(\Lambda,{\tau}_{1,s};W^s_l(\Pi);Y_{\sigma_s}\right).
\end{equation}
\item We define the global surface weight factor as follows,
\begin{equation}\label{treeStr3}
\mathcal{F}_{s,l;\delta}^{0}\left(\Lambda,{\tau}_{1,s};Y_{\sigma_s}\right):=\sum_{W^s_l \in \mathcal{W}^s_l}\mathcal{F}^{0}_{\delta}\left(\Lambda,{\tau}_{1,s};W^s_l;Y_{\sigma_s}\right).
\end{equation}
If it does not lead to ambiguity we write shortly 
\begin{equation}\label{treeStr4}
    \mathcal{F}^{\Lambda,0}_{s,l;\delta}\left({\tau}_{1,s}\right)\equiv \mathcal{F}_{s,l;\delta}^{0}\left(\Lambda,{\tau}_{1,s};Y_{\sigma_s}\right).
\end{equation}
For $s=0$ we set $\mathcal{F}^{\Lambda,0}_{0,l;\delta}\equiv 1$.
\end{itemize}
\begin{remark}
    \begin{itemize}
        \item The definitions (\ref{treeStr2}) and (\ref{tree0}) imply for $ 0\leq \Lambda'\leq \Lambda$ 
\begin{equation}\label{lambWeight}
    \mathcal{F}^{\Lambda,0}_{s,l;\delta}\left({\tau}_{1,s}\right)\leq \mathcal{F}^{\Lambda',0}_{s,l;\delta}\left({\tau}_{1,s}\right),~~~~~\mathcal{F}^{\Lambda}_{s,l;\delta}\left({\tau}_{1,s}\right)\leq \mathcal{F}^{\Lambda'}_{s,l;\delta}\left({\tau}_{1,s}\right).
\end{equation}
\item Combining the bound (\ref{in1}) together with the definitions (\ref{treeStr}) and (\ref{treeStr'}), the following bounds hold for all $0<\delta< \delta'$ and $0\leq \Lambda\leq \Lambda_0$
\begin{equation}\label{deltaWeight}
    \mathcal{F}^{\Lambda,0}_{s,l;\delta}\left({\tau}_{1,s}\right)\leq O(1)~\mathcal{F}^{\Lambda,0}_{s,l;\delta'}\left({\tau}_{1,s}\right),~~~~~\mathcal{F}^{\Lambda}_{s,l;\delta}\left({\tau}_{1,s}\right)\leq O(1)~\mathcal{F}^{\Lambda}_{s,l;\delta'}\left({\tau}_{1,s}\right).
\end{equation}
The constant $O(1)$ is explicitly given by 
$$\sup_{(\mathcal{I},\mathcal{J})\in T^s_l,~T^s_l\in\mathcal{T}^s_l}C_{\delta,\delta'}^{|\mathcal{I}|+|\mathcal{J}|},$$
where $\mathcal{I}$ and $\mathcal{J}$ are respectively the set of internal and external lines of the tree $T^s_l$ and $|\cdot|$ denotes their cardinality. The constant $C_{\delta,\delta'}$ is given by (\ref{Cdelt}) for $r=0$. 
    \end{itemize}
\end{remark}

\subsection{Inequalities}
For the proof in Sec \ref{ResProof}, we need to bound the tree/forest weight factors for reduced forests and for merged trees and forests. 
\begin{lemma}\label{reduction}
(Reduction)
    Let $\tau,~\delta>0$~, $0\leq \Lambda\leq \Lambda_0$ and $Y_{\sigma_s}\in\mathbb{R}^s$, we have 
        \begin{equation}\label{redBou}
            \int_{\mathbb{R}}du~\mathcal{F}^{0}_{s+2,l-1;\delta}\left(\Lambda;\tau_{1,s},\frac{1}{2\Lambda^2},\frac{1}{2\Lambda^2};Y_{\sigma_s},u,u\right)\leq~O(1)~\Lambda~\mathcal{F}^{0}_{s,l;\delta}\left(\Lambda;\tau_{1,s};Y_{\sigma_s}\right)~,
        \end{equation}
        where the constant $O(1)$ depends on $s$ and $l$.
\end{lemma}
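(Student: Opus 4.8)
The plan is to expand both sides as sums over partitions and forests and to match, term by term, each forest on the left with its image under the reduction operator $C_{y_{s+1},y_{s+2}}$ on the right. Writing the left-hand integrand as $\sum_{\Pi\in\mathcal{P}_{s+2}}\sum_{W^{s+2}_{l-1}(\Pi)}\mathcal{F}^0_\delta(\ldots)$ and the target as $\sum_{\Pi'\in\mathcal{P}_s}\sum_{W^s_l(\Pi')}\mathcal{F}^0_\delta(\ldots)$, it suffices to establish for each individual forest the bound
\begin{equation*}
\int_{\mathbb{R}}du~\mathcal{F}^0_\delta(\Lambda;\tau_{1,s},\tfrac{1}{2\Lambda^2},\tfrac{1}{2\Lambda^2};W^{s+2}_{l-1}(\Pi);Y_{\sigma_s},u,u)\leq O(1)\,\Lambda\,\mathcal{F}^0_\delta(\Lambda;\tau_{1,s};C_{y_{s+1},y_{s+2}}W^{s+2}_{l-1}(\Pi);Y_{\sigma_s})
\end{equation*}
and then to sum over $\Pi$. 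By the reduction proposition (\ref{cutop}) the reduced forest lies in $\mathcal{W}^s_l(\Pi^{s+1,s+2})\subset\mathcal{W}^s_l$, so its weight is one of the summands of $\mathcal{F}^0_{s,l;\delta}$; since only finitely many $\Pi$ (a number bounded in terms of $s,l$) reduce to a given partition, the resulting multiplicity is absorbed into the constant $O(1)$.

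The key analytic step is the $u$-integration. The two external legs attached to $y_{s+1}$ and $y_{s+2}$, both placed at $u$ with parameter $\tfrac{1}{2\Lambda^2}$, contribute $p_B(\tfrac{1+\delta}{2\Lambda^2};z_i,u)\,p_B(\tfrac{1+\delta}{2\Lambda^2};z_j,u)$, where $z_i,z_j$ are the internal vertices to which they attach. Performing $\int_{\mathbb{R}}du$ at fixed internal scales (Tonelli, all integrands being non-negative) and invoking the bulk semigroup property (\ref{rr+}) collapses this product into $p_B(\tfrac{1+\delta}{\Lambda^2};z_i,z_j)$; bounding this heat kernel by its maximal value $\tfrac{\Lambda}{\sqrt{2\pi(1+\delta)}}=O(1)\Lambda$ then produces precisely the single power of $\Lambda$. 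Because the two contracted legs carry the fixed parameter $\tfrac{1}{2\Lambda^2}$, independent of the internal and surface scales $\Lambda_I,\tilde{\Lambda}_k$ over which the supremum in (\ref{treeStr2}) is taken, this contraction is compatible with the supremum, which acts only on the surviving lines.

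It then remains to identify the surviving integrand with the reduced-forest weight, matching the cases of the reduction. When $z_i,z_j$ retain incidence $\geq 2$ after the two legs are deleted, simply discarding the coincidence factor $p_B(\tfrac{1+\delta}{\Lambda^2};z_i,z_j)\leq O(1)\Lambda$ leaves exactly the heat-kernel product of $C_{y_{s+1},y_{s+2}}W^{s+2}_{l-1}(\Pi)$, since $C_{y_{s+1},y_{s+2}}$ adds no new line. When $z_i$ or $z_j$ drops to incidence one, it is integrated out over $\mathbb{R}^+$ by completeness (\ref{rr*}), each such integration contributing a factor $\leq 1$ and realizing the successive contraction of incidence-one vertices prescribed by the reduction proposition. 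The degenerate case $s_{\pi_i}=2$, in which the tree carrying $y_{s+1},y_{s+2}$ vanishes altogether (\ref{trivial}), is handled by integrating out all its internal vertices and its surface leg by (\ref{rr*}), bounding that tree's total contribution by $O(1)\Lambda$. Finally, whenever reduction turns a length-two sub-partition into a length-one one, the convention (\ref{tree2}) replaces $\tau_k$ by $2\tau_k$ in the associated external leg; this change of parameter is absorbed at cost $\sqrt{2}=O(1)$ by the monotonicity estimate (\ref{pbdelta}).

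The main obstacle is the bookkeeping that aligns the purely combinatorial reduction $C_{y_{s+1},y_{s+2}}$ with the analytic heat-kernel operations, uniformly over all partition classes (the cases $i=j$ and $i\neq j$, and $\Pi\in\mathcal{P}^1_{s}$ versus its complement), so that the surviving factors reassemble into a genuine reduced-forest weight rather than into a structure containing a spurious loop. The analytic inputs themselves—the semigroup identity, the coincidence bound yielding $\Lambda$, and completeness for the dangling vertices—are elementary; the care lies in checking that they reproduce the reduced forest in each case and that the supremum over internal and surface scales passes cleanly through the fixed-scale $u$-contraction.
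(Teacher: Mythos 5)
Your proof is correct and follows essentially the same route as the paper's: the semigroup identity collapses the two legs at $u$ into a coincidence heat kernel bounded by $O(1)\,\Lambda$, dangling vertices are integrated out over $\mathbb{R}^+$ with factor at most one, the reduction proposition identifies the surviving structure as a forest in $\mathcal{W}^s_l(\Pi^{s+1,s+2})$, and the non-injectivity of the reduction map is absorbed into the constant. Your explicit handling of the $\tau_k\to 2\tau_k$ convention via (\ref{pbdelta}) when a sub-partition drops to length one, and of the degenerate tree carrying only $y_{s+1},y_{s+2}$, spells out bookkeeping the paper leaves implicit (it treats the analogous point on the left-hand side through the factors $\alpha_1,\alpha_2$), but this does not change the argument.
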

\begin{proof}
 Let us recall the definition of the surface weight factor, which in this case is given by 
\begin{multline}
    \mathcal{F}_{s+2,l-1;\delta}^{0}\left(\Lambda;\tau_{1,s},\frac{1}{2\Lambda^2},\frac{1}{2\Lambda^2};Y_{\sigma_{s+2}}\right)\\=\sum_{W^{s+2}_{l-1} \in \mathcal{W}^{s+2}_{l-1}}\sum_{\Pi\in\mathcal{P}_{s+2}}\mathcal{F}^0_{\delta}\left(\Lambda;\tau_{1,s},\frac{1}{2\Lambda^2},\frac{1}{2\Lambda^2};W^{s+2}_{l-1}(\Pi);Y_{\sigma_{s+2}}\right),
\end{multline}
where $y_{s+1}:=u$ and $y_{s+2}=u$. The weight factor $$\mathcal{F}^0_{\delta}\left(\Lambda;\tau_{1,s},\frac{1}{2\Lambda^2},\frac{1}{2\Lambda^2};W^{s+2}_{l-1}(\Pi);Y_{\sigma_{s+2}}\right)$$ is given by (\ref{tree1})-(\ref{tree2}). 
Let $(z_i,u)$ and $(z_j,u)$ be the external lines which attach respectively the internal vertices $z_i$ and $z_j$ to the external vertices $y_{s+1}$ and $y_{s+2}$. Using (\ref{rr+}), we obtain 
    \begin{multline}\label{semigroup}
    \int_{\mathbb{R}}du~p_B\left(\frac{\alpha_1(1+\delta)}{2\Lambda^2};z_i,u\right)p_B\left(\frac{\alpha_2(1+\delta)}{2\Lambda^2};z_j,u\right)
    =p_B\left(\frac{(\alpha_1+\alpha_2)(1+\delta)}{2\Lambda^2};z_i,z_j\right)\leq \Lambda~.
    \end{multline}
    We recall that a tree of two external vertices (including the surface external vertex $0$) corresponds to a sub-partition of length $1$ and the surface weight factor associated to these trees differs from a surface tree of three or more external vertices by a factor $2$ multiplying the parameter $\tau_i$ of the corresponding external vertex, as it appears in (\ref{tree2}). Therefore, the constants $\alpha_1$ and $\alpha_2$ take either the value $2$ or $1$ depending on whether the two external vertices at $u$ belong to a surface tree of only two external vertices or more.
    The bound (\ref{semigroup}) removes the external legs $(z_i,u)$ and $(z_j,u)$ from the forest $W^{s+2}_{l-1}(\Pi)$ by bounding their contribution in the surface weight factor by $\Lambda$. Furthermore, the property
\begin{equation}\label{inty}
    \int_0^{\infty}dz~p_B\left(\frac{1+\delta}{\Lambda_i^2};z,z'\right)\leq 1~
\end{equation}
implies that all internal vertices which after removing $(z_i,u)$ and $(z_j,u)$, their incidence number is equal to one are removed. These two steps correspond to reducing the forest $W^{s+2}_{l-1}(\Pi)$ at the external vertices $(u,u)$. Therefore, we have
\begin{multline}\label{summy}
   \int_{\mathbb{R}}du~\mathcal{F}^0_{\delta}\left(\Lambda;\tau_{1,s},\frac{1}{2\Lambda^2},\frac{1}{2\Lambda^2};W^{s+2}_{l-1}(\Pi);Y_{\sigma_{s}},u,u\right)
   \leq \Lambda~\mathcal{F}^0_{\delta}\left(\Lambda;\tau_{1,s};C_{u,u}W^{s+2}_{l-1}(\Pi);Y_{\sigma_s}\right)~.
\end{multline}
Proposition 1 gives that $C_{u,u}W^{s+2}_{l-1}(\Pi)\in \mathcal{W}^s_l\left(\Pi^{s+1,s+2}\right)$, where $\Pi^{s+1,s+2}\in\mathcal{P}_s$ is the reduced partition obtained from $\Pi$. Hence, we obtain
\begin{multline}\label{summy2}
    \sum_{W^{s+2}_{l-1} \in \mathcal{W}^{s+2}_{l-1}}\sum_{\Pi\in\mathcal{P}_{s+2}}\mathcal{F}^0_{\delta}\left(\Lambda,\tau_{1,s};C_{u,u}W^{s+2}_{l-1}(\Pi);Y_{\sigma_s}\right)\\\leq O(1)\sum_{W^{s}_{l} \in \mathcal{W}^{s}_{l}}\sum_{\Pi\in\mathcal{P}_{s}}\mathcal{F}^0_{\delta}\left(\Lambda,\tau_{1,s};W^{s}_{l}(\Pi);Y_{\sigma_s}\right).
\end{multline}
The constant $O(1)$ takes into account that the reduction operator is not a one-to-one map, in the sense that the same forest can be obtained by reducing different forests, which implies that some weight factors $\mathcal{F}^0_{\delta}\left(\Lambda,\tau_{1,s};W^{s}_{l}(\Pi);Y_{\sigma_s}\right)$ are possibly summed more than once in (\ref{summy}). Combining (\ref{summy}) and (\ref{summy2}) gives the final bound (\ref{redBou}).
\end{proof}
\begin{lemma}\label{FFfusion}
(Forest-Forest Fusion)
  Let $\delta,~\delta'>0$ and $1\leq l_1,~l_2\leq l-1$ such that $l_1+l_2=l$. Given $\left(\Tilde{\pi}_1,\Tilde{\pi}_2\right)\in\tilde{\mathcal{P}}_{2;s}$, we have
        \begin{multline}\label{106}
            \int_{\mathbb{R}}du~\mathcal{F}^0_{s_1+1,l_1;\delta}\left(\Lambda;\tau_{\tilde{\pi}_{1}},\frac{1}{2\Lambda^2};Y_{\tilde{\pi}_1},u\right)\mathcal{F}^0_{s_2+1,l_2;\delta'}\left(\Lambda;\tau_{\tilde{\pi}_{2}},\frac{1}{2\Lambda^2};Y_{\tilde{\pi}_2},u\right)\\
            \leq ~\Lambda~\mathcal{F}^0_{s,l;\delta''}\left(\Lambda;\tau_{1,s};Y_{\sigma_s}\right),
        \end{multline}
        where $s_i:=|\pi_i|$ and $\delta''=\max\left(\delta,\delta'\right)$.
\end{lemma}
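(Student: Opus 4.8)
The plan is to mimic the proof of Lemma~\ref{reduction}, the essential difference being that the two external legs to be amputated now sit in two \emph{different} forests, so that the operation combines (``fuses'') the two forests into a single one rather than reducing one. First I would expand both global surface weight factors on the left-hand side into their defining sums over forests, $\mathcal{F}^0_{s_i+1,l_i;\delta}=\sum_{W_i}\mathcal{F}^0(W_i)$, and fix one summand $(W_1,W_2)$ with $W_1\in\mathcal{W}^{s_1+1}_{l_1}$ and $W_2\in\mathcal{W}^{s_2+1}_{l_2}$. In each $W_i$ let $T^{(i)}$ be the surface tree carrying the distinguished external vertex $u=y_{s_i+1}$, and let $z_i$ (resp.\ $z_j$) be the internal vertex of $T^{(1)}$ (resp.\ $T^{(2)}$) to which $u$ is attached. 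By (\ref{treeStr'})--(\ref{tree2}) the two external lines ending at $u$ carry the fixed parameters $\tfrac{\alpha_1(1+\delta)}{2\Lambda^2}$ and $\tfrac{\alpha_2(1+\delta')}{2\Lambda^2}$, where $\alpha_m\in\{1,2\}$ according to whether $u$ lies in a sub-partition of length one (the doubling in (\ref{tree2})).

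The analytic core is then a single application of the bulk semi-group property (\ref{rr+}): since $u$ enters only through these two lines,
\[
\int_{\mathbb{R}}du\ p_B\!\left(\tfrac{\alpha_1(1+\delta)}{2\Lambda^2};z_i,u\right)p_B\!\left(\tfrac{\alpha_2(1+\delta')}{2\Lambda^2};u,z_j\right)=p_B\!\left(\tfrac{\alpha_1(1+\delta)+\alpha_2(1+\delta')}{2\Lambda^2};z_i,z_j\right)\le \Lambda,
\]
the last bound holding uniformly in $z_i,z_j$ because $\alpha_m\ge 1$ and $\delta,\delta'>0$ force the argument to be $\ge \Lambda^{-2}$, so that the prefactor $(2\pi\tau)^{-1/2}$ is $\le\Lambda$. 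Crucially this bound is independent of $z_i,z_j$ and of all the remaining (internal and surface) scales, which are the only quantities over which the suprema in (\ref{treeStr2}) are taken; hence the $u$-integration may be performed inside those suprema exactly as in Lemma~\ref{reduction}. Bounding the merged kernel by $\Lambda$ detaches $z_i$ from $z_j$, i.e.\ removes the two legs, and the property (\ref{inty}) then strips off any internal vertex whose incidence thereby drops to one; thus $T^{(i)}$ is reduced at $u$, and a length-one $u$-tree collapses entirely (contributing a harmless factor $\le 1$ and releasing its surface vertex). I expect the doubling parameters $\alpha_m$ to cause no trouble here precisely because the merged kernel is discarded against the uniform bound $\Lambda$ rather than reinserted as an internal line.

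The main obstacle, and the combinatorial heart of the lemma, is to check that the object produced by this cutting is a legitimate forest in $\mathcal{W}^s_l$, so that summing over it reproduces the right-hand side. The union of the two reduced $u$-trees together with the untouched trees of $W_1$ and $W_2$ has exactly $s_1+s_2=s$ non-surface external vertices indexed by $\tilde\pi_1\cup\tilde\pi_2=\sigma_s$, and each of its trees inherits a loop order $l_1$ or $l_2$ that is now allowed to be the larger value $l=l_1+l_2$. The point requiring care is the vertex-count constraint: a surviving $u$-tree in forest $m$ loses one external vertex while reduction raises its $v_2$ by at most one, so its budget becomes $v_2'\le 3l_m-2+s_\pi/2+1$, and this is $\le 3l-2+(s_\pi-1)/2$ with slack $3l_{m'}-\tfrac32\ge\tfrac32>0$ for the complementary loop order $l_{m'}\ge1$; the untouched trees satisfy their constraint a fortiori since $l\ge l_m$. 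This is the analogue of Proposition~\ref{BulkMerge}, except that here---both factors being surface objects carrying their own surface vertex, which cannot be merged into a single surface tree---the trees are juxtaposed rather than fused through an added incidence-two vertex.

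It remains to pass from the mixed parameters $\delta,\delta'$ to the common $\delta''=\max(\delta,\delta')$, which is done line by line through (\ref{in1}) as in (\ref{deltaWeight}) at the cost of an overall $O(1)$, and to note that the cutting map is finite-to-one (a given combined forest arises from boundedly many pairs $(W_1,W_2)$), absorbing a further $O(1)$. Summing over all pairs $(W_1,W_2)$ then yields $\int_{\mathbb{R}}du\,\mathcal{F}^0_{s_1+1,l_1;\delta}\,\mathcal{F}^0_{s_2+1,l_2;\delta'}\le \Lambda\,\mathcal{F}^0_{s,l;\delta''}$, the clean factor $\Lambda$ coming from the strict bound $<\Lambda$ on the merged kernel and the residual constants being absorbed exactly as in Lemma~\ref{reduction}.
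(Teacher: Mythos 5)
Your proposal is correct and follows essentially the same route as the paper's own proof: expand the global weight factors into sums over forests, integrate out $u$ via the bulk semi-group property (\ref{rr+}) and bound the merged kernel by $\Lambda$ (the paper's (\ref{semigroup})), strip incidence-one vertices with (\ref{inty}), and verify the $v_2$-constraint using the slack $3l_{m'}\geq 3$ coming from $1\leq l_1,l_2\leq l-1$, exactly as in the paper's estimates (\ref{v2si})--(\ref{v2sj}). The only cosmetic differences are that you perform the $u$-integration with mixed parameters $\delta,\delta'$ and convert to $\delta''$ afterwards (the paper converts first via (\ref{deltaWeight})), and that you explicitly flag the finite-to-one counting constant, which the paper leaves implicit.
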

\begin{proof}
 Without loss of generality, we consider the ordered sub-partitions $\Tilde{\pi}_1:=\sigma_{1,s_1}$ and $\Tilde{\pi}_2:=\sigma_{s_1+1,s}$. To establish (\ref{106}), it is sufficient to bound  
\begin{multline}\label{WFT1}
    \int_{\mathbb{R}}du~\mathcal{F}^0_{s_1+1,l_1;\delta}\left(\Lambda;\tau_{1,s_1},\frac{1}{2\Lambda^2};W^{s_1+1}_{l_1}(\Pi_1);Y_{\sigma_{s_1}},u\right)\\\times\mathcal{F}^0_{s_2+1,l_2;\delta'}\left(\Lambda;\tau_{s_1+1,s},\frac{1}{2\Lambda^2};W^{s_2+1}_{l_2}(\Pi_2);Y_{\sigma_{s_1+1:s}},u\right)
\end{multline}
where $\Pi_1\in\Tilde{\mathcal{P}}_{s_1+1}$ and $\Pi_2\in\Tilde{\mathcal{P}}_{s_2+1}$. The sets $\Tilde{\mathcal{P}}_{s_1+1}$ and $\Tilde{\mathcal{P}}_{s_2+1}$ denote respectively the set of partitions of $\sigma_{s_1}\cup \left\{s+1\right\}$ and $\sigma_{s_1+1:s}\cup \left\{s+2\right\}$. Using (\ref{deltaWeight}), we can bound (\ref{WFT1}) by 
\begin{multline}\label{WFT}
    \int_{\mathbb{R}}du~\mathcal{F}^0_{s_1+1,l_1;\delta''}\left(\Lambda;\tau_{1,s_1},\frac{1}{2\Lambda^2};W^{s_1+1}_{l_1}(\Pi_1);Y_{\sigma_{s_1}},u\right)\\\times\mathcal{F}^0_{s_2+1,l_2;\delta''}\left(\Lambda;\tau_{s_1+1,s},\frac{1}{2\Lambda^2};W^{s_2+1}_{l_2}(\Pi_2);Y_{\sigma_{s_1+1:s}},u\right),
\end{multline}
where $\delta''=\max\left(\delta,\delta'\right)$.\\
Let $\pi_i$ and $\pi_j$ be respectively the sub-partitions in $\Pi_1$ and $\Pi_2$ such that $\left\{s+1\right\}\in\pi_i$ and $\left\{s+2\right\}\in\pi_j$. We denote by $z_i$ and $z_j$ the internal vertices in the sub-surface trees $T^{s_{\pi_i},0}_{l_1}\left(Y_{\pi_i},u,0\right)$ and $T^{s_{{\pi}_j},0}_{l_2}\left(Y_{\pi_j},u,0\right)$ in the forests $W^{s_1+1}_{l_1}(\Pi_1)$ and $W^{s_2+1}_{l_2}(\Pi_2)$, which are attached to $u$. As we mentioned previously, the bound (\ref{semigroup}) amputates the external legs $(z_i,u)$ and $(z_j,u)$ and bounds their contribution in (\ref{WFT}) by $\Lambda$. Furthermore, (\ref{inty}) implies that all internal vertices of incidence number $1$ are removed. The amputation can possibly create in each tree at most one internal vertex of incidence number $2$. Denoting by $T^{s_{\pi_i}-1,0}_{l_1}\left(Y_{\pi_i},0\right)$ the surface tree obtained by amputating the external leg $(z,u)$ from $T^{s_{\pi_i},0}_{l_1}\left(Y_{\pi_i},u,0\right)$, we deduce 
$$\tilde{v}_{2,i}^{s_1}\leq {v}_{2,i}^{s_1}+1\leq 3l_1-1+\frac{s_{\pi_i}-1}{2}+\frac{1}{2},$$
where $\tilde{v}_{2,i}^{s_1}$ and ${v}_{2,i}^{s_1}$ denote respectively the number of vertices of incidence number $2$ of the surface trees  $T^{s_{\pi_i}-1,0}_{l_1}\left(Y_{\pi_i},0\right)$ and $T^{s_{\pi_i},0}_{l_1}\left(Y_{\pi_i},u,0\right)$.
Since $1\leq l_1\leq l-1$, we obtain 
\begin{equation}\label{v2si}
\tilde{v}_{2,i}^{s_1}\leq 3l-3-1+\frac{s_{\pi_i}-1}{2}+\frac{1}{2}\leq 3l-2+\frac{s_{\pi_i}-1}{2}.
\end{equation}
Proceeding similarly with $T^{s_{\pi_j},0}_{l_2}\left(Y_{\pi_j},u,0\right)$, we deduce that the number of vertices of the amputated tree obeys 
\begin{equation}\label{v2sj}
\tilde{v}_{2,j}^{s_2}\leq 3l-2+\frac{s_{\pi_j}-1}{2}.
\end{equation}
From (\ref{v2si}) and (\ref{v2sj}), we deduce that  $T^{s_{\pi_i}-1,0}_{l_i}\left(Y_{\pi_i},0\right)\in  \mathcal{T}^{s_{\pi_i}-1,0}_{l}$. Therefore, we obtain that
\begin{multline}
            \int_{\mathbb{R}}du~\mathcal{F}^0_{s_1+1,l_1;\delta''}\left(\Lambda;\tau_{\pi_1},\frac{1}{2\Lambda^2};W^{s_1+1}_{l_1}(\Pi_1);Y_{\pi_1},u\right)\\\times\mathcal{F}^0_{s_2+1,l_2;\delta''}\left(\Lambda;\tau_{\pi_2},\frac{1}{2\Lambda^2};W^{s_2+1}_{l_2}(\Pi_2);Y_{\pi_2},u\right)
        \end{multline}
is bounded by 
\begin{equation}\label{siropp}
     \Lambda~\mathcal{F}^{0}_{\delta''}\left(\Lambda;\tau_{1,s_1};W^{s_1}_{l_1}(\Pi_1^{s_1+1});Y_{\sigma_{s_1}}\right)~\mathcal{F}^0_{\delta''}
\left(\Lambda;\tau_{s_1+1,s};W^{s_2}_{l_2}(\Pi_2^{s_2+1});Y_{\sigma_{s_1+1:s}}\right),
 \end{equation}
 where we used the notation (\ref{tigre}).
Note that $\Pi_1^{s+1}\cup\Pi_2^{s+2}\in\mathcal{P}_s$ together with (\ref{siropp}) gives the integrated surface weight factor of the forest $W^{s_1}_{l_1}(\Pi_1^{s_1+1})\cup W^{s_2}_{l_2}(\Pi_2^{s+1})$.
Hence (\ref{siropp}) is bounded by 
\begin{equation*}
  \Lambda~\mathcal{F}^0_{\delta''}
\left(\Lambda;\tau_{1,s};W^{s}_{l}(\Pi);Y_{\sigma_s}\right),
\end{equation*}
where $\Pi=\Pi_1^{s+1}\cup\Pi_2^{s+2}$ belongs to $\mathcal{P}_s$ and we deduce 
\begin{multline}\label{grr}
    \int_{\mathbb{R}}du~\mathcal{F}_{s_1+1,l_1;\delta}^{\Lambda;0}\left(\tau_{1,s_1},\frac{1}{2\Lambda^2};Y_{\sigma_{s_1}},u\right)~\mathcal{F}_{s_2+1,l_2;\delta'}^{\Lambda;0}\left(\tau_{s_{1}+1,s},\frac{1}{2\Lambda^2};Y_{\sigma_{s_1+1,s}},u\right)\\
    \leq ~\Lambda~\mathcal{F}_{s,l;\delta''}^{\Lambda,0}
\left({\tau}_{1,s}\right).
\end{multline}
\end{proof}
\begin{lemma}\label{TFfusion}
(Bulk tree-Forest Fusion) Let $\delta,~\delta'>0$ and $1\leq l_1,~l_2\leq l-1$ such that $l_1+l_2=l$. Given $\left(\Tilde{\pi}_1,\Tilde{\pi}_2\right)\in\tilde{\mathcal{P}}_{2;s}$, we have
        \begin{multline}\label{iii}
            \int_{\mathbb{R}}du~\mathcal{F}^0_{s_1+1,l_1;\delta}\left(\Lambda;\tau_{\tilde{\pi}_{1}},\frac{1}{2\Lambda^2};Y_{\tilde{\pi}_1},u\right)\hat{\mathcal{F}}_{s_2+1,l_2;\delta'}\left(\Lambda;\tau_{\tilde{\pi}_{2}},\frac{1}{2\Lambda^2};Y_{\tilde{\pi}_2},u\right)\\
            \leq ~O(1)~\mathcal{F}^0_{s,l;\delta''}\left(\Lambda;\tau_{1,s};Y_{\sigma_s}\right)
        \end{multline}
        where $\delta''=\max\left(\delta,\delta'\right)$.
\end{lemma}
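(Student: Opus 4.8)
The plan is to follow the scheme of the Forest--Forest Fusion Lemma \ref{FFfusion}, the decisive difference being that the line produced at the shared vertex $u$ is now \emph{retained} as an internal line of a merged surface tree rather than amputated and bounded by $\Lambda$; this is what replaces the prefactor $\Lambda$ of (\ref{grr}) by the prefactor $O(1)$ in (\ref{iii}). First I would expand the two global weight factors into their defining sums over forests $W^{s_1+1}_{l_1}(\Pi_1)\in\mathcal{W}^{s_1+1}_{l_1}$ and bulk trees $\hat T^{s_2+1}_{l_2}\in\hat{\mathcal{T}}^{s_2+1}_{l_2}$. Writing out the integrated weight factors (\ref{treeStr2}), the two factors share only the external vertex $u$, so the suprema over the internal scales and the internal-vertex integrations factorize and the $u$-integration acts only on the two legs attached to $u$. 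It thus suffices to bound, for a single forest and a single bulk tree,
\begin{equation*}
\int_{\mathbb{R}}du~\mathcal{F}^0_{\delta}\left(\Lambda;\tau_{\tilde\pi_1},\frac{1}{2\Lambda^2};W^{s_1+1}_{l_1}(\Pi_1);Y_{\tilde\pi_1},u\right)\mathcal{F}_{\delta'}\left(\Lambda;\tau_{\tilde\pi_2},\frac{1}{2\Lambda^2};\hat T^{s_2+1}_{l_2};Y_{\tilde\pi_2},u\right),
\end{equation*}
and, as in (\ref{WFT}), I would use (\ref{deltaWeight}) to replace $\delta$ and $\delta'$ by $\delta''=\max(\delta,\delta')$ at the cost of an overall $O(1)$.

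Next I would isolate the two lines ending at $u$. Let $z$ be the internal vertex of the bulk tree attached to $u$ and $z'$ the internal vertex of the unique sub-tree $T^{s_{\pi_i},0}_{l_1}$ of $W^{s_1+1}_{l_1}(\Pi_1)$ containing $u$. By (\ref{treeStr}) the bulk leg carries $p_B(\frac{1+\delta''}{2\Lambda^2};z,u)$, and by (\ref{treeStr'})--(\ref{tree2}) the surface leg carries $p_B(\frac{\alpha(1+\delta'')}{2\Lambda^2};z',u)$ with $\alpha=1$ when the $u$-sub-tree has at least two genuine external vertices and $\alpha=2$ when it is a length-$1$ sub-tree, the value $\alpha=2$ being the singleton normalization of (\ref{tree2}). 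All other heat kernels are $u$-independent.

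The key point, in contrast with Lemma \ref{FFfusion}, is \emph{not} to contract these two legs into one line through the semi-group (\ref{rr+}): for $\alpha=2$ this would create an internal line of scale $\frac{3(1+\delta'')}{2\Lambda^2}$, strictly larger than the maximal scale $\frac{1+\delta''}{\Lambda^2}$ attainable by an admissible internal line (one with $\Lambda_I\geq\Lambda$), so it could not be dominated by the integrated weight factor. Instead I would keep $u$ as a new internal vertex of incidence number $2$, i.e. perform the $b$-merging $M^b_{u,u}$ of Proposition \ref{BulkMerge}. Applying (\ref{rr++}) to pass from $\int_{\mathbb{R}}du$ to $2\int_{\mathbb{R}^+}du$ --- licit because $z,z'\geq0$ --- turns $u$ into a genuine internal vertex integrated over $\mathbb{R}^+$, and the two legs become the internal lines $(z,u)$ and $(u,z')$ of the merged tree. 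Their scales $\frac{1+\delta''}{2\Lambda^2}$ and $\frac{\alpha(1+\delta'')}{2\Lambda^2}$ correspond to $\Lambda_I=\sqrt2\,\Lambda$ and $\Lambda_I=\sqrt{2/\alpha}\,\Lambda$; since $\alpha\leq2$ both obey $\Lambda_I\geq\Lambda$ and are therefore dominated by the independent suprema $\sup_{\Lambda\leq\Lambda_I\leq\Lambda_0}$ of (\ref{treeStr2}). This is exactly where the factor $2$ of (\ref{tree2}) is indispensable: in the worst case $\alpha=2$ it produces precisely the minimal admissible scale $\frac{1+\delta''}{\Lambda^2}$, i.e. $\Lambda_I=\Lambda$.

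Finally I would invoke Proposition \ref{BulkMerge} (with the two index pairs interchanged, since here the hatted factor is the bulk tree) to conclude that the merged tree lies in $\mathcal{W}^s_l(\Pi')$ for the merged partition $\Pi'$ of that proposition, with $s=s_1+s_2$ and $l=l_1+l_2$; the budget count $v_2\leq v_{2}^{(1)}+v_{2}^{(2)}+1\leq 3l-2+s/2$ established there shows the merged object is of the required type, the extra $+1$ being exactly the new vertex $u$. Summing back over $\Pi_1$, over the forests and bulk trees, and over the choice of the sub-tree containing $u$ reproduces $\mathcal{F}^{0}_{s,l;\delta''}(\Lambda;\tau_{1,s};Y_{\sigma_s})$ on the right-hand side of (\ref{iii}); the merging map being at most finitely-to-one with multiplicity controlled by $s$ and $l$, this costs one more $O(1)$. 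The main obstacle is precisely the scale bookkeeping of the previous paragraph: one must resist contracting the two legs and instead keep $u$ internal, so that each half-line remains separately admissible. This single choice is what simultaneously makes the prefactor $O(1)$ rather than $\Lambda$ and fits the merged tree within the $v_2$-budget.
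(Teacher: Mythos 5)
Your proof is correct and follows essentially the same route as the paper: reduce to a single forest paired with a single bulk tree, pass to $\delta''$ via (\ref{deltaWeight}), isolate the two legs ending at $u$, and invoke Proposition \ref{BulkMerge} (with the index pairs interchanged, as you note) to recognize the merged object as an element of $\mathcal{W}^s_l(\Pi')$ within the budget $v_2\leq 3l-2+s/2$. The one place where you deviate is the treatment of the case $\alpha=1$: the paper \emph{does} contract the two legs there by the semi-group property (\ref{rr+}), producing a single internal line at scale $\frac{1+\delta''}{\Lambda^2}$, i.e. $\Lambda_I=\Lambda$, which is admissible (this is the $a$-merging), and reserves the $b$-merging, i.e. keeping $u$ as a new incidence-$2$ vertex via (\ref{rr++}), for the case $\alpha=2$ where contraction would indeed produce the inadmissible scale $\frac{3(1+\delta'')}{2\Lambda^2}$. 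Your uniform use of the $b$-merging in both cases is equally valid: Proposition \ref{BulkMerge} licenses it with $v_{2,b}=v_{2,1}+v_{2,2}+1\leq 3l-2+s/2$, and your scales $\Lambda_I=\sqrt{2}\,\Lambda$ and $\Lambda_I=\sqrt{2/\alpha}\,\Lambda\geq\Lambda$ are admissible, so you even spare the case distinction. One remark on wording: the factor $2$ of (\ref{tree2}) is not "indispensable" to this lemma — here it is precisely the burden that creates the borderline case $\Lambda_I=\Lambda$ (what matters is that the factor is at most $2$, so the merged line remains admissible); that factor is genuinely needed elsewhere, e.g. in Lemma \ref{lemma7} where point evaluations $\phi_i(0)$ must be dominated by singleton-tree weight factors. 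This does not affect the validity of your argument.
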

\begin{proof}
 Without loss of generality, we again consider the ordered sub-partitions $\sigma_{s_1}$ and $\sigma_{s_1+1:s}$.
In order to obtain the  bound (\ref{iii}), it is sufficient to bound for a given $\Pi_1\in\tilde{\mathcal{P}}_{s_1}$
 \begin{equation}\label{proFT1}
     \int_{\mathbb{R}}du~\mathcal{F}^{0}_{\delta}\left(\Lambda;\tau_{1,s_1},\frac{1}{2\Lambda^2};W^{s_1+1}_{l_1}(\Pi_1);Y_{\sigma_{s_1}},u\right)~\mathcal{F}_{\delta'}
\left(\Lambda;\tau_{s_1+1,s},\frac{1}{2\Lambda^2};\hat{T}^{s_2+1}_{l_2};Y_{\sigma_{s_1+1:s}},u\right).
 \end{equation}
Using the bound (\ref{deltaWeight}), we bound (\ref{proFT1}) by 
 \begin{equation}\label{proFT}
     \int_{\mathbb{R}}du~\mathcal{F}^{0}_{\delta''}\left(\Lambda;\tau_{1,s_1},\frac{1}{2\Lambda^2};W^{s_1+1}_{l_1}(\Pi_1);Y_{\sigma_{s_1}},u\right)~\mathcal{F}_{\delta''}
\left(\Lambda;\tau_{s_1+1,s},\frac{1}{2\Lambda^2};\hat{T}^{s_2+1}_{l_2};Y_{\sigma_{s_1+1:s}},u\right),
 \end{equation}
 where $\delta'':=\max\left(\delta,\delta'\right)$.
 Let $z_i$ and $z_j$ be respectively the internal vertices attached to $u$ in $W^{s_1+1}_{l_1}(\Pi_1)$ and to $u$ in $\hat{T}^{s_2+1}_{l_2}$. Interchanging the integral over $u$ with the integral over the internal vertices of the forest $W^{s_1+1}_{l_1}(\Pi_1)$ and the bulk tree $\hat{T}^{s_2+1}_{l_2}$ in their respective weight factors and using (\ref{rr+}) we deduce
 \begin{equation}\label{smg}
     \int_{\mathbb{R}}du~p_B\left(\frac{\alpha(1+\delta)}{2\Lambda^2};z_i,u\right)~p_B\left(\frac{1+\delta}{2\Lambda^2};z_j,u\right)=p_B\left(\frac{(\alpha+1)(1+\delta)}{2\Lambda^2};z_i,z_j\right)
 \end{equation}
 with $\alpha\in\left\{1,2\right\}$. Here, we proceed similarly to (\ref{semigroup}) to differentiate the surface trees with two external vertices from other surface trees with more than two external vertices.
 For $\alpha=2$, we keep the integration over $u$ and write 
 \begin{multline}\label{smg2}
     \int_{\mathbb{R}}du~p_B\left(\frac{1+\delta}{\Lambda^2};z_i,u\right)~p_B\left(\frac{1+\delta}{2\Lambda^2};z_j,u\right)\\\leq 2 \int_{0}^{\infty}du~p_B\left(\frac{1+\delta}{\Lambda^2};z_i,u\right)~p_B\left(\frac{1+\delta}{2\Lambda^2};z_j,u\right).
 \end{multline}
 Therefore, (\ref{smg}) and (\ref{smg2}) correspond to the fact that the two external legs attached to $(z_i,u)$ and $(z_j,u)$ are removed. If $\alpha=1$, the external lines are replaced by the internal line $(z_i,z_j)$ and for $\alpha=2$ the vertex $u$ becomes internal with incidence number $2$. The first case corresponds to the steps of merging the forest $W^{s_1+1}_{l_1}(\Pi_1)$ and the bulk tree $\hat{T}^{s_2+1}_{l_2}$ at the external points $(u,u)$ through the process a). In the second case, the forest and the tree are merged following the merging process b). From Proposition \ref{BulkMerge} we have 
 $$M^i_{u,u}\left(W^{s_1+1}_{l_1}(\Pi_1),\hat{T}^{s_2+1}_{l_2}\right)\in\mathcal{W}^s_l(\Pi')~,~~i\in\left\{a,b\right\}$$
 where $\Pi'=\Pi_1^{s_1+1}\cup\sigma_{s_1+1:s}$. This implies that (\ref{proFT}) is  bounded by 
 $$2~\mathcal{F}^{0}_{\delta''}\left(\Lambda;\tau_{1,s};W^{s}_{l}(\Pi');Y_{\sigma_s}\right).$$
 Therefore we deduce 
 \begin{multline*}
     \int_{\mathbb{R}}du~\mathcal{F}_{s_1+1,l_1;\delta}^{0}\left(\Lambda;\tau_{\pi_1},\frac{1}{2\Lambda^2};Y_{\pi_1},u\right)\hat{\mathcal{F}}_{s_2+1,l_2;\delta'}\left(\Lambda;\tau_{\pi_2},\frac{1}{2\Lambda^2};Y_{\pi_2},u\right)\\
     \leq O(1)~\mathcal{F}_{s,l;\delta''}^{0}\left(\Lambda;{\tau}_{1,s};Y_{\sigma_s}\right)~,
 \end{multline*}
 where $O(1)$ is a constant which depends on $s$ and $l$.
\end{proof}
\section{Results and Proofs}\label{ResProof}
\noindent Our main result is summarized in the following theorem:
\begin{theorem}(Boundedness)\label{theoremReno}~Let $0\leq \Lambda\leq \Lambda_0<\infty$, $r_i\in\mathbb{N}$ such that $0\leq r_i\leq 4$ and $0\leq s \leq n$. For $\star\in\left\{R,N\right\}$, adopting (\ref{BCDTS})-(\ref{renoS}) we claim
\begin{multline}\label{c1}
    \left|  \partial^w\mathcal{S}_{l,n;\star;r_1,r_2}^{\Lambda,\Lambda_0}\left(\vec{p}_n;\phi_{\tau_{1,s},y_{1,s}}\right)\right|\\\leq \left(\Lambda+m\right)^{3-n-r_1-r_2-|w|}\mathcal{P}_1\left(\log \frac{\Lambda+m}{m}\right)\mathcal{P}_2\left(\frac{\left\|\vec{p}_n\right\|}{\Lambda+m}\right) \mathcal{Q}_1\left(\frac{\tau^{-\frac{1}{2}}}{\Lambda+m}\right)\ \mathcal{F}_{s,l;\delta}^{\Lambda,0}\left({\tau}_{1,s}\right),~~\forall n\geq 2~.
\end{multline}
Here and subsequently $\mathcal{P}_i$ and $\mathcal{Q}_i$ denote polynomials with non-negative coefficients which depend on $l,n,|w|,\delta, r_1, r_2$, but not on $\left \{ p_i \right \}$, $\Lambda$, $\Lambda_0$ and $c$. The polynomial $\mathcal{Q}_i$ is reduced to a constant for $s=1$, and for $l=0$ all polynomials $\mathcal{P}_i$ reduce to constants. The parameter $\delta$ depends on the loop order $l$ and verifies $0<\delta_l\leq\delta_{l+1}< 1$.
\end{theorem}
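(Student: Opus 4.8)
The plan is to establish (\ref{c1}) by exactly the double induction that already defines the surface distributions below (\ref{tree}): ascending in $n+2l$, and for fixed $n+2l$ ascending in $l$. The base case $l=0$ is trivial, since every surface distribution vanishes at tree order — by (\ref{tree}), the boundary data (\ref{BCDTS}), (\ref{BCDTS2}), and the vanishing of the bulk two-point and odd tree vertices the right-hand side of (\ref{FEB}) is identically zero at $l=0$ — so (\ref{c1}) holds with both sides vanishing. For the inductive step I would integrate the flow equation (\ref{FEB}) in $\Lambda$, splitting each component according to its target exponent $3-n-r_1-r_2-|w|$: the irrelevant components (negative exponent) are integrated downward from $\Lambda_0$, where the boundary values (\ref{BCDTS2}) and the local form (\ref{BCDTS}) are explicit, while the relevant and marginal components — which occur only for $n=2$ with $r_1+r_2\le 1$, namely the constants $s_{l;\star}$ and $e_{l;\star}$ extracted in (\ref{f0})–(\ref{eands}) — are integrated upward from $\Lambda=0$ using the renormalization conditions (\ref{renoS}). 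Bose symmetry (\ref{eandh}) identifies $e$ with $h$, so only two genuinely new surface constants survive, in agreement with (\ref{surface}).

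The bulk of the work is the term-by-term estimate of the right-hand side of (\ref{FEB}) under the inductive hypothesis together with the bulk bounds of Proposition \ref{Prop1}. The momentum derivatives $\partial^w$ are distributed by Leibniz's rule, each derivative landing on a $\dot C^\Lambda$ being controlled by (\ref{cov}); the position moments $z_1^{r_1}z_2^{r_2}$ are traded for powers of $\tau\sim\Lambda^{-2}$ via (\ref{in1}), applied either to an external line or to a surface line ending at $0$. Each structural piece is then reassembled into a surface weight factor $\mathcal F^{\Lambda,0}_{s,l;\delta}$: the all-surface loop term carrying $\mathcal S_{l-1,n+2;\star}$, after splitting $p_\star(1/\Lambda^2;z,z')$ by the semigroup (\ref{rr+}) and using $p_\star\le O(1)\,p_B$ from (\ref{SurfHeatB}), is handled by the Reduction Lemma \ref{reduction}; the quadratic $\mathcal S\times\mathcal S$ contributions by the Forest–Forest Fusion Lemma \ref{FFfusion}; and the mixed $\mathcal D\times\mathcal S$ and $\mathcal S\times\mathcal D$ contributions — one factor supplied by Proposition \ref{Prop1} as a bulk weight factor $\hat{\mathcal F}$ — by the Bulk-tree–Forest Fusion Lemma \ref{TFfusion}. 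The index bookkeeping $n_1+n_2=n$, $l_1+l_2=l$ forces each factor to be strictly smaller in $n+2l$ (or strictly smaller in $l$ at fixed $n+2l$, for the loop term), which is precisely what lets the induction close.

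The decisive and hardest point is the two pieces built from purely bulk distributions contracted with the \emph{surface} part of the propagator: the loop term $\mathcal D_{l-1,n+2}^{\Lambda,\Lambda_0}\,\dot C^\Lambda_{S,\star}$ and the quadratic term $\mathcal D\,\dot C^\Lambda_{S,\star}\,\mathcal D$. These are the genuine sources of the surface distributions, and it is here that the gain of one scaling dimension — the $3-n$ in (\ref{c1}) against the $4-n$ of (\ref{c1b}) — must be produced. The mechanism I expect to exploit is the estimate (\ref{SurfHeatB}): $p_{S,\star}(1/\Lambda^2;z,z')\le 3\,p_B(1/\Lambda^2;z,-z')$ replaces the surface propagator by a reflected bulk heat kernel, which by the reflection bounds (\ref{bulkBou}) pins one contracted vertex to within $O(1/\Lambda)$ of the surface, removing one unit of $z$-translation volume. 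In the weight-factor language this reflected kernel should be retained as a surface external line ending at $0$ — turning a bulk tree $\hat T$ into a surface tree $T^{s,0}$ in the spirit of the merging construction of Proposition \ref{BulkMerge} — rather than being collapsed to a bare factor $\Lambda$ as an ordinary bulk loop contraction would be under Lemma \ref{reduction}. The net effect is exactly one fewer power of $\Lambda$, i.e. the advertised improvement. Verifying that this surface line is always produced, that the bulk bounds of Proposition \ref{Prop1} genuinely convert into surface weight factors $\mathcal F^0$ with the correct $\delta$-enlargement, and that everything is uniform in the Robin parameter $c$ (using the $c$-independent bounds in (\ref{SurfHeatB})) is the main obstacle of the proof.

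It then remains to close the $n=2$ relevant and marginal estimates and to propagate the auxiliary polynomials. After extracting $s_{l;\star}$ and $e_{l;\star}$ through the Taylor expansion (\ref{f0}) and bounding the remainder (\ref{restI}) by the irrelevant estimate already obtained, the marginal flow for $e_{l;\star}$ integrated from $\Lambda=0$ generates at most the logarithms collected in $\mathcal P_1(\log\frac{\Lambda+m}{m})$, while $\mathcal P_2$ and $\mathcal Q_1(\tau^{-1/2}/(\Lambda+m))$ arise as usual from (\ref{cov}) and from the moment trade (\ref{in1}). Finally the slight widening $\delta_l\le\delta_{l+1}<1$ of the heat-kernel parameters, controlled by (\ref{pbdelta})–(\ref{in1}) and (\ref{deltaWeight}), absorbs the polynomial prefactors produced at each loop order while keeping all weight factors in the form $\mathcal F^{\Lambda,0}_{s,l;\delta}$; continuity in $z_1$ and boundedness in $\Lambda_0$ then follow exactly as in Proposition \ref{Prop1}. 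This closes the induction and yields (\ref{c1}).
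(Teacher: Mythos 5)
Your outline does track the paper's own proof closely: the same flow-equation induction, the same five-fold splitting of the right-hand side of (\ref{FEB}), Lemmas \ref{reduction}, \ref{FFfusion} and \ref{TFfusion} in exactly the roles you assign them, the same mechanism for the one-unit dimensional gain (the reflected kernel $p_B(\cdot\,;z,-z')$ retained as a surface external line at $0$, resp. factorized by (\ref{nice}), instead of collapsing to a coincidence factor $\Lambda$), and the same mixed integration of the flow. However, there are two genuine gaps. The first is that your base case is false: at $l=0$ only $\mathcal{S}_{0,2;\star}$ and $\mathcal{S}_{0,4;\star}$ vanish, and the right-hand side of (\ref{FEB}) is \emph{not} identically zero at tree order. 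The purely bulk quadratic term survives: for $n=6$, $l_1=l_2=0$, $n_1+1=n_2+1=4$, one has $\mathcal{D}_{0,4}^{\Lambda,\Lambda_0}\,\partial^{w_3}\dot C^{\Lambda}_{S,\star}\,\mathcal{D}_{0,4}^{\Lambda,\Lambda_0}$ with $\mathcal{D}_{0,4}^{\Lambda,\Lambda_0}=\lambda\prod_{i=2}^{4}\delta(z_1-z_i)\neq0$ by (\ref{Dl1}) — this is precisely the tree diagram of two quartic vertices joined by one surface propagator. Hence $\mathcal{S}_{0,6;\star}\neq0$ (likewise for all $n\geq6$), the bound (\ref{c1}) has real content at $l=0$ (this is why the theorem bothers to say the polynomials reduce to constants there), and this bound is consumed at the next loop order: the linear term of the flow equation for $(l,n)=(1,4)$ is $\tfrac12\int\mathcal{S}_{0,6;\star}\,\dot C_{\star}$. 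Your appeal to "the vanishing of the bulk two-point and odd tree vertices" only eliminates pairings containing a two-point or odd factor, not the $(4,4)$ pairing. The repair is what the paper actually does: tree-level $n\geq6$ is handled inside the inductive step (ascending in $n$ at $l=0$), where the $\mathcal{D}\,\dot C_{S,\star}\,\mathcal{D}$ estimate produces the surface weight factor.

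The second gap is that your induction never descends in $|w|$ and contains no step extending the $n=2$ bounds away from $\vec p=0$. The renormalization conditions (\ref{renoS}) fix $s_{l;\star}$ and $e_{l;\star}$ at zero momentum only, so integrating the relevant flow upward from $\Lambda=0$ controls $\mathcal{S}_{l,2;\star}$ only at $\vec p=0$; but (\ref{c1}) is claimed at all momenta, and the general-momentum two-point bound is needed again in the quadratic terms of the flow equations for higher $n$, where $\mathcal{S}_{l_i,2;\star}$ is evaluated at $p=-p_1-\cdots-p_{n_1}$. The paper closes this loop by the Taylor formula with integral remainder in the momentum, which requires the already-established bound at one more momentum derivative — this is exactly why its inductive scheme proceeds \emph{downwards} in $|w|$ from some $|w_{\max}|\geq3$, a device absent from your scheme (ascending $n+2l$, then $l$). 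Relatedly, "bounding the remainder (\ref{restI}) by the irrelevant estimate already obtained" understates what is needed: the second $t$-derivatives of the heat-kernel test functions produce rescaled kernels $p_B(\tau_i/t^2;z_i,y_i/t)$ and inverse powers of $\tau_i$, which must be converted back into admissible weight factors via Lemmas \ref{lemma7}--\ref{lemma9}, together with the case distinction $\Lambda\lessgtr 3\sqrt{l}\,\tau^{-1/2}$, before the irrelevant-style integration from $\Lambda_0$ can be performed.
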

\noindent As a consequence of Theorem \ref{theoremReno}, we have:
\begin{proposition}\label{Prop44}
     For fixed $0\leq \Lambda\leq \Lambda_0<\infty$, $\tau>0$ and $\left(y_1,\cdots,y_n\right)\in(\mathbb{R}^+)^n$, we have: 
    \begin{equation}\label{limit}
       \mathcal{S}_{l,n;D}^{\Lambda,\Lambda_0}\left(\vec{p}_n;\prod_{i=1}^n~p_D\left(\tau_i;\cdot,y_i\right)\right) =\lim_{c\rightarrow +\infty}\mathcal{S}_{l,n;R}^{\Lambda,\Lambda_0}\left(\vec{p}_n;\prod_{i=1}^n~p_R\left(\tau_i;\cdot,y_i\right)\right) ,
    \end{equation}
    where the parameter $c$ denotes the Robin parameter. 
\end{proposition}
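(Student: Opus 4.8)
The plan is to prove the identity (\ref{limit}) by induction on $n+2l$, and for fixed $n+2l$ by induction on $l$, following the same order in which the surface distributions $\mathcal{S}_{l,n;\star}^{\Lambda,\Lambda_0}$ are constructed from the flow equations (\ref{FEB}) and the tree order (\ref{tree}). Here $\Lambda_0$ is kept finite, so no renormalization subtleties arise and every folded distribution may be represented as its value at the initial scale $\Lambda_0$ minus the integral $\int_\Lambda^{\Lambda_0}d\Lambda'$ of the right-hand side of (\ref{FEB}). The two analytic inputs are the pointwise convergence of the relevant kernels as $c\to\infty$, and the bounds of Theorem \ref{theoremReno}, which are uniform in $c$ and which supply the dominating functions needed to exchange $\lim_{c\to\infty}$ with the $\Lambda'$-integration and with the internal $z,z',k$ integrations.

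First I would record the convergence of the kernels. From the explicit formulas (\ref{pR}) and (\ref{43}) and $w/c\to 0$ one has, pointwise for $z,z'\geq 0$,
\begin{equation*}
\lim_{c\to\infty}p_R\left(\tau;z,z'\right)=p_D\left(\tau;z,z'\right),\qquad \lim_{c\to\infty}p_{S,R}\left(\tau;z,z'\right)=-p_B\left(\tau;z,-z'\right)=p_{S,D}\left(\tau;z,z'\right),
\end{equation*}
and by (\ref{p3}) the same holds for the cutoff propagators, $\dot{C}_R^\Lambda\to\dot{C}_D^\Lambda$ and $\dot{C}_{S,R}^\Lambda\to\dot{C}_{S,D}^\Lambda$. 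The bounds (\ref{SurfHeatB}) and (\ref{bulkBou}) provide $c$-independent majorants $p_R\leq 4\,p_B$ and $p_{S,R}\leq 3\,p_B(\cdot;z,-z')$, and the test functions obey $p_R(\tau_i;\cdot,y_i)\to p_D(\tau_i;\cdot,y_i)$ with the same uniform domination. In particular, since $p_D(\tau;0,y)=0$, one has $\lim_{c\to\infty}p_R(\tau;0,y)=0$, which I would use for the boundary terms.

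Next I would pass to the limit in the integrated flow equation. Each term of the right-hand side of (\ref{FEB}) is a finite $z,z',k$-integral of a product of lower-order distributions, which converge by the induction hypothesis and are uniformly bounded by Theorem \ref{theoremReno}, paired against $\dot{C}_R^\Lambda$ or $\dot{C}_{S,R}^\Lambda$, which converge pointwise and are uniformly dominated by (\ref{SurfHeatB}) and (\ref{cov}). Dominated convergence then shows that the Robin right-hand side folded with $\prod_i p_R(\tau_i;\cdot,y_i)$ converges to the Dirichlet right-hand side folded with $\prod_i p_D(\tau_i;\cdot,y_i)$; the base case $n+2l=2$ is immediate since $\mathcal{S}_{0,2;\star}^{\Lambda,\Lambda_0}=0$ for every $\star$. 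For the data at $\Lambda=\Lambda_0$: when $n\geq4$ both boundary terms vanish identically by (\ref{BCDTS2}) and (\ref{BCDDS}); when $n=2$ the Robin boundary term is $s_{l;R}^{\Lambda_0,\Lambda_0}$ and $e_{l;R}^{\Lambda_0,\Lambda_0}$ times the test functions evaluated at the surface, and the accompanying factor $p_R(\tau;0,y)\to0$ kills it, leaving the Dirichlet boundary value $0$. Collecting these facts, the difference $\mathcal{S}_{l,n;R}^{\Lambda,\Lambda_0}(\vec p_n;\prod_i p_R)-\mathcal{S}_{l,n;D}^{\Lambda,\Lambda_0}(\vec p_n;\prod_i p_D)$ tends to $0$; equivalently, the limit solves the Dirichlet flow equations with the Dirichlet boundary conditions, so equality also follows from the uniqueness stated in the Remark.

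The main obstacle is the $n=2$ boundary term: the bare surface counter-terms $s_{l;R}^{\Lambda_0,\Lambda_0}$ and $e_{l;R}^{\Lambda_0,\Lambda_0}$ are $c$-dependent and need not converge individually, so one must combine their $c$-uniform boundedness (which follows from Theorem \ref{theoremReno} together with the renormalization conditions (\ref{renoS}), $s_{l;R}^{0,\Lambda_0}=e_{l;R}^{0,\Lambda_0}=0$) with the surface vanishing $p_R(\tau;0,y)\to p_D(\tau;0,y)=0$ of the folded test function in order to annihilate this contribution. The second delicate point is securing the $c$-uniform domination that legitimizes interchanging $\lim_{c\to\infty}$ with the $\Lambda'$- and loop-integrations; both rest on the improved power counting of Theorem \ref{theoremReno}.
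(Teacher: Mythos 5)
Your strategy coincides with the paper's own proof: the same induction along the flow equations (\ref{FEB}) starting from the vanishing tree order, the same use of the $c$-uniform bounds of Theorem \ref{theoremReno} to supply the dominating functions needed to interchange $\lim_{c\to\infty}$ with the $\Lambda'$-, $u$- and $k$-integrations, and the same treatment of the data at $\Lambda=\Lambda_0$ (trivial for $n\geq 4$; for $n=2$, the $c$-uniform boundedness of $s_{l;R}^{\Lambda_0,\Lambda_0}$ and $e_{l;R}^{\Lambda_0,\Lambda_0}$ combined with $p_R(\tau;y,0)\to p_D(\tau;y,0)=0$, which is exactly how the paper kills the Robin boundary term).

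There is, however, one step whose justification as written does not go through: the terms on the right-hand side of (\ref{FEB}) involving the bulk distributions $\mathcal{D}_{l',n'}^{\Lambda,\Lambda_0}$ (the linear term with $\dot{C}_{S,\star}^{\Lambda}$ and the quadratic $\mathcal{D}\mathcal{S}$ and $\mathcal{D}\mathcal{D}$ terms). These do not ``converge by the induction hypothesis'': the induction runs over the surface distributions $\mathcal{S}$ only, while $\mathcal{D}$ is independent of $c$; what changes with $c$ is the set of test functions and kernels against which $\mathcal{D}$ is paired. Moreover, since $\mathcal{D}_{l',n'}^{\Lambda,\Lambda_0}$ is a genuine element of $\mathcal{S}'\left(\mathbb{R}^{+n'}\right)$ --- it contains $\delta$- and $\delta'$-type components, cf. (\ref{Dl1}) and the Remark on the distributional structure --- pointwise convergence plus uniform domination of the test functions is not by itself sufficient to pass to the limit in the pairing: dominated convergence applies to integrals of functions, not to distributional pairings. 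The paper closes precisely this point by proving convergence $p_R(\tau_i;\cdot,y_i)\to p_D(\tau_i;\cdot,y_i)$ in the semi-norms $\mathcal{N}_p$ (equation (\ref{DRc})) and invoking the continuity of $\mathcal{D}_{l',n'}^{\Lambda,\Lambda_0}$ with respect to these semi-norms, which also controls the derivative at $z=0$ needed for the $\delta'$-components. With this replacement your argument becomes identical to the paper's; the rest of your sketch, including the closing appeal to uniqueness of the solution of the Dirichlet flow equations, is sound.
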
 
\begin{corollary}\label{Cor1}
    For Dirichlet boundary conditions, adopting (\ref{BCDTS})-(\ref{BCDTS2}) we have
\begin{multline}\label{cD1}
    \left|  \mathcal{S}_{l,n;D}^{\Lambda,\Lambda_0}\left(\vec{p}_n;\prod_{i=1}^n p_D\left(\tau_i;\cdot,y_i\right)\right)\right|\\\leq \left(\Lambda+m\right)^{3-n}\mathcal{P}_3\left(\log \frac{\Lambda+m}{m}\right)\mathcal{P}_4\left(\frac{\left\|\vec{p}_n\right\|}{\Lambda+m}\right) \mathcal{Q}_2\left(\frac{\tau^{-\frac{1}{2}}}{\Lambda+m}\right)\ \mathcal{F}_{n,l;\delta}^{\Lambda,0}({\tau}_{1,n}),~~\forall n\geq 4~,
\end{multline}
and for $n=2$ we have 
\begin{multline}\label{cD2}
    \left| \mathcal{S}_{l,2;D}^{\Lambda,\Lambda_0}\left(p,-p;\prod_{i=1}^2 p_D\left(\tau_i;\cdot,y_i\right)\right)\right|\\\leq \left(\Lambda+m\right)^{-1}\tau_1^{-\frac{1}{2}}\tau_2^{-\frac{1}{2}}\mathcal{P}_5\left(\log \frac{\Lambda+m}{m}\right)\mathcal{P}_6\left(\frac{\left|{p}\right|}{\Lambda+m}\right) \mathcal{Q}_3\left(\frac{\tau^{-\frac{1}{2}}}{\Lambda+m}\right)\ \mathcal{F}_{2,l;\delta}^{\Lambda,0}({\tau}_{1,2})~.
\end{multline}
\end{corollary}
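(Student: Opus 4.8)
The plan is to reduce everything to Theorem~\ref{theoremReno}, which controls the Robin and Neumann surface distributions, by passing to Dirichlet boundary conditions through Proposition~\ref{Prop44}. That proposition identifies the Dirichlet object on the left of (\ref{cD1})--(\ref{cD2}) with the limit as $c\to\infty$ of $\mathcal{S}_{l,n;R}^{\Lambda,\Lambda_0}(\vec{p}_n;\prod_i p_R(\tau_i;\cdot,y_i))$, so it suffices to bound the Robin expression uniformly in the Robin parameter $c$ and then let $c\to\infty$; the limit survives because the constants in Theorem~\ref{theoremReno} are independent of $c$ (the bulk inputs entering that theorem being controlled by Proposition~\ref{Prop1}).

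For (\ref{cD1}) with $n\ge 4$ I would insert the decomposition (\ref{pR}) of each Robin kernel, writing $p_R(\tau_i;z_i,y_i)=p_B(\tau_i;z_i,y_i)+p_B(\tau_i;z_i,-y_i)-2\int_0^\infty dw\,e^{-w}p_B(\tau_i;z_i,-\tfrac{w}{c}-y_i)$. By linearity of the folding the Robin expression becomes a finite signed combination, together with a convergent $w$-integral, of the same Robin distribution $\mathcal{S}_{l,n;R}$ folded against genuine bulk heat kernels $p_B$ centred at $y_i$ or at reflected and shifted points. Each term is estimated by Theorem~\ref{theoremReno} with $s=n$ and $r_1=r_2=|w|=0$, which supplies the factor $(\Lambda+m)^{3-n}$ and a surface weight factor at the corresponding vertices; by the reflection inequalities (\ref{bulkBou}) each such weight factor is dominated by $\mathcal{F}_{n,l;\delta}^{\Lambda,0}(\tau_{1,n})$ at the original $y_i$. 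Since the number of terms is finite and the $w$-integral carries unit mass, summation gives (\ref{cD1}) uniformly in $c$, and Proposition~\ref{Prop44} delivers the Dirichlet bound.

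The real content is the two-dimensional improvement in (\ref{cD2}) for $n=2$, which cannot come from term-by-term domination (each summand is only of order $(\Lambda+m)^{1}$) but from a cancellation rooted in the vanishing of the Dirichlet kernel on the surface, $p_D(\tau;0,y)=0$, equivalently $p_R(\tau;0,y)\to0$ as $c\to\infty$. I would apply the relevant-term decomposition (\ref{f0})--(\ref{restI}) to $\mathcal{S}_{l,2;R}^{\Lambda,\Lambda_0}(p,-p;p_R\,p_R)$ with $\phi_i=p_R(\tau_i;\cdot,y_i)$ (the Taylor expansion being in the position arguments, hence valid at the fixed momenta $(p,-p)$, the momentum dependence being carried by $\mathcal{P}_6$). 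Every surface counter-term contribution $s_{l;R},e_{l;R},h_{l;R}$ carries at least one factor $\phi_i(0)=p_R(\tau_i;0,y_i)$, which tends to $0$ while the coefficients stay bounded by Theorem~\ref{theoremReno}; hence these relevant terms drop out in the limit. The surviving remainder $l_{l,2;R}$ has leading term $(\partial_n\phi_1)(0)(\partial_n\phi_2)(0)\int_{z_1,z_2}z_1z_2\,\mathcal{S}_{l,2;R}((z_1,p),(z_2,-p))$: the double moment is bounded by Theorem~\ref{theoremReno} with $s=0$, $r_1=r_2=1$, giving $(\Lambda+m)^{3-2-2}=(\Lambda+m)^{-1}$, while in the limit $(\partial_n\phi_i)(0)\to\tfrac{2y_i}{\tau_i}p_B(\tau_i;0,y_i)\le C\,\tau_i^{-1/2}p_B((1+\delta)\tau_i;0,y_i)$ supplies the prefactor $\tau_1^{-1/2}\tau_2^{-1/2}$ with residual heat kernels absorbed into $\mathcal{F}_{2,l;\delta}^{\Lambda,0}(\tau_{1,2})$. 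This reproduces the leading behaviour claimed in (\ref{cD2}).

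The main obstacle is the control of the remaining pieces of $l_{l,2;R}$, which involve second-order Taylor remainders of the $p_R$'s folded against one or two extra powers of $z_i$. These are not pure moments against heat kernels, so before invoking Theorem~\ref{theoremReno} one must re-express the second derivatives $\partial_z^2 p_R$ through heat kernels at enlarged scales and then apply the gradient estimate (\ref{in1}); the resulting contributions are more suppressed and must be shown to be absorbable into the polynomials $\mathcal{P}_5,\mathcal{P}_6,\mathcal{Q}_3$. Two further points require care: that the relevant surface terms vanish \emph{uniformly} enough in $c$ for the limit of Proposition~\ref{Prop44} to be taken, and that the reassembly of all residual heat-kernel factors into the single surface weight factor $\mathcal{F}_{2,l;\delta}^{\Lambda,0}(\tau_{1,2})$ is performed consistently, enlarging $\delta$ where necessary via (\ref{deltaWeight}).
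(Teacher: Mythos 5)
Your proposal follows essentially the same route as the paper's proof: for $n\ge 4$ the paper invokes the uniform-in-$c$ Robin bound (\ref{c1'}) (Theorem \ref{theoremReno} applied to $\star$-kernel test functions, obtained exactly as you describe from the decomposition (\ref{pR}) and the reflection bounds (\ref{bulkBou})) and then passes to the limit with Proposition \ref{Prop44}; for $n=2$ the paper likewise Taylor-expands via (\ref{f0})--(\ref{restI}), lets the relevant terms $s_{l;R}^{\Lambda,\Lambda_0}$, $e_{l;R}^{\Lambda,\Lambda_0}$ drop out in the limit because their coefficients are bounded uniformly in $c$ while $p_R(\tau_i;y_i,0)\to p_D(\tau_i;y_i,0)=0$, and bounds the surviving remainder (the $z_1z_2$ double moment together with the second-order Taylor pieces) uniformly in $c$. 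The technical work you flag at the end — rewriting the second derivatives of the kernels as heat kernels at rescaled arguments and controlling them with (\ref{in1}) — is precisely how the paper proceeds, namely by the treatment of $\dot{\mathcal{H}}_1$, $\dot{\mathcal{H}}_2$, $\dot{\mathcal{H}}_3$ in the proof of Theorem \ref{theoremReno} and the appendix Lemmas \ref{lemma7}--\ref{lemma9}.
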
  
\begin{theorem}(Convergence)\label{convergence}~
    Let $0\leq\Lambda\leq \Lambda_0<\infty$. Using the same notations, conventions and adopting the same renormalization conditions (\ref{BCDTS})-(\ref{BCDDS}) as in Theorem \ref{theoremReno} and Proposition \ref{Prop44}, we have the following bounds
\begin{multline}\label{convergence1}
    (A)~\left| \partial_{\Lambda_0}\partial^w \mathcal{S}_{l,n;\star;r_1,r_2}^{\Lambda,\Lambda_0}(\vec{p}_n;\phi_{\tau_{1,s},y_{1,s}})\right| \leq \frac{\left(\Lambda+m\right)^{4-n-|w|-r_1-r_2}}{\left(\Lambda_0+m\right)^2}\tilde{\mathcal{P}}_1\left(\log \frac{\Lambda_0+m}{m}\right)\tilde{\mathcal{P}}_2\left(\frac{\left\|\vec{p}_n\right\|}{\Lambda+m}\right)\\\times \tilde{\mathcal{Q}}_1\left(\frac{\tau^{-\frac{1}{2}}}{\Lambda+m}\right)\mathcal{F}^{\Lambda,0}_{s,l;\delta}(\tau_{1,s}),~~~\forall n+|w|+r_1+r_2\geq 2,~\star \in\left\{N,R\right\}.
\end{multline}
\begin{multline}\label{convergenced2}
     (B)~\left| \partial_{\Lambda_0} \mathcal{S}_{l,n;D}^{\Lambda,\Lambda_0}(\vec{p}_n;\phi^{D}_{\tau_{1,n},y_{1,n}})\right|\\ \leq \frac{\left(\Lambda+m\right)^{4-n}}{\left(\Lambda_0+m\right)^2}\tilde{\mathcal{P}}_1\left(\log \frac{\Lambda_0+m}{m}\right)\tilde{\mathcal{P}}_2\left(\frac{\left\|\vec{p}_n\right\|}{\Lambda+m}\right) \tilde{\mathcal{Q}}_1\left(\frac{\tau^{-\frac{1}{2}}}{\Lambda+m}\right)\mathcal{F}^{\Lambda;0}_{n,l;\delta}(\tau_{1,n}),~~~\forall n\geq 4.
\end{multline}
\begin{multline}\label{convergenced3}
     (C)~\left| \partial_{\Lambda_0} \mathcal{S}_{l,2;D}^{\Lambda,\Lambda_0}(\vec{p}_n;\phi^{D}_{\tau_{1,n},y_{1,n}})\right|\\ \leq \tau^{-1}\left(\Lambda_0+m\right)^{-2}\tilde{\mathcal{P}}_1\left(\log \frac{\Lambda_0+m}{m}\right)\tilde{\mathcal{P}}_2\left(\frac{\left\|\vec{p}_n\right\|}{\Lambda+m}\right) \tilde{\mathcal{Q}}_1\left(\frac{\tau^{-\frac{1}{2}}}{\Lambda+m}\right)\mathcal{F}^{\Lambda;0}_{2,l;\delta}(\tau_{1,2}).
\end{multline}
\end{theorem}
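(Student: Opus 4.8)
The plan is to establish all three bounds by the same inductive scheme that underlies Theorem \ref{theoremReno}, now applied to the once $\Lambda_0$-differentiated distributions. The starting observation is that the regularised propagator derivative $\dot C^{\Lambda}_{\bullet}(p;z,z')=\dot C^{\Lambda}(p)\,p_{\bullet}(1/\Lambda^2;z,z')$ of \eqref{p3} depends on $\Lambda$ only and is \emph{independent} of $\Lambda_0$. Hence applying $\partial_{\Lambda_0}$ to the flow equations \eqref{FEB} and commuting it through $\partial_\Lambda$ and $\partial^w$ yields a \emph{linear} inhomogeneous flow equation for $\partial_{\Lambda_0}\partial^w\mathcal{S}^{\Lambda,\Lambda_0}_{l,n;\star}$ of exactly the same shape as \eqref{FEB}: a loop (linear) term in which $\partial_{\Lambda_0}$ acts on $\mathcal S_{l-1,n+2;\star}$ or on $\mathcal D_{l-1,n+2}$, and quadratic terms in which, by Leibniz's rule, $\partial_{\Lambda_0}$ hits exactly one of the two factors. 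I would then run the induction upward in $n+2l$ and, for fixed $n+2l$, upward in $l$, integrating the flow of $\partial_{\Lambda_0}\partial^w\mathcal S$ downward from $\Lambda_0$ for the irrelevant contributions ($n\ge4$, and the irrelevant Taylor remainders for $n=2$) and upward from $\Lambda=0$ for the relevant surface terms $s_{l;\star}^{\Lambda,\Lambda_0},\,e_{l;\star}^{\Lambda,\Lambda_0}$.

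Two external inputs drive the appearance of the convergence factor $(\Lambda_0+m)^{-2}$. First, a bulk convergence estimate --- the $\partial_{\Lambda_0}$-analogue of Proposition \ref{Prop1}, bounding $|\partial_{\Lambda_0}\partial^w\mathcal D^{\Lambda,\Lambda_0}_{l,n;r}|$ by $(\Lambda_0+m)^{-2}$ times a boundedness-type bound carrying the bulk weight factor $\hat{\mathcal F}$ --- which is proved verbatim along the lines of \cite{BorjiKopper2} and supplies the factor for every inhomogeneous term built from $\partial_{\Lambda_0}\mathcal D$ folded with $\dot C_{S,\star}$. Second, in the quadratic terms the factor on which $\partial_{\Lambda_0}$ does \emph{not} act is estimated by the boundedness Theorem \ref{theoremReno} (or by Proposition \ref{Prop1}), while the differentiated factor is estimated either by the inductive convergence hypothesis for $\mathcal S$ or by the bulk convergence estimate for $\mathcal D$; in every case the differentiated factor already carries $(\Lambda_0+m)^{-2}$. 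Since this factor does not depend on the integration variable $\lambda$, it pulls out of the $\int_\Lambda^{\Lambda_0}d\lambda$, and the remaining $\lambda$-integral is identical to the one treated in Theorem \ref{theoremReno} except that the integrand is dimensionally larger by one power of $(\lambda+m)$; the irrelevant downward integration then converts this into the announced shift from $(\Lambda+m)^{3-n-\cdots}$ to $(\Lambda+m)^{4-n-\cdots}$. The bookkeeping of the heat kernels proceeds exactly as before, using the covariance bound \eqref{cov}, the reduction Lemma \ref{reduction} for the loop term, and the fusion Lemmas \ref{FFfusion} and \ref{TFfusion} for the quadratic terms, so that the surface weight factor $\mathcal F^{\Lambda,0}_{s,l;\delta}$ is reproduced.

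The boundary data for the new flow equation must be specified with care. At $\Lambda=0$ the renormalisation conditions \eqref{renoS} read $s^{0,\Lambda_0}_{l;\star}=e^{0,\Lambda_0}_{l;\star}=0$ for \emph{all} $\Lambda_0$, so $\partial_{\Lambda_0}$ of the relevant terms vanishes at $\Lambda=0$ and the relevant contributions are obtained by integrating upward from zero initial data. At $\Lambda=\Lambda_0$ the irrelevant boundary data vanish identically in $\Lambda_0$ by \eqref{BCDTS2}; differentiating the identity $\mathcal S^{\Lambda_0,\Lambda_0}_{l,n;\star}=0$ through the chain rule shows that the \emph{partial} derivative $[\partial_{\Lambda_0}\mathcal S^{\lambda,\Lambda_0}]_{\lambda=\Lambda_0}$ equals $-[\partial_\lambda\mathcal S^{\lambda,\Lambda_0}]_{\lambda=\Lambda_0}$, i.e. minus the right-hand side of \eqref{FEB} evaluated at $\lambda=\Lambda_0$; this boundary term is controlled by Theorem \ref{theoremReno} together with the explicit scaling $\dot C^{\Lambda_0}(p)\sim\Lambda_0^{-3}e^{-(p^2+m^2)/\Lambda_0^2}$ from \eqref{cov}, which furnishes the two missing powers of $(\Lambda_0+m)^{-1}$ and launches the induction. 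For the Dirichlet bounds $(B)$ and $(C)$ the conditions \eqref{BCDDS} remove the surface counter-terms altogether, so the scheme reduces to a purely irrelevant downward integration; the refined $n=2$ estimate $(C)$, with its $\tau^{-1}(\Lambda_0+m)^{-2}$ behaviour, is then obtained as the convergence counterpart of Corollary \ref{Cor1}, folding with Dirichlet heat kernels and invoking the limit identity of Proposition \ref{Prop44} to transfer the $c$-uniform Robin estimates to the Dirichlet case.

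The main obstacle I anticipate is the clean extraction of the factor $(\Lambda_0+m)^{-2}$ from the $\lambda=\Lambda_0$ boundary term and its uniformity in the Robin parameter $c$: one must verify that the Gaussian suppression and the heat-kernel scaling genuinely produce two powers of $(\Lambda_0+m)^{-1}$ after the loop-momentum integration and the $z,z'$ integrations, rather than being absorbed into logarithms, and that all estimates --- in particular those resting on \eqref{SurfHeatB} --- remain independent of $c$ so that the Dirichlet limit $c\to\infty$ of Proposition \ref{Prop44} may be taken. The second delicate point is the Dirichlet two-point case, where the additional surface suppression responsible for the $\tau^{-1}$ factor and the vanishing of the relevant part must be tracked through the differentiated flow without loss.
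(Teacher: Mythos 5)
Your proposal is correct and coincides with what the paper itself prescribes: the paper gives no separate proof of Theorem \ref{convergence}, remarking only that the proof contains no novelty and is obtained by combining the arguments of Theorem \ref{theoremReno} with the steps of the convergence theorem of \cite{BorjiKopper2}. Your sketch --- differentiating the flow equations w.r.t. $\Lambda_0$ (legitimate since $\dot{C}^{\Lambda}$ is $\Lambda_0$-independent), obtaining the boundary data at $\Lambda=\Lambda_0$ for the irrelevant terms from the chain rule applied to the vanishing boundary conditions (\ref{BCDTS2}), (\ref{BCDDS}), using the $\Lambda_0$-independence of the renormalization conditions (\ref{renoS}) at $\Lambda=0$ for the relevant terms, invoking a bulk convergence analogue of Proposition \ref{Prop1} for the terms containing $\partial_{\Lambda_0}\mathcal{D}_{l,n}^{\Lambda,\Lambda_0}$, and reusing the reduction and fusion Lemmas \ref{reduction}--\ref{TFfusion} to reproduce the surface weight factor $\mathcal{F}^{\Lambda,0}_{s,l;\delta}$ --- is precisely that combination.
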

\begin{remark}
\begin{itemize}
\item There are two differences between the Robin/Neumann case (\ref{c1}) and the Dirichlet case (\ref{cD1})-(\ref{cD2}): The boundary conditions (\ref{BCDDS}) for $\mathcal{S}_{l,n;D}^{\Lambda,\Lambda_0}$ are imposed at scale $\Lambda=\Lambda_0$ only, whereas for $\mathcal{S}_{l,n;\star;r_1,r_2}^{\Lambda,\Lambda_0}$ we imposed mixed boundary conditions (\ref{BCDTS})-(\ref{renoS}). The second difference concerns the type of test functions considered, which in the case of Dirichlet are product of Dirichlet heat kernels (i.e. $\prod_{i=1}^n p_D\left(\tau_i;z_i,y_i\right)$), whereas in the case of Robin and Neumann b.c. the test functions are product of bulk heat kernels and characteristic functions of the semi-lines (i.e. $\prod_{i=1}^s p_B\left(\tau_i;z_i,y_i\right)\prod_{i=s+1}^n \chi^+(z_i)$).
\item The bounds (\ref{c1}) and (\ref{cD1})-(\ref{cD2}) can be established by induction separately using the associated flow equations. For the Dirichlet boundary conditions, the associated flow equations are integrated from $\Lambda$ to $\Lambda_0$. For the Robin/Neumann cases, the flow equations are integrated from $0$ to $\Lambda$ for the relevant terms using the boundary condition (\ref{renoS}) and from $\Lambda$ to $\Lambda_0$ for the irrelevant terms using the boundary condition (\ref{BCDTS}).
\item Adopting the boundary conditions (\ref{Dl1})-(\ref{Dl3}) together with (\ref{BCDTS})-(\ref{renoS}), the distributions $\mathcal{D}_{l,n}^{\Lambda,\Lambda_0}$ and $\mathcal{S}_{l,n;\star}^{\Lambda,\Lambda_0}$ are uniquely defined as the solutions of the flow equations (\ref{FED}) and (\ref{FEB}). Furthermore, their sum 
\begin{equation}\label{dec}
\mathcal{L}_{l,n;\star}^{\Lambda,\Lambda_0}=\mathcal{D}_{l,n}^{\Lambda,\Lambda_0}+\mathcal{S}_{l,n;\star}^{\Lambda,\Lambda_0}~,
\end{equation}
is the unique solution of the flow equations (\ref{FEL}) such that $\mathcal{D}_{l,n}^{\Lambda,\Lambda_0}$ and $\mathcal{S}_{l,n;\star}^{\Lambda,\Lambda_0}$ obey respectively (\ref{Dl1})-(\ref{Dl3}) and (\ref{BCDTS})-(\ref{renoS}). Theorem 1 together with Proposition \ref{Prop1} gives for $s\geq 1$
\begin{multline}\label{Bound}
    \left|\partial^w \mathcal{L}_{l,n;r,\star}^{\Lambda,\Lambda_0}\left(\vec{p}_n;\prod_{i=1}^sp_B\left(\tau_i;\cdot,y_i\right)\right)\right|\\\leq \left\{\left(\Lambda+m\right)^{4-n-|w|-r}\hat{\mathcal{F}}_{s,l;\delta}^{\Lambda}\left({\tau}_{1,s}\right)+\left(\Lambda+m\right)^{3-n-|w|-r}\mathcal{F}_{s,l;\delta}^{\Lambda,0}\left({\tau}_{1,s}\right)\right\}\\\times \mathcal{P}_5\left(\log \frac{\Lambda+m}{m}\right)\mathcal{P}_6\left(\frac{\left\|\vec{p}_n\right\|}{\Lambda+m}\right) \mathcal{Q}_3\left(\frac{\tau^{-\frac{1}{2}}}{\Lambda+m}\right),
\end{multline}
where 
\begin{multline*}
    \partial^w \mathcal{L}_{l,n;r,\star}^{\Lambda,\Lambda_0}\left(\vec{p}_n;\prod_{i=1}^sp_B\left(\tau_i;\cdot,y_i\right)\right)\\:=\int_{z_1,\cdots,z_n}(z_1-z_2)^r\partial^w\mathcal{L}_{l,n;\star}^{\Lambda,\Lambda_0}\left((z_1,p_1),\cdots,(z_n,p_n)\right)\prod_{i=1}^sp_B\left(\tau_i;z_i,y_i\right)
\end{multline*}
and $\hat{\mathcal{F}}_{s,l;\delta}^{\Lambda}\left({\tau}_{1,s}\right):=\int_{z_1}~p_B\left(\tau_1;z_1,y_1\right)\mathcal{F}_{s,l;\delta}^{\Lambda}(\tau_{2,s})$.\\ The bound (\ref{Bound}) implies that $\partial^w \mathcal{L}_{l,n;r\star}^{\Lambda,\Lambda_0}\left(\vec{p}_n;\prod_{i=1}^sp_B\left(\tau_i;\cdot,y_i\right)\right)$ are bounded uniformly w.r.t. $\Lambda_0$. It is also possible to deduce a convergence Theorem which implies the existence of the limit $\Lambda\rightarrow 0$ and $\Lambda_0\rightarrow \infty$ for $\partial^w \mathcal{L}_{l,n;r\star}^{\Lambda,\Lambda_0}$ which we do not explicit here. We refer to \cite{BorjiKopper2} for more details.
\item We do not prove Theorem \ref{convergence} since there is no novelty in the proof, which is mainly based on combining arguments from the proof of Theorem \ref{theoremReno} with the steps of the proof of the convergence theorem in \cite{BorjiKopper2}. 
\item The difference between $\mathcal{D}_{l,n}^{\Lambda,\Lambda_0}+\mathcal{S}_{l,n;\star}^{\Lambda,\Lambda_0}$, and $\mathcal{L}_{l,n;\star}^{\Lambda,\Lambda_0}$ studied in \cite{BorjiKopper2}, is their distributional structure, in the sense that one can prove inductively using the FEs (\ref{FEB}) and the boundary conditions (\ref{BCDTS})-(\ref{BCDTS2}) that 
\begin{multline}
    \mathcal{D}_{l,n}^{\Lambda,\Lambda_0}\left(z_1;\phi_{\tau_{2,s},y_{2,s}}\right)+\mathcal{S}_{l,n;\star}^{\Lambda,\Lambda_0}\left(z_1;\phi_{\tau_{2,s},y_{2,s}}\right)=a_{l,n;\star}^{\Lambda,\Lambda_0}\left(z_1,y_{2,s},\tau_{2,s}\right)+b_{l,n;\star}^{\Lambda,\Lambda_0}(y_{2,s},\tau_{2,s})\delta_{z_1}\\+c_{l,n;\star}^{\Lambda,\Lambda_0}(y_{2,s},\tau_{2,s})\delta'_{z_1},
\end{multline}
where $a_{l,n;\star}^{\Lambda,\Lambda_0}$ is smooth w.r.t. $z_1$ and $a_{l,n;\star}^{\Lambda,\Lambda_0}$, $b_{l,n;\star}^{\Lambda,\Lambda_0}$ and $c_{l,n;\star}^{\Lambda,\Lambda_0}$ are smooth w.r.t. $y_{2,s}$ and $\tau_{2,s}$. However, the semi-infinite correlation distributions $\mathcal{L}_{l,n;\star}^{\Lambda,\Lambda_0}$ considered in \cite{BorjiKopper2} are smooth w.r.t. $z_1$ which is a consequence of the type of mixed b.c.s imposed on the semi-infinite correlation distributions.
\item Note that if the bulk correlation distributions obey the bound
\begin{multline}
    \left|  \partial^w\mathcal{D}_{l,n;r_1,r_2}^{\Lambda,\Lambda_0}\left(\vec{p}_n;\phi_{\tau_{1,s},y_{1,s}}\right)\right|\\\leq \left\{\left(\Lambda+m\right)^{3-n-r_1-r_2-|w|}\mathcal{F}_{s,l;\delta}^{\Lambda,0}\left({\tau}_{1,s}\right)+\left(\Lambda+m\right)^{4-n-r_1-r_2-|w|}\mathcal{F}_{s,l;\delta}^{\Lambda}\left({\tau}_{1,s}\right)\right\}\\\times\mathcal{P}_1\left(\log \frac{\Lambda+m}{m}\right)\mathcal{P}_2\left(\frac{\left\|\vec{p}_n\right\|}{\Lambda+m}\right) \mathcal{Q}_1\left(\frac{\tau^{-\frac{1}{2}}}{\Lambda+m}\right)\ ,~~\forall n\geq 2~,~~\forall s \geq 1
\end{multline}
instead of (\ref{c1b})-(\ref{c1bb}), the bound (\ref{c1}) still holds. 
\item The bound (\ref{c1}) holds also for the surface correlation distributions folded with $\star$ heat kernels (i.e. $\star\in\left\{N,R\right\}$), that is 
\begin{multline}\label{c1'}
    \left|  \partial^w\mathcal{S}_{l,n;\star;r_1,r_2}^{\Lambda,\Lambda_0}\left(\vec{p}_n;\phi_{\tau_{1,s},y_{1,s}}^{\star}\right)\right|\\\leq \left(\Lambda+m\right)^{3-n-r_1-r_2-|w|}\mathcal{P}_1\left(\log \frac{\Lambda+m}{m}\right)\mathcal{P}_2\left(\frac{\left\|\vec{p}_n\right\|}{\Lambda+m}\right) \mathcal{Q}_1\left(\frac{\tau^{-\frac{1}{2}}}{\Lambda+m}\right)\ \mathcal{F}_{s,l;\delta}^{\Lambda,0}\left({\tau}_{1,s}\right),~~\forall n\geq 2~,
\end{multline}
where the external points $y_{1,s}$ belong to $(\mathbb{R}^+)^s$. This is a direct consequence of (\ref{pN})-(\ref{pR}) together with the bounds (\ref{bulkBou}). In particular, the bound (\ref{c1'}) implies that $\mathcal{S}_{l,n;R;r_1,r_2}^{\Lambda,\Lambda_0}\left(\vec{p}_n;\phi_{\tau,y_{1,s}}^{R}\right)$ is uniformly bounded w.r.t. the Robin parameter $c$.
    \end{itemize}
    \end{remark}
\subsection{Proof of Theorem 1}
\textit{Outline of the proof:} The bound (\ref{c1}) and (\ref{cD1})-(\ref{cD2}) are proven inductively using the standard inductive scheme which proceeds upwards in $l$, 
for given $l$ upwards in $n$, and for given $(n,l)$ downwards in $|w|$ starting from some arbitrary $|w_{\max}|\geq 3$. The bounds (\ref{c1}) and (\ref{cD1})-(\ref{cD2}) can be proven separately. Let us explain the general steps in establishing (\ref{c1}). First, we verify that the bounds (\ref{c1}) hold at the tree order. The terms on the RHS of the FE are prior to the one in the LHS in the inductive order, therefore we use the induction hypothesis for the terms in the RHS to bound the term on the LHS. Afterwards, we integrate this bound from $\Lambda$ to $\Lambda_0$ for the irrelevant terms using the boundary conditions (\ref{BCDTS2}), and from $0$ to $\Lambda$ for the relevant terms using the boundary conditions at $\Lambda=0$  fixed by the renormalization conditions (\ref{renoS}).\\ The tree and forest formalism emerges from the structure of the flow equations when considered in position space. As we mentioned before, the connected amputated Schwinger $n$-point distributions must be folded with test functions. Since the renormalization proof by the method of flow equations is inductive, the possible choices of test functions are limited by the flow equations. We choose the bulk heat kernels and this choice is not unique, but it is suitable due to its simplicity. The bound (\ref{c1}) consists of familiar factors which are also present in the inductive bounds to prove perturbative renormalization of the scalar field theory in momentuum space \cite{Keller} in $\mathbb{R}^4$, typically the power counting factor $(\Lambda+m)^{4-n-|w|-r}$ as well as the polynomials $\mathcal{P}_1$ and $\mathcal{P}_2$. For the surface weight factor, it appears mainly since we work in the $z$-direction in position space. The idea behind the global surface weight factor is to bound a complicated combinatorial object composed of intricated loops by a tree decay consisting of the product of heat kernels associated to the internal and external legs of trees and forests which are much simpler to manipulate. The behaviour w.r.t. the flow parameter $\Lambda$ in these trees and forests is traced by counting the number of vertices of incidence number $2$. This number is related to the loop order by the following bound $$v_2\leq 3l-2+\frac{s}{2},$$ 
which is compatible with the inductive scheme but is not optimal in the sense that sharper upper bounds rendering the proof more complicated could have been imposed.  \\
\begin{proof}
\noindent We establish the proof in the case of the Robin boundary conditions. For the Neumann boundary conditions, we proceed similarly. In the sequel, we omit the subscript $R$ from $\mathcal{S}_{l,n;R}^{\Lambda,\Lambda_0}$. \\
The induction starts at the tree order for which we have
$$\mathcal{S}_{0,4}^{\Lambda,\Lambda_0}\left((z_1,p_1),\cdots,(z_4,p_4)\right)=0~,$$
and the bound (\ref{c1}) obviously holds.\\
\subsubsection{The right-hand side of the FEs}
The bounds that we want to obtain for the RHS of the flow equations (\ref{FEB}) are of the form    
\begin{multline}\label{c'}
    \left| \partial_{\Lambda}\partial^w \mathcal{S}_{l,n;r_1,r_2}^{\Lambda,\Lambda_0}\left(\vec{p}_n;\phi_{\tau_{1,s},y_{1,s}}\right) \right|\\\leq \left(\Lambda+m\right)^{2-n-|w|-r_1-r_2}\mathcal{P}_1\left(\log\frac{\Lambda+m}{m}\right)\mathcal{P}_2\left(\frac{\left\|\vec{p}_n\right\|}{\Lambda+m}\right)\mathcal{Q}_1\left(\frac{\tau^{-\frac{1}{2}}}{\Lambda+m}\right)\mathcal{F}^{\Lambda,0}_{s,l;\delta}\left({\tau}_{1,s}\right)~,
\end{multline}
for all $n\geq 2$, $0\leq s \leq n$ and $0\leq r_1,~r_2\leq 4$.
In the sequel, we drop the lower indices from the polynomials $\mathcal{P}_1$, $\mathcal{P}_2$ and $\mathcal{Q}_1$. But one should keep in mind that these polynomials, whenever they appear, may have different positive coefficients which depend on $l,n,|w|,\delta_l$ only and not on $\left\{p_i\right\}$, $\Lambda$, $\Lambda_0$, ${\tau}_{1,s}$ and the Robin parameter $c$.\\
The bound (\ref{c'}) is established by bounding each of the terms on the RHS of the FE (\ref{FEB}). We consider first the case $r_1=r_2=0$.
\begin{itemize}
\item  We start by treating the linear terms $R_1^D$ and $R_1^S$ given by
\begin{multline}\label{R1S}
    R_1^S:=\int_{z,z'}\int_{z_{1,n}}\int_k 
\partial^w\mathcal{S}_{l-1,n+2}^{\Lambda,\Lambda_0}
\left((z_1,p_1),\cdots,(z_n,p_n),(z,k),(z',-k)\right)\\\times
\dot{C}^{\Lambda}(k)p_R\left(\frac{1}{\Lambda^2};z,z'\right)\prod_{i=1}^s~p_B\left(\tau_i;z_i,y_i\right)
\end{multline}
and 
\begin{multline}\label{D-term}
    R_1^D:=\int_{z,z'}\int_{z_{1,n}}\int_k 
\partial^w\mathcal{D}_{l-1,n+2}^{\Lambda,\Lambda_0}
\left((z_1,p_1),\cdots,(z_n,p_n),(z,k),(z',-k)\right)\\\times
\dot{C}^{\Lambda}(k)p_{S,R}\left(\frac{1}{\Lambda^2};z,z'\right)\prod_{i=1}^s~p_B\left(\tau_i;z_i,y_i\right)~,
\end{multline}
where $p_R$ and $p_{S,R}$ are given by (\ref{pR}) and (\ref{43}).
First, we bound $R_1^S$. Using the decomposition of the Robin heat kernel (\ref{pR}), we obtain that $R_1^S$ can be written as the sum of three contributions such that for each contribution the Robin heat kernel $p_R$ in $R_1^S$ is replaced by a term from the decomposition (\ref{pR}). We analyze first the term
\begin{multline*}
     \tilde{R}_1^S:=\int_{z,z'}\int_{z_{1,n}}\int_k 
\partial^w\mathcal{S}_{l-1,n+2}^{\Lambda,\Lambda_0}
\left((z_1,p_1),\cdots,(z_n,p_n),(z,k),(z',-k)\right)\\\times
\dot{C}^{\Lambda}(k)p_B\left(\frac{1}{\Lambda^2};z,z'\right)\prod_{i=1}^s~p_B\left(\tau_i;z_i,y_i\right)~.
\end{multline*}
Using the semi-group property for the bulk heat kernel (\ref{rr+}), $\tilde{R}_1^S$ can be rewritten as 
\begin{multline*}
\int_{\mathbb{R}}du~\int_{z,z'}\int_{z_{1,n}}\int_k 
\partial^w\mathcal{S}_{l-1,n+2}^{\Lambda,\Lambda_0}
\left((z_1,p_1),\cdots,(z_n,p_n),(z,k),(z',-k)\right)\\
\times\dot{C}^{\Lambda}(k)p_B\left(\frac{1}{2\Lambda^2};z,u\right)p_B\left(\frac{1}{2\Lambda^2};z',u\right)\prod_{i=1}^s~p_B\left(\tau_i;z_i,y_i\right).
\end{multline*}
We now insert the induction hypothesis to obtain that $\tilde{R}_1^S$ is bounded by 
\begin{multline}
\left(\Lambda+m\right)^{1-n-|w|}\mathcal{P}\left(\log \frac{\Lambda+m}{m}\right)\mathcal{Q}\left(\frac{\tau^{-\frac{1}{2}}}{\Lambda+m}\right)\int_k \dot{C}(k)\mathcal{P}\left(\frac{|k|}{\Lambda+m}, \frac{\|\vec{p}_n\|}{\Lambda+m}\right)\\\int_{\mathbb{R}}du~\mathcal{F}_{s+2,l-1;\delta_1}^{\Lambda,0}\left(\tau_{1,s},\frac{1}{2\Lambda^2},\frac{1}{2\Lambda^2};Y_{\sigma_s},u,u\right).
\end{multline}
From the bound one the $3$-dimensional covariance, we have
\begin{multline}
|\Tilde{R}_1^S|\leq ~\left(\Lambda+m\right)^{1-n-|w|}\mathcal{P}\left(\log \frac{\Lambda+m}{m}\right)\mathcal{Q}\left(\frac{\tau^{-\frac{1}{2}}}{\Lambda+m}\right)\mathcal{P}\left(\frac{\|\vec{p}_n\|}{\Lambda+m}\right)\\\int_{\mathbb{R}}du~\mathcal{F}_{s+2,l-1;\delta_1}^{\Lambda,0}\left(\tau_{1,s},\frac{1}{2\Lambda^2},\frac{1}{2\Lambda^2};Y_{\sigma_s},u,u\right).
\end{multline}
Applying Lemma \ref{reduction}, we obtain the bound
\begin{equation}
|\tilde{R}_1^S|\leq \left(\Lambda+m\right)^{2-n-|w|}\mathcal{P}\left(\log\frac{\Lambda+m}{m}\right)\mathcal{P}\left(\frac{\left\|\vec{p}_n\right\|}{\Lambda+m}\right)\mathcal{Q}\left(\frac{\tau^{-\frac{1}{2}}}{\Lambda+m}\right)\mathcal{F}^{\Lambda,0}_{s,l;\delta_1}\left({\tau}_{1,s}\right).
\end{equation}
The other contributions to $R_1^S$ are
\begin{multline}\label{R1S2}
    \int_{z,z'}\int_{z_{1,n}}\int_k 
\partial^w\mathcal{S}_{l-1,n+2}^{\Lambda,\Lambda_0}
\left((z_1,p_1),\cdots,(z_n,p_n),(z,k),(z',-k)\right)\\\times
\dot{C}^{\Lambda}(k)p_B\left(\frac{1}{\Lambda^2};z,-z'\right)\prod_{i=1}^s~p_B\left(\tau_i;z_i,y_i\right)
\end{multline}
and 
\begin{multline}\label{R1S3}
    -2~\int_{z,z'}\int_{z_{1,n}}\int_k 
\partial^w\mathcal{S}_{l-1,n+2}^{\Lambda,\Lambda_0}
\left((z_1,p_1),\cdots,(z_n,p_n),(z,k),(z',-k)\right)\\\times
\dot{C}^{\Lambda}(k)\int_{v}e^{-v}p_B\left(\frac{1}{\Lambda^2};z,-z'-\frac{v}{c}\right)\prod_{i=1}^s~p_B\left(\tau_i;z_i,y_i\right).
\end{multline}
These terms can be rewritten using (\ref{rr+}) as
 \begin{equation}\label{141}
     \int_{\mathbb{R}}du~\partial^w\mathcal{S}_{l-1,n+2}^{\Lambda,\Lambda_0}
\left(\Vec{p}_n,k,-k;\phi_{\tau_{1,s},y_{1,s}}\times p_B\left(\frac{1}{2\Lambda^2};\cdot,u\right)p_B\left(\frac{1}{2\Lambda^2};\cdot,-u\right)\right)
 \end{equation}
 and 
  \begin{equation}\label{142}
     \int_{\mathbb{R}}du~\int_{v}~e^{-{v}}~\partial^w\mathcal{S}_{l-1,n+2}^{\Lambda,\Lambda_0}
\left(\Vec{p}_n,k,-k;\phi_{\tau_{1,s},y_{1,s}}\times p_B\left(\frac{1}{2\Lambda^2};\cdot,u\right)p_B\left(\frac{1}{2\Lambda^2};\cdot,-\frac{v}{c}-u\right)\right).
 \end{equation}
Applying the induction hypothesis, we obtain that (\ref{141}) is bounded by 
\begin{multline}
\left(\Lambda+m\right)^{1-n-|w|}\mathcal{P}\left(\log \frac{\Lambda+m}{m}\right)\mathcal{Q}\left(\frac{\tau^{-\frac{1}{2}}}{\Lambda+m}\right)\mathcal{P}\left(\frac{\|\vec{p}_n\|}{\Lambda+m}\right)\\\int_{\mathbb{R}}du~\mathcal{F}_{s+2,l-1;\delta_1}^{\Lambda,0}\left(\tau_{1,s},\frac{1}{2\Lambda^2},\frac{1}{2\Lambda^2};Y_{\sigma_s},u,-u\right).
\end{multline}
Similarly, we have the following bound for (\ref{142}) 
\begin{multline}
\left(\Lambda+m\right)^{1-n-|w|}\mathcal{P}\left(\log \frac{\Lambda+m}{m}\right)\mathcal{Q}\left(\frac{\tau^{-\frac{1}{2}}}{\Lambda+m}\right)\int_k \dot{C}(k)~\mathcal{P}\left(\frac{|k|}{\Lambda+m}, \frac{\|\vec{p}_n\|}{\Lambda+m}\right)\\\int_{\mathbb{R}}du~\int_v~e^{-v}~\mathcal{F}_{s+2,l-1;\delta_1}^{\Lambda,0}\left(\tau_{1,s},\frac{1}{2\Lambda^2},\frac{1}{2\Lambda^2};Y_{\sigma_s},u,-u-\frac{v}{c}\right).
\end{multline}
Using the bounds (\ref{bulkBou})and remembering the definition (\ref{treeStr'}) of the surface weight factor, we deduce that 
$$\int_{\mathbb{R}}du~\int_v~e^{-v}~\mathcal{F}_{s+2,l-1;\delta_1}^{\Lambda,0}\left(\tau_{1,s},\frac{1}{2\Lambda^2},\frac{1}{2\Lambda^2};Y_{\sigma_s},u,-u-\frac{v}{c}\right)$$
and 
$$\int_{\mathbb{R}}du~\mathcal{F}_{s+2,l-1;\delta_1}^{\Lambda,0}\left(\tau_{1,s},\frac{1}{2\Lambda^2},\frac{1}{2\Lambda^2};Y_{\sigma_s},u,-u\right)$$
are bounded by 
$$\int_{\mathbb{R}}du~\mathcal{F}_{s+2,l-1;\delta_1}^{\Lambda,0}\left(\tau_{1,s},\frac{1}{2\Lambda^2},\frac{1}{2\Lambda^2};Y_{\sigma_s},u,u\right).$$
The rest of the proof follows the steps used to obtain the final bound for $\tilde{R}_1^S$, which gives 
\begin{equation}
|{R}_1^S|\leq \left(\Lambda+m\right)^{2-n-|w|}\mathcal{P}\left(\log\frac{\Lambda+m}{m}\right)\mathcal{P}\left(\frac{\left\|\vec{p}_n\right\|}{\Lambda+m}\right)\mathcal{Q}\left(\frac{\tau^{-\frac{1}{2}}}{\Lambda+m}\right)\mathcal{F}^{\Lambda,0}_{s,l;\delta_1}\left({\tau}_{1,s}\right).
\end{equation}
Now we analyse $R_1^D$. This term is independent of the induction hypothesis and will be bounded using only the bound (\ref{c1b}) for $\mathcal{D}_{l,n}^{\Lambda,\Lambda_0}$. Using (\ref{rr+}), $R_1^D$ can be rewritten as 
\begin{multline*}
\int_{\mathbb{R}}du~\int_{z,z'}\int_{z_{1,n}}\int_k 
\partial^w\mathcal{D}_{l-1,n+2}^{\Lambda,\Lambda_0}
\left((z_1,p_1),\cdots,(z_n,p_n),(z,k),(z',-k)\right)\\\times 
\dot{C}^{\Lambda}(k)p_B\left(\frac{1}{2\Lambda^2};z,u\right)p_B\left(\frac{1}{2\Lambda^2};z',-u\right)\prod_{i=1}^s~p_B\left(\tau_i;z_i,y_i\right)~.
\end{multline*}
The bound (\ref{c1b}) implies that $R_1^D$ is bounded by 
\begin{multline}\label{79}
\left(\Lambda+m\right)^{2-n-|w|} \mathcal{P}\left(\log \frac{\Lambda+m}{m}\right)\mathcal{P}\left(\frac{\left\|\vec{p}_n\right\|}{\Lambda+m}\right)\mathcal{Q}\left(\frac{\tau^{-\frac{1}{2}}}{\Lambda+m}\right) \int_{\mathbb{R}}du~\int_{z_1}~p_B\left(\tau_1;z_1,y_1\right) \\
  \times\sum_{T^{s+2}_{l-1}(z_1,y_{2,s},u,-u)}\int_{\vec{z}}~\mathcal{F}_{\delta_2}\left(\Lambda;\left\{\tau_{2,s},\frac{1}{2\Lambda^2},\frac{1}{2\Lambda^2}\right\};T^{s+2}_{l-1}(z_1,y_{2,s},u,-u,\vec{z})\right)~.
\end{multline}
For any contribution to (\ref{79}) we denote by $z', z''$ the vertices in the tree $T^{s+2}_{l-1}(z_1,y_{2,s},u,-u)$ to which the test functions $p_B\left(\frac{1+\delta_2}{2\Lambda^2};u,\cdot\right)$ and $p_B\left(\frac{1+\delta_2}{2\Lambda^2};\cdot,-u\right)$ are attached. Performing the integral over $u$ we obtain using (\ref{rr+})
\begin{multline}\label{tuii}
\int_{\mathbb{R}}du~ p_B\left(\frac{1+\delta_2}{2\Lambda^2};z',u\right)\ 
p_B\left(\frac{1+\delta_2}{2\Lambda^2};-u,z''\right)=p_B\left(\frac{1+\delta_2}{\Lambda^2};z',-z''\right)\\\leq  p_B\left(\frac{1+\delta_2}{\Lambda^2};z',0\right).
\end{multline}
The bound (\ref{tuii}) implies that the legs $(z,u)$ and $(z',u)$ are amputated from the tree $T^{s+2}_{l-1}$ and $(z',u)$ is replaced by the surface external leg $(z',0)$ with the parameter $\Lambda$. If $z''$ is of incidence number one, it is removed using $\int_{z''}p_B\left((1+\delta_2)/\Lambda_I^2;z,z''\right)\leq 1$, and this operation is iterated until a vertex $\tilde{z}$ such that $c(\tilde{z})\geq 2$ is reached. This iteration process converges to a non-empty tree since for $s\geq 1$, there exists at least one internal vertex of incidence number greater or equal to $3$ in the tree $T^{s+2}_{l-1}$. The integration over $z_1$ in (\ref{79}) implies that $z_1$ becomes an internal vertex attached to $y_1$. Therefore, the reduction process produces a tree which belongs to $\mathcal{T}^{s,0}$. Furthermore, $v'_2$ which denotes the number of vertices of incidence number $2$ of the new tree, is increased at most by $2$. This stems from the reduction process which can produce one additional internal vertex such that $c(z)=2$ when the vertex $z''$ is removed, but also from the vertex $z_1$ which was initially a root vertex. If $z_1$ had an incidence number equal to one then after introducing the test function $p_B(\tau_1;z_1,y_1)$, it becomes internal of incidence number $2$. If $v_2$ is the number of vertices of incidence number $2$ of $T^{s+2}_{l-1}$, then $v'_2\leq v_2+\delta_{c_1,1}+1$ which implies
\begin{eqnarray*}
v'_2\leq v_2+\delta_{c_1,1}+1\leq 3(l-1)-2+\frac{s+2}{2}+1\leq3l-2+s/2~.
\end{eqnarray*}
This also means that the obtained tree is a surface tree in $\mathcal{T}_l^{s,0}\equiv\mathcal{W}^s_l\left(\sigma_s\right)$, which can also be seen as the set of forests corresponding to the trivial partition. Therefore, $R_1^D$ is bounded by 
\begin{equation}
\left(\Lambda+m\right)^{2-n-|w|} \mathcal{P}\left(\log \frac{\Lambda+m}{m}\right)\mathcal{P}\left(\frac{\left\|\vec{p}_n\right\|}{\Lambda+m}\right)\mathcal{Q}\left(\frac{\tau^{-\frac{1}{2}}}{\Lambda+m}\right)
\mathcal{F}^0_{\delta_2}\left(\Lambda;\tau_{1,s};W^s_l(\sigma_s);Y_{\sigma_s}\right)
\end{equation}
which implies (see (\ref{Sur101}))
\begin{equation}
\left|R_1^D\right|\leq\left(\Lambda+m\right)^{2-n-|w|} \mathcal{P}\left(\log \frac{\Lambda+m}{m}\right)\mathcal{P}\left(\frac{\left\|\vec{p}_n\right\|}{\Lambda+m}\right)\mathcal{Q}\left(\frac{\tau^{-\frac{1}{2}}}{\Lambda+m}\right)
\mathcal{F}^{\Lambda;0}_{s,l;\delta_2}\left(\tau_{1,s}\right)~.
\end{equation}
\item In this part, we treat the quadratic terms on the RHS of the flow equations. It is enough to analyse the terms from the symmetrized sum in which the arguments $(z_i,p_i)$ appear ordered in  $(\mathcal{S}^{\Lambda,\Lambda_0}_{l_1,n_1+1},\mathcal{S}^{\Lambda,\Lambda_0}_{l_2,n_2+1})$, $(\mathcal{D}^{\Lambda,\Lambda_0}_{l_1,n_1+1},\mathcal{S}^{\Lambda,\Lambda_0}_{l_2,n_2+1})$ and $(\mathcal{D}^{\Lambda,\Lambda_0}_{l_1,n_1+1},\mathcal{D}^{\Lambda,\Lambda_0}_{l_2,n_2+1})$. These terms are given by
\begin{multline*}
     R_2^{SS}:=\int_{z_{1,n}}\int_{z,z'}z_1^{r_1}~z_2^{r_2}~\partial^{w_1}\mathcal{S}_{l_1,n_1+1}^{\Lambda,\Lambda_0}((z_1,p_1),\cdots,(z_{n_1},p_{n_1}),(z,p))\partial^{w_3}\dot{C}^{\Lambda}(p)p_R\left(\frac{1}{\Lambda^2};z,z'\right)\\ 
\times\partial^{w_2}\mathcal{S}_{l_2,n_2+1}^{\Lambda,\Lambda_0}((z',-p),\cdots,(z_{n},p_{n}))\prod_{i=1}^s p_B(\tau_i;z_i,y_i)~,
\end{multline*}
\begin{multline*}
    R_2^{DS}:=\int_{z_{1,n}}\int_{z,z'}z_1^{r_1}~z_2^{r_2}~\partial^{w_1}\mathcal{S}_{l_1,n_1+1}^{\Lambda,\Lambda_0}((z_1,p_1),\cdots,(z_{n_1},p_{n_1}),(z,p))\partial^{w_3}\dot{C}^{\Lambda}(p)p_R\left(\frac{1}{\Lambda^2};z,z'\right)\\ 
\times\partial^{w_2}\mathcal{D}_{l_2,n_2+1}^{\Lambda,\Lambda_0}((z',-p),\cdots,(z_{n},p_{n}))\prod_{i=1}^s p_B(\tau_i;z_i,y_i)~,
\end{multline*}
and \begin{multline*}
    R_2^{DD}:=\int_{z_{1,n}}\int_{z,z'}z_1^{r_1}~z_2^{r_2}~\partial^{w_1}\mathcal{D}_{l_1,n_1+1}^{\Lambda,\Lambda_0}((z_1,p_1),\cdots,(z_{n_1},p_{n_1}),(z,p))\partial^{w_3}\dot{C}^{\Lambda}(p)p_{S,R}\left(\frac{1}{\Lambda^2};z,z'\right)\\ 
\times\partial^{w_2}\mathcal{D}_{l_2,n_2+1}^{\Lambda,\Lambda_0}((z',-p),\cdots,(z_{n},p_{n}))\prod_{i=1}^s p_B(\tau_i;z_i,y_i)~.
\end{multline*}
First, we treat the case $(r_1,r_2)=(0,0)$.
\begin{itemize}
    \item We start with the term $R_2^{DS}$. The property (\ref{10'}) implies that $R_2^{DS}$ can be rewritten as
    \begin{multline*}
    \int_{\mathbb{R}^+}du~\int_{z_{1,n}}\int_{z,z'}\partial^{w_1}\mathcal{S}_{l_1,n_1+1}^{\Lambda,\Lambda_0}((z_1,p_1),\cdots,(z_{n_1},p_{n_1}),(z,p))\partial^{w_3}\dot{C}^{\Lambda}(p)\\ 
\times\partial^{w_2}\mathcal{D}_{l_2,n_2+1}^{\Lambda,\Lambda_0}((z',-p),\cdots,(z_{n},p_{n}))\prod_{i=1}^s p_B(\tau_i;z_i,y_i)p_R\left(\frac{1}{2\Lambda^2};z,u\right)p_R\left(\frac{1}{2\Lambda^2};z',u\right).
\end{multline*}
Using the decomposition of the Robin heat kernel (\ref{pR}), we restrict our analysis to the following term only
\begin{multline*}
    \tilde{R}_2^{DS}:=\int_{\mathbb{R}^+}du~\int_{z_{1,n}}\int_{z,z'}\partial^{w_1}\mathcal{S}_{l_1,n_1+1}^{\Lambda,\Lambda_0}((z_1,p_1),\cdots,(z_{n_1},p_{n_1}),(z,p))\partial^{w_3}\dot{C}^{\Lambda}(p)\\ 
\times\partial^{w_2}\mathcal{D}_{l_2,n_2+1}^{\Lambda,\Lambda_0}((z',-p),\cdots,(z_{n},p_{n}))\prod_{i=1}^s p_B(\tau_i;z_i,y_i)p_B\left(\frac{1}{2\Lambda^2};z,u\right)p_B\left(\frac{1}{2\Lambda^2};z',u\right).
\end{multline*}
The line of reasoning in treating the remaining contributions in $R_2^{DS}$ is similar to the one used in bounding $R^S_1$.
We define
\begin{equation}\label{o1}
\phi'_{s_1}(z_{1,n_1})=\prod^{n_1}_{r=1}\phi_i(z_i)
,~~~\phi''_{s_2}(z_{n_1+1,n-1})
=\prod_{r=n_1+1}^{n}\! \phi_i(z_i)~,
\end{equation}
where 
\begin{equation}\label{o2}
\phi_i(z_i) = \left\{
    \begin{array}{ll}
        p_B(\tau_i;z_i,y_i) & \mbox{if } i\leq s \\
        \chi^+(z_i) & \mbox{otherwise~.}
    \end{array}
\right.
\end{equation}
Note that $s_1=n_1$ if $n_1\leq s$ and $s_2=s-n_1$. Otherwise, we have $s_1=s$ and $s_2=0$. Therefore, $\Tilde{R}_2^{DS}$ can be rewritten as
\begin{multline}\label{R2}
      \tilde{R}_2^{DS}=\int_{\mathbb{R}^+}du~\int_{z'}  \partial^{w_1}
\mathcal{S}^{\Lambda,\Lambda_0}_{l_1,n_1+1}
\Bigl(\vec{p}_{n_1},p;\phi'_{s_1}\times p_B\left(\frac{1}{2\Lambda^2};.,u\right)
\Bigr)
\, \partial^{w_3}\dot{C}^{\Lambda}(p)\\
     \times \partial^{w_2}\mathcal{D}^{\Lambda,\Lambda_0}_{l_2,n_2+1}
\Bigl(z';-p,\vec{p}_{n_1+1,n};\phi''_{s_2}\Bigr)
\times p_B\left(\frac{1}{2\Lambda^2};u,z'\right)\ .
 \end{multline}
Applying the induction hypothesis to $ \mathcal{S}_{l_1,n_1+1}^{\Lambda,\Lambda_0}$ and using the bound (\ref{c1b}) for $\mathcal{D}_{l_2,n_2+1}^{\Lambda,\Lambda_0}$, we obtain that $\tilde{R}_2^{DS}$ is bounded by 
    \begin{multline*}
    \left(\Lambda+m\right)^{2-n-|w|}\mathcal{P}\left(\log\frac{\Lambda+m}{m}\right) \mathcal{P}\left(\frac{\left \|\vec{p}_n\right\|}{\Lambda+m}\right)\mathcal{Q}\left(\frac{\tau^{-\frac{1}{2}}}{\Lambda+m}\right)\\\times \int_{\mathbb{R}^+}du~\mathcal{F}_{s_1+1,l_1;\delta_3}^{0}\left(\Lambda;\tau_{1,s_1},\frac{1}{2\Lambda^2};Y_{\sigma_{s_1}},u\right)\\\times\int_{z'}\mathcal{F}_{s_2,l_2;\delta'_3}\left(\Lambda;\tau_{s_1+1,s};z';Y_{\sigma_{s_1+1:s}}\right)p_B\left(\frac{1}{2\Lambda^2};z',u\right).
    \end{multline*}
    Since the global weight factor $\mathcal{F}_{s_2,l_2;\delta'_3}\left(\Lambda;\tau_{s_1+1,s};z';Y_{\sigma_{s_1+1:s}}\right)$ is a sum of the weight factors of all trees $T^{s_2}_{l_2}(z';\tau_{s_1+1,s};Y_{\sigma_{s_1+1:s}})$ in $\mathcal{T}^{s_2}_{l_2}$, we deduce that integrating over $z'$ gives the global weight factor of the bulk trees $\hat{\mathcal{T}}^{s_2+1}_{l_2}$, and therefore we can write
    \begin{multline}\label{BulkWeight}
       \hat{\mathcal{F}}_{s_2+1,l_2;\delta'_3}\left(\Lambda;\tau_{s_1+1,s},\frac{1}{2\Lambda^2};Y_{\sigma_{s_1+1:s}},u\right)\\ =\sum_{\hat{T}^{s_2+1}_{l_2} \in \hat{\mathcal{T}}^{s_2+1}_{l_2}}\mathcal{F}_{\delta'_3}
\left(\Lambda;\tau_{s_1+1,s},\frac{1}{2\Lambda^2};\hat{T}^{s_2+1}_{l_2};Y_{\sigma_{s_1+1:s}},u\right)\ ,
    \end{multline}
where 
\begin{multline}
\mathcal{F}_{\delta'_3}
\left(\Lambda;\tau_{s_1+1,s},\frac{1}{2\Lambda^2};\hat{T}^{s_2+1}_{l_2};Y_{\sigma_{s_1+1:s}},u\right)\\:=\int_{z'}~\mathcal{F}_{\delta'_3}
\left(\Lambda;\tau_{s_1+1,s};{T}^{s_2}_{l_2};z';Y_{\sigma_{s_1+1:s}}\right)p_B\left(\frac{1}{2\Lambda^2};z',u\right).
\end{multline}
Applying Lemma \ref{TFfusion}, we deduce that $\tilde{R}_2^{DS}$  is bounded by 
 \begin{equation*}
    \left(\Lambda+m\right)^{2-n-|w|}\mathcal{P}\left(\log\frac{\Lambda+m}{m}\right) \mathcal{P}\left(\frac{\left \|\vec{p}_n\right\|}{\Lambda+m}\right)\mathcal{Q}\left(\frac{\tau^{-\frac{1}{2}}}{\Lambda+m}\right)\mathcal{F}_{s,l;\delta''_{3}}^{\Lambda,0}\left({\tau}_{1,s}\right)~,
    \end{equation*}
    where $\delta''_{3}:=\max\left(\delta_3,\delta'_3\right)$.
 \item In this part we bound the term $R_2^{SS}$. As for ${R}_2^{DS}$, we only treat the term 
 \begin{multline*}
    \tilde{R}_2^{SS}:=\int_{\mathbb{R}^+}du~\int_{z_{1,n}}\int_{z,z'}\partial^{w_1}\mathcal{S}_{l_1,n_1+1}^{\Lambda,\Lambda_0}((z_1,p_1),\cdots,(z_{n_1},p_{n_1}),(z,p))\partial^{w_3}\dot{C}^{\Lambda}(p)\\ 
\times\partial^{w_2}\mathcal{S}_{l_2,n_2+1}^{\Lambda,\Lambda_0}((z',-p),\cdots,(z_{n},p_{n}))\prod_{i=1}^s p_B(\tau_i;z_i,y_i)p_B\left(\frac{1}{2\Lambda^2};z,u\right)p_B\left(\frac{1}{2\Lambda^2};z',u\right).
\end{multline*}
Using the same notations (\ref{o1})-(\ref{o2}), we rewrite $\,\tilde{R}_2^{SS}$ as follows, 
 \begin{multline}
      \tilde{R}_2^{SS}=\int_{\mathbb{R}^+}du~  \partial^{w_1}
\mathcal{S}^{\Lambda,\Lambda_0}_{l_1,n_1+1}
\Bigl(p_1,\cdots,p_{n_1},p;\phi'_{s_1}\times p_B\left(\frac{1}{2\Lambda^2};.,u\right)
\Bigr)
\, \partial^{w_3}\dot{C}^{\Lambda}(p)\\
     \times \partial^{w_2}\mathcal{S}^{\Lambda,\Lambda_0}_{l_2,n_2+1}
\Bigl(-p,p_{n_1+1},
\cdots,p_n;\phi''_{s_2}\times p_B\left(\frac{1}{2\Lambda^2};.,u\right)\Bigr)\ .
 \end{multline}
 Using the induction hypothesis, we obtain 
 \begin{multline*}
    \left|\tilde{R}_2^{SS}\right|\leq\left(\Lambda+m\right)^{1-n-|w|}\mathcal{P}\left(\log\frac{\Lambda+m}{m}\right) \mathcal{P}\left(\frac{\left \|\vec{p}_n\right\|}{\Lambda+m}\right)\mathcal{Q}\left(\frac{\tau^{-\frac{1}{2}}}{\Lambda+m}\right)\\\times \int_{\mathbb{R}}du~\mathcal{F}_{s_1+1,l_1;\delta_3}^{0}\left(\Lambda;\tau_{1,s_1},\frac{1}{2\Lambda^2};Y_{\sigma_{s_1}},u\right)~\mathcal{F}_{s_2+1,l_2;\delta_4}^{0}\left(\Lambda;\tau_{s_1+1,s},\frac{1}{2\Lambda^2};Y_{\sigma_{s_1+1:s}},u\right).
    \end{multline*}
Applying Lemma \ref{FFfusion}, we deduce that $\tilde{R}_2^{SS}$ is bounded by 
 \begin{equation*}
    \left(\Lambda+m\right)^{2-n-|w|}\mathcal{P}\left(\log\frac{\Lambda+m}{m}\right) \mathcal{P}\left(\frac{\left \|\vec{p}_n\right\|}{\Lambda+m}\right)\mathcal{Q}\left(\frac{\tau^{-\frac{1}{2}}}{\Lambda+m}\right)\mathcal{F}_{s,l;\delta_{5}}^{\Lambda,0}
\left({\tau}_{1,s}\right),
\end{equation*}
    where $\delta_{5}:=\max \left(\delta_{3},\delta_4\right)$.
\item In this part, we bound the term $R_2^{DD}$, which we rewrite using (\ref{o1})-(\ref{o2}) as follows,
\begin{multline*}
\int_{z,z'}\partial^{w_1}\mathcal{D}_{l_1,n_1+1}^{\Lambda,\Lambda_0}\left(z;\Vec{p}_{1,n_1},p;\phi'_{s_1}\right)\partial^{w_3}\dot{C}^{\Lambda}(p)p_{S,R}\left(\frac{1}{\Lambda^2};z,-z'\right)\\ 
\times\partial^{w_2}\mathcal{D}_{l_2,n_2+1}^{\Lambda,\Lambda_0}\left(z';\Vec{p}_{n_1+1,n},-p;\phi''_{s_2}\right).
\end{multline*}
Using the bounds (\ref{SurfHeatB}) and (\ref{c1b}), we obtain that $R_2^{DD}$ is bounded by 
\begin{multline}\label{R2DD}
   \left(\Lambda+m\right)^{3-n-|w|}e^{-\frac{m^2}{2\Lambda^2}}\mathcal{P}\left(\log\frac{\Lambda+m}{m}\right) \mathcal{P}\left(\frac{\left \|\vec{p}_n\right\|}{\Lambda+m}\right)\mathcal{Q}\left(\frac{\tau^{-\frac{1}{2}}}{\Lambda+m}\right)\\\times \int_{z,z'}\mathcal{F}_{s_1,l_1;\delta'_4}\left(\Lambda;\tau_{1,s_1};z;Y_{\sigma_{s_1}}\right)\mathcal{F}_{s_2,l_2;\delta'_3}\left(\Lambda;\tau_{s_1+1,s};z';Y_{\sigma_{s_1+1:s}}\right)p_B\left(\frac{1}{\Lambda^2};z',-z\right).
\end{multline}
The bound 
\begin{equation}\label{nice}
    p_B\left(\frac{1}{\Lambda^2};z',-z\right)\leq \sqrt{2\pi}~\Lambda^{-1}~p_B\left(\frac{1}{\Lambda^2};z',0\right)~p_B\left(\frac{1}{\Lambda^2};z,0\right)
\end{equation}
together with
\begin{equation}\label{emlam}
    \Lambda^{-\alpha}e^{-\frac{m^2}{2\Lambda^2}}\leq O(1)\left(\Lambda+m\right)^{-\alpha}~~~~\mathrm{for~} \alpha \in\mathbb{N}~,
\end{equation}
gives
\begin{multline*}
   e^{-\frac{m^2}{2\Lambda^2}} \int_{z,z'}\mathcal{F}_{s_1,l_1;\delta'_4}\left(\Lambda;\tau_{1,s_1};z;Y_{\sigma_{s_1}}\right)\mathcal{F}_{s_2,l_2;\delta'_3}\left(\Lambda;\tau_{s_1+1,s};z';Y_{\sigma_{s_1+1:s}}\right)p_B\left(\frac{1}{\Lambda^2};z',-z\right)\\\leq O(1)~(\Lambda+m)^{-1}~\int_{z}\mathcal{F}_{s_1,l_1;\delta'_4}\left(\Lambda;\tau_{1,s_1};z;Y_{\sigma_{s_1}}\right)p_B\left(\frac{1}{\Lambda^2};z,0\right)\\\times\int_{z'}\mathcal{F}_{s_2,l_2;\delta'_3}\left(\Lambda;\tau_{s_1+1,s};z';Y_{\sigma_{s_1+1:s}}\right)p_B\left(\frac{1}{\Lambda^2};z',0\right).
\end{multline*}
The definition of the global surface weight factor (\ref{treeStr})-(\ref{treeStr3}) implies
\begin{multline}\label{1366}
\int_{z}\mathcal{F}_{s_1,l_1;\delta'_4}\left(\Lambda;\tau_{1,s_1};z;Y_{\sigma_{s_1}}\right)p_B\left(\frac{1}{\Lambda^2};z,0\right)\\\times\int_{z'}\mathcal{F}_{s_2,l_2;\delta'_3}\left(\Lambda;\tau_{s_1+1,s};z';Y_{\sigma_{s_1+1:s}}\right)p_B\left(\frac{1}{\Lambda^2};z',0\right)\\\leq \left(\sum_{W^s_l\in\mathcal{W}^s_l}\sum_{\left\{\Pi\in\mathcal{P}_s,~l_{\Pi}=2\right\}}\mathcal{F}_{\delta_6}^{0}\left(\Lambda;\tau_{1,s};W^s_l(\Pi);Y_{\sigma_s}\right)\right),
\end{multline}
where $\Pi:=\Pi_1\cup\Pi_2$,  $\Pi_1:=\left\{1,\cdots,s_1\right\}$ and $\Pi_2:=\left\{s_1+1,\cdots,s\right\}$. In (\ref{1366}), we also used the bound (\ref{deltaWeight}) and $\delta_{6}:=\max\left(\delta'_3,\delta'_4\right)$.
Therefore, we find that ${R}_2^{DD}$ is bounded by
\begin{multline}\label{1377}
   \left(\Lambda+m\right)^{2-n-|w|}\mathcal{P}\left(\log\frac{\Lambda+m}{m}\right) \mathcal{P}\left(\frac{\left \|\vec{p}_n\right\|}{\Lambda+m}\right)\mathcal{Q}\left(\frac{\tau^{-\frac{1}{2}}}{\Lambda+m}\right)\\\times \left(\sum_{W^s_l\in\mathcal{W}^s_l}\sum_{\left\{\Pi\in\mathcal{P}_s,~l_{\Pi}=2\right\}}\mathcal{F}_{\delta_{6}}^{0}\left(\Lambda;\tau_{1,s};W^s_l(\Pi);Y_{\sigma_s}\right)\right).
\end{multline}
This shows that $R_2^{DD}$ is bounded by 
\begin{equation*}
    \left(\Lambda+m\right)^{2-n-|w|}\mathcal{P}\left(\log\frac{\Lambda+m}{m}\right) \mathcal{P}\left(\frac{\left \|\vec{p}_n\right\|}{\Lambda+m}\right)\mathcal{Q}\left(\frac{\tau^{-\frac{1}{2}}}{\Lambda+m}\right)\mathcal{F}_{s,l;\delta_{6}}^{\Lambda,0}\left({\tau}_{1,s}\right).
    \end{equation*}
\end{itemize}
\item \underline{Case $(r_1, r_2) \neq (0,0)$:} The linear terms $R_1^S$ and $R_1^D$ together with the non-linear term $R_2^{SS}$ are treated following the same steps as before. The only terms that require a careful analysis are $R_2^{DS}$ and $R_2^{DD}$. To shorten the discussion, we analyze the term $R_2^{DD}$ only,  $R_2^{DS}$ may be treated using similar arguments. We write $$z_1^{r_1}=\sum\limits_{\alpha_1+\beta_1=r_1}^{}{{r_1}\choose{\alpha_1}} \left(z_1-z\right)^{\alpha_1}z^{\beta_1},~~~~z_2^{r_2}=\sum\limits_{\alpha_2+\beta_2=r_2}^{}{{r_2}\choose{\alpha_2}}\left(z_2-z'\right)^{\alpha_2}{z'}^{\beta_2}.$$ 
This allows to rewrite $R_2^{DD}$ for all $n_1\geq 2$ as follows 
\begin{multline}\label{r1r2}
       \sum\limits_{\alpha_1+\beta_1=r_1, \alpha_2+\beta_2=r_2} {{r_1}\choose{\alpha_1}}~{{r_2}\choose{\alpha_2}}~z^{\beta_1+\beta_2}~\partial^{w_1}
\mathcal{D}^{\Lambda,\Lambda_0;(1,2)}_{l_1,n_1+1;\alpha_1,\alpha_2}
\Bigl(z;\vec{p}_{1,n_1},p;\phi'_{s_1}
\Bigr)
\, \partial^{w_3}\dot{C}^{\Lambda}(p)\\
     \times \partial^{w_2}\mathcal{D}^{\Lambda,\Lambda_0}_{l_2,n_2+1}
\Bigl(z';\vec{p}_{n_1+1,n},-p
\cdots,p_n;\phi''_{s_2}\Bigr)
p_{S,R}\left(\frac{1}{\Lambda^2};z,z'\right)\ ,
 \end{multline}
and for $n_1=1$ we have 
 \begin{multline}\label{r1r2n}
       \sum\limits_{\alpha_1+\beta_1=r_1, \alpha_2+\beta_2=r_2} {{r_1}\choose{\alpha_1}}~{{r_2}\choose{\alpha_2}}~{z}^{\beta_1}~{z'}^{\beta_2}\partial^{w_1}
\mathcal{D}^{\Lambda,\Lambda_0;(1)}_{l_1,2;\alpha_1}
\Bigl(z;{p}_{1},p;\phi'_{s_1}
\Bigr)
\, \partial^{w_3}\dot{C}^{\Lambda}(p)\\
     \times \partial^{w_2}\mathcal{D}^{\Lambda,\Lambda_0;(2)}_{l_2,n;\alpha_2}
\Bigl(z';\vec{p}_{2,n},-p;\phi''_{s_2}\Bigr)
p_{S,R}\left(\frac{1}{\Lambda^2};z,z'\right)\ .
 \end{multline}
Using the bounds (\ref{c1b})-(\ref{c1bb}), we deduce that the summands in (\ref{r1r2})-(\ref{r1r2n}) are bounded by 
\begin{multline}\label{inn1}
\left(\Lambda+m\right)^{3-n-|w|-\alpha_1-\alpha_2}e^{-\frac{m^2}{2\Lambda^2}}\mathcal{P}\left(\log\frac{\Lambda+m}{m}\right) \mathcal{P}\left(\frac{\left \|\vec{p}_n\right\|}{\Lambda+m}\right)\mathcal{Q}\left(\frac{\tau^{-\frac{1}{2}}}{\Lambda+m}\right)\\\times \int_{z,z'}\mathcal{F}_{s_1,l_1;\delta''_1}\left(\Lambda;\tau_{1,s_1};z;Y_{\sigma_{s_1}}\right)\mathcal{F}_{s_2,l_2;\delta''_2}\left(\Lambda;\tau_{s_1+1,s};z';Y_{\sigma_{s_1+1:s}}\right)~\chi^{\Lambda}_{n_1}\left(z,z'\right),
\end{multline}
where 
\begin{equation*}
    \chi^{\Lambda}_{n_1}\left(z,z'\right)=\left\{\begin{array}{cc}
       z^{\beta_1}~{z'}^{\beta_2}p_B\left(\frac{1}{\Lambda^2};z',-z\right)  & \mathrm{if~} n_1=1,   \\
       z^{\beta_1+\beta_2}~p_B\left(\frac{1}{\Lambda^2};z',-z\right)  &  \mathrm{otherwise~.}
    \end{array}\right.
\end{equation*}
Using the bound (\ref{nice}) together with (\ref{in1}), we deduce for all $\delta,~\tilde{\delta}>0$ 
\begin{equation}
    \chi^{\Lambda}_{n_1}\left(z,z'\right) \leq O(1)~\Lambda^{-1-\beta_1-\beta_2}~p_B\left(\frac{1+\delta}{\Lambda^2};z,0\right)p_B\left(\frac{1+\tilde{\delta}}{\Lambda^2};z',0\right),
\end{equation}
which implies together with (\ref{emlam}) that (\ref{inn1}) is bounded by 
\begin{multline}
\left(\Lambda+m\right)^{2-n-|w|-r_1-r_2}\mathcal{P}\left(\log\frac{\Lambda+m}{m}\right) \mathcal{P}\left(\frac{\left \|\vec{p}_n\right\|}{\Lambda+m}\right)\mathcal{Q}\left(\frac{\tau^{-\frac{1}{2}}}{\Lambda+m}\right)\\\times \int_{z}\mathcal{F}_{s_1,l_1;\delta''_1}\left(\Lambda;\tau_{1,s_1};z;Y_{\sigma_{s_1}}\right)~p_B\left(\frac{1+\delta''_1}{\Lambda^2};z,0\right)\\\times\int_{z'}\mathcal{F}_{s_2,l_2;\delta''_2}\left(\Lambda;\tau_{s_1+1,s};z';Y_{\sigma_{s_1+1:s}}\right)~p_B\left(\frac{1+\delta''_2}{\Lambda^2};z',0\right),
\end{multline}
which together with the bounds (\ref{1366}) and (\ref{1377}) implies the final bound for $R_2^{DD}$ given by
\begin{multline}
    \left(\Lambda+m\right)^{2-n-|w|-r_1-r_2}\mathcal{P}\left(\log\frac{\Lambda+m}{m}\right) \mathcal{P}\left(\frac{\left \|\vec{p}_n\right\|}{\Lambda+m}\right)\mathcal{Q}\left(\frac{\tau^{-\frac{1}{2}}}{\Lambda+m}\right)\mathcal{F}_{s,l;\delta_7}^{\Lambda,0}\left({\tau}_{1,s}\right),
    \end{multline} 
where $\delta_7:=\max\left(\delta''_1,\delta''_2\right)$. \\
Using the bound (\ref{deltaWeight}), we deduce (\ref{c'}) where $\delta:=\max\left\{\delta_i,~\delta'_i,~\delta''_i,~1\leq i\leq 7\right\}$.
\end{itemize}
 \subsubsection{ Integration of the FEs}
\begin{itemize}
    \item We start by integrating the irrelevant terms for which $n+|w|+r_1+r_2\geq 4$. In this case, (\ref{c'}) is integrated  from $\Lambda$ to $\Lambda_0$ using the boundary condition (\ref{BCDTS})-(\ref{BCDTS2}) together with (\ref{lambWeight}) and we obtain
    \begin{multline}\label{ci}
    \left| \partial^w \mathcal{S}_{l,n;r_1,r_2}^{\Lambda,\Lambda_0}\left(\vec{p}_n;\phi_{\tau_{1,s},y_{1,s}}\right) \right|\\\leq \left(\Lambda+m\right)^{3-n-|w|-r_1-r_2}\mathcal{P}\left(\log\frac{\Lambda+m}{m}\right)\mathcal{P}\left(\frac{\left\|\vec{p}_n\right\|}{\Lambda+m}\right)\mathcal{Q}\left(\frac{\tau^{-\frac{1}{2}}}{\Lambda+m}\right)\mathcal{F}^{\Lambda,0}_{s,l;\delta}({\tau}_{1,s})~.
\end{multline}
\item The relevant terms for which $n+|w|+r_1+r_2\leq 3$ are written 
\begin{equation}\label{int1}
\int_0^{\infty}dz_1~\int_0^{\infty}dz_2~z_1^{r_1}z_2^{r_2}\partial_{\Lambda}\mathcal{S}_{l,2}^{\Lambda,\Lambda_0}\left((z_1,p),(z_2,-p)\right)\prod_{i=1}^s p_B(\tau_i;z_i,y_i)~,
\end{equation}
where $r_1$, $r_2$ are integers such that $r_1+r_2\leq 1$, and $0\leq s\leq 2$. We restrict our analysis to the case $s=2$, the case $s=1$ can be treated similarly and the case $s=0$ will be integrated in the sequel. For $s=2$, the relevant part is extracted from 
\begin{equation}\label{int1}
\int_{z_1,z_2}~\partial_{\Lambda}\mathcal{S}_{l,2}^{\Lambda,\Lambda_0}\left((z_1,p),(z_2,-p)\right) \phi_1(z_1)\phi_2(z_2)
\end{equation}
by performing a Taylor expansion of $\phi_1$ and $\phi_2$ around $z_i=0$ at $p=0$, where $\phi_i(z_i):=p_B(\tau_i;z_i,y_i)$, using (\ref{f0}) and (\ref{restI}). The bound (\ref{c'}) for $s=0$ and $r_1+r_2\leq 1$ gives 
\begin{multline}\label{intES}
    \left|\partial_{\Lambda}s^{\Lambda,\Lambda_0}_l\right|\leq \mathcal{P}\left(\log\frac{\Lambda+m}{m}\right),~~~~ \left|\partial_{\Lambda}e^{\Lambda,\Lambda_0}_l\right|\leq \left(\Lambda+m\right)^{-1} \mathcal{P}\left(\log\frac{\Lambda+m}{m}\right).
\end{multline}
Integrating (\ref{intES}) from $0$ to $\Lambda$ and using the renormalization conditions (\ref{renoS}), we have
\begin{equation}\label{f2}
    \left|s^{\Lambda,\Lambda_0}_l\right|\leq \left(\Lambda+m\right) \mathcal{P}\left(\log\frac{\Lambda+m}{m}\right),~~ \left|e^{\Lambda,\Lambda_0}_l\right|\leq \mathcal{P}\left(\log\frac{\Lambda+m}{m}\right).
\end{equation}
Applying Lemma \ref{lemma7} from the Appendix, we have
\begin{multline}\label{1227}
    \left|s_l^{\Lambda,\Lambda_0}\phi_1(0)\phi_2(0)+e_l^{\Lambda,\Lambda_0}\left\{\phi_1(0)(\partial_{n}\phi_2)(0)+\phi_2(0)(\partial_{n}\phi_1)(0)\right\}\right|\\\leq 
    \left(\Lambda+m\right) \mathcal{P}\left(\log\frac{\Lambda+m}{m}\right)\mathcal{Q}\left(\frac{\tau^{-\frac{1}{2}}}{\Lambda+m}\right)\mathcal{F}^{\Lambda,0}_{2,l;\delta}\left(\tau_{1,2}\right).
\end{multline}
Now, we bound and integrate the remainder $\partial_{\Lambda}l_{l,2}^{\Lambda,\Lambda_0}(\phi_1,\phi_2)$, which is irrelevant as we will see in the sequel, from $\Lambda$ to $\Lambda_0$. 
We distinguish between the two cases:
\begin{itemize}
    \item \underline{$\Lambda\leq 3 \sqrt{l}\tau^{-\frac{1}{2}}$:} Using (\ref{c'}) and (\ref{1227}), we deduce that 
 $\partial_{\Lambda}l_{l,2}^{\Lambda,\Lambda_0}\left(\phi_1,\phi_2\right)$ is bounded by  
 \begin{equation}\label{h0}
        O(1)~\max\left(m^2,\tau^{-1}\right)\left(\Lambda+m\right)^{-2}\mathcal{P}\left(\log\frac{\Lambda+m}{m}\right)\mathcal{Q}\left(\frac{\tau^{-\frac{1}{2}}}{\Lambda+m}\right)\mathcal{F}^{\Lambda,0}_{2,l;\delta}\left({\tau}_{1,2}\right).
    \end{equation}
\item \underline{$\Lambda\geq 3\sqrt{l}\tau^{-\frac{1}{2}}$:}~
In the sequel, we restrict our analysis to the integration of the following terms, for which we need to proceed differently.
\begin{align}
    \Dot{\mathcal{H}}_1&:=\left(\int_{z_1,z_2}z_1z_2\partial_{\Lambda}\mathcal{S}_{l,2}^{\Lambda,\Lambda_0}\left((z_1,0),(z_2,0)\right)\right)\left(\partial_n\phi_1\right)(0)\left(\partial_n\phi_2\right)(0)~,\label{h1}\\
    \Dot{\mathcal{H}}_2&:=(\partial_n\phi_2)(0)\int_{z_1,z_2}z_2~\partial_{\Lambda}\mathcal{S}_{l,2}^{\Lambda,\Lambda_0}\left((z_1,0),(z_2,0)\right)\int_0^1dt ~(1-t)\left(\partial_t^2\phi_1(tz_1)\right)~,
\end{align}
and 
\begin{multline}\label{h3}
    \Dot{\mathcal{H}}_3:=\int_{z_1,z_2}\partial_{\Lambda}\mathcal{S}_{l,2}^{\Lambda,\Lambda_0}\left((z_1,0),(z_2,0)\right)\left(\int_0^1dt~(1-t)\partial_t^2\phi_1(tz_1)\right)\\\times\left(\int_0^1dt'~(1-t')\partial_{t'}^2\phi_2(t'z_2)\right).
\end{multline}
The other terms which also contribute to $\partial_{\Lambda}l_{l,2}^{\Lambda,\Lambda_0}(\phi_1,\phi_2)$ can be treated similarly.
\begin{itemize}
    \item We start first with $\Dot{\mathcal{H}}_1$ for which the bound (\ref{c'}) implies that
\begin{equation}
    \left|\left(\int_{z_1,z_2}z_1z_2\partial_{\Lambda}\mathcal{S}_{l,2}^{\Lambda,\Lambda_0}\left((z_1,0),(z_2,0)\right)\right)\right|\leq \left(\Lambda+m\right)^{-2}\mathcal{P}\left(\log \frac{\Lambda+m}{m}\right).
\end{equation}
Using Lemma \ref{lemma7}, we obtain   
\begin{multline}\label{h11}
     \left|\left(\int_{z_1,z_2}z_1~z_2\partial_{\Lambda}\mathcal{S}_{l,2}^{\Lambda,\Lambda_0}\left((z_1,0),(z_2,0)\right)\right)\left(\partial_n\phi_1\right)(0)\left(\partial_n\phi_2\right)(0)\right|\\\leq \left(\Lambda+m\right)^{-2}\tau_1^{-\frac{1}{2}}\tau_2^{-\frac{1}{2}}\mathcal{P}\left(\log \frac{\Lambda+m}{m}\right)\mathcal{Q}\left(\frac{\tau^{-\frac{1}{2}}}{\Lambda+m}\right)\mathcal{F}_{2,l;\delta}^{\Lambda,0}\left({\tau}_{1,2}\right).
\end{multline}
\item The term $\Dot{\mathcal{H}}_2$: we have for $0\leq t\leq 1$
\begin{equation}\label{phit}
\phi_i(tz_i)=\frac{1}{\sqrt{2\pi\tau_i}}e^{-\frac{(tz_i-y_i)^2}{2\tau_i}}.
\end{equation}
Differentiating (\ref{phit}) twice w.r.t. $t$, we obtain 
\begin{equation}\label{phitt}
    \partial_t^2\left(\phi_i(tz_i)\right)=\frac{1}{t}\left[-\frac{z_i^2}{\tau_i}+\frac{z_i^2(tz_i-y_i)^2}{\tau_i^2}\right]p_B\left(\frac{\tau_i}{t^2};z_i,\frac{y_i}{t}\right),
\end{equation}
which implies that the term 
\begin{equation}\label{riri}
    \int_{z_1,z_2}z_2~\partial_{\Lambda}\mathcal{S}_{l,2}^{\Lambda,\Lambda_0}\left((z_1,0),(z_2,0)\right)\int_0^1dt ~(1-t)\partial_t^2\phi_1(tz_1)
\end{equation}
can be rewritten as 
\begin{multline}\label{int5}
\sum_{(\alpha,\beta)\in\mathcal{I}_2}c_{\alpha\beta}\frac{y_1^{\beta}}{\tau_1^{1+\frac{\beta+\alpha}{2}}}\int_0^1~dt~{t^{\alpha-1}(1-t)}~\\\int_{z_1,z_2}z_2~z_1^{2+\alpha}\partial_{\Lambda}\mathcal{S}_{l,2}^{\Lambda,\Lambda_0}\left((z_1,0),(z_2,0)\right)p_B\left(\frac{\tau_{1}}{t^2};z_1,\frac{y_1}{t}\right),
\end{multline}
where $\mathcal{I}_2:= \left\{(0,0),(\alpha,\beta)\left|\right.\alpha+\beta=2,~~(\alpha,\beta)\in\mathbb{N}^2\right\}$, and the coefficients $c_{\alpha\beta}\in\mathbb{R}$ depend only on the exponents $\alpha$ and $\beta$. The bound (\ref{c'}) implies that the term 
\begin{equation}\label{star}
    \int_{z_1,z_2}z_2~z_1^{2+\alpha}\partial_{\Lambda}\mathcal{S}_{l,2}^{\Lambda,\Lambda_0}\left((z_1,0),(z_2,0)\right)p_B\left(\frac{\tau_{1}}{t^2};z_1,\frac{y_1}{t}\right)
\end{equation}
is bounded by 
\begin{equation}\label{sstar}
\left(\Lambda+m\right)^{-3-\alpha}e^{-\frac{m^2}{2\Lambda^2}}\mathcal{Q}\left(\frac{t\tau_{1}^{-\frac{1}{2}}}{\Lambda+m}\right)\mathcal{F}^{0}_{1,l;\delta}\left(\Lambda;\frac{\tau_1}{t^2};\frac{y_1}{t}\right)~.
\end{equation}
From Lemma \ref{lemma8}, we obtain 
\begin{equation}
    \left(\frac{y_1}{\sqrt{\tau}_1}\right)^{\beta}\mathcal{F}^{0}_{1,l;\delta}\left(\Lambda;\frac{\tau_1}{t^2};\frac{y_1}{t}\right)\leq O(1)~t~\left(1+\frac{\sqrt{\tau_1}}{\Lambda}\right)^{\beta}\mathcal{F}_{1,l;\delta'}^{\Lambda,0}\left(\tau_{1}\right),
\end{equation}
where $0<\delta<\delta'$. Using (\ref{emlam}) together with (\ref{star}) and (\ref{sstar}), we deduce that (\ref{int5}) is bounded by 
 \begin{equation}
   \left(\Lambda+m\right)^{-3}\tau_1^{-1}\mathcal{P}\left(\log \frac{\Lambda+m}{m}\right){\mathcal{Q}}\left(\frac{\tau^{-\frac{1}{2}}}{\Lambda+m}\right)\mathcal{F}_{1,l;\delta'}^{\Lambda,0}\left({\tau}_{1}\right).
\end{equation}
Lemma \ref{lemma7} together with (\ref{borneweights2}) imply that $\Dot{\mathcal{H}}_2$ is bounded by
\begin{equation}\label{h2}
   \left(\Lambda+m\right)^{-3}\tau_1^{-1}\tau_2^{-\frac{1}{2}}\mathcal{P}\left(\log \frac{\Lambda+m}{m}\right){\mathcal{Q}}\left(\frac{\tau^{-\frac{1}{2}}}{\Lambda+m}\right)\mathcal{F}_{2,l;\delta'}^{\Lambda,0}\left({\tau}_{1,2}\right).
\end{equation}
Following similar steps, we obtain
\begin{multline}\label{h4}
    \left|(\partial_n\phi_1)(0)\int_{z_1,z_2}z_2~\partial_{\Lambda}\mathcal{S}_{l,2}^{\Lambda,\Lambda_0}\left((z_1,0),(z_2,0)\right)\int_0^1dt ~(1-t)\left(\partial_t^2\phi_2(tz_1)\right)\right|\\
    \leq  \left(\Lambda+m\right)^{-3}\tau_1^{-\frac{1}{2}}\tau_2^{-1}\mathcal{P}\left(\log \frac{\Lambda+m}{m}\right){\mathcal{Q}}\left(\frac{\tau^{-\frac{1}{2}}}{\Lambda+m}\right)\mathcal{F}_{2,l;\delta'}^{\Lambda,0}\left({\tau}_{1,2}\right)
\end{multline}
and 
\begin{multline}\label{h5}
    \left|\phi_i(0)\int_{z_1,z_2}~\partial_{\Lambda}\mathcal{S}_{l,2}^{\Lambda,\Lambda_0}\left((z_1,0),(z_2,0)\right)\int_0^1dt ~(1-t)\left(\partial_t^2\phi_j(tz_1)\right)\right|\\
    \leq  \left(\Lambda+m\right)^{-2}\tau_j^{-1}\mathcal{P}\left(\log \frac{\Lambda+m}{m}\right){\mathcal{Q}}\left(\frac{\tau^{-\frac{1}{2}}}{\Lambda+m}\right)\mathcal{F}_{2,l;\delta'}^{\Lambda,0}\left({\tau}_{1,2}\right).
\end{multline}
\item The term $\Dot{\mathcal{H}}_3$: We start from (\ref{h3}) and (\ref{phitt}) and write the term $\Dot{\mathcal{H}}_3$ as follows
\begin{multline}\label{int6}
\sum_{(\alpha,\beta)\in\mathcal{I}_2}\sum_{(\alpha',\beta')\in\mathcal{I}_2}c_{\alpha\beta}~c_{\alpha'\beta'}~\frac{y_1^{\beta}}{\tau_1^{1+\frac{\beta+\alpha}{2}}}~\frac{y_2^{\beta'}}{\tau_2^{1+\frac{\beta'+\alpha'}{2}}}~\int_0^1~dt~dt'~{t^{\alpha-1}(1-t)}~{{t'}^{\alpha'-1}(1-t')}\\\int_{z_1,z_2}z_2^{2+\alpha'}~z_1^{2+\alpha}\partial_{\Lambda}\mathcal{S}_{l,2}^{\Lambda,\Lambda_0}\left((z_1,0),(z_2,0)\right)p_B\left(\frac{\tau_{1}}{t^2};z_1,\frac{y_1}{t}\right)p_B\left(\frac{\tau_{2}}{t'^2};z_2,\frac{y_2}{t'}\right).
\end{multline}
The bound (\ref{c'}) implies that the term 
\begin{equation*}
    \int_{z_1,z_2}z_2^{2+\alpha'}~z_1^{2+\alpha}\partial_{\Lambda}\mathcal{S}_{l,2}^{\Lambda,\Lambda_0}\left((z_1,0),(z_2,0)\right)p_B\left(\frac{\tau_{1}}{t^2};z_1,\frac{y_1}{t}\right)p_B\left(\frac{\tau_{2}}{t'^2};z_2,\frac{y_2}{t'}\right)
\end{equation*}
is bounded by 
\begin{equation*}
\left(\Lambda+m\right)^{-4-\alpha-\alpha'}e^{-\frac{m^2}{2\Lambda^2}}\mathcal{Q}\left(\frac{t\tau^{-\frac{1}{2}}}{\Lambda+m}\right)\mathcal{F}^{0}_{2,l;\delta}\left(\Lambda;\frac{\tau_1}{t^2},\frac{\tau_2}{t'^2};\frac{y_1}{t},\frac{y_2}{t'}\right).
\end{equation*}
From Lemma \ref{lemma9} and (\ref{emlam}), we deduce that (\ref{int6}) is bounded by 
\begin{equation}\label{h33}
  \tau_{1}^{-\frac{1}{2}}\tau_2^{-\frac{1}{2}}\left(\Lambda+m\right)^{-2} \mathcal{P}\left(\log \frac{\Lambda+m}{m}\right)\tilde{\mathcal{Q}}\left(\frac{\tau^{-\frac{1}{2}}}{\Lambda+m}\right)\mathcal{F}_{2,l;\delta'}^{\Lambda;0}\left({\tau}_{1,2}\right),
\end{equation}
where $0<\delta<\delta'<1$.
\end{itemize}
The boundary conditions (\ref{BCDTS}) together with (\ref{restI}), (\ref{int5}) and (\ref{int6}) imply that $$l_{l,2;R}^{\Lambda_0,\Lambda_0}\left(\phi_1,\phi_2\right)=0.$$ Integrating from $\Lambda$ to $\Lambda_0$, we obtain for $\Lambda_0\geq 3\sqrt{l}{\tau}^{-\frac{1}{2}}$
\begin{equation}
     l_{l,2}^{\Lambda,\Lambda_0}\left(\phi_1,\phi_2\right)=\int_{\Lambda}^{3\sqrt{l}\tau^{-\frac{1}{2}}}d\lambda~\partial_{\lambda}l_{l,2}^{\lambda,\Lambda_0}\left(\phi_1,\phi_2\right)+\int_{3\sqrt{l}\tau^{-\frac{1}{2}}}^{\Lambda_0}d\lambda~\partial_{\lambda}l_{l,2}^{\lambda,\Lambda_0}\left(\phi_1,\phi_2\right).
\end{equation}
Using the bounds (\ref{h0}), (\ref{h11}), (\ref{h2}) and (\ref{h33}) together with (\ref{h4}) and (\ref{h5}), we obtain that the remainder $l_{l,2}^{\Lambda,\Lambda_0}$ is bounded by 
\begin{equation}
   \left(\Lambda+m\right) \max\left\{\frac{\tau^{-1}}{\left(\Lambda+m\right)^2},\frac{m^2}{\left(\Lambda+m\right)^2}\right\}\mathcal{P}\left(\log \frac{\Lambda+m}{m}\right)\tilde{\mathcal{Q}}\left(\frac{\tau^{-\frac{1}{2}}}{\Lambda+m}\right)\mathcal{F}_{2,l;\delta'}^{\Lambda,0}\left(\tau_{1,2}\right).
\end{equation}
For $\Lambda_0\leq 3\sqrt{l}{\tau}^{-\frac{1}{2}}$, we conclude by integrating (\ref{h0}) from $\Lambda$ to $\Lambda_0$ and we deduce
\begin{multline}\label{sirop}
    \left|  \partial^{w}\mathcal{S}_{l,2;r_1,r_2}^{\Lambda,\Lambda_0}\left(0;\phi_{\tau_{1,s},y_{1,s}}\right) \right|\\\leq \left(\Lambda+m\right)^{1-|w|-r_1-r_2}\mathcal{P}\left(\log\frac{\Lambda+m}{m}\right)\mathcal{Q}\left(\frac{\tau^{-\frac{1}{2}}}{\Lambda+m}\right)\mathcal{F}^{\Lambda,0}_{s,l;{\delta'}}({\tau}_{1,s})~,
\end{multline}
where we used (\ref{deltaWeight}).
\end{itemize}
\item Extension to general momenta: We now extend the bound (\ref{sirop}) to general momenta using the Taylor formula with integral remainder, which reads
\begin{equation}
    \partial^{w}\mathcal{S}_{l,2;r_1,r_2}^{\Lambda,\Lambda_0}\left(p;\phi_{\tau_{1,s},y_{1,s}}\right)=\partial^{w}\mathcal{S}_{l,2;r_1,r_2}^{\Lambda,\Lambda_0}\left(0;\phi_{\tau_{1,s},y_{1,s}}\right)+\int_0^1 dt~ \partial_t\partial^{w}\mathcal{S}_{l,2;r_1,r_2}^{\Lambda,\Lambda_0}\left(tp;\phi_{\tau_{1,s},y_{1,s}}\right). 
\end{equation}
Applying this formula, the bound of
the integrand ( due to the derivative ) yields an additional factor $(\Lambda+m)^{-1}$ which combines with the momentum produced by the $t$-derivative to give a bound as in (\ref{c1}).
\end{itemize}
This ends the proof of Theorem \ref{theoremReno}.
\end{proof}
\subsection{Proof of Proposition \ref{Prop44}}
\noindent \begin{proof} 
First, we prove (\ref{limit}). The proof follows the same inductive scheme used in the proof of Theorem \ref{theoremReno}. For the tree order, we have 
$$\mathcal{S}_{0,n;\star}^{\Lambda,\Lambda_0}\left((z_1,p_1),\cdots,(z_n,p_n)\right)=0,~~~\forall n\geq 2,~\star\in\left\{D,R,N\right\}.$$
Clearly, the statement (\ref{limit}) holds. 
\begin{itemize}
    \item [A1)] We start by verifying inductively the following statement
    \begin{equation}\label{ter}
     \partial_{\Lambda}\mathcal{S}_{l,n;D}^{\Lambda,\Lambda_0}\left(\Vec{p}_n;\phi^D_{\tau_{1,n},y_{1,n}}\right)=\lim_{c\rightarrow +\infty}\partial_{\Lambda}\mathcal{S}_{l,n;R}^{\Lambda,\Lambda_0}\left(\Vec{p}_n;\phi^R_{\tau_{1,n},y_{1,n}}\right).
\end{equation}
In the sequel, we use the symbol $\star$ to denote either Dirichlet or Robin boundary conditions. Given $\Pi=\left(\pi_1,\pi_2\right)\in\mathcal{P}_n$ such that $|\pi_i|=n_i$ and $n_1+n_2=n$, we introduce the following notations:
\begin{multline}\label{not1}
    \mathcal{S}_{l_i,n_i+1;\star}^{\Lambda,\Lambda_0}\left(\vec{p}_{\pi_i},(-1)^ip;\Phi^{\Lambda;\star}_{\pi_i};Y_{\pi_{i}},u\right)\\:=\int_{\vec{z}_{\pi_i},z} \mathcal{S}_{l_i,n_i+1;\star}^{\Lambda,\Lambda_0}\left((\Vec{z}_{\pi_i},\Vec{p}_{\pi_i}),(z,p)\right)\prod_{i\in\pi_{i}}p_{\star}\left(\tau_i;z_i,y_{i}\right)p_{\star}\left(\frac{1}{2\Lambda^2};z,u\right),~~~i\in\left\{1,2\right\}
\end{multline}
and 
\begin{multline}\label{not2}
   \mathcal{S}_{l-1,n+2;\star}^{\Lambda,\Lambda_0}\left(\vec{p}_{n},k,-k;\Phi^{\Lambda;\star}_{n+2};Y_{\sigma_n},u,u\right):=\int_{\vec{z}_{n},z,z'} \mathcal{S}_{l-1,n+2;\star}^{\Lambda,\Lambda_0}\left((\Vec{z}_{n},\Vec{p}_{n}),(z,k),(z',-k)\right)\\\times\phi_{\tau_{1,n},y_{1,n}}^{\star}\left(z_{1,n}\right)p_{\star}\left(\frac{1}{2\Lambda^2};z,u\right)p_{\star}\left(\frac{1}{2\Lambda^2};z',u\right),
\end{multline}
where 
\begin{equation*}
    \Phi^{\Lambda;\star}_{\pi_i}\left(z_{\pi_i},z\right)=\phi^{\star}_{\pi_i}\left(z_{\pi_i}\right)p_{\star}\left(\frac{1}{2\Lambda^2};z,u\right)~~~~\mathrm{with~~~}\phi^{\star}_{\pi_i}\left(z_{\pi_i}\right):=\prod_{i\in\pi_{i}}p_{\star}\left(\tau_i;z_i,y_{i}\right)~,
\end{equation*}
and 
\begin{equation*}
    \Phi^{\Lambda;\star}_{n+2}\left(z_{1,n},z,z'\right)=\phi_{\tau_{1,n},y_{1,n}}^{\star}\left(z_{1,n}\right)p_{\star}\left(\frac{1}{2\Lambda^2};z,u\right)p_{\star}\left(\frac{1}{2\Lambda^2};z',u\right).
\end{equation*}
We consider the flows equations (\ref{FEB}) folded with the test functions $\phi_{\tau_{1,n},y_{1,n}}^{\star}$ given by
\begin{align}\label{FES}
 \partial_{\Lambda}&\mathcal{S}_{l,n;\star}^{\Lambda,\Lambda_0}\left(\Vec{p}_n;\phi_{\tau_{1,n},y_{1,n}}^{\star}\right)\nonumber\\
&=\frac{1}{2}\int_{u}\int_k 
\mathcal{S}_{l-1,n+2;\star}^{\Lambda,\Lambda_0}
\left(\Vec{p}_n,k,-k;\Phi_{n+2}^{\Lambda;\star};Y_{\sigma_n},u,u\right)
\dot{C}^{\Lambda}(k)\nonumber\\
&+\frac{1}{2}\int_{z,z'}~\int_k 
\mathcal{D}_{l-1,n+2}^{\Lambda,\Lambda_0}
\left((\Vec{z}_n.\Vec{p}_n),(z,k),(z',-k)\right)\Dot{C}^{\Lambda}_{S,\star}(k;z,z')\phi_{\tau_{1,n},y_{1,n}}^{\star}\left(z_{1,n}\right)\nonumber\\
   ~ &-\frac{1}{2} \sum_{l_1,l_2}'\sum_{\pi_1,\pi_2}''\left[\int_{u}\left\{\mathcal{S}_{l_1,n_1+1;\star}^{\Lambda,\Lambda_0}\left(\Vec{p}_{\pi_1},-p;\Phi_{\pi_1}^{\Lambda;\star};Y_{\pi_{1}},u\right)\dot{C}^{\Lambda}(p)~ \right.\right.\nonumber\\
&~~~~~~~\times\mathcal{S}_{l_2,n_2+1;\star}^{\Lambda,\Lambda_0}\left(\Vec{p}_{\pi_2},p;\Phi_{\pi_2}^{\Lambda;\star};Y_{\pi_{2}},u\right)\nonumber\\
&+\mathcal{D}_{l_1,n_1+1}^{\Lambda,\Lambda_0}\left(\Vec{p}_{\pi_1},-p;\Phi_{\pi_1}^{\Lambda;\star};Y_{\pi_{1}},u\right)~\dot{C}^{\Lambda}(p)~ 
\mathcal{S}_{l_2,n_2+1;\star}^{\Lambda,\Lambda_0}\left(\Vec{p}_{\pi_2},p;\Phi_{\pi_2}^{\Lambda;\star};Y_{\pi_{2}},u\right)\nonumber\\
&\left.+\mathcal{S}_{l_1,n_1+1;\star}^{\Lambda,\Lambda_0}\left(\Vec{p}_{\pi_1},-p;\Phi_{\pi_1}^{\Lambda;\star};Y_{\pi_{1}},u\right)~\dot{C}^{\Lambda}(p)~
\mathcal{D}_{l_2,n_2+1}^{\Lambda,\Lambda_0}\left(\Vec{p}_{\pi_2},p;\Phi_{\pi_2}^{\Lambda;\star};Y_{\pi_{2}},u\right)\right\}\nonumber\\
&\left.+\int_{z,z'}~\mathcal{D}_{l_1,n_1+1}^{\Lambda,\Lambda_0}\left(z;-p,\Vec{p}_{1,n_1};\phi_{\pi_1}^{\star}\right)\dot{C}^{\Lambda}_{S,\star}(p;z,z')
\mathcal{D}_{l_2,n_2+1}^{\Lambda,\Lambda_0}\left(z';p,\Vec{p}_{n_1+1,n};\phi_{\pi_2}^{\star}\right)\right]_{rsym},\nonumber\\
   &~~~~~~ p=-\sum_{i\in\pi_1}p_i=\sum_{i\in\pi_2}p_{i}\ ,
\end{align}
where we used (\ref{semigroup}) and the notations (\ref{not1})-(\ref{not2}).
Here, the prime denotes all pairs $(l_1,l_2)$ such that $l_1+l_2=l$, and the double prime refers to a summation over $(\pi_1,\pi_2)\in\tilde{\mathcal{P}}_{2;n}$ with $n_i:=|\pi_i|$.\\
Using the induction hypothesis, we obtain 
\begin{equation}\label{lo}
\mathcal{S}_{l-1,n+2;D}^{\Lambda,\Lambda_0}
\left(\Vec{p}_n,k,-k;\Phi_{n+2}^{\Lambda;D};Y_{\sigma_n},u,u\right)
=\lim_{c\rightarrow +\infty}\mathcal{S}_{l-1,n+2;R}^{\Lambda,\Lambda_0}
\left(\Vec{p}_n,k,-k;\Phi_{n+2}^{\Lambda;R};Y_{\sigma_n},u,u\right)~,
\end{equation}
\begin{multline}\label{l2}
    \mathcal{S}_{l_i,n_i+1;D}^{\Lambda,\Lambda_0}\left(\vec{p}_{\pi_i},(-1)^ip;\Phi^{\Lambda;D}_{\pi_i};Y_{\pi_{i}},u\right)\\=\lim_{c\rightarrow +\infty} \mathcal{S}_{l_i,n_i+1;R}^{\Lambda,\Lambda_0}\left(\vec{p}_{\pi_i},(-1)^ip;\Phi^{\Lambda;R}_{\pi_i};Y_{\pi_{i}},u\right),~~~i\in\left\{1,2\right\}.
\end{multline}
For $\tau_i>0$ and $y_i\in\mathbb{R}^+$, we have 
\begin{equation}\label{DRc}
\lim_{c\rightarrow +\infty}\mathcal{N}_p\left(p_D(\tau_i;\cdot,y_i)-p_R(\tau_i;\cdot,y_i)\right)=0~,
\end{equation}
where for $\phi$ in $\mathcal{S}(\mathbb{R}^+)$ the semi-norm $\mathcal{N}_p$ is given by
$$\mathcal{N}_p(\phi)=\sum_{\alpha,~\beta \leq p} ~\sup_{x\in\mathbb{R}^{+}}|x^{\alpha}\partial^{\beta}\phi(x)|~.$$ 
Remembering that $\mathcal{D}_{l,n}^{\Lambda,\Lambda_0}\in\mathcal{S}'\left(\mathbb{R}^{+n}\right)$ and using (\ref{DRc}), we deduce
\begin{equation}\label{l4}
   \mathcal{D}_{l_i,n_i+1}^{\Lambda,\Lambda_0}\left(\Vec{p}_{\pi_i},(-1)^{i}p;\Phi_{\pi_i}^{\Lambda;D};Y_{\pi_i},u\right)=\lim_{c\rightarrow +\infty}\mathcal{D}_{l_i,n_i+1}^{\Lambda,\Lambda_0}\left(\Vec{p}_{\pi_i},(-1)^{i}p;\Phi_{\pi_i}^{\Lambda;R};Y_{\pi_i},u\right).
\end{equation}
We rewrite the term
\begin{equation}
   \int_k \int_{z,z'}~ 
\mathcal{D}_{l-1,n+2}^{\Lambda,\Lambda_0}
\left((\Vec{z}_n.\Vec{p}_n),(z,k),(z',-k)\right)\Dot{C}^{\Lambda}_{S,R}(k;z,z')\phi_{\tau_{1,n},y_{1,n}}^{R}\left(z_{1,n}\right)
\end{equation}
as follows 
\begin{multline}
   \int_{k} \int_u \mathcal{D}_{l-1,n+2}^{\Lambda,\Lambda_0}
\left(\Vec{p}_n,k,-k;\Phi_{n+2}^{\Lambda;R};Y_{\sigma_n},u,u\right)\Dot{C}^{\Lambda}(k)\\
-\int_{k}\int_{\mathbb{R}}du~ \mathcal{D}_{l-1,n+2}^{\Lambda,\Lambda_0}
\left(\Vec{p}_n,k,-k;\phi_{\tau_{1,n},y_{1,n};R}\times p_B\left(\frac{1}{2\Lambda^2};\cdot,u\right)p_B\left(\frac{1}{2\Lambda^2};\cdot,u\right)\right)\Dot{C}^{\Lambda}(k)~.
\end{multline}
Following the same steps that led to (\ref{l4}), we obtain
\begin{equation}\label{l51}
 \mathcal{D}_{l-1,n+2}^{\Lambda,\Lambda_0}
\left(\Vec{p}_n,k,-k;\Phi_{n+2}^{\Lambda;D};Y_{\sigma_n},u,u\right)
= \lim_{c\rightarrow +\infty} \mathcal{D}_{l-1,n+2}^{\Lambda,\Lambda_0}
\left(\Vec{p}_n,k,-k;\Phi_{n+2}^{\Lambda;R};Y_{\sigma_n},u,u\right)
\end{equation}
and 
\begin{multline}\label{l50}
   \mathcal{D}_{l-1,n+2}^{\Lambda,\Lambda_0}
\left(\Vec{p}_n,k,-k;\phi_{\tau_{1,n},y_{1,n};D}\times p_B\left(\frac{1}{2\Lambda^2};\cdot,u\right)p_B\left(\frac{1}{2\Lambda^2};\cdot,u\right)\right)\\
= \lim_{c\rightarrow +\infty}\mathcal{D}_{l-1,n+2}^{\Lambda,\Lambda_0}
\left(\Vec{p}_n,k,-k;\phi_{\tau_{1,n},y_{1,n};R}\times p_B\left(\frac{1}{2\Lambda^2};\cdot,u\right)p_B\left(\frac{1}{2\Lambda^2};\cdot,u\right)\right).
\end{multline}
Part A1) in the proof of Theorem \ref{theoremReno} implies that the integrands of each term on the RHS of the FEs (\ref{FES}), in the case of Robin boundary conditions are bounded independently of the Robin parameter $c$, and the Lemmas \ref{reduction}-\ref{TFfusion} show that these bounds are integrable w.r.t. $u$. We refer the reader to the proof of Theorem \ref{theoremReno} for more details.
\item [A2)] \underline{Integration:} 
Lebesgue's dominated convergence theorem together with (\ref{lo})-(\ref{l50}) and the FES (\ref{FES}) gives 
\begin{equation}
   \partial_{\Lambda}\mathcal{S}_{l,n;D}^{\Lambda,\Lambda_0}\left(\Vec{p}_n;\phi_{\tau_{1,n},y_{1,n}}^D\right) =\lim_{c\rightarrow +\infty}\partial_{\Lambda}\mathcal{S}_{l,n;R}^{\Lambda,\Lambda_0}\left(\Vec{p}_n;\phi_{\tau_{1,n},y_{1,n}}^R\right)
    .
\end{equation}
This implies (again by the Lebesgue's dominated convergence theorem and the integrability of the bound (\ref{c'}) in the proof of Theorem \ref{theoremReno})
\begin{equation}\label{free}
    \int_{\Lambda}^{\Lambda_0}d\lambda~\partial_{\lambda}\mathcal{S}_{l,n;D}^{\lambda,\Lambda_0}\left(\Vec{p}_n;\phi_{\tau_{1,n},y_{1,n}}^D\right)=\lim_{c\rightarrow +\infty}\int_{\Lambda}^{\Lambda_0}d\lambda~\partial_{\lambda}\mathcal{S}_{l,n;R}^{\lambda,\Lambda_0}\left(\Vec{p}_n;\phi_{\tau_{1,n},y_{1,n}}^R\right).
\end{equation}
\begin{itemize}
    \item \underline{Irrelevant terms:} These terms are characterized by $n\geq 4$. Using the boundary condition
    \begin{equation}
\mathcal{S}_{l,n;\star}^{\Lambda_0\Lambda_0}\left(\Vec{p}_n;\phi^{\star}_{\tau_{1,n},y_{1,n}}\right)=0,~~~\star\in\left\{D,R\right\}
    \end{equation}
    together with (\ref{free}), we deduce 
    \begin{equation}\label{ter}
   \mathcal{S}_{l,n;D}^{\Lambda,\Lambda_0}\left(\Vec{p}_n;\phi^D_{\tau_{1,n},y_{1,n}}\right)=\lim_{c\rightarrow +\infty}~ 
   \mathcal{S}_{l,n;R}^{\Lambda,\Lambda_0}\left(\Vec{p}_n;\phi^R_{\tau_{1,n},y_{1,n}}\right)
   .
\end{equation}
\item \underline{Relevant terms ($n=2$):} We have 
\begin{multline}
    \int_{\Lambda}^{\Lambda_0}d\lambda~\partial_{\lambda} \mathcal{S}^{\lambda,\Lambda_0}_{l,2;R}\left(\Vec{p}_n;\phi^{R}_{\tau_{1,n},y_{1,n}}\right)\\=\mathcal{S}^{\Lambda_0,\Lambda_0}_{l,2;R}\left(\Vec{p}_n;\phi^{R}_{\tau_{1,n},y_{1,n}}\right)-\mathcal{S}^{\Lambda,\Lambda_0}_{l,2;R}\left(\Vec{p}_n;\phi^{R}_{\tau_{1,n},y_{1,n}}\right)
\end{multline}
and 
\begin{equation}\label{suite}
    \int_{\Lambda}^{\Lambda_0}d\lambda~\partial_{\lambda} \mathcal{S}^{\lambda,\Lambda_0}_{l,2;D}\left(\Vec{p}_n;\phi^{R}_{\tau_{1,n},y_{1,n}}\right)=-\mathcal{S}^{\Lambda,\Lambda_0}_{l,2;D}\left(\Vec{p}_n;\phi^{D}_{\tau_{1,n},y_{1,n}}\right).
\end{equation}
In (\ref{suite}), we used the boundary condition (\ref{BCDDS}) for the Dirichlet case. The boundary condition (\ref{BCDTS}) implies 
\begin{multline}\label{tui}
    \mathcal{S}^{\Lambda_0,\Lambda_0}_{l,2;R}\left(\Vec{p}_n;\phi^{R}_{\tau_{1,2},y_{1,2}}\right)=s_{l;R}^{\Lambda_0,\Lambda_0}\prod_{i=1}^2~p_R\left(\tau_i;y_i,0\right)\\+e_{l;R}^{\Lambda_0,\Lambda_0}\left\{p_R\left(\tau_1;y_1,0\right)~\partial_n p_R\left(\tau_2;y_2,0\right)+p_R\left(\tau_2;y_2,0\right)~\partial_n p_R\left(\tau_1;y_1,0\right)\right\}.
\end{multline}
Using $$\left |\partial_n p_R\left(\tau_1;y_1,0\right) \right|\leq O(1)~\tau_i^{-\frac{1}{2}}~p_B\left(\tau_{i,\delta};y_i,0\right),$$
and the fact that $s_{l;R}^{\Lambda_0,\Lambda_0}$ and $e_{l;R}^{\Lambda_0,\Lambda_0}$ are uniformly bounded w.r.t. the Robin parameter $c$, which is implied by the bound given in Theorem \ref{theoremReno} for $s=1$, $r_1\in \left\{0, 1\right\}$ and $r_2=0$, we obtain from $p_R\rightarrow_{c\rightarrow +\infty} p_D$
\begin{equation}
  \lim_{c\rightarrow +\infty}\mathcal{S}^{\Lambda_0,\Lambda_0}_{l,2;R}\left(\Vec{p}_n;\phi^{R}_{\tau_{1,2},y_{1,2}}\right)= 0.
\end{equation}
Therefore, we deduce 
\begin{equation}
  \mathcal{S}^{\Lambda,\Lambda_0}_{l,2;D}\left(\Vec{p}_n;\phi^{D}_{\tau_{1,2},y_{1,2}}\right)=\lim_{c\rightarrow +\infty}\mathcal{S}^{\Lambda,\Lambda_0}_{l,2;R}\left(\Vec{p}_n;\phi^{R}_{\tau_{1,2},y_{1,2}}\right)~.
\end{equation}
This ends the proof of (\ref{limit}).
\end{itemize}
\end{itemize}
\end{proof}
\subsection{Proof of Corollary \ref{Cor1}}
\begin{proof}
In this part, we prove the bounds (\ref{cD1}) and (\ref{cD2}). As a consequence of Theorem \ref{theoremReno}, we have for Robin boundary conditions
\begin{multline}
    \left| \mathcal{S}_{l,n;R}^{\Lambda,\Lambda_0}\left(\vec{p}_n;\phi_{\tau_{1,n},y_{1,n}}^{R}\right)\right|\\\leq \left(\Lambda+m\right)^{3-n}\mathcal{P}_1\left(\log \frac{\Lambda+m}{m}\right)\mathcal{P}_2\left(\frac{\left\|\vec{p}_n\right\|}{\Lambda+m}\right) \mathcal{Q}_1\left(\frac{\tau^{-\frac{1}{2}}}{\Lambda+m}\right)\ \mathcal{F}_{n,l;\delta}^{\Lambda,0}({\tau}_{1,n}),~~\forall n\geq 2~.
\end{multline}
Using Theorem 1 and taking the limit $c\rightarrow+\infty$, we deduce 
\begin{multline}
    \left|  \mathcal{S}_{l,n;D}^{\Lambda,\Lambda_0}\left(\vec{p}_n;\phi_{\tau,y_{1,s}}^{D}\right)\right|\\\leq \left(\Lambda+m\right)^{3-n}\mathcal{P}_1\left(\log \frac{\Lambda+m}{m}\right)\mathcal{P}_2\left(\frac{\left\|\vec{p}_n\right\|}{\Lambda+m}\right) \mathcal{Q}_1\left(\frac{\tau^{-\frac{1}{2}}}{\Lambda+m}\right)\ \mathcal{F}_{n,l;\delta}^{\Lambda,0}({\tau}_{1,n}),~~\forall n\geq 2~.
\end{multline}
For $n=2$, it is possible to obtain a sharper bound by performing a Taylor expansion around $0$ of the test functions $p_R\left(\tau_i;\cdot,y_i\right)$ as follows
\begin{multline}
   \mathcal{S}_{l,2;R}^{\Lambda,\Lambda_0}\left(\Vec{p}_n;\phi_{\tau_{1,2},y_{1,2}}^R\right)
    =s_{l;R}^{\Lambda,\Lambda_0}p_R\left(\tau_1;y_1,0\right)p_R\left(\tau_2;y_2,0\right)+e_{l;R}^{\Lambda,\Lambda_0}\left\{p_R\left(\tau_1;y_1,0\right)(\partial_{n}p_R)\left(\tau_2;y_2,0\right)\right.\\\left.+p_R\left(\tau_2;y_2,0\right)(\partial_{n}p_R)\left(\tau_1;y_1,0\right)\right\}+l_{l,2;R}^{\Lambda,\Lambda_0}\left(p_R\left(\tau_1;\cdot,y_1\right),p_R\left(\tau_2;\cdot,y_2\right)\right)\ .
\end{multline}
Taking the limit $c\rightarrow \infty$, we deduce
\begin{equation}\label{f1D}
   \mathcal{S}_{l,2;D}^{\Lambda,\Lambda_0}\left(\vec{p}_n;\phi_{\tau_{1,2},y_{1,2}}^D\right)\\
    =\lim_{c\rightarrow +\infty}\tilde{l}_{l,2;R}^{\Lambda,\Lambda_0}\left(p_R\left(\tau_1;\cdot,y_1\right),p_R\left(\tau_2;\cdot,y_2\right)\right)\ ,
\end{equation}
where the remainder $\tilde{l}_{l,2;R}^{\Lambda,\Lambda_0}$ is given by
\begin{align}
    &\left(\int_{z_1,z_2}z_1z_2~\mathcal{S}_{l,2;R}^{\Lambda,\Lambda_0}\left((z_1,p),(z_2,-p)\right)\right)\left(\partial_n\phi_1\right)(0)\left(\partial_n\phi_2\right)(0)\nonumber\\&+\int_{z_1,z_2}\mathcal{S}_{l,2;R}^{\Lambda,\Lambda_0}\left((z_1,p),(z_2,-p)\right)\left(\int_0^1dt~(1-t)\left(\partial_t^2\phi_1\right)(tz_1)\right)\times\left(\int_0^1dt'~(1-t')\left(\partial_{t'}^2\phi_2\right)(t'z_2)\right)\nonumber\\
    &+(\partial_n\phi_2)(0)\int_{z_1,z_2}z_2~\mathcal{S}_{l,2;R}^{\Lambda,\Lambda_0}\left((z_1,p),(z_2,-p)\right)\int_0^1dt ~(1-t)\left(\partial_t^2\phi_1\right)(tz_1)\nonumber\\
    &+(\partial_n\phi_1)(0)\int_{z_1,z_2}z_1~\mathcal{S}_{l,2;R}^{\Lambda,\Lambda_0}\left((z_1,p),(z_2,-p)\right)\int_0^1dt ~(1-t)\left(\partial_t^2\phi_2\right)(tz_2)~
\end{align}
and $\phi_i(z_i):=p_R\left(\tau_i;z_i,y_i\right)$. These terms can be bounded in a similar way as $\Dot{\mathcal{H}}_1$, $\Dot{\mathcal{H}}_2$ and $\Dot{\mathcal{H}}_3$ in the proof of Theorem 1. One should keep in mind that the test functions considered whithin the proof of Theorem \ref{theoremReno} were products of bulk heat kernels. However, the same bounds (\ref{h1}), (\ref{h2}) and (\ref{h3}) which are uniform in $c$, hold for Robin type test functions using the bounds (\ref{bulkBou}). Therefore, we deduce  
\begin{equation}\label{RobIn}
    \left|\tilde{l}_{l,2;R}^{\Lambda,\Lambda_0}\left(\phi_1,\phi_2\right)\right|\leq 
      \left(\Lambda+m\right)^{-2}\tau_1^{-\frac{1}{2}}\tau_2^{-\frac{1}{2}}\mathcal{P}\left(\log \frac{\Lambda+m}{m}\right)\mathcal{Q}\left(\frac{\tau^{-\frac{1}{2}}}{\Lambda+m}\right)\mathcal{F}_{2,l;\delta}^{\Lambda,0}\left({\tau}_{1,2}\right),
\end{equation}
which gives 
\begin{equation}
    \left|\mathcal{S}_{l,2;D}^{\Lambda,\Lambda_0}\left(\vec{p}_n;\phi_{\tau_{1,2},y_{1,2}}^R\right)\right|\leq \left(\Lambda+m\right)^{-2}\tau_1^{-\frac{1}{2}}\tau_2^{-\frac{1}{2}}\mathcal{P}\left(\log \frac{\Lambda+m}{m}\right)\mathcal{Q}\left(\frac{\tau^{-\frac{1}{2}}}{\Lambda+m}\right)\mathcal{F}_{2,l;\delta}^{\Lambda,0}\left({\tau}_{1,2}\right),
\end{equation}
and this ends the proof of Corollary \ref{Cor1}.
\end{proof}
\subsection{The minimal form of the bare interaction}
\noindent In this section, we show that the bare interaction (\ref{surface}) corresponds to the boundary conditions imposed in Theorem \ref{theoremReno} for $L_{\star}^{\Lambda,\Lambda_0}$. 
Given $\phi\in\mathrm{supp}~\mu_{\star}^{\Lambda,\Lambda_0}$, we expand $L^{\Lambda_0,\Lambda_0}_{\star}(\phi)$ in powers of the field $\phi$:
\begin{multline}
    L^{\Lambda_0,\Lambda_0}_{\star}(\phi)=\sum_{n=1}^{+\infty}\frac{1}{n!}\int_{\vec{z}_n}\int_{\mathbb{R}^{3n}}\prod_{i=1}^{n}\frac{d^3p_i}{(2\pi)^3}\mathcal{L}_{l,n;\star}^{\Lambda_0,\Lambda_0}\left((z_1,p_1),\cdots,(z_n,p_n)\right)\\\times\delta^{(3)}\left(p_1+\cdots+p_n\right)\phi(z_1,p_1)\cdots\phi(z_n,p_n).
\end{multline}
Using (\ref{dec}), we can write
\begin{equation*}
    L^{\Lambda_0,\Lambda_0}_{\star}(\phi)=D^{\Lambda_0,\Lambda_0}(\phi)+S^{\Lambda_0,\Lambda_0}_{\star}(\phi),
\end{equation*}
where 
\begin{multline}
    D^{\Lambda_0,\Lambda_0}(\phi):=\sum_{n=1}^{+\infty}\frac{1}{n!}\int_{\vec{z}_n}\int_{\mathbb{R}^{3n}}\prod_{i=1}^{n}\frac{d^3p_i}{(2\pi)^3}\mathcal{D}_{l,n}^{\Lambda_0,\Lambda_0}\left((z_1,p_1),\cdots,(z_n,p_n)\right)\delta^{(3)}\left(p_1+\cdots+p_n\right)\\\times\phi(z_1,p_1)\cdots\phi(z_n,p_n)
\end{multline}
and 
\begin{multline}
    S^{\Lambda_0,\Lambda_0}_{\star}(\phi)=\sum_{n=1}^{+\infty}\frac{1}{n!}\int_{\vec{z}_n}\int_{\mathbb{R}^{3n}}\prod_{i=1}^{n}\frac{d^3p_i}{(2\pi)^3}\mathcal{S}_{l,n;\star}^{\Lambda_0,\Lambda_0}\left((z_1,p_1),\cdots,(z_n,p_n)\right)\delta^{(3)}\left(p_1+\cdots+p_n\right)\\\times\phi(z_1,p_1)\cdots\phi(z_n,p_n)~.
\end{multline}
Proposition 1 in \cite{BorjiKopper2} implies that there exists $f$ in $L^2(\mathbb{R}^+)$ such that 
$$\phi(z,p)=\int_0^{\infty}dz'~C^{\Lambda,\Lambda_0}_\star(p;z,z')f(p,z')~,$$
which can be rewritten as
$$\phi(p,z)=\int_0^{\infty}dz'\int_{\frac{1}{\Lambda^2}}^{\frac{1}{\Lambda_0^2}}d\lambda~ e^{-\lambda(p^2+m^2)}p_{\star}\left(\lambda;z,z'\right)f(p,z')~.$$
Therefore, we can write
\begin{multline}
    \int_{z_1,\cdots,z_n}\mathcal{S}_{l,n;\star}^{\Lambda_0,\Lambda_0}\left((z_1,p_1),\cdots,(z_n,p_n)\right)\phi(z_1,p_1)\cdots\phi(z_n,p_n)\\=\int_{\vec{z}'_n}\prod_{i=1}^n\int_{\frac{1}{\Lambda^2}}^{\frac{1}{\Lambda_0^2}}d\lambda_i~e^{-\lambda_i(p_i^2+m^2)}f(p_i,z'_i)\mathcal{S}_{l,n;\star}^{\Lambda_0,\Lambda_0}\left(\Vec{p}_n;\phi_{\lambda_{1,n},z'_{1,n}}^{\star}\right).
\end{multline}
The boundary conditions (\ref{BCDTS2}) and (\ref{BCDDS}) imply that
\begin{equation}
    S^{\Lambda_0,\Lambda_0}_{\star}(\phi)=\frac{1}{2}\int_{\mathbb{R}^3}\frac{d^3p}{(2\pi)^3}\int_{z_1,z_2}\mathcal{S}_{l,2;\star}^{\Lambda_0,\Lambda_0}\left((z_1,p),(z_2,-p)\right)\phi(z_1,p)\phi(z_2,-p)~,
\end{equation}
where $\mathcal{S}_{l,2;\star}^{\Lambda_0,\Lambda_0}\left((z_1,p),(z_2,-p)\right)$ is given by (\ref{BCDTS}) for Robin/Neumann boundary conditions and by (\ref{BCDDS}) for Dirichlet boundary conditions:
\begin{itemize}
    \item \underline{Robin/Neumann boundary conditions ($c\geq 0$):} In this case, we obtain 
\begin{equation}
    S^{\Lambda_0,\Lambda_0}_{\star}(\phi)=\int_{\mathbb{R}^3}\frac{d^3p}{(2\pi)^3}\left(\frac{1}{2}s^{\Lambda_0}_{\star}+ce^{\Lambda_0}_{\star}\right)\phi(0,p)\phi(0,-p)~,
\end{equation}
where 
\begin{equation}
    s^{\Lambda_0}_{\star}=\int_{z_1,z_2}\mathcal{S}_{l,2;\star}^{\Lambda_0,\Lambda_0}\left((z_1,p),(z_2,-p)\right),~~~e^{\Lambda_0}_{\star}=\int_{z_1,z_2}z_1\mathcal{S}_{l,2;\star}^{\Lambda_0,\Lambda_0}\left((z_1,p),(z_2,-p)\right)~
\end{equation}
and $\star\in\left\{R,N\right\}$.
\item \underline{Dirichlet boundary conditions:} For Dirichlet boundary conditions, we obtain  $$S^{\Lambda_0,\Lambda_0}_D(\phi)=0~.$$
\end{itemize}

\section{The Amputated vs the Non-Amputated theory}
\noindent For quantum field theories on spaces without boundary, the renormalization problem of the amputated and the non-amputated theory is equivalent in the sense that the required counter-terms render finite the amputated and unamputated amplitudes, independently of the location of the external points of the unamputated diagrams \cite{KopperMuller,Keller}. However, first order calculations \cite{Albu2} give clear evidence that this is not the case when one considers the renormalization of the semi-infinite model. The tadpole diverges w.r.t. the UV cutoff, and its renormalization depends on the location of the external points (i.e. if they are on the surface or not). If the two external points are not on the surface, then in addition to the usual mass counter-term only one additional surface counter-term, which diverges linearly in the UV cutoff, is needed. This is not the case when one considers the tadpole with at least one external point on the surface. The latter needs one additional surface counter-term which diverges logarithmically w.r.t. the UV cutoff. This suggests that the amputated and non-amputated diagrams are renormalized differently for the semi-infinite model. In this section, we prove the following proposition which sheds some light on this finding:
\begin{proposition}\label{popp}
Let $\star\in\left\{R,N\right\}$. We denote by $C_{\star}$ the unregularized propagator associated to the boundary condition $\star$. For nonvanishing $s_{l;\star}^{\Lambda_0}$ and $e_{l;\star}^{\Lambda_0}$, we have for $y_2>0$
\begin{multline}\label{127}
   \lim_{y_1\rightarrow 0^+} \int_{z_1,z_2}\mathcal{S}_{l,2;\star}^{\Lambda_0,\Lambda_0}\left((z_1,p),(z_2,-p)\right)C_R\left(p;z_1,y_1\right)C_R\left(p;z_2,y_2\right)\\\neq \int_{z_1,z_2}\mathcal{S}_{l,2;\star}^{\Lambda_0,\Lambda_0}\left((z_1,p),(z_2,-p)\right)C_R\left(p;z_1,0\right)C_R\left(p;z_2,y_2\right)
\end{multline}
and 
\begin{multline}\label{128}
   \lim_{y_1\rightarrow 0^+}\lim_{y_2\rightarrow 0^+} \int_{z_1,z_2}\mathcal{S}_{l,2;\star}^{\Lambda_0,\Lambda_0}\left((z_1,p),(z_2,-p)\right)C_R\left(p;z_1,y_1\right)C_R\left(p;z_2,y_2\right)\\\neq \int_{z_1,z_2}\mathcal{S}_{l,2;\star}^{\Lambda_0,\Lambda_0}\left((z_1,p),(z_2,-p)\right)C_R\left(p;z_1,0\right)C_R\left(p;z_2,0\right)~.
\end{multline}
\end{proposition}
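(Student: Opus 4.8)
The plan is to reduce everything to the bare scale $\Lambda=\Lambda_0$, where $\mathcal{S}_{l,2;\star}^{\Lambda_0,\Lambda_0}$ is the explicit boundary-supported distribution (\ref{BCDTS}). First I would record that, by the relevant-term extraction (\ref{f0})--(\ref{eands}) --- whose remainder $l_{l,2;\star}^{\Lambda_0,\Lambda_0}$ vanishes identically at $\Lambda=\Lambda_0$ because all second and higher moments of $\delta_{z}$ and $\delta'_{z}$ vanish --- the pairing with any $\phi_1,\phi_2\in\mathcal{S}(\mathbb{R}^+)$ collapses to
\begin{equation*}
\int_{z_1,z_2}\mathcal{S}_{l,2;\star}^{\Lambda_0,\Lambda_0}\big((z_1,p),(z_2,-p)\big)\,\phi_1(z_1)\phi_2(z_2)=s_{l;\star}^{\Lambda_0,\Lambda_0}\,\phi_1(0)\phi_2(0)+e_{l;\star}^{\Lambda_0,\Lambda_0}\big[\phi_1(0)(\partial_n\phi_2)(0)+(\partial_n\phi_1)(0)\phi_2(0)\big].
\end{equation*}
I would then substitute $\phi_1(\cdot)=C_R(p;\cdot,y_1)$ and $\phi_2(\cdot)=C_R(p;\cdot,y_2)$, so that both sides of (\ref{127}) and (\ref{128}) become explicit combinations of boundary values of $C_R$ and of its normal derivative.

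The crux is the behaviour of these boundary data in the external argument. Writing $\omega:=\sqrt{p^2+m^2}$, the value $C_R(p;0,y)=\tfrac{1}{\omega+c}e^{-\omega y}$ is continuous at $y=0$, so the $s$-terms will match on the two sides. The normal derivative, however, jumps: evaluating (\ref{Robpro}) and splitting $e^{-\omega|z-y|}$ on the region $z<y$ gives
\begin{equation*}
(\partial_nC_R)(p;0,y)=\frac{c}{\omega+c}\,e^{-\omega y}\quad(y>0),\qquad (\partial_nC_R)(p;0,0)=-\frac{\omega}{\omega+c},
\end{equation*}
so that $\lim_{y\to0^+}(\partial_nC_R)(p;0,y)=\tfrac{c}{\omega+c}$ differs from $(\partial_nC_R)(p;0,0)$ by exactly $1$. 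This jump is the surface imprint of the $|z-z'|$ source singularity of the propagator, and it is the sole obstruction to interchanging the folding with the limit $y_i\to0^+$.

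Granting this, the two inequalities follow by subtraction. For (\ref{127}) only the factor $(\partial_n\phi_1)(0)$ is affected, and the difference of the two sides equals $e_{l;\star}^{\Lambda_0,\Lambda_0}\big[\tfrac{c}{\omega+c}+\tfrac{\omega}{\omega+c}\big]C_R(p;0,y_2)=e_{l;\star}^{\Lambda_0,\Lambda_0}\,C_R(p;0,y_2)$, which is nonzero for $y_2>0$ whenever $e_{l;\star}^{\Lambda_0}\neq0$. For (\ref{128}) the mechanism acts in both legs: the iterated limit produces $\tfrac{s_{l;\star}^{\Lambda_0,\Lambda_0}+2\,e_{l;\star}^{\Lambda_0,\Lambda_0}c}{(\omega+c)^2}$ whereas setting $y_1=y_2=0$ yields $\tfrac{s_{l;\star}^{\Lambda_0,\Lambda_0}-2\,e_{l;\star}^{\Lambda_0,\Lambda_0}\omega}{(\omega+c)^2}$, so the difference is $\tfrac{2\,e_{l;\star}^{\Lambda_0,\Lambda_0}}{\omega+c}\neq0$. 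In both cases the effect is carried entirely by the normal-derivative counter-term $e_{l;\star}^{\Lambda_0}$; the mass-type term $s_{l;\star}^{\Lambda_0}$ drops out by continuity.

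The hard part will be the bookkeeping around the jump rather than any analysis: one must keep the orientation convention $(\partial_n\phi)(0)=\lim_{z\to0^+}\partial_z\phi$ of the footnote consistent across the two (symmetric) slots of $C_R$, and justify carefully that the $\delta'$-pairing extracts $(\partial_nC_R)(p;0,y_1)$ with $y_1$ held fixed, so that the limit $y_1\to0^+$ is genuinely taken after the distribution has acted. No induction, power counting, or appeal to Theorem \ref{theoremReno} is needed here, since the statement lives purely at $\Lambda=\Lambda_0$ where $\mathcal{S}_{l,2;\star}^{\Lambda_0,\Lambda_0}$ is the known boundary distribution (\ref{BCDTS}).
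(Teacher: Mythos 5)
Your proposal is correct and follows essentially the same route as the paper: both reduce to the explicit boundary distribution (\ref{BCDTS}) at $\Lambda=\Lambda_0$, fold it with the propagators, and exploit the non-commutativity of the limits $z\to 0^+$ and $y\to 0^+$ in $\partial_z C_R(p;z,y)$ (your values $\tfrac{c}{\omega+c}$ and $-\tfrac{\omega}{\omega+c}$ are exactly the paper's $c\,C_R(p;0,0)$ and $-\kappa_p C_R(p;0,0)$). Your difference terms $e_{l;\star}^{\Lambda_0,\Lambda_0}C_R(p;0,y_2)$ and $\tfrac{2e_{l;\star}^{\Lambda_0,\Lambda_0}}{\omega+c}$ agree with the paper's $e_R^{\Lambda_0}(\kappa_p+c)\,C_R(p;0,y_2)C_R(p;0,0)$ and $2e_R^{\Lambda_0}(\kappa_p+c)\,C_R(p;0,0)^2$, so the two computations coincide.
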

\begin{proof}
We give the proof of Proposition \ref{popp} in the case of Robin boundary conditions. Neumann b.c. can be treated analogously.\\
We proved in Section 3 that for the particular choice of the boundary conditions (\ref{BCDTS})-(\ref{renoS}), the bare interaction is of the form (\ref{surface}) and we have 
\begin{equation}
    \mathcal{S}_{l,2;R}^{\Lambda_0,\Lambda_0}\left((z_1,p),(z_2,-p)\right)=\left(s_R^{\Lambda_0}+e_R^{\Lambda_0}(\partial_{z_1}+\partial_{z_2})\right)\delta_{z_1}\delta_{z_2}.
\end{equation}
Hence, we obtain for $y_1,~y_2>0$
\begin{multline}
    \int_{z_1,z_2}\mathcal{S}_{l,2;R}^{\Lambda_0,\Lambda_0}\left((z_1,p),(z_2,-p)\right)C_R\left(p;z_1,y_1\right)C_R\left(p;z_2,y_2\right)=s_R^{\Lambda_0}C_R\left(p;0,y_1\right)C_R\left(p;0,y_2\right)\\+e_R^{\Lambda_0}\left\{C_R\left(p;0,y_1\right)\partial_{n}C_R\left(p;0,y_2\right)+C_R\left(p;0,y_2\right)\partial_{n}C_R\left(p;0,y_1\right)\right\}.
\end{multline}
Using 
\begin{equation}
    \lim_{z\rightarrow 0}\lim_{y\rightarrow 0}\partial_z C_R\left(p;z,y\right)=-\kappa_pC_R\left(p;0,0\right),~~~~\lim_{y\rightarrow 0}\lim_{z\rightarrow 0}\partial_z C_R\left(p;z,y\right)= c~C_R\left(p;0,0\right)
\end{equation}
with $\kappa_p:=\sqrt{p^2+m^2}$, we deduce  
\begin{multline}\label{136}
    \int_{z_1,z_2}\mathcal{S}_{l,2;R}^{\Lambda_0,\Lambda_0}\left((z_1,p),(z_2,-p)\right)C_R\left(p;z_1,y_1\right)C_R\left(p;z_2,y_2\right)\\=\left(s_R^{\Lambda_0}+2c~e_R^{\Lambda_0}\right)C_R\left(p;0,y_1\right)C_R\left(p;0,y_2\right),~~~\forall y_1,y_2>0,
\end{multline}
\begin{multline}\label{137}
    \int_{z_1,z_2}\mathcal{S}_{l,2;R}^{\Lambda_0,\Lambda_0}\left((z_1,p),(z_2,-p)\right)C_R\left(p;z_1,y_1\right)C_R\left(p;z_2,0\right)\\=\left(s^{\Lambda_0}_R+2c~e^{\Lambda_0}_R\right)C_R\left(p;0,y_1\right)C_R\left(p;0,0\right)-e^{\Lambda_0}_RC_R\left(p;0,y_1\right)\left(\kappa_p+c\right)C_R\left(p;0,0\right),
\end{multline}
\begin{multline}\label{138}
    \int_{z_1,z_2}\mathcal{S}_{l,2}^{\Lambda_0,\Lambda_0}\left((z_1,p),(z_2,-p)\right)C_R\left(p;z_1,0\right)C_R\left(p;z_2,0\right)\\=\left(s^{\Lambda_0}_R+2c~e_R^{\Lambda_0}\right)C_R\left(p;0,0\right)C_R\left(p;0,0\right)-2~e^{\Lambda_0}_RC_R\left(p;0,0\right)\left(\kappa_p+c\right)C_R\left(p;0,0\right),
\end{multline}
from which (\ref{127}) and (\ref{128}) follow directly.
\end{proof}
\begin{remark}
 Denoting $g_R^{\Lambda_0}=s_R^{\Lambda_0}+2ce_R^{\Lambda_0}$, we deduce that (\ref{136}) implies that the unamputated two-point function of the semi-infinite model which has two external points in the interior of the bulk, requires only the surface counter-term $g_R^{\Lambda_0}$ to be renormalized. This is not the case when at least one of the external points is on the surface. From (\ref{137}) and (\ref{138}), we deduce that $g_R^{\Lambda_0}$ is not sufficient and the additional surface counter-term $e_R^{\Lambda_0}$ is required to make the two-point function finite. This generalizes the remarks given in \cite{Diehl,DiehlD2} and \cite{Albu2} concerning the tadpole to all loop orders.
\end{remark}
\newpage
\appendix

\section{Some properties of the surface weight factor for $s=2$ and $l\geq 1$}
In this Appendix, we prove several lemmas that we use in the proof of Theorem \ref{theoremReno}. These lemmas concern the case $s=2$ for which the set of partitions $\mathcal{P}_2$ simply reads
\begin{equation*}
    \mathcal{P}_2:=\left\{\Pi_0,\Pi_1\right\},
\end{equation*}
where $\Pi_0:=\sigma_2$, $\Pi_1=\pi_1\cup\pi_2$ and $\pi_i=\left\{i\right\}$.\\
From the definition (\ref{defs}), we have
$$\mathcal{W}^2_l(\sigma_2)=\left\{T^{2,0}_l(Y_{\sigma_2},0;\vec{z})|~~T^{2,0}_l\in\mathcal{T}^{2,0},~~~v_2\leq 3l-1\right\}$$
and \begin{align*}
\mathcal{W}^2_l(\Pi_1)&=\left\{T^{1,0}_{l;1}(y_{1},0;\vec{z})\cup T^{1,0}_{l;2}(y_{2},0;\vec{z'})|~~T^{1,0}_{l;1},~T^{1,0}_{l;2}\in\mathcal{T}^{1,0}_l\right\}~\\
&=\mathcal{W}^1_l(\pi_1)\cup\mathcal{W}^1_l(\pi_2),~
\end{align*}
which implies that the global surface weight factor $\mathcal{F}^{\Lambda,0}_{2,l;\delta}(\tau_{1,2})$ simply reads
\begin{multline}\label{weight}
    \mathcal{F}^{\Lambda,0}_{2,l;\delta}\left(\tau_{1,2}\right)=\sum_{T^{2,0}_l\in\mathcal{W}^2_l(\sigma_2)}\mathcal{F}^0_{\delta}\left(\Lambda,\tau_{1},\tau_{2};T^{2,0}_l;Y_{\sigma_2}\right)+\mathcal{F}^{\Lambda,0}_{1,l;\delta}\left(\tau_1\right)\mathcal{F}^{\Lambda,0}_{1,l;\delta}\left(\tau_2\right),
    \end{multline}
    where 
    \begin{equation}
       \mathcal{F}^{\Lambda,0}_{1,l;\delta}\left(\tau_i\right):= \sum_{T^{1,0}_l\in\mathcal{W}^1_l(\pi_i)}\mathcal{F}^0_{\delta}\left(\Lambda,\tau_{i};T^{1,0}_l;y_{i}\right).
    \end{equation}
Note that (\ref{weight}) implies
\begin{equation}\label{borneweights2}
    \mathcal{F}^{\Lambda,0}_{1,l;\delta}\left(\tau_1\right)\mathcal{F}^{\Lambda,0}_{1,l;\delta}\left(\tau_2\right)\leq \mathcal{F}^{\Lambda,0}_{2,l;\delta}\left(\tau_{1,2}\right).
\end{equation}
\begin{lemma}\label{lemmSi}
Let $v$ be the total number of vertices of incidence number $2$ of the tree $T^{2,0}_l$. For $p\geq 1$, we denote by $\vec{z}=\left(z_1,\cdots,z_p\right)$ the set of the internal vertices of $T^{2,0}_l$. For $\Lambda_{\mathcal{I}}:=\left\{\Lambda_i~|1\leq i\leq v-1\right\}$, $\tilde{\Lambda}\in\left[\Lambda,\Lambda_0\right]$ and $\tau_1,\tau_2>0$, we have 
\begin{multline}\label{lemo}
  \int_{\vec {z}}\mathcal{F}^0_{\delta}\left(\Lambda_{\mathcal{I}},\tilde{\Lambda};\tau_{1},\tau_{2};T^{2,0}_l;\vec{z};Y_{\sigma_2}\right)\\\leq \int_{0}^{\infty}dz~p_B\left(c_{1,\delta};z,y_1\right)p_B\left(c_{2,\delta};z,y_2\right)p_B\left(\frac{1+\delta}{\tilde{\Lambda}_1^2};z,0\right)~
\end{multline}
and
\begin{multline}\label{lemoo2}
\int_{0}^{\infty}dz~p_B\left(c_{1,\delta};z,y_1\right)p_B\left(c_{2,\delta};z,y_2\right)p_B\left(\frac{1+\delta}{\tilde{\Lambda}_1^2};z,0\right)\\\leq 2^v~\int_{\vec {z}}\mathcal{F}^0_{\delta}\left(\Lambda_{\mathcal{I}},\tilde{\Lambda};\tau_1,\tau_2;T^{2,0}_l;\vec{z};Y_{\sigma_2}\right),
\end{multline}
where $0\leq v_1,v_2,v_0\leq v$ such that $v_1+v_2+v_0=v$, $c_{1,\delta}:=(1+\delta)c_1$ and $c_{2,\delta}:=(1+\delta)c_2$. The parameters $c_1$, $c_2$ and $\tilde{\Lambda}_1$ are given by
\begin{align}\label{c1c2}
    c_1&= \tau_1+\left(\sum_{i=1}^{v_1}\frac{1}{\Lambda_i^2}\right)\left(1-\delta_{v_1,0}\right),\\c_2&=\tau_2+\left(\sum_{i=v_1+1}^{v_1+v_2-1}\frac{1}{\Lambda_{i}^2}\right)\left(1-\delta_{v_2,0}\right),\\\frac{1}{\tilde{\Lambda}_1^2}&=\left(\sum_{i=v_1+v_2}^{v-1}\frac{1}{{\Lambda}_{i}^2}\right)\left(1-\delta_{v,0}\right)+\frac{1}{\tilde{\Lambda}^2}~.\label{A8} 
    \end{align}
\end{lemma}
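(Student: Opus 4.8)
The plan is to exploit the rigid geometry of a surface tree with only $s=2$ external vertices. Such a tree $T^{2,0}_l$ has exactly three leaves, namely $y_1$, $y_2$ and the surface vertex $0$, so it possesses a single vertex of incidence number $3$ — call it $\tilde z$ — from which three arms emanate, each arm being a chain of incidence-$2$ vertices terminating at one of the three leaves. By the definition (\ref{treeStr'}), the weight factor $\mathcal{F}^0_\delta(\Lambda_{\mathcal{I}},\tilde\Lambda;\tau_1,\tau_2;T^{2,0}_l;\vec z;Y_{\sigma_2})$ is simply the product of bulk heat kernels attached to the edges of these three arms (the legs to $y_1,y_2$ carrying the parameters $(1+\delta)\tau_1$, $(1+\delta)\tau_2$, the leg to $0$ the parameter $(1+\delta)/\tilde\Lambda^2$, and the internal edges the parameters $(1+\delta)/\Lambda_i^2$). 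I would single out $\tilde z$ as the one integration variable that survives on the right-hand sides of (\ref{lemo})--(\ref{lemoo2}), and collapse each of the three arms onto it.

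For the upper bound (\ref{lemo}), I would first use the positivity of the heat kernel to enlarge every integration over an incidence-$2$ vertex from $\mathbb{R}^+$ to all of $\mathbb{R}$, i.e. $\int_0^\infty \le \int_{\mathbb{R}}$, keeping only the branch vertex $\tilde z$ integrated over $\mathbb{R}^+$. On $\mathbb{R}$ the bulk semi-group property (\ref{rr+}) applies exactly, so integrating out the incidence-$2$ vertices of a given arm successively convolves the heat kernels along that arm into a single heat kernel whose diffusion time is the sum of the individual times. Collecting the three resulting times reproduces precisely the parameters $c_1$, $c_2$ and $\tilde\Lambda_1$ defined in (\ref{c1c2})--(\ref{A8}), with the factors $(1-\delta_{v_i,0})$ accounting for empty arms, and yields the single-vertex integral on the right of (\ref{lemo}).

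For the lower bound (\ref{lemoo2}) I would run the same computation backwards: starting from the collapsed single-vertex integral, I reintroduce the incidence-$2$ vertices one at a time by writing each collapsed heat kernel as a convolution via (\ref{rr+}), and then use inequality (\ref{rr++}), $\int_{\mathbb{R}} \le 2\int_{\mathbb{R}^+}$, to restrict each reintroduced intermediate integration back to the physical half-line $\mathbb{R}^+$. Each of the $v$ incidence-$2$ vertices contributes one such restriction and hence one factor of $2$, producing the overall constant $2^{v}$. The degenerate case $v=0$ is trivial, both bounds holding with constant $1$.

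The main obstacle is the tension between the half-space integration domain $\mathbb{R}^+$, on which the vertices actually live, and the full line $\mathbb{R}$, on which the convolution identity (\ref{rr+}) is exact: the two inequalities must therefore be driven in opposite directions — free enlargement to $\mathbb{R}$ for the upper bound, penalized contraction to $\mathbb{R}^+$ for the lower bound. Care is also needed in the bookkeeping that assigns each internal parameter $\Lambda_i$ to the correct arm so that the summed diffusion times match (\ref{c1c2})--(\ref{A8}) exactly, and in verifying that the branch vertex $\tilde z$ remains constrained to $\mathbb{R}^+$ throughout both directions of the argument.
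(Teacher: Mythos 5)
Your proposal is correct and takes essentially the same route as the paper's proof: for the upper bound you enlarge the incidence-$2$ integrations from $\mathbb{R}^+$ to $\mathbb{R}$ and collapse each of the three arms onto the single incidence-$3$ vertex via the semigroup property (\ref{rr+}), and for the lower bound you run the convolutions backwards and restrict each of the $v$ reintroduced integrations to $\mathbb{R}^+$ via (\ref{rr++}), paying a factor $2$ per vertex. The only detail the paper adds beyond your outline is the preliminary reduction to $y_i\geq 0$ (bounding $p_B(c_{i,\delta};z,y_i)$ by $p_B(c_{i,\delta};z,-y_i)$ when $y_i\leq 0$), which is needed because (\ref{rr++}) requires both endpoints of the convolution to lie on the half-line.
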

\begin{proof}
\begin{itemize}
\item First, we prove the bound (\ref{lemo}). A tree $T^{2,0}_l$ in $\mathcal{W}^2_l(\sigma_2)$ has the following structure
\begin{center}
\begin{tikzpicture}[font=\footnotesize]
  \tikzset{
    level 1/.style={level distance=10mm,sibling distance=30mm},
    level 2/.style={level distance=15mm,sibling distance=30mm},
    level 3/.style={level distance=10mm,sibling distance=30mm},
    level 4/.style={level distance=10mm},
    emph/.style={edge from parent/.style={dashed,thick,draw}},
    norm/.style={edge from parent/.style={solid,thin,draw}}
  }

  \node[circle,fill=black,inner sep=0pt,minimum size=5pt,label=above:{$y_1$}]{}
    child[norm]{node[circle,fill=black,inner sep=0pt,minimum size=5pt,label=left:{}]{}
    child[emph]{node[circle,fill=black,inner sep=0pt,minimum size=5pt,label=left:{$z_1$}]{}
    child{node[circle,fill=black,inner sep=0pt,minimum size=5pt,label=left:{}]{}
    child[norm]{node[circle,fill=black,inner sep=0pt,minimum size=5pt,label=below:{$y_2$}]{}edge from parent node[left]{$\tau_2$}}edge from parent node[left]{$v_2$}}
    child[emph]{node[circle,fill=black,inner sep=0pt,minimum size=5pt,label=left:{}]{}
    child[norm]{node[circle,fill=black,inner sep=0pt,minimum size=5pt,label=below:{$0$}]{}edge from parent node[right]{$\tilde{\Lambda}$}}
        edge from parent node[right]{$v_0$}}
        edge from parent node[right]{$v_1$}
     }
     edge from parent node[left]{$\tau_1$}
    }
  ;
\end{tikzpicture}
\end{center}

It contains one internal vertex of incidence number $3$ and all the other internal vertices are of incidence number $2$. We assume that each dashed line contains a number $v_i\geq 1$ of internal vertices of incidence number $2$. The case $v_i=0$ can be treated similarly. Remember that $v_0+v_1+v_2=v$.
Let $\left\{z_2,\cdots,z_{v_1+1}\right\}$, $\left\{z_{v_1+2},\cdots,z_{v_1+v_2}\right\}$ and $\left\{z_{v_1+v_2+1},\cdots,z_{v}\right\}$ be respectively the internal vertices on the paths from $z_1$ to $y_1$, $z_1$ to $y_2$ and $z_1$ to $0$. From (\ref{treeStr'}), the integral surface weight factor of $T^{2,0}_l$ is then given by 
\begin{multline}\label{SurfaceWeight}
    \int_{\vec {z}}\mathcal{F}^0_{\delta}\left(\Lambda_{\mathcal{I}},\tilde{\Lambda};\tau_{1},\tau_{2};T^{2,0}_l;\vec{z};Y_{\sigma_2}\right)=\int_{z_1,\cdots,z_v}~\prod_{j=2}^{v_1+1}p_B\left(\frac{1+\delta}{\Lambda_{j-1}^2};z_{j-1},z_j\right)p_B\left(\tau_{1,\delta};z_{v_1+1},y_1\right)\\
    \times~p_B\left(\frac{1+\delta}{\Lambda_{v_1+1}^2};z_1,z_{v_1+2}\right)\prod_{j=v_1+3}^{v_1+v_2}p_B\left(\frac{1+\delta}{\Lambda_{j-1}^2};z_{j-1},z_j\right)p_B\left(\tau_{2,\delta};z_{v_1+v_2},y_2\right)\\\times~p_B\left(\frac{1+\delta}{\Lambda_{v_1+v_2}^2};z_1,z_{v_1+v_2+1}\right)\prod_{j=v_1+v_2+2}^{v}p_B\left(\frac{1+\delta}{\Lambda_{j-1}^2};z_{j-1},z_j\right)p_B\left(\frac{1+\delta}{\tilde{\Lambda}^2};z_{v},0\right)~.
\end{multline}
Bounding the integral over $\mathbb{R}^+$ by the integral over $\mathbb{R}$ and using (\ref{rr+}), we obtain
\begin{eqnarray}
\int_{z_2\cdots z_{v_1+1}}~\prod_{j=2}^{v_1+1}p_B\left(\frac{1+\delta}{\Lambda_{j-1}^2};z_{j-1},z_j\right)p_B(\tau_{1,\delta};z_{v_1+1},y_1)\leq  p_B(c_{1,\delta};z_1,y_1)~,
\end{eqnarray}
where $c_{1,\delta}=(1+\delta)\left(\tau_1+\sum_{i=1}^{v_1}\frac{1}{\Lambda_i^2}\right)$. Proceeding similarly on the paths from $z_1$ to $y_2$ and  from $z_1$ to $0$, we obtain that the weight factor of a tree in $\mathcal{W}^2_l(\sigma_2)$ is bounded by the weight factor of the tree 
\begin{center}
\begin{tikzpicture}[font=\footnotesize]
  \tikzset{
    level 1/.style={level distance=10mm,sibling distance=20mm},
    level 2/.style={level distance=10mm,sibling distance=20mm},
    level 3/.style={level distance=10mm,sibling distance=20mm},
    level 4/.style={level distance=10mm,sibling distance=20mm},
    emph/.style={edge from parent/.style={dashed,thick,draw}},
    norm/.style={edge from parent/.style={solid,thin,draw}}
  }

  \node[circle,fill=black,inner sep=0pt,minimum size=5pt,label=above:{$y_1$}]{}
    child[norm]{node[circle,fill=black,inner sep=0pt,minimum size=5pt,label=left:{$z_1$}]{}
    child[norm]{node[circle,fill=black,inner sep=0pt,minimum size=5pt,label=below:{$y_2$}]{}edge from parent node[left,xshift=-5]{$c_2$}}
    child[norm]{node[circle,fill=black,inner sep=0pt,minimum size=5pt,label=below:{$0$}]{}edge from parent node[right,xshift=5]{$\frac{1}{\tilde{\Lambda}_1^2}$}}
    edge from parent node[left]{$c_1$}}
  ;
\end{tikzpicture}
\end{center}
where $c_1$, $c_2$ and $\frac{1}{\tilde{\Lambda}_1^2}$ are the new parameters associated respectively to the edges $(z_1,y_1)$, $(z_1,y_2)$ and $(z_1,0)$. The relations between these new parameters and those of the tree $T^{2,0}_l$ are given by
$$c_2=\tau_2+\left(\sum_{i=1}^{v_2}\frac{1}{\Lambda_{i+v_1}^2}\right)~~~~~\mathrm{and}~~\frac{1}{\tilde{\Lambda}_1^2}=\left(\sum_{i=v_1+v_2}^{v-1}\frac{1}{{\Lambda}_{i}^2}\right)+\frac{1}{\tilde{\Lambda}^2}~.$$
This proves the statement (\ref{lemo}). 
\item To prove (\ref{lemoo2}), we assume without loss of generality that $v_{1}\geq 1$. If $y_1\leq 0$, we have for $z_1\geq 0$
\begin{equation}
  p_B\left(c_{1,\delta};z_1,y_1\right)\leq p_B\left(c_{1,\delta};z_1,-y_1\right).
\end{equation}
Hence, we restrict our discussion to $y_1\geq 0$. Using (\ref{rr++}), we find 
\begin{multline}\label{A8}
p_B(c_{1,\delta};z_1,y_1)=  \int_{\mathbb{R}^{v_1}}dz_2\cdots dz_{v_1+1}~\prod_{j=2}^{v_1+1}p_B\left(\frac{1+\delta}{\Lambda_{j-2}^2};z_j,z_{j-1}\right)p_B(\tau_{1,\delta};z_{v_1+1},y_1)\\
\leq 2^{v_1}\int_{(\mathbb{R}^+)^{v_1}}dz_2\cdots dz_{v_1+1}~\prod_{j=2}^{v_1+1}p_B\left(\frac{1+\delta}{\Lambda_{j-2}^2};z_j,z_{j-1}\right)p_B(\tau_{1,\delta};z_{v_1+1},y_1)~.
\end{multline}
 Proceeding similarly for $p_B(c_{2,\delta};z_1,y_2)$ and $p_B\left(\frac{1+\delta}{\tilde{\Lambda}_1^2};z,0\right)$, we deduce 
\begin{multline}\label{sirol}
 \int_{0}^{\infty}dz~p_B\left(c_{1,\delta};z,y_1\right)p_B\left(c_{2,\delta};z,y_2\right)p_B\left(\frac{1+\delta}{\tilde{\Lambda}_1^2};z,0\right)\\\leq 2^{v}\int_{\vec {z}}\mathcal{F}^0_{\delta}\left(\Lambda_{\mathcal{I}},\tilde{\Lambda};\tau_{1},\tau_{2};T^{2,0}_l;\vec{z};Y_{\sigma_2}\right).
\end{multline}
\end{itemize}
\end{proof}
\begin{lemma}\label{LemmPi}
Let $W^2_l(\Pi_1):=T^{1,0}_{l;1}(y_{1},0;\vec{z})\cup T^{1,0}_{l;2}(y_{2},0;\vec{z'})$ be a forest in $\mathcal{W}^2_l(\Pi_1)$ and $v_{2,1}$ (resp. $v_{2,2}$) the total number of vertices of incidence number $2$ of the tree $T^{1,0}_{l;1}$ (resp. $T^{1,0}_{l;2}$). For $p,~q\geq 1$, we denote by $\vec{z}=\left(z_1,\cdots,z_p\right)$ (resp. $\vec{z}'=\left(z'_1,\cdots,z'_q\right)$) the set of the internal vertices of $T^{1,0}_{l;1}$ (resp. $T^{1,0}_{l;2}$). For $\Lambda_{\mathcal{I}}:=\left\{\Lambda_i,~\Lambda'_j~|1\leq i\leq v_{2,1}-1,~1\leq j \leq v_{2,2}-1\right\}$, $\tilde{\Lambda}_1,~\tilde{\Lambda}_2$  $\in\left[\Lambda,\Lambda_0\right]$ and $\tau_1,\tau_2>0$, we have
\begin{equation}
  \int_{\vec {z}}\mathcal{F}^0_{\delta}\left(\Lambda_{\mathcal{I}},\tilde{\Lambda}_1,\tilde{\Lambda}_2;\tau_1,\tau_2;W^2_l(\Pi_1);\vec{z};Y_{\sigma_2}\right)\leq p_B\left(\tilde{c}_{1,\delta};y_1,0\right)p_B\left(\tilde{c}_{2,\delta};y_2,0\right)
\end{equation}
and
\begin{equation}\label{lemo2}
p_B\left(\tilde{c}_{1,\delta};y_1,0\right)p_B\left(\tilde{c}_{2,\delta};y_2,0\right)\leq 2^{v_{2,1}+v_{2,2}}~\int_{\vec {z}}\mathcal{F}^0_{\delta}\left(\Lambda_{\mathcal{I}},\tilde{\Lambda}_1,\tilde{\Lambda}_2;\tau_1,\tau_2;W^2_l(\Pi_1);\vec{z};Y_{\sigma_2}\right),
\end{equation}
where $\tilde{c}_1$ and $\tilde{c}_2$ are given by
\begin{align*}
    \tilde{c}_1&=2\tau_1+\sum_{i=1}^{v_{2,1}-1}\frac{1}{\Lambda_{i,1}^2}+\frac{1}{\tilde{\Lambda}_1^2}~,\\\tilde{c}_2&=2\tau_2+\sum_{i=1}^{v_{2,2}-1}\frac{1}{{\Lambda'}_{i,2}^2}+\frac{1}{\tilde{\Lambda}_2^2}~. 
    \end{align*}
\end{lemma}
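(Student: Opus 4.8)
The plan is to reduce the two-sided estimate to a single-chain computation, in close analogy with the proof of Lemma~\ref{lemmSi}. Since $\Pi_1=\{1\}\cup\{2\}$ lies in $\mathcal{P}^1_2$ (both of its blocks have length one), the definition (\ref{tree2}) of the forest weight factor factorizes completely over the two trees, each block contributing the doubled parameter $2\tau_i$:
\begin{equation*}
\mathcal{F}^0_{\delta}\left(\Lambda_{\mathcal{I}},\tilde{\Lambda}_1,\tilde{\Lambda}_2;\tau_1,\tau_2;W^2_l(\Pi_1);\vec{z},\vec{z}';Y_{\sigma_2}\right)=\mathcal{F}^0_{\delta}\left(\Lambda_{\mathcal{I}_1},\tilde{\Lambda}_1;2\tau_1;T^{1,0}_{l;1};\vec{z};y_1\right)\,\mathcal{F}^0_{\delta}\left(\Lambda_{\mathcal{I}_2},\tilde{\Lambda}_2;2\tau_2;T^{1,0}_{l;2};\vec{z}';y_2\right),
\end{equation*}
and the integration $\int_{\vec z,\vec z'}$ splits as $\int_{\vec z}\int_{\vec z'}$. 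It therefore suffices to establish, for each chain $T^{1,0}_{l;i}$ separately, the two-sided bound $\int_{\vec z_i}\mathcal{F}^0_{\delta}(\cdots)\le p_B(\tilde c_{i,\delta};y_i,0)\le 2^{v_{2,i}}\int_{\vec z_i}\mathcal{F}^0_{\delta}(\cdots)$; multiplying the two statements produces the claim with the constant $2^{v_{2,1}+v_{2,2}}$.

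The key structural observation is that a surface tree carrying a single ordinary external vertex $y_i$ and the surface vertex $0$ is necessarily a simple path: having only two leaves, no internal vertex can have incidence number larger than two, so all $v_{2,i}$ internal vertices sit on the chain from $y_i$ to $0$. By (\ref{treeStr'}) its weight factor is accordingly $p_B((1+\delta)2\tau_i;y_i,z_1)\,\prod_{j=1}^{v_{2,i}-1}p_B((1+\delta)\Lambda_{j,i}^{-2};z_j,z_{j+1})\,p_B((1+\delta)\tilde{\Lambda}_i^{-2};z_{v_{2,i}},0)$. For the upper bound I would enlarge each $\int_{\mathbb{R}^+}$ to $\int_{\mathbb{R}}$ (the integrand is a product of positive Gaussians) and collapse the chain by iterating the bulk semi-group property (\ref{rr+}); the heat-kernel parameters simply add, giving $p_B(\tilde c_{i,\delta};y_i,0)$ with $\tilde c_{i,\delta}=(1+\delta)\big(2\tau_i+\sum_{j=1}^{v_{2,i}-1}\Lambda_{j,i}^{-2}+\tilde{\Lambda}_i^{-2}\big)$, i.e.\ exactly the advertised $\tilde c_i$.

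For the lower bound I would run the same contraction backwards, peeling the chain link by link; since the integrations are now genuinely over $\mathbb{R}^+$, each merge of two adjacent kernels uses the comparison (\ref{rr++}) in place of (\ref{rr+}), which costs a factor $2$ and requires the two endpoints of the merged segment to lie in $\mathbb{R}^+$. There are $v_{2,i}$ internal vertices to reintroduce, whence the factor $2^{v_{2,i}}$. The sign of the external point is handled exactly as in the proof of (\ref{lemoo2}): if $y_i\le 0$ the reflection bound (\ref{bulkBou}), $p_B(\,\cdot\,;z,y_i)\le p_B(\,\cdot\,;z,-y_i)$ for $z\ge 0$, reduces matters to $y_i\ge 0$, ensuring that at every step of the peeling both endpoints are nonnegative so that (\ref{rr++}) legitimately applies.

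The argument is essentially a bookkeeping variant of Lemma~\ref{lemmSi}, where the single incidence-three vertex is now absent and the tree degenerates into two independent paths. The only points demanding care are the correct accounting of the doubled weight $2\tau_i$ coming from the length-one blocks through (\ref{tree2}), and the verification — after the reflection step — that all intermediate vertices remain in $\mathbb{R}^+$, so that the factor-$2$ semi-group comparison (\ref{rr++}) is valid at each link of the chain. No genuinely new estimate beyond (\ref{rr+}), (\ref{rr++}) and (\ref{bulkBou}) is needed.
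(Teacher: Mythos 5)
Your proposal is correct and follows essentially the same route as the paper's own proof: you factorize the forest weight via (\ref{tree2}) into two chain weights carrying the doubled parameters $2\tau_i$, collapse each chain with the semi-group property (\ref{rr+}) after enlarging the $z$-integrations to $\mathbb{R}$ for the upper bound, and re-expand the collapsed kernel link by link with (\ref{rr++}) (one factor $2$ per internal vertex, after reflecting a negative $y_i$ as in the proof of (\ref{lemoo2})) for the lower bound. This is exactly the paper's argument, which simply invokes the computation (\ref{A8}) from Lemma \ref{lemmSi} for the degenerate case of two independent paths.
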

\begin{proof}
The forest $W^2_l(\Pi_1)$ is of the following form 
\begin{center}
\begin{tikzpicture}[font=\footnotesize]
  \tikzset{
    level 1/.style={level distance=10mm,sibling distance=30mm},
    level 2/.style={level distance=15mm,sibling distance=30mm},
    level 3/.style={level distance=10mm,sibling distance=30mm},
    level 4/.style={level distance=10mm,sibling distance=40mm},
    emph/.style={edge from parent/.style={dashed,thick,draw}},
    norm/.style={edge from parent/.style={solid,thin,draw}}
  }

  \node[circle,fill=black,inner sep=0pt,minimum size=5pt,label=above:{$y_1$}]{}
    child[norm]{node[circle,fill=black,inner sep=0pt,minimum size=5pt,label=left:{$z_1$}]{}
    child[emph]{node[circle,fill=black,inner sep=0pt,minimum size=5pt,label=left:{$z_{v_{2,1}}$}]{}
    child[norm]{node[circle,fill=black,inner sep=0pt,minimum size=5pt,label=below:{$0$}]{}edge from parent node[left]{$\frac{1}{\tilde{\Lambda}_1^2}$}}
        edge from parent node[right]{$v_{2,1}$}
     }
     edge from parent node[left]{$2\tau_1$}
    }
    
    ;
    
  \hspace*{4cm}
  \node[circle,fill=black,inner sep=0pt,minimum size=5pt,label=above:{$y_2$}]{}
    child[norm]{node[circle,fill=black,inner sep=0pt,minimum size=5pt,label=left:{$z'_1$}]{}
    child[emph]{node[circle,fill=black,inner sep=0pt,minimum size=5pt,label=left:{$z'_{v_{2,2}}$}]{}
    child[norm]{node[circle,fill=black,inner sep=0pt,minimum size=5pt,label=below:{$0$}]{}edge from parent node[left]{$\frac{1}{\tilde{\Lambda}_2^2}$}}
        edge from parent node[right]{$v_{2,2}$}
     }
     edge from parent node[left]{$2\tau_2$}
    }
    
    ;
\end{tikzpicture}
\end{center}
The integrated weight factor of this forest reads
\begin{multline*}
\int_{\vec {z}}\mathcal{F}^0_{\delta}\left(\Lambda_{\mathcal{I}},\tilde{\Lambda}_1,\tilde{\Lambda}_2;\tau_{1,2};W^2_l(\Pi_1);\vec{z};y_1,y_2\right)\\=\int_0^{\infty}dz_1\cdots dz_{v_{2,1}}~\prod_{j=2}^{v_{2,1}}p_B\left(\frac{1+\delta}{\Lambda_{j-1}^2};z_j,z_{j-1}\right)p_B(2\tau_{1,\delta};z_1,y_1)p_B\left(\frac{1+\delta}{\tilde{\Lambda}_{1}^2};z_{v_{2,1}},0\right)\\\times \int_0^{\infty}d{z'}_1\cdots dz'_{v_{2,2}}~\prod_{j=2}^{v_{2,2}}p_B\left(\frac{1+\delta}{{\Lambda'}_{j-1}^2};{z'}_j,{z'}_{j-1}\right)p_B(2\tau_{2,\delta};{z'}_1,y_2)p_B\left(\frac{1+\delta}{\tilde{\Lambda}_{2}^2};z_{v_{2,2}},0\right)~.
\end{multline*}
Bounding the integral over $\mathbb{R}^+$ by the integral over $\mathbb{R}$ and using (\ref{rr+}), we obtain 
\begin{equation*}
\int_{\vec {z}}\mathcal{F}^0\left(\Lambda_{\mathcal{I}},\tilde{\Lambda}_1,\tilde{\Lambda}_2;\tau_{1,2};W^2_l(\Pi_1);\vec{z};y_1,y_2\right)\leq p_B\left(\tilde{c}_{1,\delta};y_1,0\right)p_B\left(\tilde{c}_{2,\delta};y_2,0\right),
\end{equation*}
where for $v_{2,i}>1$ 
\begin{align}
\tilde{c}_1&=2\tau_1+\sum_{i=1}^{v_{2,1}-1}\frac{1}{\Lambda_{i,1}^2}+\frac{1}{\tilde{\Lambda}_1^2}~,\\\tilde{c}_2&=2\tau_2+\sum_{i=1}^{v_{2,2}-1}\frac{1}{\Lambda_{i,2}^2}+\frac{1}{\tilde{\Lambda}_2^2}~.
\end{align}
  If $v_{2,i} =1$, then $\tilde{c}_i=2\tau_i+\frac{1}{\tilde{\Lambda}_i^2}$.\\
Using again (\ref{rr++}) and proceeding as in  (\ref{A8}), we deduce 
\begin{equation}
p_B\left(\tilde{c}_{1,\delta};y_1,0\right)p_B\left(\tilde{c}_{2,\delta};y_2,0\right)\leq 2^{v_{2,1}+v_{2,2}}~\int_{\vec {z}}\mathcal{F}^0\left(\Lambda_{\mathcal{I}},\tilde{\Lambda}_1,\tilde{\Lambda}_2;\tau_{1,2};W^2_l(\Pi_1);\vec{z};y_1,y_2\right).
\end{equation}
\end{proof}
\begin{lemma}\label{lemma7}
For $0\leq \alpha \leq 1$ and $y_1,~y_2\in\mathbb{R}$,  we have 
\begin{equation}\label{resu11}
   \left|\partial_n^{\alpha}\phi_i(0)\right|\leq C_{0,\delta}~\tau_i^{-\frac{\alpha}{2}}~\left(1+\frac{\tau^{-\frac{1}{2}}_i}{\Lambda+m}\right) \mathcal{F}_{1,l;\delta}^{\Lambda,0}\left(\tau_i\right)~,~~~\forall~0 <\delta<1,
\end{equation}
where $\partial_n^{\alpha}\phi_i(0)=\lim_{z_i\rightarrow 0^+}\partial_{z_i}^{\alpha}\phi_i(z_i)$ with $\phi_i(z_i):=p_B\left(\tau_i;z_i,y_i\right)$ and $C_{0,\delta}$ is defined in (\ref{in1}).
\end{lemma}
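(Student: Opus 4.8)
The plan is to reduce everything to explicit Gaussians, exploiting that $\phi_i=p_B(\tau_i;\cdot,y_i)$ is itself a bulk heat kernel. Since the only integers in $[0,1]$ are $\alpha=0,1$, I would first record the elementary identity $\partial_z p_B(\tau;z,y)=-\tfrac{z-y}{\tau}\,p_B(\tau;z,y)$, which gives
$$\partial_n^{\alpha}\phi_i(0)=\lim_{z\to0^+}\partial_z^{\alpha}p_B(\tau_i;z,y_i)=\Big(\tfrac{y_i}{\tau_i}\Big)^{\alpha}p_B(\tau_i;0,y_i),\qquad \alpha\in\{0,1\}.$$
Because $p_B(\tau_i;0,y_i)$ depends on $y_i$ only through $|y_i|$, I may assume $y_i\ge0$ (for $y_i<0$ one invokes (\ref{bulkBou})). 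Thus $|\partial_n^{\alpha}\phi_i(0)|=\tau_i^{-\alpha}|y_i|^{\alpha}p_B(\tau_i;0,y_i)$.

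The dimensional prefactor is then extracted directly from (\ref{in1}): applying that inequality with base index $0$, target index $\delta$, exponent $r=\alpha$ and the points $z_1=0$, $z_2=y_i$ yields $|y_i|^{\alpha}p_B(\tau_i;0,y_i)\le C_{0,\delta}\,\tau_i^{\alpha/2}\,p_B((1+\delta)\tau_i;0,y_i)$, with $C_{0,\delta}$ the constant (\ref{Cdelt}). Hence
$$|\partial_n^{\alpha}\phi_i(0)|\le C_{0,\delta}\,\tau_i^{-\alpha/2}\,p_B\big((1+\delta)\tau_i;0,y_i\big),$$
and the whole problem collapses to dominating $p_B((1+\delta)\tau_i;0,y_i)$ by $\big(1+\tau_i^{-1/2}/(\Lambda+m)\big)\,\mathcal F^{\Lambda,0}_{1,l;\delta}(\tau_i)$.

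For this I would bound $\mathcal F^{\Lambda,0}_{1,l;\delta}(\tau_i)$ from below by the single simplest surface tree in $\mathcal W^1_l(\sigma_1)=\mathcal T^{1,0}_l$, namely the one whose only internal vertex is joined to $y_i$ and to the surface vertex $0$; it has $v_2=1$, which satisfies $v_2\le3l-2+\tfrac12$ for every $l\ge1$, so it is admissible. By (\ref{treeStr'}) its integrand is $p_B((1+\delta)\tau_i;z,y_i)\,p_B(\tfrac{1+\delta}{\tilde\Lambda^2};z,0)$, and its contribution to the weight factor is already a supremum over the unique admissible scale $\tilde\Lambda\in[\Lambda,\Lambda_0]$. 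Estimating the half-line integral from below by (\ref{rr++}) and then using the semigroup law (\ref{rr+}) gives, for every such $\tilde\Lambda$,
$$\mathcal F^{\Lambda,0}_{1,l;\delta}(\tau_i)\ \ge\ \tfrac12\,p_B\Big((1+\delta)\big(\tau_i+\tfrac{1}{\tilde\Lambda^2}\big);0,y_i\Big).$$

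The remaining and only delicate step is the comparison of the two heat kernels at the shifted time parameters. Writing the explicit Gaussians and using $\tau_i\le\tau_i+\tilde\Lambda^{-2}$ (so the ratio of exponentials is $\le1$) leaves the scale ratio $\sqrt{(\tau_i+\tilde\Lambda^{-2})/\tau_i}\le1+\tau_i^{-1/2}\tilde\Lambda^{-1}$. Since the bound above holds for every admissible $\tilde\Lambda$, I would optimize the choice: for $\Lambda\ge m$ take $\tilde\Lambda=\Lambda$ and use $\Lambda^{-1}\le2(\Lambda+m)^{-1}$; for $\Lambda<m$ take $\tilde\Lambda=\Lambda_0$ (the UV cutoff, which exceeds $m$) and use $\Lambda_0^{-1}\le2(\Lambda+m)^{-1}$. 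In both cases $1+\tau_i^{-1/2}\tilde\Lambda^{-1}\le2\big(1+\tau_i^{-1/2}/(\Lambda+m)\big)$, whence $p_B((1+\delta)\tau_i;0,y_i)\le O(1)\big(1+\tau_i^{-1/2}/(\Lambda+m)\big)\mathcal F^{\Lambda,0}_{1,l;\delta}(\tau_i)$, the $O(1)$ (from the factor $\tfrac12$ of (\ref{rr++}) and the two scale estimates) being absorbed into $C_{0,\delta}$. I expect this scale bookkeeping to be the main obstacle: one must use that the weight factor is a supremum over $\tilde\Lambda\in[\Lambda,\Lambda_0]$ rather than a fixed scale, so that in the infrared regime $\Lambda<m$ the required mass gap is furnished by choosing $\tilde\Lambda$ of the order of the cutoff — which is exactly what upgrades $\Lambda$ to $\Lambda+m$ in the final factor.
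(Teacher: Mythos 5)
Your proposal is correct and follows essentially the same route as the paper's proof: compute $\partial_n^{\alpha}\phi_i(0)=(y_i/\tau_i)^{\alpha}p_B(\tau_i;0,y_i)$, trade $|y_i|^{\alpha}$ for $\tau_i^{\alpha/2}$ via (\ref{in1}), and dominate the resulting Gaussian by the weight factor of the one-internal-vertex surface tree using the semigroup law (\ref{rr+}) and the half-line comparison (\ref{rr++}), with the choice of the surface-line scale producing the factor $1+\tau_i^{-1/2}/(\Lambda+m)$. The only slip is definitional: for $s=1$ the trivial partition has a length-one block, so by (\ref{tree2}) the external line of that tree carries time $2(1+\delta)\tau_i$ rather than $(1+\delta)\tau_i$ — this costs only an extra factor $\sqrt{2}$ in the prefactor comparison and is absorbed into the constant, while your two-case choice of $\tilde\Lambda$ (namely $\tilde\Lambda=\Lambda$ if $\Lambda\geq m$ and $\tilde\Lambda=\Lambda_0$ otherwise) is actually more careful than the paper's ``take $\tilde\Lambda\geq\Lambda+m$'', which tacitly assumes such a scale lies in $[\Lambda,\Lambda_0]$.
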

\begin{proof}
For $\alpha=0$, we have 
    \begin{equation}\label{b1}
    \phi_i(0)=\frac{1}{\sqrt{2\pi\tau_i}}e^{-\frac{y_i^2}{2\tau_i}}\leq \sqrt{2}~p_B\left(2\tau_i;y_i,0\right).
    \end{equation}
Taking $\tilde{\Lambda}\geq \Lambda+m$, we write 
\begin{equation}\label{b1}
    p_B\left(2\tau_{i};y_i,0\right)\leq \sqrt{1+\frac{\tau_i^{-1}}{\tilde{\Lambda}^2}}~p_B\left(2\tau_{i}+\frac{1}{\Tilde{\Lambda}^2};y_i,0\right)\leq ~\left(1+\frac{\tau_i^{-\frac{1}{2}}}{\Lambda+m}\right)~p_B\left(2\tau_{i}+\frac{1}{\Tilde{\Lambda}^2};y_i,0\right).
\end{equation}
For $\alpha=1$, we have $\left|\partial_n\phi_i(0)\right|=\frac{y_i}{\tau_i}~\phi_i(0)$.
Using the bound (\ref{in1}) for $r=1$ together with (\ref{b1}), we obtain 
\begin{equation}\label{a22}
    \left|\partial_n\phi_i(0)\right|\leq \sqrt{2}~C_{0,\delta}~\tau_i^{-\frac{1}{2}}\left(1+\frac{\tau_i^{-\frac{1}{2}}}{\Lambda+m}\right)~p_B\left(2\tau_{i,\delta}+\frac{1+\delta}{\Tilde{\Lambda}^2};y_i,0\right).
\end{equation}
We consider the surface tree $T_{l;i}^{1,0}$ which consists of the internal vertex $z$, the external vertex $y_i$ and the surface external vertex $0$. We associate respectively to each of the external line $(z,y_i)$ and the surface external line $(z,0)$ the parameters $\tau_i$ and $\tilde{\Lambda}$. The integrated surface weight factor of the tree $T_{l;i}^{1,0}$ then reads
\begin{equation}\label{a19}
    \mathcal{F}^0_{1,l;\delta}\left(\tilde{\Lambda},\tau_i;T_{l;i}^{1,0};y_i\right)=\int_z p_B\left(2\tau_{i,\delta};z,y_i\right)~p_B\left(\frac{1+\delta}{\tilde{\Lambda}^2};z,0\right).
\end{equation}
Using (\ref{a19}), (\ref{rr+}) and (\ref{rr++}) we have
\begin{multline}\label{stot}
    p_B\left(2\tau_{i,\delta}+\frac{1+\delta}{\tilde{\Lambda}^2};y_i,0\right)=\int_{\mathbb{R}}dz~ p_B\left(2\tau_{i,\delta};y_i,z\right)~p_B\left(\frac{1+\delta}{\tilde{\Lambda}^2};z,0\right)\\\leq 2 ~\mathcal{F}^0_{1,l;\delta}\left(\tilde{\Lambda},\tau_i;T_{l;i}^{1,0};y_i\right).
\end{multline}
Combining the bounds (\ref{b1}), (\ref{a22}) and (\ref{stot}), we deduce 
\begin{equation}\label{resu1}
   \left|\partial_n^{\alpha}\phi_i(0)\right|\leq ~\sqrt{2}~C_{0,\delta}~\tau_i^{-\frac{\alpha}{2}}~\left(1+\frac{\tau_i^{-\frac{1}{2}}}{\Lambda+m}\right) \mathcal{F}_{1,l;\delta}^{\Lambda,0}\left(\tau_i\right),~~~\alpha\in\left\{0, 1\right\}.
\end{equation}
Furthermore, we also obtain
\begin{equation}\label{bb3}
    \left|\partial_n^{\alpha}\phi_1(0)\partial_n^{\beta}\phi_2(0)\right|\leq 2~C_{0,\delta}^2~\tau_1^{-\frac{\alpha}{2}}\tau_2^{-\frac{\beta}{2}}\left(1+\frac{\tau^{-\frac{1}{2}}}{\Lambda+m}\right)^2\prod_{i=1}^2 \mathcal{F}_{1,l;\delta}^{\Lambda,0}\left(\tau_i\right),~~~\alpha,~\beta\in\left\{0, 1\right\}.
\end{equation}
Recalling (\ref{borneweights2}), we deduce 
\begin{equation}\label{bb3}
    \left|\partial_n^{\alpha}\phi_1(0)\partial_n^{\beta}\phi_2(0)\right|\leq \tau_1^{-\frac{\alpha}{2}}\tau_2^{-\frac{\beta}{2}}~\mathcal{Q}\left(\frac{\tau^{-\frac{1}{2}}}{\Lambda+m}\right) \mathcal{F}_{2,l;\delta}^{\Lambda,0}\left(\tau_{1,2}\right),
\end{equation}
where ${\mathcal{Q}}(x)=2~C_{0,\delta}^2\left(1+x\right)^{2}$.\\
\end{proof}
\begin{lemma}\label{lemma8}
For $0<t\leq 1$, $\gamma\in\mathbb{N}$ and $0<\delta<\delta'< 1$, we have 
\begin{equation}
    \left(\frac{y_1}{\sqrt{\tau_1}}\right)^{\gamma}\mathcal{F}_{1,l;\delta}^{\Lambda,0}\left(t,\tau_{1}\right)\leq O(1)~t\left(1+\frac{\tau_1^{-\frac{1}{2}}}{\Lambda}\right)^{\gamma}\mathcal{F}_{1,l;\delta'}^{\Lambda,0}(\tau_{1})~,
\end{equation}
where
\begin{equation}
\mathcal{F}_{1,l;\delta}^{\Lambda,0}\left(t,\tau_{1}\right):=\sum_{W^1_l \in \mathcal{W}^1_l\left(\pi_1\right)}\mathcal{F}^{0}_{\delta}\left(\Lambda,\frac{\tau_{1}}{t^2};W^1_l;\frac{y_1}{t}\right)=\sum_{v=1}^{3l-1}\mathcal{F}^{0}_{\delta}\left(\Lambda,\frac{\tau_{1}}{t^2};T^{1,0}_l;\frac{y_1}{t}\right),
\end{equation}
and $O(1)$ is a constant which depends on $\delta$, $\delta'$, $v$, $\gamma$ and the loop order $l$.
\end{lemma}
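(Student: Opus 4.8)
The plan is to prove the bound tree by tree and then sum over the finitely many surface trees contributing to $\mathcal{F}^{\Lambda,0}_{1,l;\delta}$. Since $s=1$, every $W^1_l\in\mathcal{W}^1_l(\pi_1)=\mathcal{T}^{1,0}_l$ is a chain joining the external vertex (sitting at $y_1/t$ and carrying the external parameter $2\tau_1/t^2$) to the surface vertex $0$ through $v$ internal vertices of incidence number $2$, with internal parameters $1/\Lambda_i^2$ and surface parameter $1/\tilde\Lambda^2$. First I would integrate out the internal vertices exactly as in the proof of Lemma \ref{LemmPi}: bounding $\int_{\mathbb{R}^+}$ by $\int_{\mathbb{R}}$ and applying the semi-group property (\ref{rr+}) collapses the chain into a single heat kernel,
$$\int_{\vec z}\mathcal{F}^0_\delta\big(\Lambda_{\mathcal I},\tilde\Lambda;\tfrac{\tau_1}{t^2};T^{1,0}_l;\vec z;\tfrac{y_1}{t}\big)\le p_B\big((1+\delta)c(t);\tfrac{y_1}{t},0\big),\qquad c(t):=\tfrac{2\tau_1}{t^2}+B,$$
where $B:=\sum_{i=1}^{v-1}\Lambda_i^{-2}+\tilde\Lambda^{-2}$. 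Conversely, (\ref{rr++}) used as in (\ref{lemo2}) gives the matching lower bound $p_B((1+\delta')c(1);y_1,0)\le 2^{v}\,\mathcal{F}^0_{\delta'}(\Lambda,\tau_1;T^{1,0}_l;y_1)$ with $c(1)=2\tau_1+B$. These two inequalities let me replace both sides of the claim by single Gaussians carrying the same $B$, so the statement reduces to a pointwise Gaussian estimate at fixed internal scales, the independent suprema over $\Lambda_i,\tilde\Lambda$ being restored at the very end.

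The decisive step is the exact scaling identity
$$p_B\big((1+\delta)c(t);\tfrac{y_1}{t},0\big)=t\,p_B\big((1+\delta)a;y_1,0\big),\qquad a:=t^2c(t)=2\tau_1+Bt^2,$$
which extracts the prefactor $t$ demanded by the statement and, crucially, produces a Gaussian whose width obeys $2\tau_1\le a\le c(1)$ because $0<t\le1$. It then remains to bound $(y_1/\sqrt{\tau_1})^\gamma\,p_B((1+\delta)a;y_1,0)$ by $O(1)\,(1+\tau_1^{-1/2}/\Lambda)^\gamma\,p_B((1+\delta')c(1);y_1,0)$. I would split $(y_1/\sqrt{\tau_1})^\gamma=(y_1/\sqrt a)^\gamma(a/\tau_1)^{\gamma/2}$, absorb the first factor into the Gaussian by the standard estimate (\ref{in1}) (turning $(y_1/\sqrt a)^\gamma p_B((1+\delta)a;\cdot)$ into $C_{\delta,\delta_1}\,p_B((1+\delta_1)a;\cdot)$ for an intermediate $\delta<\delta_1<\delta'$), and then compare the two widths through the elementary bound $p_B(s_1;y_1,0)\le\sqrt{s_2/s_1}\,p_B(s_2;y_1,0)$, valid for $s_1\le s_2$, applied with $s_1=(1+\delta_1)a\le s_2=(1+\delta')c(1)$.

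Collecting the scalar factors produces $a^{(\gamma-1)/2}\,\tau_1^{-\gamma/2}\,c(1)^{1/2}$, and here the positivity of the power is used: since $a\le c(1)$ and $(\gamma-1)/2\ge0$ for $\gamma\ge1$, one has $a^{(\gamma-1)/2}\le c(1)^{(\gamma-1)/2}$, whence this product is at most $(c(1)/\tau_1)^{\gamma/2}=(2+B/\tau_1)^{\gamma/2}$; finally $B\le v\,\Lambda^{-2}$ (all internal scales being $\ge\Lambda$) gives $(2+B/\tau_1)^{\gamma/2}\le O(1)\,(1+\tau_1^{-1/2}/\Lambda)^\gamma$, exactly the advertised weight. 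Restoring the supremum over $\Lambda_i,\tilde\Lambda$ and summing over the finitely many values $1\le v\le3l-1$ of internal vertices then closes the argument, with $O(1)$ depending only on $l,\gamma,\delta,\delta'$. The main obstacle is precisely this last bookkeeping: a naive comparison of the normalizations $c(t)^{-1/2}$ and $c(1)^{-1/2}$ costs an extra factor $(1+\tau_1^{-1/2}/\Lambda)$, and only the sharp splitting of the polynomial weight combined with the inequality $a^{(\gamma-1)/2}\le c(1)^{(\gamma-1)/2}$ keeps the exponent of $(1+\tau_1^{-1/2}/\Lambda)$ equal to $\gamma$.
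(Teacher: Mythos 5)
Your proposal follows the same architecture as the paper's own proof: for $s=1$ every contributing tree is a chain, which you collapse by extending $\int_{\mathbb{R}^+}$ to $\int_{\mathbb{R}}$ and using the semi-group property (\ref{rr+}); you extract the factor $t$ by the exact Gaussian rescaling; you absorb the polynomial weight through (\ref{in1}); and you re-expand the resulting single Gaussian into the weight factor at $t=1$ via (\ref{rr++}), exactly as in (\ref{A24}), before summing over the finitely many chains. Where you differ is in the bookkeeping, and there your version is actually sharper than the paper's. You keep the exact collapsed width $a=2\tau_1+Bt^2$ and pay the normalization ratio $\sqrt{s_2/s_1}$ explicitly in the width comparison, whereas the paper replaces $Bt^2$ by $B$ inside the Gaussian at once (second inequality of (\ref{C22})) --- a step which, taken literally at $y_1=0$, is false unless one inserts precisely that factor $\sqrt{s_2/s_1}=O(1)\bigl(1+\tau_1^{-1/2}/\Lambda\bigr)$ --- and accordingly the paper's combined intermediate display carries the exponent $\gamma+1$ before its final line reverts, without justification, to $\gamma$. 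For $\gamma\ge 1$ your splitting $(y_1/\sqrt{\tau_1})^\gamma=(y_1/\sqrt{a})^\gamma(a/\tau_1)^{\gamma/2}$ together with $a^{(\gamma-1)/2}\le c(1)^{(\gamma-1)/2}$ genuinely closes this discrepancy and proves the stated exponent.

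The gap is the case $\gamma=0$, which the statement includes ($\gamma\in\mathbb{N}$) and which is needed in the proof of Lemma \ref{lemma9} (the terms of $\mathcal{I}_2$ with $\beta=0$ or $\beta'=0$). Your key inequality requires $(\gamma-1)/2\ge 0$; for $\gamma=0$ your scalar factor is $\sqrt{c(1)/a}$, which is only $O(1)\bigl(1+\tau_1^{-1/2}/\Lambda\bigr)$, one power too many. This loss cannot be repaired by a better argument, because for $\gamma=0$ the bound as stated is false: take $y_1=0$, the chain with a single internal vertex, its surface-line scale at $\Lambda_0$ (where the supremum in (\ref{treeStr2}) is attained), and let $\tau_1\to 0$ at fixed small $t$; then the left-hand side tends to a constant times $\Lambda_0$ while the right-hand side is $O(t\Lambda_0)$, so no $t$-independent constant works. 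The paper's own proof does not establish the $\gamma=0$ case either --- its displayed bound after (\ref{sure}) has exponent $\gamma+1$ for every $\gamma$ --- and the defect is harmless where the lemma is used, since Lemma \ref{lemma9} is invoked only under $\Lambda\ge 3\sqrt{l}\,\tau^{-1/2}$, where $1+\tau_1^{-1/2}/\Lambda\le 1+1/(3\sqrt{l})$ is $O(1)$. You should therefore either restrict your claim to $\gamma\ge 1$ and record the $\gamma=0$ case with the extra factor $\bigl(1+\tau_1^{-1/2}/\Lambda\bigr)$, or add the hypothesis $\Lambda\gtrsim\tau_1^{-1/2}$ under which the distinction between the exponents $\gamma$ and $\gamma+1$ evaporates.
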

\begin{proof}
For $s=1$ the global set of forests $\mathcal{W}_{1,l}(\pi_1)$ consists of all surface trees with two external vertices (including the surface external vertex) which have a number of vertices of incidence number $2$ less or equal to $3l-2$. We consider the surface tree $T^{1,0}_l\in\mathcal{W}^1_{l}(\pi_1)$ with the external vertex $\frac{y_1}{t}$ and the internal vertices $\vec{z}_v=\left(z_1,\cdots,z_v\right)$. Let $v$ be the number of its vertices of incidence number $2$ and $\Lambda_{\mathcal{I}}=\left({\Lambda}_i\right)_{1\leq i\leq v-1}$, $\Lambda_v$ and $\tau_1/t^2$ be respectively the parameters associated to the internal lines, the surface external line and the external line of $T^{1,0}_l$. Then the integrated surface weight factor of $T^{1,0}_l$ reads 
\begin{multline}\label{C22}
    \mathcal{F}^0_{\delta}\left(\Lambda_{\mathcal{I}};\frac{\tau_1}{t^2};T^{1,0}_l;\frac{y_1}{t}\right)=\int_{\vec{z}_v}p_B\left(\frac{2\tau_{1,\delta}}{t^2};\frac{y_1}{t},z_1\right)\prod_{i=1}^{v-1}p_B\left(\frac{1+\delta}{\Lambda_i^2};z_i,z_{i+1}\right)p_B\left(\frac{1+\delta}{\Lambda_v^2};z_v,0\right)\\
\leq t~p_B\left(2\tau_{1,\delta}+\sum_{i=1}^{v}\frac{t^2(1+\delta)}{\Lambda_i^2};y_1,0\right)\leq t~p_B\left(2\tau_{1,\delta}+\sum_{i=1}^{v}\frac{(1+\delta)}{\Lambda_i^2};y_1,0\right) ~,
\end{multline}
where we used (\ref{rr+}). Furthermore, using (\ref{in1}) we obtain for all $0\leq \delta<\delta'<1$ 
\begin{multline}\label{sure}
\left(\frac{y_1}{\sqrt{\tau_1}}\right)^{\gamma}p_B\left(2\tau_{1,\delta}+\sum_{i=1}^{v}\frac{1+\delta}{\Lambda_i^2};y_1,0\right)\\\leq ~C_{\delta,\delta'}~ \left(1+\sum_{i=1}^{v}\frac{\tau_1^{-\frac{1}{2}}}{\sqrt{2}\Lambda_i}\right)^{\gamma}p_B\left(2\tau_{1,\delta'}+\sum_{i=1}^{v}\frac{1+\delta'}{\Lambda_i^2};y_1,0\right).
\end{multline}
Since $\Lambda_i,~\Lambda_v\geq \Lambda$ for all $i\in\mathcal{I}$, we deduce that (\ref{sure}) is bounded by 
\begin{equation}
    C_{\delta,\delta'}~\max\left(\left(\frac{v}{\sqrt{2}}\right)^{\gamma},1\right) \left(1+\frac{\tau_1^{-\frac{1}{2}}}{\Lambda}\right)^{\gamma}p_B\left(2\tau_{1,\delta'}+\sum_{i=1}^{v}\frac{1+\delta'}{\Lambda_i^2};y_1,0\right)
\end{equation}
Proceeding similarly to (\ref{A8}), we deduce  
\begin{equation}\label{A24}
   p_B\left(2\tau_{1,\delta'}+\sum_{i=1}^{v}\frac{1+\delta'}{\Lambda_i^2};y_1,0\right)\leq   2^{v}~\mathcal{F}^0_{\delta'}\left(\Lambda_{\mathcal{I}},\Lambda_v;{\tau_{1}};T^{1,0}_l;{y_1}\right), 
\end{equation}
which together with (\ref{C22}) and (\ref{sure}) imply
\begin{equation}
    \left(\frac{y_1}{\sqrt{\tau_1}}\right)^{\gamma}\mathcal{F}^0_{\delta}\left(\Lambda_{\mathcal{I}};\frac{\tau_{1}}{t^2};T^{1,0}_l;\frac{y_1}{t}\right)\leq C~t\left(1+\frac{\tau_1^{-\frac{1}{2}}}{\Lambda}\right)^{\gamma+1}\mathcal{F}^0_{\delta'}\left(\Lambda_{\mathcal{I}};\tau_{1};T^{1,0}_l;y_1\right),s
\end{equation}
where $C:=2^{v}~C_{\delta,\delta'}\max\left(\left(\frac{v}{\sqrt{2}}\right)^{\gamma},1\right)$. Using 
\begin{equation}
\mathcal{F}^{\Lambda,0}_{1,l;\delta}\left(\tau_{1}\right):=\sum_{v_2=1}^{3l-1}\mathcal{F}^{0}_{\delta}\left(\Lambda,\tau_{1};T^{1,0}_l;y_1\right),
\end{equation}
we deduce 
\begin{equation}
    \left(\frac{y_1}{\sqrt{\tau_{1}}}\right)^{\gamma}\mathcal{F}^{0}_{1,l;\delta}\left(t,\tau_{1}\right)\leq O(1)~t\left(1+\frac{\tau^{-\frac{1}{2}}_1}{\Lambda}\right)^{\gamma}\mathcal{F}^{\Lambda,0}_{1,l;\delta'}(\tau_{1})~.
\end{equation}

\end{proof}
\begin{lemma}\label{lemma9}
Let $\Lambda \geq  3\sqrt{l}\tau^{-\frac{1}{2}}$, $0<\delta<\delta'\leq 1$ and $y_1,~y_2\in \mathbb{R}$. For $0< t, t'\leq 1$, $l\geq 1$ and $\gamma_1,\gamma_2\in\mathbb{N}$, we have 
\begin{multline}
    \left(\frac{y_1}{\sqrt{\tau_1}}\right)^{\gamma_1}\left(\frac{y_2}{\sqrt{\tau_2}}\right)^{\gamma_2}\mathcal{F}_{2,l;\delta}^{0}\left(\Lambda;\frac{\tau_{1}}{t^2},\frac{\tau_{2}}{t'^2};\frac{y_1}{t},\frac{y_2}{t'}\right)\\\leq ~tt'\mathcal{Q}\left(\frac{\tau^{-\frac{1}{2}}}{\Lambda}\right)\mathcal{F}_{2,l;\delta'}^{0}\left(\Lambda;\tau_{1,2};y_{1,2}\right).
\end{multline}
The polynomial $\mathcal{Q}$ has nonnegative coefficients which are independent of $\tau_{1}$, $\tau_2$ and $\Lambda$ but depend on $l$, $\delta$, $\delta'$, $\gamma_1$ and $\gamma_2$.
\end{lemma}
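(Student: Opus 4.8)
The plan is to exploit the explicit decomposition \eqref{weight} of the two-vertex weight factor, which writes $\mathcal{F}^{\Lambda,0}_{2,l;\delta}$ as a sum of a \emph{connected} contribution (the sum over single surface trees $T^{2,0}_l\in\mathcal{W}^2_l(\sigma_2)$ carrying both external vertices together with the surface vertex) and a \emph{factorized} contribution $\mathcal{F}^{\Lambda,0}_{1,l;\delta}(\tau_1)\mathcal{F}^{\Lambda,0}_{1,l;\delta}(\tau_2)$ coming from the partition $\Pi_1=\{1\}\cup\{2\}$. After the rescaling $\tau_i\mapsto\tau_i/t_i^2$, $y_i\mapsto y_i/t_i$ (with $t_1=t$, $t_2=t'$) both contributions retain their form, and I would bound each separately, the second being essentially Lemma~\ref{lemma8} applied twice.

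For the factorized piece I would apply Lemma~\ref{lemma8} to each factor, with exponents $\gamma_1$ and $\gamma_2$ respectively, producing $O(1)\,t\,(1+\tau_1^{-1/2}/\Lambda)^{\gamma_1}\mathcal{F}^{\Lambda,0}_{1,l;\delta'}(\tau_1)$ and the analogous factor in $t',\tau_2$. Multiplying, using $\tau=\min(\tau_1,\tau_2)$ so that each $\tau_i^{-1/2}\le\tau^{-1/2}$, and then invoking \eqref{borneweights2} to bound $\mathcal{F}^{\Lambda,0}_{1,l;\delta'}(\tau_1)\mathcal{F}^{\Lambda,0}_{1,l;\delta'}(\tau_2)\le\mathcal{F}^{\Lambda,0}_{2,l;\delta'}(\tau_{1,2})$, yields exactly the desired $tt'\,\mathcal{Q}(\tau^{-1/2}/\Lambda)\,\mathcal{F}^{\Lambda,0}_{2,l;\delta'}(\tau_{1,2})$ with $\mathcal{Q}(x)=O(1)(1+x)^{\gamma_1+\gamma_2}$.

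For the connected piece I would first apply Lemma~\ref{lemmSi}: the bound \eqref{lemo} collapses the three paths of each $T^{2,0}_l$ through the semigroup \eqref{rr+} to a single-vertex integral $\int_0^\infty dz\,p_B(c_1^{(t)};z,y_1/t)\,p_B(c_2^{(t')};z,y_2/t')\,p_B((1+\delta)/\tilde\Lambda_1^2;z,0)$, where $c_1^{(t)}=(1+\delta)(\tau_1/t^2+B_1)$ and $B_1,B_2,1/\tilde\Lambda_1^2$ are the internal and surface inverse-scale sums of \eqref{c1c2}. Here the hypothesis $\Lambda\ge 3\sqrt{l}\,\tau^{-1/2}$ is decisive: since every internal and surface scale is $\ge\Lambda$ and $v\le 3l-1$, one has $B_1,B_2,\,1/\tilde\Lambda_1^2\le 3l/\Lambda^2\le\tau/3$, so every variance inflation is controlled by an $O(1)$ factor through \eqref{pbdelta}, and the collapsed external parameters satisfy $c_i^{(t_i)}\asymp\tau_i/t_i^2$. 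Using the scaling identity $p_B(\sigma;y_i/t_i,0)=t_i\,p_B(t_i^2\sigma;y_i,0)$ on each external line extracts the prefactor $tt'$, the residual being a product of three Gaussians at the central vertex whose $z$-integral I would evaluate directly (equivalently, by the Gaussian convolution identity together with \eqref{rr++} to pass from $\mathbb{R}^+$ to $\mathbb{R}$). The powers $(y_i/\sqrt{\tau_i})^{\gamma_i}$ are then absorbed by \eqref{in1}, which trades each $|y_i|^{\gamma_i}$ for $\tau_i^{\gamma_i/2}$ at the cost of replacing $\delta$ by $\delta'$ and of a factor $(1+\tau_i^{-1/2}/\Lambda)^{\gamma_i}$, and finally the reverse bound \eqref{lemoo2} (or \eqref{borneweights2}) re-expresses the result as $\mathcal{F}^{\Lambda,0}_{2,l;\delta'}(\tau_{1,2})$.

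The main obstacle is precisely this connected piece: because the two external vertices sit on a single tree yet are rescaled by \emph{different} factors $t\ne t'$, no single substitution $z\mapsto tz$ simultaneously normalizes both branches, so the one-line scaling argument of Lemma~\ref{lemma8} does not transfer verbatim. The point to get right is that after extracting $tt'$ one is left with $\int dz\,p_B(\rho_1;tz,y_1)\,p_B(\rho_2;t'z,y_2)\,p_B(\sigma_0;z,0)$ with $\rho_i\asymp\tau_i$ and $\sigma_0\lesssim\tau$, and one must show that its joint Gaussian decay in $(y_1,y_2)$ dominates a product of two single-vertex Gaussians of variances $\asymp\tau_1,\tau_2$, i.e.\ fits under the factorized term of \eqref{weight}. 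I expect this to follow from the explicit value of the resulting Gaussian integral, the $\Lambda$-condition guaranteeing that the off-diagonal coupling induced by the shared surface line is a bounded perturbation; the delicate bookkeeping is to keep all constants dependent only on $l,\delta,\delta',\gamma_1,\gamma_2$ and not on $\Lambda,\Lambda_0,\tau_i$.
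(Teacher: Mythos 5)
Your skeleton coincides with the paper's: the decomposition \eqref{weight} of $\mathcal{F}^{\Lambda,0}_{2,l;\delta}$ into the connected sum over $\mathcal{W}^2_l(\sigma_2)$ plus the factorized product, Lemma \ref{lemma8} applied twice to the factorized product, and Lemma \ref{lemmSi} to collapse each connected tree to the single integral $\int_0^\infty dz\, p_B\left(c_{1,\delta}(t);z,\tfrac{y_1}{t}\right)p_B\left(c_{2,\delta}(t');z,\tfrac{y_2}{t'}\right)p_B\left(\tfrac{1+\delta}{\tilde\Lambda_1^2};z,0\right)$. But the one step you defer --- ``I expect this to follow from the explicit value of the resulting Gaussian integral'' --- is the entire content of the lemma, and it does not follow as automatically as stated. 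Carrying out your plan: the $z$-integral over $\mathbb{R}$ equals $\frac{1}{2\pi\sqrt{\rho_1\rho_2\sigma_0 A}}\exp\left(-C+\tfrac{B^2}{2A}\right)$ with $A=\tfrac{t^2}{\rho_1}+\tfrac{t'^2}{\rho_2}+\sigma_0^{-1}$, $B=\tfrac{ty_1}{\rho_1}+\tfrac{t'y_2}{\rho_2}$, $2C=\tfrac{y_1^2}{\rho_1}+\tfrac{y_2^2}{\rho_2}$. The natural estimate $\tfrac{B^2}{2A}\le \sigma_0\bigl(\tfrac{t^2y_1^2}{\rho_1^2}+\tfrac{t'^2y_2^2}{\rho_2^2}\bigr)\le\tfrac23 C$ leaves effective variances $3\rho_i\ge 3(1+\delta)\tau_i$, and this is fatal: the factorized part of $\mathcal{F}^{\Lambda,0}_{2,l;\delta'}$ only contains Gaussians of variance at most $2(1+\delta')\tau_i+(3l-2)(1+\delta')/\Lambda^2$ (scales bounded below by $\Lambda$, incidence-two vertices bounded by $3l-2$), which is strictly smaller than $3(1+\delta)\tau_i$ once $\Lambda$ is large, and a Gaussian can never be dominated pointwise by one of smaller variance, whatever the constant. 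To rescue the argument you need the Cauchy--Schwarz form $B^2\le (A-\sigma_0^{-1})\cdot 2C$, giving $C-\tfrac{B^2}{2A}\ge \tfrac{C}{\sigma_0 A}$ with $\sigma_0A\le\tfrac53$ under $\Lambda\ge3\sqrt{l}\,\tau^{-1/2}$, hence effective variances $\tfrac53\rho_i$; and even then the comparison with the factorized term only closes if you keep the $\tau_i$- and $\Lambda^{-2}$-contributions separate (the $\tau_i$ coefficients have margin $2(1+\delta')-\tfrac53(1+\delta)\ge\tfrac13(1+\delta)$, which must absorb the deficit $\sim(2l+\tfrac13)(1+\delta)/\Lambda^2\le(1+\delta)\tau_i\tfrac{2l+1/3}{9l}$ in the scale terms); converting everything crudely into multiples of $\tau_i$ destroys the estimate. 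None of this delicate part --- which also interacts with absorbing $(y_i/\sqrt{\tau_i})^{\gamma_i}$ via \eqref{in1}, costing a further variance inflation that must be budgeted inside $\delta\to\delta'$ --- is present in your proposal, so as written there is a genuine gap at the decisive step.

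For comparison, the paper avoids evaluating the Gaussian integral altogether. Assuming $0\le y_1\le y_2$, it splits $\int_0^\infty dz$ into $\mathcal{I}(0,y_1)+\mathcal{I}(y_1,y_2)+\mathcal{I}(y_2,\infty)$. On $[0,y_1]$ the inequalities $p_B(c_{i,\delta}(t_i);z,y_i/t_i)\le t_i\,p_B(c_{i,\delta};t_iz,y_i)$ and, for $z\le y_i$, $p_B(c_{i,\delta};t_iz,y_i)\le p_B(c_{i,\delta};z,y_i)$ extract $tt'$ while preserving the connected three-line structure, so \eqref{lemoo2} returns the \emph{connected} term of the weight factor. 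On $[y_1,y_2]$ and $[y_2,\infty)$ the surface kernel $p_B\bigl(\tfrac{1+\delta}{\tilde\Lambda_1^2};z,0\bigr)$ decays in $z\ge y_1$ (resp.\ $z\ge y_2$), and the hypothesis $\Lambda\ge3\sqrt l\,\tau^{-1/2}$ converts this, through $\tilde\Lambda_1\gtrsim\Lambda/\sqrt{3l}$, into Gaussian decay in $y_1$ (and $y_2$); bounding the remaining kernels by their suprema yields a \emph{product} of single-vertex Gaussians, which is then matched to the \emph{factorized} term of \eqref{weight} by the variance bookkeeping \eqref{cst}--\eqref{v00}. So the same hypothesis enters, but through region-wise pointwise bounds rather than through control of the off-diagonal coupling of an explicit Gaussian integral; the paper's route trades your single computation for three elementary estimates, each of which lands cleanly on one of the two terms of \eqref{weight}.
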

\begin{proof}
Using (\ref{weight}), $\mathcal{F}_{2,l;\delta}^{\Lambda,0}\left(\frac{\tau_{1}}{t^2},\frac{\tau_{2}}{t'^2};\frac{y_1}{t},\frac{y_2}{t'}\right)$ can be written as follows
\begin{multline}\label{C288}
    \sum_{T^{2,0}_l\in\mathcal{W}^2_l(\sigma_2)}\mathcal{F}^0_{\delta}\left(\Lambda;\frac{\tau_{1}}{t^2},\frac{\tau_{2}}{{t'}^2};T^{2,0}_l;\frac{y_1}{t},\frac{y_2}{t'}\right)\\
    +\left(\sum_{T^{1,0}_l\in\mathcal{W}^1_l(\pi_1)}\mathcal{F}^0_{\delta}\left(\Lambda,\frac{\tau_{1}}{t^2};T^{1,0}_l;\frac{y_{1}}{t}\right)\right)\cdot\left(\sum_{T^{1,0}_l\in\mathcal{W}^1_l(\pi_2)}\mathcal{F}^0_{\delta}\left(\Lambda,\frac{\tau_{2}}{{t'}^2};T^{1,0}_l;\frac{y_{2}}{t'}\right)\right).
\end{multline}
\begin{itemize}
    \item First, we prove  
    \begin{multline}\label{t00}
    \left(\frac{y_1}{\sqrt{\tau_1}}\right)^{\gamma_1}\left(\frac{y_2}{\sqrt{\tau_2}}\right)^{\gamma_2}\sum_{T^{2,0}_l\in\mathcal{W}^2_l(\sigma_2)}\mathcal{F}^{0}\left(\Lambda,\frac{\tau_{1}}{t^2},\frac{\tau_{2}}{{t'}^2};T^{2,0}_l;Y_{\sigma_2}(t,t')\right)\\\leq tt'\mathcal{Q}\left(\frac{\tau^{-\frac{1}{2}}}{\Lambda}\right) \mathcal{F}_{2,l;\delta'}^{\Lambda,0}\left(\tau_{1,2}\right),
    \end{multline}
where $0<\delta<\delta'<1$. Given a surface tree $T^{2,0}_l$ in $\mathcal{W}_{2,l}(\sigma_2)$, we have by Lemma \ref{lemmSi}  
\begin{multline}\label{t0}
  \mathcal{F}^0_{\delta}\left(\Lambda_{\mathcal{I}},\tilde{\Lambda};\frac{\tau_1}{t^2},\frac{\tau_2}{{t'}^2};T^{2,0}_l,Y_{\sigma_2}(t,t')\right)\\\leq \int_{0}^{\infty}dz~p_B\left(c_{1,\delta}(t);z,\frac{y_1}{t}\right)p_B\left(c_{2,\delta}(t');z,\frac{y_2}{t'}\right)p_B\left(\frac{1+\delta}{\tilde{\Lambda}_1^2};z,0\right)~,
\end{multline}
where $c_{1,\delta}(t)=c_1(t)(1+\delta)$and $c_{2,\delta}(t')=c_2(t')(1+\delta)$. The parameters $c_1(t)$, $c_2(t')$ and $\tilde{\Lambda}_1$ are given by (\ref{c1c2}) with $\tau_1\rightarrow \tau_1/t^2$ and $\tau_2\rightarrow \tau_2/t'^2$.
For $y_i\leq 0$, we have for all $z\in\mathbb{R}^+$
\begin{equation*}
    p_B\left(c_{i,\delta}(t);z,\frac{y_i}{t}\right)\leq p_B\left(c_{i,\delta}(t);z,-\frac{y_i}{t}\right).
\end{equation*}
    Therefore, without loss of generality, we consider in the sequel $y_1, y_2\geq 0$. For $y_1\leq y_2$, we write
    \begin{multline}
      \int_{0}^{\infty}dz~p_B\left(c_{1,\delta}(t);z,\frac{y_1}{t}\right)p_B\left(c_{2,\delta}(t');z,\frac{y_2}{t'}\right)p_B\left(\frac{1+\delta}{\tilde{\Lambda}_1^2};z,0\right)\\=\mathcal{I}(0,y_1)+\mathcal{I}(y_1,y_2)+\mathcal{I}(y_2,+\infty)~,
    \end{multline}
    where 
    $$\mathcal{I}(a,b):=\int_{a}^{b}dz~p_B\left(c_{1,\delta}(t);z,\frac{y_1}{t}\right)p_B\left(c_{2,\delta}(t');z,\frac{y_2}{t'}\right)p_B\left(\frac{1+\delta}{\tilde{\Lambda}_1^2};z,0\right).$$
    \begin{itemize}
        \item First, we bound $\mathcal{I}(0,y_1)$. For $0\leq t,t'\leq 1$, we have 
        \begin{align}
     p_B\left(c_{1,\delta}(t);z,\frac{y_1}{t}\right)&\leq t~p_B\left(c_{1,\delta};tz,{y_1}\right)~,\label{so1}
    \\
    p_B\left(c_{2,\delta}(t');z,\frac{y_2}{t'}\right)&\leq t'~p_B\left(c_{2,\delta};t'z,{y_2}\right)~,\label{sosoo}
        \end{align}
with $c_i:=c_i(1)$. For $0\leq z\leq y_1\leq y_2$, we also have 
\begin{align}
 p_B\left(c_{1,\delta};tz,{y_1}\right)&\leq t~p_B\left(c_{1,\delta};z,{y_1}\right)~,
    \\
    p_B\left(c_{2,\delta};t'z,{y_2}\right)&\leq t'~ p_B\left(c_{2,\delta};z,{y_2}\right)~.\label{soson}
        \end{align}
This implies 
\begin{equation}\label{a46}
    \mathcal{I}(0,y_1)\leq t~t'~\int_{0}^{\infty}dz~p_B\left(c_{1,\delta};z,{y_1}\right)p_B\left(c_{2,\delta};z,{y_2}\right)p_B\left(\frac{1+\delta}{\tilde{\Lambda}_1^2};z,0\right)~,
\end{equation} 
which again by Lemma \ref{lemmSi} is bounded by $$O(1)\int_{\vec {z}}\mathcal{F}^0_{\delta}\left(\Lambda_{\mathcal{I}},\tilde{\Lambda};\tau_{\sigma_2};T^{2,0}_l;\vec{z};Y_{\sigma_2}\right)\leq O(1)\mathcal{F}^{\Lambda,0}_{2,l;\delta}\left(\tau_{1,2}\right)~.$$
For $(\gamma_1,\gamma_2)\neq (0,0)$, we need to bound also the following term 
\begin{equation}\label{C37}
\left(\frac{y_1}{\sqrt{\tau_1}}\right)^{\gamma_1}\left(\frac{y_2}{\sqrt{\tau_2}}\right)^{\gamma_2}\mathcal{I}(0,y_1)~.   
\end{equation}
Using (\ref{a46}), (\ref{C37}) is bounded by 
\begin{multline}\label{l1}
   t~t'~\tau^{-\gamma_1-\gamma_2}~ \sum_{k=0}^{\gamma_1}\sum_{k'=0}^{\gamma_2}~\binom{\gamma_1}{k}~\binom{\gamma_2}{k'} ~ \int_{0}^{\infty}dz~|y_1-z|^k|y_2-z|^{k'} ~z^{\gamma_1+\gamma_2-k-k'}~\\\times p_B\left(c_{1,\delta};z,{y_1}\right)p_B\left(c_{2,\delta};z,{y_2}\right)p_B\left(\frac{1+\delta}{\tilde{\Lambda}_1^2};z,0\right). 
\end{multline}
Using (\ref{in1}), we obtain 
\begin{equation}
    z^{\gamma_1+\gamma_2-k-k'}p_B\left(\frac{1+\delta}{\tilde{\Lambda}^2};z,0\right)\leq C_{\delta,\delta'}~{\tilde{\Lambda}_1^{-\gamma_1-\gamma_2+k'+k}}p_B\left(\frac{1+\delta'}{\tilde{\Lambda}_1^2};z,0\right).
\end{equation}
Since $\Lambda_i,~\Tilde{\Lambda}\geq \Lambda$ for all $i\in\mathcal{I}$, we deduce that
\begin{equation}\label{cilt}
    \frac{1}{\Tilde{\Lambda}_1^2}=\left(\sum_{i=v_1+v_2+1}^v \frac{1}{\Lambda_i^2}\right)\left(1-\delta_{v,0}\right)+\frac{1}{\Tilde{\Lambda}^2}\leq \frac{v_0+1}{\Lambda^2}~~\mathrm{and}~~c_i\leq \tau_i+\frac{v_i}{\Lambda^2}~.
\end{equation}
This gives
\begin{equation}\label{l2}
    z^{\gamma_1+\gamma_2-k-k'}p_B\left(\frac{1+\delta}{\tilde{\Lambda}^2};z,0\right)\leq O(1)~{{\Lambda}^{-\gamma_1-\gamma_2+k'+k}}p_B\left(\frac{1+\delta'}{\tilde{\Lambda}_1^2};z,0\right).
\end{equation}
 Similarly, we have
\begin{align}
    |y_i-z|^{\gamma_i-k}p_B\left(c_{i,\delta};y_i,z\right)&\leq C'_{\delta,\delta'}~c_{i}^{\frac{k}{2}}p_B\left(c_{i,\delta'};y_i,z\right),\\
    &\leq O(1)~\tau_i^{\frac{k}{2}}\left(1+\frac{\tau_i^{-\frac{1}{2}}}{\Lambda}\right)p_B\left(c_{i,\delta'};y_i,z\right),\label{l3}
\end{align}
where we used (\ref{cilt}). Whenever it appears, $O(1)$ denotes a constant which depends on $\delta$, $\delta'$, $l$ and $v$.
Combining (\ref{l1}), (\ref{l2}) and (\ref{l3}), we deduce that (\ref{C37}) is bounded by 
\begin{multline}
    O(1)~\left(\sum_{k=0}^{\gamma_1+\gamma_2}\left(\frac{\tau^{-\frac{1}{2}}}{\Lambda}\right)^{k}\right)\int_{0}^{\infty}dz~ p_B\left(c_{1,\delta'};z,{y_1}\right)p_B\left(c_{2,\delta'};z,{y_2}\right)p_B\left(\frac{1+\delta'}{\tilde{\Lambda}_1^2};z,0\right).
\end{multline}
By Lemma \ref{lemmSi}, we deduce that (\ref{C37}) is bounded by
\begin{equation}\label{t1}
     t~t'~\mathcal{Q}\left(\frac{\tau^{-\frac{1}{2}}}{\Lambda}\right)\int_{\vec {z}}\mathcal{F}^0_{\delta}\left(\Lambda_{\mathcal{I}},\tilde{\Lambda};\tau_{\sigma_2};T^{2,0}_l;\vec{z};Y_{\sigma_2}\right).
\end{equation}
\item Using the bounds (\ref{so1}), (\ref{sosoo}) and (\ref{soson}), the term $\mathcal{I}(y_1,y_2)$ is bounded by
    \begin{equation}\label{a56}
   t~t'~ \int_{y_1}^{y_2}dz~p_B\left(c_{1,\delta};tz,{y_1}\right)p_B\left(c_{2,\delta};z,{y_2}\right)p_B\left(\frac{1+\delta}{\tilde{\Lambda}_1^2};z,0\right).
\end{equation}
For $z\geq y_1$, we have 
\begin{equation}
    p_B\left(\frac{1+\delta}{\tilde{\Lambda}_1^2};z,0\right)\leq p_B\left(\frac{2(1+\delta)}{\tilde{\Lambda}_1^2};z,0\right)\exp\left(-\frac{y_1^2\tilde{\Lambda}_1^2}{4(1+\delta)}\right).
\end{equation}
Knowing that $v_0\leq 3l-1$ together with \begin{equation}\label{C28}
\forall \Lambda_i\in\Lambda_{\mathcal{I}}~,~~~~\Lambda_i\geq \Lambda~,~~~~~~\tilde{\Lambda}\geq \Lambda~,
\end{equation}
and recalling (\ref{A8}), we obtain 
\begin{equation}\label{a63}
\tilde{\Lambda}\geq \frac{\Lambda}{\sqrt{3l}}\geq 3\tau^{-1}.
\end{equation}
where we also used $\Lambda\geq 3\sqrt{l}\tau^{-\frac{1}{2}}$. This implies
\begin{equation}
   \exp\left(-\frac{y_1^2\tilde{\Lambda}_1^2}{4(1+\delta)}\right)\leq \exp\left(-\frac{y_1^2{\Lambda}^2}{12(1+\delta)l}\right)\leq \exp\left(-\frac{y_1^2}{2(1+\delta)\tau_1}\right).
\end{equation}
Furthermore, we have 
\begin{equation}\label{special}
    p_B\left(c_{1,\delta};tz,{y_1}\right)\leq \frac{1}{\sqrt{2\pi c_{1,\delta}}}~.
\end{equation}
Combining (\ref{special}) with the fact that $c_1\geq \tau_1$, we deduce 
\begin{equation*}
    \mathcal{I}(y_1,y_2) \leq \int_{\mathbb{R}}dz~p_B\left(c_{2,\delta};z,{y_2}\right)p_B\left(\frac{2(1+\delta)}{\tilde{\Lambda}_1^2};z,0\right)p_B\left(c_{1,\delta};y_1,0\right)~,
\end{equation*}
and by (\ref{rr+}) we deduce that $\mathcal{I}(y_1,y_2)$ is  bounded by 
$$p_B\left(c_{2,\delta}+\frac{2(1+\delta)}{\tilde{\Lambda}_1^2};{y_2},0\right)~p_B\left(c_{1,\delta};{y_1},0\right)~.$$
Using the property (\ref{in1}) of the bulk heat kernel together with (\ref{C28}), we obtain 
\begin{multline}\label{b27}
    \left(\frac{y_1}{\sqrt{\tau_1}}\right)^{\gamma_1}\left(\frac{y_2}{\sqrt{\tau_2}}\right)^{\gamma_2}~p_B\left(c_{2,\delta}+\frac{2(1+\delta)}{\tilde{\Lambda}_1^2};{y_2},0\right)~p_B\left(c_{1,\delta};{y_1},0\right)\\
   \leq \mathcal{Q}\left(\frac{\tau_i^{-\frac{1}{2}}}{\Lambda}\right)~p_B\left(c_{2,\delta'}+\frac{2(1+\delta')}{\tilde{\Lambda}_1^2};{y_2},0\right)~p_B\left(c_{1,\delta'};{y_1},0\right)~,
\end{multline}
where $0<\delta<\delta'<1$. For $\Lambda \geq 3\sqrt{l}\tau^{-\frac{1}{2}}$ and $l\geq 1$, we have 
\begin{equation}\label{riloo}
    \forall \Lambda_i\in\Lambda_{\mathcal{I}},~~~~\Lambda_i\geq \Lambda\geq \sqrt{3}
    \tau_2^{-\frac{1}{2}}~,
\end{equation}
and this implies   
\begin{multline}\label{riloo2}
    \frac{1}{\Lambda_{v_1+v_2}^2}\leq \frac{\tau_2}{3},~~\frac{1}{\Lambda_{v_1+v_2-1}^2}\leq \frac{\tau_2}{3},\\\frac{1}{\tilde{\Lambda}_1^2}=\left(\sum_{i=v_1+v_2+1}^{v}\frac{1}{{\Lambda}_{i}^2}\right)\left(1-\delta_{v,0}\right)+\frac{1}{\tilde{\Lambda}^2}\leq~\frac{v_0+1}{\Lambda^2}\leq \frac{\tau_2}{3}~,
\end{multline}
where again we used that $v_0\leq 3l-1$.
Hence, we have 
\begin{multline}
    c_{2,\delta'}+\frac{2(1+\delta')}{\Tilde{\Lambda}_1^2}=\tau_{2,\delta'}+\sum_{i=1}^{v_2}\frac{1+\delta'}{\Lambda_{i+v_1}^2}+\frac{2(1+\delta')}{\tilde{\Lambda}_1^2}\\
    \leq 2\tau_{2,\delta'}+\sum_{i=1}^{v_2-2}\frac{1+\delta'}{\Lambda_{i+v_1}^2}+\frac{1+\delta'}{\tilde{\Lambda}_1^2}~,
\end{multline}
which gives
\begin{equation}\label{cst}
   p_B\left(c_{2,\delta'}+\frac{2(1+\delta')}{\tilde{\Lambda}_1^2};{y_2},0\right)\leq~\sqrt{2}~ p_B\left(2\tau_{2,\delta'}+\sum_{i=1}^{v_2-2}\frac{1+\delta'}{\Lambda_{i+v_1}^2}+\frac{1+\delta'}{\tilde{\Lambda}_1^2};y_2,0\right).
\end{equation}
(\ref{rr++}) together with Lemma \ref{lemmSi} gives 
\begin{multline}\label{ruru}
    p_B\left(2\tau_{2,\delta'}+\sum_{i=1}^{v_2-2}\frac{1+\delta'}{\Lambda_{i+v_1}^2}+\frac{1+\delta'}{\tilde{\Lambda}_1^2};y_2,0\right)\\\leq 
    O(1)~\int_{z_1}\cdots\int_{z_{v_2+v_0-1}}~p_B\left(2\tau_{2,\delta'};z_1,{y_2}\right)\prod_{i=2}^{v_2-1}p_B\left(\frac{1+\delta'}{\Lambda_{v_1+i-1}^2};z_i,z_{i-1}\right)\\\times\prod_{i=0}^{v_0-1}p_B\left(\frac{1+\delta'}{~~~\tilde{\Lambda}_{v_1+v_2+i}^2};z_{v_2+i},z_{v_2+i-1}\right)~p_B\left(\frac{1+\delta'}{\tilde{\Lambda}^2};z_{v_2+v_0-1},0\right)~.
\end{multline}
The RHS of (\ref{ruru}) corresponds to the integrated surface weight factor of a surface tree $T^{1,0}_l$ which has an external vertex $y_2$ and $v_2+v_0-1$ internal vertices which all are of incidence number $2$. This tree belongs to the set of forests $\mathcal{W}_{1,l}(\pi_2)$ if and only if
\begin{equation}\label{v00}
v_0+v_2-1 \leq 3l-2+\frac{1}{2}~\iff v_0+v_2-1 \leq 3l-2~.
\end{equation}
Since the tree $T^{2,0}_l\left(\frac{\tau_1}{t^2},\frac{\tau_2}{{t'}^2};\frac{y_1}{t},\frac{y_2}{t'}\right)$ is in the forest $\mathcal{W}_{2,l}(\sigma_2)$, $v_0$, $v_1$ and $v_2$ necessarily verify $$v_0+v_2-1\leq 3l-2-v_1~,$$
which implies (\ref{v00}).
Hence, $T^{1,0}_l$ belongs to the set $\mathcal{W}_{1,l}(\pi_2)$. From (\ref{ruru}), we deduce 
\begin{equation}\label{k2}
    p_B\left(c_{2,\delta'}+\frac{1+\delta'}{\tilde{\Lambda}^2};{y_2},0\right)\leq O(1)~\mathcal{F}_{1,l;\delta'}^{\Lambda,0}\left(\tau_{2}\right).
\end{equation}
Proceeding similarly for $p_B\left(c_{1,\delta'};{y_1},0\right)$, we obtain 
\begin{equation}\label{k1}
    p_B\left(c_{1,\delta'};{y_1},0\right)\leq O(1)~\mathcal{F}_{1,l;\delta'}^{\Lambda,0}\left(\tau_{1}\right).
\end{equation}
(\ref{a56}) and (\ref{b27}) together with (\ref{k2}) give  
\begin{equation}\label{t2}
   \left(\frac{y_1}{\sqrt{\tau_1}}\right)^{\gamma_1}\left(\frac{y_2}{\sqrt{\tau_2}}\right)^{\gamma_2} \mathcal{I}(y_1,y_2)\leq t~t' \mathcal{Q}\left(\frac{\tau^{-\frac{1}{2}}}{\Lambda}\right)\mathcal{F}_{1,l;\delta'}^{\Lambda,0}\left(\tau_{1}\right)\mathcal{F}_{1,l;\delta'}^{\Lambda,0}\left(\tau_{2}\right),
\end{equation}
where all the constants are absorbed in the coefficients of the polynomial $\mathcal{Q}$.
\item The last term to bound is 
\begin{equation*}
   \mathcal{I}(y_2,+\infty):= \int_{y_2}^{\infty}dz~p_B\left(c_{1,\delta}(t);z,\frac{y_1}{t}\right)p_B\left(c_{2,\delta}(t');z,\frac{y_2}{t'}\right)p_B\left(\frac{1+\delta}{\tilde{\Lambda}_1^2};z,0\right).
\end{equation*}
For $z\geq y_2\geq y_1$ we have 
\begin{equation*}
    p_B\left(\frac{1+\delta}{\tilde{\Lambda}_1^2};z,0\right)\leq \frac{\tilde{\Lambda}_1}{\sqrt{2\pi}}\exp\left(-\frac{z^2\tilde{\Lambda}_1^2}{6(1+\delta)}\right)\exp\left(-\frac{y_1^2}{2c_{1,\delta}}\right)\exp\left(-\frac{y_2^2}{2c_{2,\delta}}\right),
\end{equation*}
where we used (\ref{a63}) and $c_i\geq \tau_i$. \\
Bounding respectively $ p_B\left(c_{1,\delta}(t);z,\frac{y_1}{t}\right)$ and $ p_B\left(c_{2,\delta}(t');z,\frac{y_2}{t'}\right)$ by $\frac{t}{\sqrt{2\pi c_{1,\delta}}}$ and $\frac{t'}{\sqrt{2\pi c_{2,\delta}}}$ we deduce that $\mathcal{I}(y_2,+\infty)$ is bounded by
\begin{equation*}
    O(1)~t~t'~p_B\left(c_{1,\delta};y_1,0\right)~p_B\left(c_{2,\delta};y_2,0\right).
\end{equation*}
Using the bound (\ref{in1}), we find
\begin{multline}
    \left(\frac{y_1}{\sqrt{\tau_1}}\right)^{\gamma_1}\left(\frac{y_2}{\sqrt{\tau_2}}\right)^{\gamma_2}~p_B\left(c_{2,\delta};{y_2},0\right)~p_B\left(c_{1,\delta};{y_1},0\right)\\
   \leq \mathcal{Q}\left(\frac{\tau_i^{-\frac{1}{2}}}{\Lambda}\right)~p_B\left(c_{2,\delta'};{y_2},0\right)~p_B\left(c_{1,\delta'};{y_1},0\right),
\end{multline}
where $0<\delta<\delta'<1$. Using (\ref{stot}), we deduce 
\begin{equation}\label{k1}
    p_B\left(c_{i,\delta'};{y_i},0\right)\leq O(1)~\mathcal{F}_{1,l;\delta'}^{\Lambda,0}\left(\tau_{i}\right)~,
\end{equation}
which implies 
\begin{equation}\label{t3}
   \left(\frac{y_1}{\sqrt{\tau_1}}\right)^{\gamma_1}\left(\frac{y_2}{\sqrt{\tau_2}}\right)^{\gamma_2} \mathcal{I}(y_2,+\infty)\leq t~t' \mathcal{Q}\left(\frac{\tau^{-\frac{1}{2}}}{\Lambda}\right)\mathcal{F}_{1,l}^{\Lambda,0}\left(\tau_{1,\delta'}\right)\mathcal{F}_{1,l}^{\Lambda,0}\left(\tau_{2,\delta'}\right).
\end{equation}
    \end{itemize}
Combining (\ref{t1}), (\ref{t2}) and (\ref{t3}) together with (\ref{borneweights2}) and (\ref{t0}), we obtain (\ref{t00}).
\item By definition, we have 
\begin{equation}
    \mathcal{F}_{1,l;\delta}^{\Lambda,0}(t,\tau_{1})=\sum_{T^{1,0}_l\in\mathcal{W}^1_l(\pi_1)}\mathcal{F}^0_{\delta}\left(\Lambda,\frac{\tau_{1}}{t^2};T^{1,0}_l;\frac{y_{1}}{t}\right).
\end{equation}
Applying Lemma \ref{lemma8} to the global surface weight factors $\mathcal{F}_{1,l;\delta}^{\Lambda,0}(t,\tau_{1})$ and $\mathcal{F}_{1,l;\delta}^{\Lambda,0}(t',\tau_{2})$ we obtain 
\begin{multline}
   \left(\frac{y_1}{\sqrt{\tau_1}}\right)^{\gamma_1} \sum_{T^{1,0}_l\in\mathcal{W}^1_l(\pi_1)}\mathcal{F}^0_{\delta}\left(\Lambda,\frac{\tau_{1}}{t^2};T^{1,0}_l;\frac{y_{1}}{t}\right) \\\times\left(\frac{y_2}{\sqrt{\tau_2}}\right)^{\gamma_2}\sum_{T^{1,0}_l\in\mathcal{W}^1_l(\pi_2)}\mathcal{F}^0_{\delta}\left(\Lambda,\frac{\tau_{2}}{{t'}^2};T^{1,0}_l;\frac{y_{2}}{t'}\right)\\
   \leq O(1)~t~t' \left(1+\frac{\tau_1^{-\frac{1}{2}}}{\Lambda}\right)^{\gamma_1}\left(1+\frac{\tau_2^{-\frac{1}{2}}}{\Lambda}\right)^{\gamma_2}\mathcal{F}_{1,l;\delta'}^{\Lambda,0}\left(\tau_{1}\right)\mathcal{F}_{1,l;\delta'}^{\Lambda,0}\left(\tau_{2}\right),
\end{multline}
and this together with (\ref{borneweights2}), (\ref{C288}) and  (\ref{t00}) conclude the proof of Lemma \ref{lemma9}.
\end{itemize}
\end{proof}

\section*{References}
\nocite{*}
\bibliographystyle{abbrv}
\bibliography{aipsamp}

\end{document}